\def\draft{1}

\documentclass[11pt]{article}
\usepackage[utf8]{inputenc}
\usepackage{fullpage}
\usepackage{amsthm}
\usepackage{mathtools,amsmath,amssymb}
\usepackage{bm}
\usepackage{enumitem}
\usepackage{mathtools}
\usepackage[colorlinks=true, allcolors=blue]{hyperref}
\usepackage{complexity}
\usepackage{dsfont}
\usepackage{float}

\newcommand{\One}{\mathds{1}}
\newcommand{\GOOD }{\textsf{GOOD}}
\renewcommand{\epsilon}{\varepsilon}
\newcommand{\eps}{\varepsilon}

\usepackage[breakable]{tcolorbox}
\newtcolorbox{reduction}[2][]
{
  breakable,
  colframe = gray!50,
  colback  = gray!10,
  coltitle = gray!10!black,
  before skip = 10pt,
  after skip = 10pt,
  title    = \textbf{#2},
  #1,
}

\newtcolorbox{examplebox}[2][]
{
  breakable,
  colframe = gray!50,
  colback  = gray!10,
  coltitle = gray!10!black,
  before skip = 10pt,
  after skip = 10pt,
  title    = \textbf{#2},
  #1,
}

\usepackage{algorithmicx}
\usepackage{algorithm}
\usepackage[noend]{algpseudocode}
\newcounter{algsubstate}
\renewcommand{\thealgsubstate}{\alph{algsubstate}}

\algnewcommand\algorithmicinput{\textbf{Input:}}
\algnewcommand\Input{\item[\algorithmicinput]}

\algnewcommand\algorithmicoutput{\textbf{Output:}}
\algnewcommand\Output{\item[\algorithmicoutput]}

\algnewcommand\algorithmicgoal{\textbf{Goal:}}
\algnewcommand\Goal{\item[\algorithmicgoal]}

\newcommand{\mnote}[1]{\ifnum\draft=1 {\color{red} \textbf{Madhu's note:} #1}\fi}
\newcommand{\cnote}[1]{\ifnum\draft=1 {\color{brown} \textbf{Chi-Ning's note:} #1}\fi}
\newcommand{\snote}[1]{\ifnum\draft=1 {\color{orange} \textbf{Sasha's note:} #1}\fi}
\newcommand{\vnote}[1]{\ifnum\draft=1 {\color{green!40!black} \textbf{Santhoshini's note:} #1}\fi}
\newcommand{\todo}[1]{\ifnum\draft=1{\color{violet} \textbf{TODO:} #1} \fi}

\usepackage{amsthm}
\usepackage{thmtools,thm-restate}

\numberwithin{equation}{section}
\declaretheoremstyle[bodyfont=\it,qed=\qedsymbol]{noproofstyle}

\declaretheorem[name=Observation,numbered=no]{observation*}

\declaretheorem[numberlike=equation]{theorem}

\declaretheorem[name=Theorem,numbered=no]{theorem*}

\declaretheorem[numberlike=equation]{lemma}
\declaretheorem[name=Lemma,numbered=no]{lemma*}

\declaretheorem[numberlike=equation]{corollary}
\declaretheorem[name=Corollary,numbered=no]{corollary*}

\declaretheorem[numberlike=equation]{proposition}
\declaretheorem[name=Proposition,numbered=no]{proposition*}

\declaretheorem[numberlike=equation]{claim}
\declaretheorem[name=Claim,numbered=no]{claim*}

\declaretheorem[name=Conjecture,numbered=no]{conjecture*}

\declaretheorem[name=Question,numbered=no]{question*}

\declaretheoremstyle[bodyfont=\it]{defstyle} 

\declaretheorem[numberlike=equation,style=defstyle]{definition}
\declaretheorem[unnumbered,name=Definition,style=defstyle]{definition*}

\declaretheorem[unnumbered,name=Example,style=defstyle]{example*}

\declaretheorem[unnumbered,name=Notation=defstyle]{notation*}

\declaretheorem[unnumbered,name=Construction,style=defstyle]{construction*}

\declaretheoremstyle[]{rmkstyle}

\declaretheorem[numberlike=equation]{remark}
\declaretheorem[name=Remark,numbered=no]{remark*}

\DeclarePairedDelimiter\ceil{\lceil}{\rceil}

\newcommand{\opt}{\mathop{\mathrm{opt}}}
\newcommand{\gbmaxf}{{(\gamma,\beta)\textrm{-}\maxf}}

\newcommand{\Exp}{\mathop{\mathbb{E}}}

\newcommand{\cA}{\mathcal{A}}
\newcommand{\cB}{\mathcal{B}}
\newcommand{\cD}{\mathcal{D}}
\newcommand{\cF}{\mathcal{F}}
\newcommand{\cN}{\mathcal{N}}
\newcommand{\cY}{\mathcal{Y}}
\newcommand{\N}{\mathbb{N}}
\newcommand{\F}{\mathbb{F}}
\newcommand{\Z}{\mathbb{Z}}
\renewcommand{\R}{\mathbb{R}}
\newcommand{\bern}{\mathsf{Bern}}
\newcommand{\unif}{\mathsf{Unif}}
\newcommand{\maxf}{\textsf{Max-CSP}(f)}
\newcommand{\bias}{\textsf{bias}}
\newcommand{\val}{\textsf{val}}

\newcommand{\ALG}{\mathbf{ALG}}
\newcommand{\supp}{\textsf{supp}}
\newcommand{\yes}{\textbf{YES}}
\newcommand{\no}{\textbf{NO}}

\newcommand{\maxtwoand}{\textsf{Max-2AND}}
\newcommand{\RMD}{\textsf{RMD}}
\newcommand{\sRMD}{\textsf{streaming-RMD}}
\newcommand{\psRMD}{\textsf{padded-streaming-RMD}}
\newcommand{\pstrm}{\text{pad-stream}}

\newcommand{\veca}{\mathbf{a}}
\newcommand{\vecb}{\mathbf{b}}
\newcommand{\vecc}{\mathbf{c}}
\newcommand{\vece}{\mathbf{e}}
\newcommand{\vecj}{\mathbf{j}}
\newcommand{\vecs}{\mathbf{s}}
\newcommand{\vecu}{\mathbf{u}}
\newcommand{\vecv}{\mathbf{v}}
\newcommand{\vecw}{\mathbf{w}}
\newcommand{\vecx}{\mathbf{x}}
\newcommand{\vecy}{\mathbf{y}}
\newcommand{\vecz}{\mathbf{z}}

\newcommand{\vecsigma}{\boldsymbol{\sigma}}
\newcommand{\veclambda}{\boldsymbol{\lambda}}
\newcommand{\vecmu}{\boldsymbol{\mu}}
\newcommand{\vecnu}{\boldsymbol{\nu}}

\newcommand{\COMP}{\textsf{COMP}}
\newcommand{\COMB}{\textsf{COMB}}
\newcommand{\wtS}{\widetilde{S}}

\title{Approximability of all Boolean CSPs with linear sketches\footnote{This paper replaces the paper \cite{CGSV20} by the authors. The previous version had errors and is now withdrawn.}}

\author{
Chi-Ning Chou\thanks{School of Engineering and Applied Sciences, Harvard University, Cambridge, Massachusetts, USA. Supported by NSF awards CCF 1565264 and CNS 1618026. Email: \texttt{chiningchou@g.harvard.edu}.}
\and Alexander Golovnev\thanks{Department of Computer Science, Georgetown University. Email: \texttt{alexgolovnev@gmail.com}.}
\and Madhu Sudan\thanks{School of Engineering and Applied Sciences, Harvard University, Cambridge, Massachusetts, USA. Supported in part by a Simons 
Investigator Award and NSF Award CCF 1715187. Email: \texttt{madhu@cs.harvard.edu}.}
\and Santhoshini Velusamy\thanks{School of Engineering and Applied Sciences, Harvard University, Cambridge, Massachusetts, USA. Supported in part by a Simons 
Investigator Award and NSF Award CCF 1715187. Email: \texttt{svelusamy@g.harvard.edu}.}
}

\begin{document}
\date{}
\sloppy
\maketitle
\begin{abstract}
A Boolean constraint satisfaction problem (CSP), $\maxf$, is a maximization problem specified by a constraint $f:\{-1,1\}^k\to\{0,1\}$. An instance of the problem consists of $m$ constraint applications on $n$ Boolean variables, where each constraint application applies the constraint to $k$ literals chosen from the $n$ variables and their negations. The goal is to compute the maximum number of constraints that can be satisfied by a Boolean assignment to the $n$~variables. In the $(\gamma,\beta)$-approximation version of the problem for parameters $\gamma \geq \beta \in [0,1]$, the goal is to distinguish instances where at least $\gamma$ fraction of the constraints can be satisfied from instances where at most $\beta$ fraction of the constraints can be satisfied. 

In this work we consider the approximability of $\maxf$ in the context of sketching algorithms and completely characterize the approximability of all Boolean CSPs. Specifically, given $f$, $\gamma$ and $\beta$ we show that either (1) the $(\gamma,\beta)$-approximation version of $\maxf$ has a linear sketching algorithm using $O(\log n)$ space, or (2) for every $\epsilon > 0$ the $(\gamma-\epsilon,\beta+\epsilon)$-approximation version of $\maxf$ requires $\Omega(\sqrt{n})$ space for any sketching algorithm. We also prove lower bounds against streaming algorithms for several CSPs. In particular, we recover the streaming dichotomy of \cite{CGV20} for $k=2$ and show streaming approximation resistance of all CSPs for which $f^{-1}(1)$ supports a distribution with uniform marginals. 

Our positive results show wider applicability of bias-based algorithms used previously by \cite{GVV17} and
\cite{CGV20} by giving a systematic way to discover biases. Our negative results combine the Fourier analytic methods of \cite{KKS}, which we extend to a wider class of CSPs, with a rich collection of reductions among communication complexity problems that lie at the heart of the negative results.  


\end{abstract}

\newpage 
\tableofcontents

\newpage

\section{Introduction}

In this paper we give a complete characterization of the approximability of Boolean constraint satisfaction problems (CSPs) by sketching algorithms. We describe the exact class of problems below, and give a brief history of previous work before giving our results.

\subsection{Boolean CSPs}


In this paper we use $\N$ to denote the set of natural numbers $\{1,2,3,\ldots\}$. For $n \in \N$ we use $[n]$ to denote the set $\{1,2,\ldots,n\}$. We refer to a variable taking values in $\{-1,1\}$ as a Boolean variable. Given a Boolean variable $X$, we refer to $X$ and $-X$ as the literals associated with $X$. For vectors $\veca,\vecb \in \R^n$ we use $\veca \odot \vecb$ to denote their coordinate-wise product. I.e., if $\veca = (a_1,\ldots,a_n)$ and $\vecb = (b_1,\ldots,b_n)$ then 
$\veca \odot \vecb = (a_1b_1,\ldots,a_nb_n)$.

In this paper, a Boolean CSP is a maximization problem, $\maxf$, specified by a single constraint function $f:\{-1,1\}^k \to \{0,1\}$ for some positive integer $k$. Given $n$ Boolean variables $x_1,\ldots,x_n$, an application of the constraint function $f$ to these variables, which we term simply a {\em constraint}, is given by two $k$-tuples $\vecj = (j_1,\ldots,j_k) \in [n]^k$ and $\vecb = (b_1,\ldots,b_k)\in \{-1,1\}^k$ where the $j_i$'s are distinct, and represents the application of the constraint function $f$ to the literals $b_1x_{j_1},\ldots,b_kx_{j_k}$. 
Specifically an assignment $\vecsigma = (\sigma_1,\ldots,\sigma_n) \in \{-1,1\}^n$ satisfies a constraint given by $(\vecj,\vecb)$ if $f(b_1\sigma_{j_1},\ldots,b_k\sigma_{j_k})=1$. For a constraint $C = (\vecj,\vecb)$ and assignment $\vecsigma$ we use $\vecsigma|_\vecj$ as shorthand for $(\sigma_{j_1},\ldots,\sigma_{j_k})$ and $C(\vecsigma)$ as shorthand for $f(\vecb \odot \vecsigma|_\vecj) = f(b_1\sigma_{j_1},\ldots,b_k\sigma_{j_k})$.
An instance~$\Psi$ of weighted $\maxf$ consists of $m$ constraints $C_1,\ldots,C_m$  applied to $n$ variables $x_1,\ldots,x_n$, along with $m$ non-negative weights $w_1,\ldots,w_m$. The value of an assignment $\vecsigma \in \{-1,1\}^n$ on an instance $\Psi = ((C_1,w_1),\ldots,(C_m,w_m))$, denoted $\val_\Psi(\vecsigma)$, is the fraction of weight of constraints satisfied by $\vecsigma$, i.e., $\val_\Psi(\vecsigma) = \frac{\sum_{i \in [m]} w_i C_i(\vecsigma)}{\sum_{i \in [m]} w_i}$.  The goal of the {\em exact} problem is to compute the maximum, over all assignments, of the value of the assignment on the input instance, i.e., to compute, given $\Psi$, the quantity $\val_\Psi = \max_{\vecsigma \in \{-1,1\}^n}\{\val_\Psi(\vecsigma)\}$.
\footnote{We note that the literature on CSPs has several generalizations: one may allow an entire set of constraint functions, not just a single one. One may restrict the constraint applications to be applicable only to variables and not literals. And finally one can of course consider non Boolean CSPs. We do not do any of those in this paper, though extending our techniques to classes of functions seems immediately feasible. See more discussion in \autoref{ssec:future}.}

In this work we consider the approximation version of $\maxf$, which we study in terms of the ``gapped promise problems''. Specifically given $0 \leq \beta < \gamma \leq 1$, the $(\gamma,\beta)$-approximation version of $\maxf$, abbreviated $(\gamma,\beta)\textrm{-}\maxf$, is the task of distinguishing between instances from $\Gamma = \{\Psi | \opt(\Psi) \geq \gamma\}$ and instances from $B = \{\Psi | \opt(\Psi) \leq \beta\}$. It is well-known that this distinguishability problem is a refinement of the usual study of approximation which usually studies the ratio of $\gamma/\beta$ for tractable versions of $\gbmaxf$. See \autoref{prop:approx-equivalence} for a formal statement in the context of streaming approximability of $\maxf$ problems.

\subsection{Streaming algorithms}

We study the complexity of $(\gamma,\beta)$-$\maxf$ in the setting of randomized streaming algorithms. Here, an instance $\Psi = (C_1,\ldots,C_m)$ is presented as a stream $\sigma_1,\sigma_2,\ldots,\sigma_m$ with $\sigma_i = (\vecj_i,\vecb_i)$ representing the $i$th constraint.
We study the space required to solve the $(\gamma,\beta)$-approximation version of $\maxf$. Specifically we consider algorithms that are allowed to use internal randomness and $s$~bits of space. The algorithms output a single bit at the end. They are said to solve the $(\gamma,\beta)$-approximation problem correctly if they output the correct answer with probability at least $2/3$ (i.e., they err with probability at most $1/3$).

The main focus of this work is sketching algorithms, a special class of streaming algorithms, where the algorithm's output is determined by a small sketch it produces of the input stream, and the sketch itself has the property that the sketch of the concatenation of two streams can be computed from the sketches of the two component streams. (See \autoref{def:sketching alg} for a formal definition.) We define the space of the sketching algorithm to be the length of the sketch.

Our main dividing line is between algorithms that work with space $O(\poly\log n)$, versus algorithms that require space at least $n^\epsilon$ for some $\epsilon > 0$. In informal usage we refer to a streaming problem as ``easy'' if it can be solved with polylogarithmic space (the former setting) and ``hard'' if it requires polynomial space (the latter setting). We note that all the positive results (algorithms) given in this paper are linear sketching algorithms which are more restrictive than general sketching algorithms.
We also note that many of our lower bounds work against general streaming algorithms and we elaborate on this in \autoref{ssec:results}.

\subsection{Past work}

To the best of our knowledge, streaming algorithms for Boolean CSPs have not been investigated extensively. Here we cover the few results we are aware of. 
On the positive side, it may be surprising that there exists any non-trivial algorithm at all. Here, and later, we describe algorithms solving the $(1,\rho(f)-\epsilon)$-approximation problem for $\eps>0$  as ``trivial'', where $\rho(f) = 2^{-k} \sum_{\veca \in \{-1,1\}^k} f(\veca)$
is the fraction of clauses satisfied by a random assignment. Note that the algorithm that always outputs $1$ correctly solves the $(1,\rho(f)-\eps)$-approximation version of the $\maxf$ problem.

It turns out that there do exist some non-trivial approximation algorithms for Boolean CSPs.  This was established by the work of Guruswami, Velingker, and Velusamy~\cite{GVV17} who, in our notation, gave an algorithm for the $(\gamma,2\gamma/5-\epsilon)$-approximation version of $\textsf{Max-2AND}$, for every $\gamma \in [0,1]$ ($\maxtwoand$ is the $\maxf$ problem corresponding to $f(a,b) = 1$ if $a = b = 1$ and $0$ otherwise). A central ingredient in their algorithm is the ability of streaming algorithms to approximate the $\ell_1$ norm of a vector in the turnstile setting, which allows them to estimate the ``bias'' of $n$ variables (how often they occur positively in constraints, as opposed to negatively). Subsequently, the work of Chou, Golovnev, and Velusamy~\cite{CGV20} further established the utility of such algorithms, which we refer to as bias-based algorithms, by giving optimal algorithms for all Boolean CSPs on $2$ variables. In particular they give a better (optimal!) analysis of bias-based algorithms for $\textsf{Max-2AND}$, and show that $\textsf{Max-2SAT}$ also has an optimal algorithm based on bias. We note that $\textsf{Max-2SAT}$ is again not covered by the results of the current paper since it involves two functions corresponding to clauses of length 1, and clauses of length 2.

On the negative side, the problem that has been explored the most is $\textsf{Max-CUT}$, or in our language $\textsf{Max-2XOR}$, which corresponds to $f(x,y) = x \oplus y = (1-xy)/2$.\footnote{Strictly speaking this work does not include $\textsf{Max-CUT}$, which does not allow constraints to be placed on arbitrary literals. $\textsf{Max-2XOR}$ is however very closely related and in particular is harder than $\textsf{Max-CUT}$.}
Kapralov, Khanna, and Sudan~\cite{KKS} showed that $\textsf{Max-2XOR}$ does not have a $(1,1/2 + \epsilon)$-approximation algorithm using $o(\sqrt{n})$-space, for any $\epsilon > 0$.  This was subsequently improved upon by Kapralov, Khanna, Sudan, and Velingker~\cite{KKSV17}, and Kapralov and Krachun~\cite{KK19}. The final paper~\cite{KK19} completely resolves $\textsf{Max-CUT}$ and $\textsf{Max-2XOR}$ showing that $(1,1/2 + \epsilon)$-approximation for these problems requires $\Omega(n)$ space. Turning to other problems, the work by \cite{GVV17} notices that the $(1,1/2+\epsilon)$-inapproximability of $\textsf{Max-2XOR}$ immediately yields $(1,1/2+\epsilon)$-inapproximability of $\maxtwoand$ as well. In \cite{CGV20} more sophisticated reductions are used to improve the inapproximability result for $\maxtwoand$ to a $(\gamma,4\gamma/9 + \epsilon)$-inapproximability for some positive $\gamma$, which turns out to be the optimal ratio by their algorithm and analysis. As noted earlier their work gives optimal algorithms for all functions $f:\{-1,1\}^2 \to \{0,1\}$.

\subsection{Our results}\label{ssec:results}

Our main theorem is a decidable dichotomy theorem for $(\gamma,\beta)$-$\maxf$ with sketching algorithms.

\begin{theorem}\label{thm:main-intro} For every $k\in \N$, for every function $f:\{-1,1\}^k \to \{0,1\}$, and for every $0 \leq \beta < \gamma \leq 1$, at least one of the following always holds:
\begin{enumerate}
    \item $(\gamma,\beta)$-$\maxf$ has a $O(\log n)$-space linear sketching algorithm.
    \item For every $\epsilon > 0$, any sketching algorithm that solves $(\gamma-\epsilon,\beta+\epsilon)$-$\maxf$ requires $\Omega(\sqrt{n})$ space. If $\gamma = 1$, then any sketching algorithm that solves $(1,\beta+\epsilon)$-$\maxf$ requires $\Omega(\sqrt{n})$ space. 
\end{enumerate}
Furthermore, there is an algorithm using space $\poly(2^k,\ell)$ that decides which of the two conditions holds, given the truth-table of $f$, and $\gamma$ and $\beta$ as $\ell$-bit rationals\footnote{$\alpha\in\R$ is said to be an $\ell$-bit rational if there exist integers $-2^\ell < p,q < 2^{\ell}$ such that $\alpha = p/q$.}.
\end{theorem}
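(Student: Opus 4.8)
The key is a finite convex‑geometric condition on $(f,\gamma,\beta)$ that decides which alternative holds. For a distribution $\cD$ on $\{-1,1\}^k$, let $\veclambda(\cD)=(\Exp_{\veca\sim\cD}[a_1],\dots,\Exp_{\veca\sim\cD}[a_k])\in[-1,1]^k$ be its vector of one‑wise marginals, let $\gamma(\cD)=\Exp_{\veca\sim\cD}[f(\veca)]$, and let $\beta(\cD)=\max_{\vecsigma\in\{-1,1\}^k}\Exp_{\veca\sim\cD}[f(\veca\odot\vecsigma)]$; this last quantity also equals $\max_{\vecz}\Exp_{\veca\sim\cD}[f(\veca\odot\vecz)]$ over independent masking vectors $\vecz$ with marginals in $[-1,1]^k$, since $\Exp_{\veca\sim\cD}[f(\veca\odot\vecz)]$ is multilinear in those marginals and a multilinear function on a box is maximized at a vertex. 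Call $(f,\gamma,\beta)$ \emph{hard} if there are distributions $\cD_Y,\cD_N$ on $\{-1,1\}^k$ with $\veclambda(\cD_Y)=\veclambda(\cD_N)$, $\gamma(\cD_Y)\ge\gamma$, and $\beta(\cD_N)\le\beta$, and \emph{easy} otherwise. (A single $\cD$ can never witness hardness, since $\gamma(\cD)\le\beta(\cD)<\gamma$; so two distributions and the matching‑marginals constraint are essential.) The plan is to prove: (i) easy $\Rightarrow$ item~1; (ii) hard $\Rightarrow$ item~2; and (iii) hardness is decidable in $\poly(2^k,\ell)$ time. Since every triple is easy or hard, these three claims give the theorem.

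\textbf{The easy case.}
I would run a bias‑based algorithm. Viewing the dynamic stream as turnstile updates to a frequency vector indexed by (ordered variable tuple, sign pattern), one checks that for every fixed $\vecc\in\R^k$ the $n$‑dimensional vector $\vecy^{\vecc}$ with $v$‑th coordinate $\sum_{\text{constraints }(\vecu,\vecs)\text{ of weight }w}\ w\sum_{r\,:\,u_r=v}c_r s_r$ is a linear image of the stream; hence $\|\vecy^{\vecc}\|_1$ and the total weight $W$ can be estimated up to a constant factor in $O(\log n)$ space by standard $\ell_1$‑sketching. The algorithm keeps such estimates for a finite set of vectors $\vecc$ (of bit‑complexity $\poly(2^k,\ell)$) and compares them to thresholds. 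Correctness uses that easiness means the convex compact sets $\Lambda_Y=\{\veclambda(\cD):\gamma(\cD)\ge\gamma\}$ and $\Lambda_N=\{\veclambda(\cD):\beta(\cD)\le\beta\}$ are disjoint, hence strictly separated with a positive margin by some hyperplane $\vecc$; the separating functional, evaluated on the ``template distribution'' an instance induces relative to an optimal assignment, becomes the statistic $\|\vecy^{\vecc}\|_1/W$, which one shows lies above a threshold whenever $\opt\ge\gamma$ and below it whenever $\opt\le\beta$. The delicate point is that an arbitrary instance is not a random $\cD$‑instance, so certifying this separation for \emph{all} instances (not just ``typical'' ones) is where a small family of functionals, rather than one, is needed.

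\textbf{The hard case.}
Here I would build hard instances by reducing from a one‑way/sketching communication problem $\RMD$ (and its streaming and padded variants $\sRMD$, $\psRMD$) that generalizes Boolean Hidden Matching from the parity template to an arbitrary template distribution $\cD$ on $\{-1,1\}^k$. The \yes\ distribution over $\maxf$ instances plants the all‑ones assignment and draws each constraint's sign pattern from $\cD_Y$ (so $\opt\ge\gamma(\cD_Y)-o(1)\ge\gamma-\eps$ for large $n$; when $\gamma=1$, take $\cD_Y$ supported on $f^{-1}(1)$, so every planted constraint is satisfied and $\opt=1$ exactly, giving the sharper conclusion). The \no\ distribution draws sign patterns from $\cD_N$ on a labeled variable--position structure, so that with high probability every assignment has value at most $\beta(\cD_N)+o(1)\le\beta+\eps$. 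Because $\veclambda(\cD_Y)=\veclambda(\cD_N)$, the two instance distributions agree in all first moments; a low‑space dynamic streaming algorithm yields (via the $T$‑player/linear‑sketch reduction) a protocol for $\RMD$, so it suffices to lower‑bound $\RMD$ itself, which I would do by extending the Fourier‑analytic argument of \cite{KKS} from the XOR template to a general $\cD$ --- the entire distinguishing advantage is carried by Fourier levels $\ge 2$, which an $o(\sqrt n)$‑dimensional sketch cannot access --- giving $\Omega(\sqrt n)$. The remaining work is a chain of reductions among $\RMD$, $\sRMD$, $\psRMD$ together with a padding construction ($\pstrm$) that (a) performs the hybrid/$T$‑player amplification carefully in the streaming setting and (b) mixes the planted instances with padding constraints so the gap becomes exactly $(\gamma-\eps,\beta+\eps)$ rather than the intrinsic pair $(\gamma(\cD_Y),\beta(\cD_N))$.

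\textbf{Decidability, and the main obstacle.}
Hardness is the feasibility of an explicit linear program over $\cD_Y,\cD_N$ in the simplex $\Delta(\{-1,1\}^k)\subseteq\R^{2^k}$: $\gamma(\cD_Y)\ge\gamma$ is one linear inequality, $\veclambda(\cD_Y)=\veclambda(\cD_N)$ is $k$ linear equalities, and (by the vertex remark) $\beta(\cD_N)\le\beta$ is the conjunction of the $2^k$ linear inequalities $\Exp_{\veca\sim\cD_N}[f(\veca\odot\vecsigma)]\le\beta$, $\vecsigma\in\{-1,1\}^k$. This LP has $2\cdot 2^k$ variables, $O(2^k)$ constraints, and $\ell$‑bit‑rational coefficients, so it is solvable in $\poly(2^k,\ell)$ time. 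I expect the real difficulty to lie in the negative direction: extending the Fourier‑analytic sketching lower bound of \cite{KKS} from the single XOR template to an arbitrary template $\cD$ while preserving the $\Omega(\sqrt n)$ rate, and, in concert, assembling the web of reductions among the communication problems --- in particular the padding step that converts the intrinsic gap $(\gamma(\cD_Y),\beta(\cD_N))$ of a hard configuration into the prescribed gap --- so that hardness transfers to the dynamic streaming model. The matching subtlety on the positive side is proving the bias statistics certify $\opt\ge\gamma$ versus $\opt\le\beta$ with no slack.
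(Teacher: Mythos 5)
Your proposal follows the paper's high-level architecture (bias-based algorithm on the positive side, an \textsf{RMD}-type communication reduction on the negative side, and a finite convex-geometric criterion in between), but the convex-geometric criterion you write down is the wrong one, and the error is fatal for the easy case.

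You define $\beta(\cD)=\max_{\vecsigma\in\{-1,1\}^k}\Exp_{\veca\sim\cD}[f(\veca\odot\vecsigma)]$, i.e.\ the value of the $k$-variable ``instance'' encoded by $\cD$ when each of the $k$ constraint positions is treated as its own variable. The paper uses instead the \emph{diagonal} condition
$\Exp_{\vecb\sim\cD}\Exp_{\veca\sim\bern(p)^k}[f(\vecb\odot\veca)]\le\beta$ for all $p\in[0,1]$
(see \autoref{def:sets}). These are genuinely different: your constraint is the max over the whole product-distribution box, while the paper's is the max over its main diagonal (i.i.d.\ Bernoulli). For the NO-case of the algorithm you need the implication ``$\opt(\Psi)\le\beta$ $\Rightarrow$ $\cD(\Psi^\veca)$ is in the NO set for \emph{every} $\veca$,'' and this implication only holds for the i.i.d.\ definition: the template $\cD(\Psi^\veca)$ of a real $n$-variable instance identifies constraint positions across reused variables, so a template-level assignment $\vecsigma\in\{-1,1\}^k$ that beats $\beta$ does not lift to an assignment of $x_1,\dots,x_n$, whereas an i.i.d.\ Bernoulli rounding does. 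The paper flags exactly this subtlety in the discussion following \autoref{lem:optimum_NO}: ``the mere existence of an assignment of value greater than $\beta$ on $\cD(\Psi)$ would have been insufficient.'' Because your set $\{\cD:\beta(\cD)\le\beta\}$ is a proper subset of the paper's $S_\beta^N$, your ``easy'' class is strictly larger than the paper's, and your algorithm's NO-case analysis fails on the surplus.

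This is not a vacuous worry. Take $f=\maxtwoand$ and $(\gamma,\beta)=(3/5,7/25)$. The distribution $\cD_N=(0.2,0.4,0.4,0)$ has marginals $(1/5,1/5)$ and $\max_p\Exp_{\cD_N}\Exp_{\bern(p)^2}[f]=4/15<7/25$, while $\cD_Y=(0.6,0,0,0.4)$ has the same marginals and $\gamma(\cD_Y)=3/5$; by the paper the $(3/5,7/25)$ problem (and every $(3/5-\eps,7/25+\eps)$) requires $\Omega(\sqrt n)$ space. But under your definition $\beta(\cD_N)=\max_\vecsigma\cD_N(\vecsigma)=0.4$, and on the diagonal $(\mu,\mu)$ one computes $\min_{\cD:\mu(\cD)=(\mu,\mu)}\beta(\cD)=(1+\mu)/4$, so no marginal with $\gamma_{\max}\ge 3/5$ has your $\beta_{\min}\le 7/25$. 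Your criterion declares this triple ``easy,'' contradicting the theorem.

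The same error undercuts your decidability claim. Because $S_\beta^N$ is an intersection over a continuous family $\{p\in[0,1]\}$ of linear constraints (each a degree-$k$ polynomial inequality in $p$ with coefficients linear in $\cD$), $K_\beta^N$ is convex but generally \emph{not} a polytope, and the emptiness test is not an LP. The paper expresses it as a quantified formula over the reals with two alternations and $O(2^k)$ variables and invokes \cite{BasuPR} to get $\poly(2^k,\ell)$ \emph{space}; it does not, and likely cannot, claim $\poly(2^k,\ell)$ time. Your LP feasibility test is for a different, incorrect condition.

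The hard-case half of your sketch is essentially right (your NO-set is contained in the paper's, so your witnesses are valid, and your reduction chain from \textsf{RMD} through the simultaneous/streaming variants and the padding mirrors the paper's \autoref{sec:SpaceLowerBound insertion}, with \autoref{cor:communication lb matching moments sim} and the framework of \cite{ai2016new} on the dynamic side). But the easy case and the decidability step both need the paper's $\bern(p)^k$-based definition of $S_\beta^N$ (\autoref{def:sets}), not the $\{-1,1\}^k$-vertex one.
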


In analogy with the terminology used in the study of CSP approximation in polynomial time, we define a problem to be ``approximation-resistant'' if it is hard to beat the random assignment with $n^{o(1)}$-space. 
Recall $\rho(f)$ denotes the fraction of assignments that satisfy a function~$f$. We say that $\maxf$ is {\em approximation-resistant} if, for every $\epsilon > 0$ there exists $\delta > 0$ such that $(1,\rho+\epsilon)$-$\maxf$ requires $\Omega(n^\delta)$ space. 
\\
We get the following dichotomy for approximation-resistance to sketching algorithms.

\begin{corollary}\label{cor:approx-res}
For every $k\in \N$, for every function $f:\{-1,1\}^k \to \{0,1\}$, if $\maxf$ is approximation-resistant to sketching algorithms, then for every $\epsilon > 0$, any sketching algorithm that solves $(1,\rho(f)+\epsilon)$-approximation version of $\maxf$ requires $\Omega(\sqrt{n})$ space. If $\maxf$ is not approximation-resistant, then there exists $\epsilon > 0$ such that $(1-\epsilon,\rho(f)+\epsilon)$-$\maxf$ can be solved by a linear sketching algorithm in logarithmic space . Furthermore, given the truth-table of the function $f$, there is an algorithm running in space $\poly(2^k)$ that decides whether or not $\maxf$ is approximation-resistant to sketching algorithms.
\end{corollary}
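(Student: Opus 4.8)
The plan is to derive \autoref{cor:approx-res} from \autoref{thm:main-intro} by specializing to the threshold $\gamma=1$ with $\beta$ near $\rho(f)$. Two elementary facts are used throughout: $\opt(\Psi)\ge\rho(f)$ for every instance $\Psi$ (a uniformly random assignment satisfies a $\rho(f)$ fraction of the constraints in expectation), so $(1,\beta)$-$\maxf$ is trivial for $\beta<\rho(f)$; and shrinking $\beta$ only shrinks the NO-set, so the set of $\beta$ for which $(1,\beta)$-$\maxf$ has an $O(\log n)$-space dynamic streaming algorithm is an initial interval of $[0,1)$ containing $[0,\rho(f))$, with some right endpoint $\beta^\star(f)\ge\rho(f)$. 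We treat the three assertions in turn.

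\textbf{Hardness direction.} Suppose $\maxf$ is streaming-approximation-resistant in the dynamic setting and fix $\epsilon>0$ with $\rho(f)+\epsilon<1$. By definition $(1,\rho(f)+\epsilon/2)$-$\maxf$ requires $\Omega(n^\delta)$ space for some $\delta>0$, hence has no $O(\log n)$-space algorithm, so Condition~(1) of \autoref{thm:main-intro} fails at $(f,1,\rho(f)+\epsilon/2)$ and Condition~(2) holds; since $\gamma=1$ this says $(1,\rho(f)+\epsilon/2+\epsilon')$-$\maxf$ requires $\Omega(\sqrt n)$ space for every $\epsilon'>0$, and taking $\epsilon'=\epsilon/2$ gives the claim for $(1,\rho(f)+\epsilon)$-$\maxf$. (In passing, this shows that the $n^\delta$-for-some-$\delta$ hardness in the definition of resistance upgrades for free to $\sqrt n$-hardness.)

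\textbf{Algorithmic direction.} Suppose $\maxf$ is not resistant, so there is $\epsilon_0>0$ with $(1,\rho(f)+\epsilon_0)$-$\maxf$ solvable in $n^{o(1)}$ space. Applying \autoref{thm:main-intro} at $(f,1,\rho(f)+\epsilon_0/2)$: if Condition~(2) held there, then (taking $\epsilon'=\epsilon_0/2$) $(1,\rho(f)+\epsilon_0)$-$\maxf$ would require $\Omega(\sqrt n)$ space, contradicting $n^{o(1)}$ solvability; hence Condition~(1) holds and $(1,\rho(f)+\epsilon_1)$-$\maxf$ has an $O(\log n)$-space algorithm with $\epsilon_1:=\epsilon_0/2>0$. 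It remains to pass from $\gamma=1$ to $\gamma=1-\epsilon$. Since decreasing $\gamma$ enlarges the YES-set, this is \emph{not} a black-box monotonicity consequence of \autoref{thm:main-intro}; instead I would use that the positive side of \autoref{thm:main-intro} is witnessed by an explicit bias-based algorithm whose region of success is a (relatively) open subset of the parameter space $\{(\gamma,\beta)\}$ — equivalently, the easy/hard threshold $\beta^\star(\gamma)$ varies continuously with $\gamma$. As $(1,\beta)$-$\maxf$ is easy for all $\beta\le\rho(f)+\epsilon_1$, the point $(1,\rho(f)+\epsilon_1)$ lies in the interior of the easy region, so $(1-\epsilon,\rho(f)+\epsilon)$ still lies in it for all sufficiently small $\epsilon>0$, and any such $\epsilon$ gives the desired $O(\log n)$-space algorithm. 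This openness / continuity-of-the-threshold step is where the real work is: it cannot be read off the statement of \autoref{thm:main-intro} and must be extracted from the structure of its proof.

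\textbf{Decidability.} First compute $\rho(f)=2^{-k}\sum_{\veca\in\{-1,1\}^k}f(\veca)$ exactly, an $O(k)$-bit rational. By the previous two parts, $\maxf$ is resistant iff $\beta^\star(f)=\rho(f)$, i.e.\ iff Condition~(1) of \autoref{thm:main-intro} fails at $(f,1,\rho(f)+\epsilon)$ for every $\epsilon>0$. The proof of \autoref{thm:main-intro} exhibits $\beta^\star(f)$ as a rational whose denominator is bounded by an explicit function of $2^k$, so it suffices to invoke the decision algorithm of the ``furthermore'' clause once, on $(f,1,\rho(f)+\epsilon)$ for a single positive rational $\epsilon$ chosen below this granularity: it reports Condition~(2) exactly when $\rho(f)+\epsilon>\beta^\star(f)$, i.e.\ exactly when $\maxf$ is resistant. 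All quantities involved are $\poly(2^k)$-bit rationals, so this runs in space $\poly(2^k,\poly(2^k))=\poly(2^k)$, as required.
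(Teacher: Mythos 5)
Your hardness direction is sound and mirrors the paper's. But the other two parts have genuine gaps.

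For the algorithmic direction, you correctly diagnose the difficulty (decreasing $\gamma$ from $1$ to $1-\epsilon$ is not a monotonicity consequence of \autoref{thm:main-intro}), but you leave it at ``this must be extracted from the structure of the proof,'' which is not a proof. The paper does not argue from \autoref{thm:main-intro} at all; it argues from \autoref{thm:main-detailed}, whose easy/hard dividing line $K_\gamma^Y(f)\cap K_\beta^N(f)=\emptyset$ is exactly the handle you need. Once you know that non-resistance is equivalent to $K_1^Y\cap K_{\rho(f)}^N=\emptyset$ (obtained by intersecting over $\epsilon$, using that $S_\beta^N$ shrinks as $\beta\downarrow\rho(f)$ and $\Delta(\{-1,1\}^k)$ is compact), the openness you want follows from compactness and closedness: if $K_{1-1/n}^Y\cap K_{\rho(f)+1/n}^N\ne\emptyset$ for all $n$, pick witnessing distributions $\cD_n^Y\in S_{1-1/n}^Y$, $\cD_n^N\in S_{\rho(f)+1/n}^N$ with a common marginal, pass to a convergent subsequence, and the limits land in $S_1^Y$ and $S_{\rho(f)}^N$ with the same marginal, contradicting disjointness. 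So there is an $\epsilon>0$ with $K_{1-\epsilon}^Y\cap K_{\rho(f)+\epsilon}^N=\emptyset$, and \autoref{thm:main-detailed} Part~(1) gives the $O(\log n)$-space algorithm. This step cannot be run through the black-box dichotomy; you needed the convex-set characterization.

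For decidability, your claim that ``the proof of \autoref{thm:main-intro} exhibits $\beta^\star(f)$ as a rational whose denominator is bounded by an explicit function of $2^k$'' is unsupported and likely false: $S_\beta^N$ is cut out by \emph{infinitely many} linear inequalities (one for each $p\in[0,1]$), so $K_\beta^N$ is a closed convex set but not a polytope (\autoref{lem:convex}), and the critical threshold need not be rational, let alone have controlled denominator. The paper sidesteps this entirely: once the equivalence ``resistant $\iff K_1^Y\cap K_{\rho(f)}^N\ne\emptyset$'' is established, one just checks this single first-order condition over the reals (at the explicit rational $\beta=\rho(f)$, not at some perturbed $\rho(f)+\epsilon$) using \autoref{lem:convex} and \autoref{thm:bpr}. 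No $\epsilon$ or denominator bound is needed.
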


The results above (and in particular the negative results) apply only to sketching algorithms for streaming CSPs. For general streaming algorithm, we get some partial results. To describe our next result, we define the notion of a function supporting a one-wise independent distribution.

We say that a function $f$ {\em supports one-wise independence} if there exists a distribution $D$ supported on the satisfying assignments to $f$, i.e., on $f^{-1}(1) \subseteq \{-1,1\}^k$ such that its marginals are all uniform, i.e., for every $j \in [k]$, we have $\Exp_{\veca \sim D} [a_j] = 0$.

\begin{theorem}\label{thm:one-wise}
If $f:\{-1,1\}^k \to \{0,1\}$ supports one-wise independence then $\maxf$ is approximation resistant in the streaming setting.
\end{theorem}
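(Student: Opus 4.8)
The plan is to follow the Fourier-analytic communication-complexity route of \cite{KKS}, suitably generalized from graph matchings and the $\mathsf{XOR}$ predicate to $k$-uniform hypermatchings and an arbitrary $1$-wise independent distribution. Concretely, for every $\eps>0$ I would exhibit two distributions $\mathcal Y,\mathcal N$ over insertion-only instances of $\maxf$ on $n$ variables with $m=\Theta_{k,\eps}(n)$ constraints such that (i) every $\Psi$ in the support of $\mathcal Y$ has $\opt(\Psi)=1$, while $\Psi\sim\mathcal N$ has $\opt(\Psi)\le\rho(f)+\eps$ with probability $1-o(1)$, and (ii) no streaming algorithm using $o(\sqrt n)$ space can distinguish $\mathcal Y$ from $\mathcal N$. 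Together these give that $(1,\rho(f)+\eps)$-$\maxf$ requires $\Omega(\sqrt n)$ space, which is in fact stronger than streaming-approximation-resistance (where any $n^{\delta}$ bound suffices). To build the distributions, fix a $1$-wise independent distribution $D$ on $f^{-1}(1)$. For $\mathcal Y$: draw a uniformly random hidden assignment $\vecsigma^\star\in\{-1,1\}^n$, and for each constraint draw a uniformly random ordered $k$-tuple $\vecj$ of distinct variables and $\veca\sim D$, and set $\vecb=\veca\odot\vecsigma^\star|_\vecj$; then $\vecb\odot\vecsigma^\star|_\vecj=\veca$, so $f(\vecb\odot\vecsigma^\star|_\vecj)=1$ and $\vecsigma^\star$ satisfies every constraint, whence $\opt=1$. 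For $\mathcal N$: draw $\vecj$ the same way but $\vecb\sim\unif(\{-1,1\}^k)$; then for each fixed $\vecsigma$ the fraction of satisfied constraints has mean exactly $\rho(f)$, and a Chernoff bound together with a union bound over the $2^n$ assignments gives $\opt\le\rho(f)+\eps$ once $m\ge C(k/\eps^2)\,n$.

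For the indistinguishability (ii) I would reduce from a $T$-player one-way communication game — the streaming randomized-mask-detection problem $\sRMD$ associated to $D$, with $T=\Theta(k/\eps^2)$ — or, to match the streaming reduction bit-for-bit, its padded variant $\psRMD$. In this game the players share a hidden mask $\vecz\in\{-1,1\}^n$; player $t$ receives a uniformly random perfect $k$-hypermatching $M_t$ on $[n]$ together with a label $\vecb_e$ on each hyperedge $e=(j_1,\dots,j_k)$, where in the YES case $\vecb_e\odot\vecz|_e\sim D$ and in the NO case $\vecb_e\sim\unif$; players speak once, in order, and the last player must decide YES vs.\ NO. A streaming algorithm for $\maxf$ using space $s$ yields a protocol in which every message has length $s$: player $t$ reads off the $\le n/k$ constraints encoded by $(M_t,\{\vecb_e\})$, feeds them to the simulated algorithm (with $\vecz$ playing the role of $\vecsigma^\star$), and forwards the updated memory. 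The YES game then induces exactly $\mathcal Y$ and the NO game exactly $\mathcal N$, up to the mild reordering/padding absorbed by $\psRMD$ (and using only insertions, so the produced stream is a legal insertion-only unweighted stream). Hence an $\Omega(\sqrt n)$ lower bound on the per-player communication of $\sRMD$ yields the claimed space lower bound.

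The heart of the proof, and the step I expect to be the main obstacle, is the $\Omega(\sqrt n)$ communication lower bound for $\sRMD$ with general arity $k$ and general $1$-wise independent $D$. The route is: first isolate a single-player step, showing that conditioned on the transcript so far, distinguishing the YES and NO label distributions on one random hypermatching reduces to a hypergraph analogue of Boolean Hidden (Hyper)Matching whose hardness is controlled by the Fourier spectrum of $D$; then chain the single-step bounds over the $T$ players by a hybrid argument while keeping all matchings consistent with the single hidden $\vecz$. This is precisely where $1$-wise independence is essential: because every singleton Fourier coefficient of $D$ vanishes, the only statistically detectable signal sits at Fourier levels $\ge 2$, and extracting a level-$\ge 2$ signal from a random hypermatching requires seeing $\Omega(\sqrt n)$ hyperedges that ``interact'' — a birthday-paradox phenomenon that forces $\Omega(\sqrt n)$ communication (whereas a nonzero singleton coefficient would be detectable by an $O(\log n)$-space bias estimator, which is exactly why such $f$ are not covered). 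The technical work is (a) bounding the higher-order Fourier mass of $D$ and tracking how it propagates through the random hypermatching, (b) replacing the edge-collision combinatorics of \cite{KKS} with the corresponding count for $k$-uniform hypermatchings, and (c) making the hybrid/chaining argument across the $T$ players go through without degrading the YES/NO gap; the reduction to insertion-only streams is then automatic since no deletions are ever used.
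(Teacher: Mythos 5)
Your proposal is correct and follows essentially the same route as the paper: you construct the same hard distributions (YES masks drawn from a $1$-wise independent $D$ on $f^{-1}(1)$ composed with a hidden assignment, so $\opt=1$; NO masks uniform, so $\opt\approx\rho(f)$ by Chernoff plus a union bound), reduce the streaming problem to a one-way randomized-mask-detection communication game via a block-by-block hybrid, and prove the $\Omega(\sqrt n)$ communication lower bound by extending the Fourier-analytic/hypercontractivity argument of \cite{GKKRW,KKS} to $k$-uniform hypermatchings and level-$\geq2$ Fourier mass. The one point you glide over, which the paper treats carefully, is that the hybrid argument (Alice simulates the first $t^*$ blocks, Bob supplies one block and must locally continue) only works because the NO distribution is uniform and hence independent of the hidden assignment $\vecx^*$ --- otherwise Bob cannot generate the trailing blocks without knowing $\vecx^*$; this is exactly why the paper fixes $\cD_N=\unif(\{-1,1\}^k)$ in its reduction lemma and uses the triangle inequality through $\unif$ to handle two general one-wise-independent mask distributions, and why the insertion-only reduction does not extend beyond padded-one-wise pairs.
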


We also give a (very) partial converse, showing that symmetric functions are approximation resistant if and only if they support one-wise independence  (see \autoref{prop:symmetric}).

While we do believe that there are other approximation-resistant problems in the streaming setting, we do not know of one even approximation-resistant to sketching algorithms (and in particular do not give one in this paper). We discuss this more in the next section.

We also give theorems capturing hardness in the streaming setting beyond the one-wise independent case. Stating the full theorem requires more notions (see \autoref{ssec:lb-detail-insert}), but as a consequence we get the following extension of the work of \cite{CGV20} who study the setting of $k=2$.

\begin{theorem}\label{thm:main-intro-k=2} For every function $f:\{-1,1\}^2 \to \{0,1\}$, and for every $0 \leq \beta < \gamma \leq 1$, at least one of the following always holds:
\begin{enumerate}
    \item $(\gamma,\beta)$-$\maxf$ has a $O(\log n)$-space linear sketching algorithm.
    \item For every $\epsilon > 0$, every streaming algorithm that solves $(\gamma-\epsilon,\beta+\epsilon)$-$\maxf$ requires $\Omega(\sqrt{n})$ space. If $\gamma = 1$, then $(1,\beta+\epsilon)$-$\maxf$ requires $\Omega(\sqrt{n})$ space. 
\end{enumerate}
Furthermore, there is an algorithm using space $\poly(\ell)$ that decides which of the two conditions holds  given the truth-table of $f$, and $\gamma$ and $\beta$ as $\ell$-bit rationals.
\end{theorem}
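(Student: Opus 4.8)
\medskip
\noindent\textbf{Proof plan.} The plan is to bootstrap from the dynamic dichotomy of \autoref{thm:main-intro}, changing only its negative half. For the first alternative there is nothing new to do: the $O(\log n)$-space dynamic streaming algorithm, and the procedure deciding which alternative holds, are precisely those of \autoref{thm:main-intro} specialized to $k=2$ (where $\poly(2^k,\ell)=\poly(\ell)$); the family of $(\gamma,\beta)$ for which this algorithm exists is cut out by the same LP-feasibility condition, now a constant-dimensional LP over distributions on $\{-1,1\}^2$. So the entire content of the theorem is: whenever this algorithm does \emph{not} exist, i.e.\ whenever the ``hard-pair'' LP behind \autoref{thm:main-intro}(2) is feasible, the $\Omega(\sqrt n)$ lower bound already holds in the insertion-only model, not merely the dynamic one.

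The dynamic lower bound of \autoref{thm:main-intro} is powered by a pair of distributions $(D_Y,D_N)$ on $\{-1,1\}^2$ with matching first moments (i.e.\ $\Exp_{D_Y}[a_i]=\Exp_{D_N}[a_i]$ for $i\in\{1,2\}$) such that $D_Y$ witnesses optimum $\ge\gamma$ and $D_N$ witnesses optimum $\le\beta$, which feed a reduction from a turnstile hidden-matching communication problem. The refined insertion-only lower bound of \autoref{ssec:lb-detail-insert} reduces instead from the deletion-free ($\psRMD$/$\pstrm$) variant, so its hypothesis is correspondingly stronger: beyond the matching-marginal pair it requires the common marginal to be realizable by a padding distribution supported on $f^{-1}(1)$ without inflating the $\no$ value, a condition that can genuinely fail for $k\ge3$ (which is why that theorem is not a full dichotomy). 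The core step is therefore a structural lemma: \emph{for every $f:\{-1,1\}^2\to\{0,1\}$ and all $0\le\beta<\gamma\le1$, if the dynamic hard-pair LP is feasible then for every $\eps>0$ there is a feasible pair witnessing optima $\ge\gamma-\eps$ and $\le\beta+\eps$ that additionally meets the insertion-only padding condition, and when $\gamma=1$ the $\yes$-side loss can be taken to be $0$.} I would prove it by direct analysis of the $16$ functions on two Boolean inputs: after quotienting by negation of either variable and by swapping the two coordinates, only a handful of inequivalent cases remain, and in each one an arbitrary matching-marginal witness can be post-processed — mixing in a small amount of a convex combination of satisfying assignments to nudge the common marginal into the paddable region, and/or symmetrizing — into one meeting the padding requirement while moving $\gamma,\beta$ by at most $\eps$. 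The $\gamma=1$ sharpening is easier because then $D_Y$ is forced onto $f^{-1}(1)$, making it padding-friendly without perturbation; and since every move used is a convex/LP operation on a constant-size object, the decision procedure remains LP-based and runs in $\poly(\ell)$ space, giving the ``Furthermore'' clause.

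The main obstacle I anticipate is precisely this structural lemma. In the $1$-wise-independent corner $(\gamma,\beta)=(1,\rho(f))$ handled by \autoref{thm:one-wise} both distributions have uniform marginals and the insertion-only reduction is immediate; but for intermediate $(\gamma,\beta)$ the common marginal can be genuinely biased, and one must argue, uniformly over the whole parameter range, that this bias is always small enough to be absorbed into a padding distribution on satisfying assignments of $f$ without collapsing the value gap. Everything else — the $\ell_1$-sketching bias algorithm on the positive side and the Fourier-analytic hidden-matching hardness underlying the insertion-only reduction — is inherited essentially verbatim from the general development in this paper and from \cite{KKS,GVV17,CGV20}.
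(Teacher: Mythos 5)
Your high-level plan is right: inherit the algorithm and the decidability from \autoref{thm:main-intro}, and show that whenever $K_\gamma^Y(f)\cap K_\beta^N(f)\neq\emptyset$ the hypothesis of the insertion-only lower bound (\autoref{thm:main-negative}) is met for $k=2$. But the structural lemma you set up, and the obstacle you anticipate, rest on a misreading of what that hypothesis actually is. The \emph{padded one-wise pair} condition is a property of the pair $(\cD_Y,\cD_N)$ alone: there must exist $\tau\in[0,1]$ and distributions $\cD_0,\cD'_Y,\cD'_N$ with $\cD_Y=\tau\cD_0+(1-\tau)\cD'_Y$, $\cD_N=\tau\cD_0+(1-\tau)\cD'_N$, and $\vecmu(\cD'_Y)=\vecmu(\cD'_N)=0^k$. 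There is no requirement that $\cD_0$ be supported on $f^{-1}(1)$, and no constraint relating $\cD_0$ to the value of the \no\ instance; $f$ enters only through the separate memberships $\cD_Y\in S^Y_\gamma(f)$ and $\cD_N\in S^N_\beta(f)$, which are already given to you by the dynamic-hardness hypothesis.

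With the correct condition in hand, the paper's route (\autoref{prop:k=2padded}) is short and entirely $f$-free: if $\cD_Y=(p_{1,1},p_{1,-1},p_{-1,1},p_{-1,-1})$ and $\cD_N$ has the same marginals, then $\cD_N=(p_{1,1}-\delta,p_{1,-1}+\delta,p_{-1,1}+\delta,p_{-1,-1}-\delta)$ for some $\delta$ (WLOG $\delta\geq 0$); set $\tau=1-2\delta$, $\cD_0=\frac1{1-2\delta}(p_{1,1}-\delta,p_{1,-1},p_{-1,1},p_{-1,-1}-\delta)$, $\cD'_Y=(1/2,0,0,1/2)$, $\cD'_N=(0,1/2,1/2,0)$, and check the identities. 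So \emph{every} matching-marginal pair on $\{-1,1\}^2$ is automatically a padded one-wise pair. Consequently you do not need the $\eps$-perturbation step, you do not need to argue anything uniformly over the $(\gamma,\beta)$ range, and you do not need a case analysis over the sixteen functions (or their symmetry classes). The $\eps$-loss and the $\gamma=1$ sharpening in the conclusion are already built into \autoref{thm:main-negative} itself and do not arise from any massaging of the witness pair. As written, your proposal would send you hunting for a non-existent $f$-dependent padding-on-$f^{-1}(1)$ constraint and would likely bog down in the ``obstacle'' paragraph; the fix is to drop that interpretation and prove the clean $f$-independent decomposition above.
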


This reproduces the results of \cite{CGV20} while giving a more refined picture of the approximability by considering all $\beta < \gamma$. 
In~\autoref{sec:examples}, we show how to apply our theorem above to get a full characterization of the approximation profile of the $\maxtwoand$ problem (i.e., the $\maxf$ problem for $f(x,y) = 1$ if $x = y = 1$ and $0$ otherwise). 

\paragraph{This version:} This version of the paper replaces a previous version~\cite{CGSV20}. The previous version, now withdrawn, claimed \autoref{thm:main-intro} in the streaming setting, but that version had an error and the status of Theorem 1.1 in \cite{CGSV20} is currently open.

\subsection{Contrast with dichotomies in the polynomial time setting}

The literature on dichotomies of $\maxf$ problems is vast. One broad family of results here \cite{Schaefer,Bulatov,Zhuk} 
considers the exact satisfiability problems (corresponding to distinguishing between instances from $\{\Psi | \opt(\Psi) = 1\}$ and instances from $\{\Psi | \opt(\Psi) < 1\}$.
Another family of results~\cite{raghavendra2008optimal,AustrinMossel,KhotTW} considers the approximation versions of $\maxf$ and gets ``near dichotomies'' along the lines of this paper --- i.e., they either show that the $(\gamma,\beta)$-approximation is easy (in polynomial time), or for every $\epsilon > 0$ the  $(\gamma - \epsilon,\beta+\epsilon)$-approximation version is hard (in some appropriate sense). Our work resembles the latter series of works both in terms of the nature of results obtained, the kinds of characterizations used to describe the ``easy'' and ``hard'' classes, and also in the proof approaches (though of course the streaming setting is much easier to analyze, allowing for much simpler proofs overall). We summarize their results giving comparisons to our theorem and then describe a principal contrast. 

In a seminal work, Raghavendra~\cite{raghavendra2008optimal} gave a characterization of the polynomial time approximability of the $\maxf$ problems based on the unique games conjecture~\cite{Kho02}. Our~\autoref{thm:main-intro} is analogous to his theorem, though restricted to a single function, with Boolean variables, with ability to complement variables. 
A characterization of approximation resistant functions is given by Khot, Tulsiani and Worah~\cite{KhotTW}. Our~\autoref{cor:approx-res} is analogous to this. Austrin and Mossel~\cite{AustrinMossel} show that all functions supporting a pairwise independent distribution are approximation-resistant. Our~\autoref{thm:one-wise} is analogous to this theorem. 

While our results run in parallel to the work on polynomial time approximability our characterizations are not immediately comparable. Indeed there are some significant differences which we highlight below. Of course there is the obvious difference that our negative results are unconditional (and not predicated on a complexity theoretic assumption like the unique games conjecture or \textsf{P}$\ne$\textsf{NP}). But more significantly our characterization is a bit more ``explicit'' than those of \cite{raghavendra2008optimal} and \cite{KhotTW}. 
In particular the former only shows decidability of the problem which take $\epsilon$ as an input (in addition to $\gamma,\beta$ and $f$) and distinguishes  $(\gamma,\beta)$-approximable problems from $(\gamma-\epsilon,\beta+\epsilon)$-inapproximable problems. The running time of their decision procedure grows with $1/\epsilon$. In contrast our distinguishability separates $(\gamma,\beta)$-approximability from ``$\forall \epsilon > 0$, $(\gamma-\epsilon,\beta+\epsilon)$-inapproximability'' --- so our algorithm does not require $\epsilon$ as an input - it merely takes $\gamma,\beta$ and $f$ as input.
Indeed this difference is key to the understanding of approximation resistance. Due to the stronger form of our main theorem (\autoref{thm:main-intro}), our characterization of approximation-resistance to sketching algorithms is explicit (decidable in \textsf{PSPACE}), whereas a decidable characterization of approximation-resistance in the polynomial time setting seems to be still open. 

Our characterizations also seem to differ from the previous versions in terms of the features being exploited to distinguish the two classes. This leads to some strange gaps in our knowledge. For instance, it would be natural to suspect that (conditional) inapproximability in the polynomial time setting should also lead to (unconditional) inapproximability in the streaming setting. But we don't have a formal theorem proving this.\footnote{Of course, if this were false, it would be a breakthrough result giving a polynomial time (even log space) algorithm for the unique games!} One (unfulfilling) consequence of this gap in knowledge is that we do not yet have an approximation-resistant problem, even to sketching algorithms, that is not covered by~\autoref{thm:one-wise}. In the polynomial time setting,  Potechin~\cite{Potechin} gives a balanced linear threshold function that is approximation-resistant. Balanced linear threshold functions do not support one-wise independence and so that function would be a good candidate for a streaming-approximation-resistant function that is not covered by~\autoref{thm:one-wise}.

\subsection{Overview of our analysis}

At the heart of our characterization is a family of linear sketching algorithms for $\maxf$. We will describe this family soon, but the main idea of our proof is that if no algorithm in this family solves $(\gamma,\beta)$-$\maxf$, then we can extract a single pair of instances, roughly a $\gamma$-satisfiable ``yes'' instance and an at most $\beta$-satisfiable ``no'' instance, that certify this inability. We then show how this pair of instances can be exploited as gadgets in a negative result. Up to this part our approach resembles that in \cite{raghavendra2008optimal} (though of course all the steps are quite different). The main difference is that we are able to use the structure of the algorithm and the lower bound construction to show that we can afford to consider only instances on $k$ variables. (This step involves a non-trivial choice of definitions that we elaborate on shortly.) 
This bound on the number of variables allows us to get a very ``decidable'' separation between approximable and inapproximable problems. Specifically we show that distinction between approximable setting and the inapproximable one can be expressed by a quantified formula over the reals with a constant number of quantifiers over $2^k$ variables and equations --- a problem that is known to be solvable in \textsf{PSPACE}. We give more details below.

\paragraph{Bias-based algorithms.} 
For every $\veclambda = (\lambda_1,\ldots,\lambda_k) \in \R^k$ we define the $\veclambda$-bias measure of an instance $\Psi$ of $\maxf$ as follows. Let $p_{ij}$ denote the number of occurrences of the literal $x_i$ as the $j$th variable in a constraint, and let $n_{ij}$ denote the same quantity for the literal $-x_i$. Let $\bias_{i,j} = \frac{1}m (p_{ij} - n_{ij})$. We define the $\veclambda$-bias of the $i$th variable to be a weighted sum of $\bias_{i,j}$ as follows: $\bias_{\veclambda} (\Psi)_i = \sum_{j=1}^k \lambda_j \bias_{i,j}$. Let the bias vector of the instance $\Psi$ be $\bias_{\veclambda} (\Psi) = (\bias_{\veclambda} (\Psi)_1,\ldots,\bias_{\veclambda} (\Psi)_n)$.   It turns out that the ability to estimate the $\ell_1$ norm of a vector in the ``turnstile setting'' implies that for any given $\veclambda$ vector, we can estimate the $\ell_1$ norm of $\bias_{\veclambda}(\Psi)$ (to within a multiplicative factor of $(1\pm \epsilon)$ for arbitrarily small $\epsilon> 0$) dynamically. We refer to an algorithm that aims to solve the $(\gamma,\beta)$-$\maxf$ using only an estimate of the $\ell_1$ norm of $\bias_\lambda(\Psi)$ (for some $\veclambda$ based on $f,\gamma,\beta$) as a ``bias-based algorithm''. A priori it is not clear how to choose a $\veclambda$ vector for a given problem. The crux of our analysis is to identify two (bounded, closed) convex sets
$K_\gamma^Y,K_\beta^N \subseteq \R^k$ such that if the two sets are disjoint then the hyperplane separating them gives us the desired $\veclambda$.

We now give some insight into the sets $K_\gamma^Y$ and $K_\beta^N$. Roughly these sets capture properties of instances of $\maxf$ on \emph{$k$ variables}, say $x_1,\ldots,x_k$. The instances we consider are special in that $x_i$ always appears as the $i$th variable in every constraint: the only variability being in whether it appears positively or negatively. The set $K_\gamma^Y$ consists of the bias vectors $\bias_{\veclambda} (\Psi)$ of all instances $\Psi$ that have $\val_\Psi(1^k) \geq \gamma$, i.e., the assignment of all $1$'s satisfied $\gamma$ fraction of the constraints of $\Psi$. The set $K_\beta^N$ is similarly supposed to capture the biases $\bias_{\veclambda} (\Psi)$ of instances $\Psi$ for which the value is at most $\beta$. Determining exactly which assignments achieve this bounded value turns out to be subtle and we defer describing it here. But given our choice, our analysis roughly works as follows: Given an instance $\Psi$ on $n$ variables, we create a distribution $\cD(\Psi)\in\Delta(\{-1,1\}^k)$ and its projection $\vecmu$ onto $\R^k$  such that if $\Psi$ is a YES instance, then $\vecmu$ ends up being in $K_\gamma^Y$, while if $\Psi$ is a NO instance, $\vecmu\in K_\beta^N$. Most crucially, the $\ell_1$ norm of $\bias_{\veclambda}(\Psi)$ exactly corresponds to the distance from $\vecmu$ to the hyperplane separating $K_\gamma^Y$ and $K_\beta^N$, which allows us to distinguish the YES and NO cases. Details of the definition of sets can be found in~\autoref{sec:results} and the analysis of the algorithm can be found in~\autoref{sec:algorithm}.

\paragraph{Lower bounds via a new set of communication problems.} 
Hardness results in streaming are usually obtained by appealing to lower bounds for communication complexity problems. In our case, both our lower bounds for sketching algorithms and general streaming algorithms are derived from lower bounds on the one-way communication complexity of a class of 2-player problem we call the ``Randomized Mask Detection'' (RMD) problems. (See \autoref{def:rmd}.) We first describe this problem and our results about this problem before returning to the streaming lower bounds.

An RMD problem is specified by two distributions $\cD_Y$ and $\cD_N$ supported on $\{-1,1\}^k$. In this problem Alice gets a vector $\vecx^* \in \{-1,1\}^n$ chosen uniformly at random which we view as a $2$-coloring of the vertex set $[n]$, and Bob gets a random $k$-uniform hypermatching $M$ with $\alpha n$ hyperedges on $[n]$, along with a vector $\vecz \in \{-1,1\}^{k\alpha n}$ whose distribution depends on whether we are in the YES case or NO case (here $\alpha$ is some small but positive constant). Specifically, $\vecz$ specifies the values of $\vecx^*$ on the vertices touched by $M$, but this information is hidden partially by picking for each edge (independently) a masking vector $\vecb \in \{-1,1\}^k$ and letting $\vecz$ for this edge be the information for $\vecx^*$ masked by (xor'ing with) $\vecb$. See \autoref{sec:BHBHM boosting} for a mathematically precise statement. The key difference between the YES instance and the NO instance is the distribution of $\vecb$: In the YES case, for every edge, the masking vector $\vecb$ is chosen independently according to some distribution $\cD_Y$ supported on $\{-1,1\}^k$ whose marginals are in $K_\gamma^N$; and in the NO case, they come independently from the distribution $\cD_N$ whose marginals are in $K_\beta^Y$. In the settings of interest to us $K_\gamma^Y$ and $K_\beta^N$ intersect and we ignore $K_\gamma^N$ and $K_\beta^Y$, and just consider two arbitrary distributions $\cD_Y$ and $\cD_N$ with matching marginals. The technical meat of our negative result is proving that for an arbitrary pair of distributions $\cD_Y$ and $\cD_N$ with matching marginals, any one-way communication protocol with $o(\sqrt{n})$ communication from Alice to Bob has $o(1)$-advantage in distinguishing the YES and NO cases. See \autoref{thm:communication lb matching moments}. 

The proof of \autoref{thm:communication lb matching moments} starts with the work of Kapralov, Khanna, and Sudan~\cite{KKS} which roughly  shows that $(\cD_Y,\cD_N)$-RMD is hard on the special case where $\cD_Y$ is uniform on $\{(1,1), (-1,-1)\}$ and $\cD_N$ is uniform on $\{-1,1\}^2$. Strictly speaking their formalism is slightly different\footnote{In order to handle the general \textsf{Max-CSP} problem, in RMD we extend the previous framework with a more detailed encoding of the hypermatching $M$, and also allow for a general masking vector $\vecb$. Due to these extensions, we cannot immediately conclude hardness of RMD from previous results, and we prove it from scratch.} --- and one in which we are not able to express all our problems, but their proof for this case certainly applies to our formalism.
The proof of \cite{KKS} is Fourier analytic, based on prior work of Gavinsky, Kempe, Kerenidis, Raz, and de Wolf~\cite{GKKRW}.
The first step of our analysis extends this Fourier analytic approach to the case of distributions over $\{-1,1\}^k$ for all values of $k$, and to all distributions $\cD_Y$ and $\cD_N$ that have {\em uniform marginals}. This is reported in \autoref{sec:BHBHM 1 wise}.

The Fourier analytic proof does not seem to extend to the case where $\cD_Y$ and $\cD_N$ have arbitrary but matching marginals (at least we were unable to do so). To get the full case, we turn to reductions. Specifically we show that while we cannot directly prove the indistinguishability of general $\cD_Y$ and $\cD_N$ with matching marginals, we can use the indistinguishability for uniform marginals as a tool (via reductions) to show indistinguishability of some restricted pairs of distributions $(\cD,\cD')$. 
The key to the final result is that for any pair of distributions $\cD_Y$ and $\cD_N$ with matching marginals, there is a path from one to the other of finite length (our upper bound is $\poly(k!)$) such that every adjacent pair of distributions on the path is indistinguishable by our aforementioned reductions for restricted pairs. We remark that while $\cD_Y$ and $\cD_N$ are typically chosen to have interesting properties with respect to their value on various assignments, the intermediate distributions may not have any interesting properties for the underlying optimization problem! But the generality of the framework turns out to be a strength in that we can refer to these problems anyway and use their indistinguishability features. The path from $\cD_Y$ to $\cD_N$ allows us to use triangle-inequality for indistinguishability to get the final result on indistinguishability of RMD on distributions with matching marginals. Details of this part can be found in \autoref{sec:BHBHM general}.

\paragraph{The actual lower bounds.} 
Returning to the streaming problems, the rough idea is to use the two player lower bounds to derive lower bounds for a streaming version of the RMD, and then to reduce this problem to our target $\maxf$ problem. An instance of the streaming RMD problem with distributions $\cD_Y,\cD_N$ generates an Alice input $\vecx^*$ as in the RMD problem, and $T$ sets of Bob inputs $(M_1,\vecz_1),\ldots,(M_T,\vecz_T)$ independently conditioned on $\vecx^*$. It then creates a stream concatenating the $T$ Bob inputs and the streaming challenge is to determine if the underlying mask distribution is $\cD_Y$ or $\cD_N$. Note that in the streaming setting, there is no player corresponding to Alice, making  the streaming RMD problem potentially harder to solve than the 2-player problem. Indeed our initial hope (and claim) was that the streaming RMD problem reduces to the 2-player RMD problem, but this hope turns out to be false. We are however able to establish such a reduction when $\cD_Y$ and $\cD_N$ have uniform marginals as claimed in \autoref{thm:one-wise}. In fact we get a slight generalization which allows the two distribution $\cD_Y$ and $\cD_N$ to be derived from two distributions $\cD'_Y$ and $\cD'_N$ with uniform marginals, by padding with a common distribution $\cD_0$ (see \autoref{thm:main-negative}).

While the hope of reducing the 2-player RMD problem to the streaming problem fails in generality, it turns out that we can get a reduction to a $T$-player simultaneous communication version of RMD. (In this simultaneous communication version, the $t$th player gets $(M_t,\vecz_t)$ as input and needs to send a message to a referee who collects the messages from the $T$ players and attempts to guess if the mask distribution is $\cD_Y$ or $\cD_N$.) Since a sketching algorithm can be turned into a protocol for the simultaneous communication game, we are able to show that whenever $K^Y_\gamma$ and $K^N_\beta$ intersect, any  sketching algorithm that solves the $(\gamma,\beta)$-approximation version of $\maxf$ requires $\Omega(\sqrt{n})$ space. 
 
 \autoref{sec:SpaceLowerBound insertion} describes the various RMD problems discussed above and how they can be used to get proofs of \autoref{thm:main-negative dynamic} and \autoref{thm:main-negative}.

\subsection{Future questions and work}\label{ssec:future}

Some of the main questions left open in this work are listed below:
\begin{enumerate}
    \item Does the characterization given by \autoref{thm:main-detailed} actually hold for general streaming algorithms? Resolving this either way would be quite interesting. 
    \item Can the methods be extended to handle the case where the constraints come from a family of functions, rather than a single function? We believe this should be straightforward to achieve.
    \item Can we further extend the results to the setting where the constraints are not placed on literals, but rather only on variables? Such an extension seems to require new ideas beyond those in this paper.
    \item Can we extend the results to the non-Boolean setting, i.e., when the variables take on values from an arbitrary finite set, as opposed to $\{-1,1\}$. We stress that both the positive and negative results in this paper exploit restrictions of the Boolean setting! In this direction, Guruswami and Tao~\cite{GT19} proved that $(1/p+\epsilon)$-approximation for the unique games with alphabet size $p$ requires $\tilde{\Omega}(\sqrt{n})$ space in the streaming setting.
    \item Can  the lower bound for the hard problems be improved to linear space lower bounds? Such an improvement was given by Kapralov and Krachun~\cite{KK19} for the \textsf{Max-2LIN} problem ($\maxf$ where $f(x,y) = x \oplus y$) in a technical tour-de-force. Extending this work to other optimization problems seems non-trivially challenging.
    \item Finally, our work and all the questions above only consider the setting of single-pass streaming algorithms. Once this is settled, it would make sense to extend the analyses to multi-pass algorithms. While there are several multi-pass streaming algorithms and lower bounds (see, e.g., ~\cite{chakrabarti2020data,mcgregor2014graph,guha2008tight} and references therein), we note that Assadi, Kol, Saxena, and~Yu~\cite{assadi2020multi} recently suggested a multi-round version of the Boolean Hidden Hypermatching problem that allows to extend some previous single-pass results (including a lower bound for approximate \textsf{Max-2LIN}) to the multi-pass setting.
\end{enumerate}


\subsection{Structure of rest of the paper}


In \autoref{sec:results}, we describe our result in detail. In particular we build our convex set framework and give  an explicit criterion to distinguish the easy and hard $\maxf$ problems. We also describe sufficient conditions for the hardness of some streaming problems in the streaming setting. \autoref{sec:preliminaries} contains some of the preliminary background used in the rest of the paper.
In \autoref{sec:algorithm}, we describe and analyze our algorithm that yields our easiness result. In \autoref{sec:SpaceLowerBound insertion}, we define the central family of communication problems that lie at the heart of our lower bounds and show how the communication complexity of this problem leads to the streaming space lower bounds claimed in \autoref{sec:results}. In \autoref{sec:BHBHM 1 wise}, we first establish the desired lower bounds for a subclass of the problems using Fourier analytic methods. In \autoref{sec:BHBHM general}, we establish reductions between various communication problems that allow us to prove our most general lower bounds.
\section{Our Results}\label{sec:results}



We start with some notation needed to state our results. We use $\R^{\geq 0}$ to denote the set of non-negative real numbers. For a finite set $\Omega$, let $\Delta(\Omega)$ denote the space of all probability distributions over $\Omega$, i.e., 
$$\Delta(\Omega) = \{\cD:\Omega \to \R^{\geq 0} | \sum_{\omega \in \Omega} \cD(\omega) = 1\}.$$
We view $\Delta(\Omega)$ as being contained in $\R^{|\Omega|}$. We use $X\sim\cD$ to denote a random variable drawn from the distribution $\cD$. 

\subsection{Key definitions}

The main objects that allow us to derive our characterization are the space of distributions on constraints that either allow a large number of constraints to be satisfied, or only a few constraints to be satisfied. To see where the distributions come from, note that
distributions of constraints over $n$ variables can naturally be identified with instances of weighted constraint satisfaction problem (where the weight associated with a constraint is simply its probability). In what follows we will consider instances on exactly $k$ variables $x_1,\ldots,x_k$. Furthermore all constraints will use $x_i$ as the $i$th variable. Hence, a constraint on $k$ variables is specified by $\vecb \in \{-1,1\}^k$, specifying the constraint $f(b_1x_1,\ldots,b_kx_k)$. Thus in what follows we will equate ``instances on $k$ variables'' with distributions on $\{-1,1\}^k$.

Given $0\leq\beta\leq \gamma \leq 1$  we will consider two sets of instances/distributions. The first set $S_\gamma^Y = S_\gamma^Y(f)$ will be instances where $\gamma$ fraction of the constraints are satisfied by the assignment $1^k$. The second set $S_\beta^N = S_\beta^N(f)$ is a bit more subtle: it consists of instances where no ``independent identical distribution'' on the variables satisfies more that $\beta$-fraction of the clauses. To elaborate, recall that the only distributions on a single variable taking values in $\{-1,1\}$ are the Bernoulli distributions. Let $\bern(p)$ denote the distribution that takes the value $1$ with probability $p$ and $-1$ with probability $1-p$. Then an instance belongs to $S_\beta^N$ if for every $p$, when $(x_1,\ldots,x_k)$ gets a random assignment chosen according to $\bern(p)^k$, the expected fraction of satisfied clauses is at most $\beta$. 
The following is our formal definition.

\begin{definition}[Space of Yes/No Distributions]\label{def:sets}
For $\gamma,\beta \in \mathbb{R}$, we define
\begin{align*}
S_\gamma^Y  = S_\gamma^Y(f) & = \{\cD_Y \in \Delta(\{-1,1\}^k) ~\vert ~\Exp_{\vecb\sim \cD_Y}[f(\vecb)]\ge \gamma \}\\
\mbox{ and } S_\beta^N = S_\beta^N(f) & = \{\cD_N\in \Delta(\{-1,1\}^k) ~\vert ~\Exp_{\vecb\sim \cD_N}\Exp_{\veca \sim \bern(p)^k}[f(\vecb \odot \veca)]\leq \beta, \forall p \} \, .\\  
\end{align*}
\end{definition}
For $\gamma > \beta$ the sets $S_\gamma^Y$ and $S_\beta^N$ are clearly disjoint. But their marginals, when projected to single coordinates need not be, and this is the crux of our characterization. In what follows, we define sets $K_\gamma^Y$ and $K_\beta^N$ to be the marginals of the distributions in $S_\gamma^Y$ and $S_\beta^N$ respectively.
For a distribution $\cD \in \Delta(\{-1,1\}^k)$, let $\vecmu(\cD)$ denote its marginals, i.e., $\vecmu(\cD) = (\mu_1,\ldots,\mu_k)$ where $\mu_i = \Exp_{\vecb\sim\cD}[b_i]$.

\begin{definition}[Marginals of Yes/No Distributions]\label{def:marginals}
For $\gamma,\beta \in \mathbb{R}$, we define
\begin{align*}
K_\gamma^Y = K_\gamma^Y(f) & = \{~\vecmu(\cD_Y)~\vert ~ \cD_Y \in S_\gamma^Y \}\\
\mbox{ and }K_\beta^N = K_\beta^N(f)& = \{~\vecmu(\cD_N)~\vert ~ \cD_N \in S_\beta^N \} \, .
\end{align*}
\end{definition}

With the two definitions above in hand we are ready to describe our characterizations of easy vs. hard approximation versions of $\maxf$. 

\subsection{Results on sketching algorithms}

The following theorem now formalizes the informal statement that low space sketching algorithms (see \autoref{def:sketching alg}) can only capture the marginals of distributions.

\begin{theorem}[Dichotomy for Sketching Algorithms]\label{thm:main-detailed}
For every function $f:\{-1,1\}^k \to \{0,1\}$ and for every $0\leq\beta<\gamma\leq1$, the following hold:
\begin{enumerate}
    \item If $K_\gamma^Y(f) \cap K_\beta^N(f) = \emptyset$, then $(\gamma,\beta)$-$\maxf$ admits a a probabilistic linear sketching algorithm (see \autoref{def:sketching alg}) that uses $O(\log n) $ space on instances on $n$ variables.
    \item If $K_\gamma^Y(f) \cap K_\beta^N(f) \neq \emptyset$, then for every $\epsilon>0$,  every sketching algorithm for $(\gamma-\eps,\beta+\eps)$- $\maxf$ requires $\Omega(\sqrt{n})$ space\footnote{The constant hidden in the $\Omega$ notation may depend on $k$ and $\epsilon$.} on instances on $n$ variables. Furthermore, if $\gamma = 1$, then every sketching algorithm for $(1,\beta+\epsilon)$-$\maxf$ requires $\Omega(\sqrt{n})$ space. 
 \end{enumerate}
\end{theorem}

\begin{proof}[Proof of \autoref{thm:main-detailed}]
Part (1) of the theorem is restated and proved as \autoref{thm:main-positive} in \autoref{sec:algorithm}. Part (2) is proved as \autoref{thm:main-negative dynamic} in \autoref{sec:SpaceLowerBound insertion}.
\end{proof}

We now turn to the implications of this theorem.
First, to get \autoref{thm:main-intro} from \autoref{thm:main-detailed}, we need to show that the question ``Is $K_\gamma^Y\cap K_\beta^N = \emptyset$?'' can be decided in polynomial space. To this end, we first make the following observation.

\begin{lemma}\label{lem:convex}
For every $\beta,\gamma \in [0,1]$ the sets $S_\gamma^Y, S_\beta^N, K_\gamma^N$ and $K_\beta^Y$ are bounded, closed and convex. Furthermore, $K_\gamma^Y \cap K_\beta^N = \emptyset$ can be expressed in the quantified theory of the reals with $2$ quantifier alternations, $O(2^k)$ variables, and polynomials of degree at most $k+1$. 
\end{lemma}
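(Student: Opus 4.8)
The plan is to prove the two parts of \autoref{lem:convex} separately. First, for convexity, boundedness, and closedness: each of $S_\gamma^Y$ and $S_\beta^N$ is a subset of the simplex $\Delta(\{-1,1\}^k) \subseteq \R^{2^k}$, which is itself bounded, closed, and convex, so it suffices to check that the defining constraints cut out a closed convex subset. For $S_\gamma^Y$ this is immediate: the map $\cD \mapsto \Exp_{\vecb \sim \cD}[f(\vecb)] = \sum_{\vecb} \cD(\vecb) f(\vecb)$ is linear in $\cD$, so the constraint $\Exp_{\vecb\sim\cD}[f(\vecb)] \ge \gamma$ is a closed halfspace. For $S_\beta^N$, I would observe that for each fixed $p \in [0,1]$, the map $\cD \mapsto \Exp_{\vecb\sim\cD}\Exp_{\veca\sim\bern(p)^k}[f(\vecb\odot\veca)]$ is again linear in $\cD$ (the inner double expectation is $\sum_{\vecb} \cD(\vecb)\, g_p(\vecb)$ where $g_p(\vecb) = \Exp_{\veca\sim\bern(p)^k}[f(\vecb\odot\veca)]$ is a fixed number in $[0,1]$), so each individual constraint ``$\le \beta$'' is a closed halfspace, and $S_\beta^N$ is the intersection over all $p \in [0,1]$ of these halfspaces — hence closed and convex. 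Then $K_\gamma^Y$ and $K_\beta^N$ (and analogously $K_\gamma^N$, $K_\beta^Y$) are images of $S_\gamma^Y$, $S_\beta^N$ under the linear marginal map $\vecmu : \R^{2^k} \to \R^k$; the linear image of a bounded convex set is bounded and convex, and boundedness plus closedness of the domain (which is compact) makes the image compact, hence closed.

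Second, for the logical-complexity statement, the point is to write ``$K_\gamma^Y \cap K_\beta^N \ne \emptyset$'' — and hence its negation — as a bounded-quantifier first-order formula over $(\R, +, \times, \le)$ of the stated shape. By the definitions of $K$ and $S$, the nonemptiness assertion is
\[
\exists\, \cD_Y, \cD_N \in \R^{2^k} \ :\ \cD_Y \in S_\gamma^Y \ \wedge\ \cD_N \in S_\beta^N \ \wedge\ \vecmu(\cD_Y) = \vecmu(\cD_N).
\]
The conditions ``$\cD_Y \in \Delta$'' (nonnegativity and sum $=1$), ``$\Exp[f] \ge \gamma$'', and ``$\vecmu(\cD_Y) = \vecmu(\cD_N)$'' are all quantifier-free polynomial (in fact linear) conditions in the $2\cdot 2^k$ coordinate variables; with $\gamma,\beta$ rational these are honest polynomial relations. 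The only subtlety is the ``$\forall p$'' inside the definition of $S_\beta^N$: the constraint defining membership in $S_\beta^N$ reads $\forall p \in [0,1]:\ \sum_{\vecb}\cD_N(\vecb)\, g_p(\vecb) \le \beta$, where $g_p(\vecb) = \sum_{\veca\in\{-1,1\}^k} \prod_{i : a_i=1} p \prod_{i: a_i=-1}(1-p) \cdot f(\vecb\odot\veca)$ is, for each fixed $\vecb$, a polynomial in $p$ of degree at most $k$. Hence $\sum_{\vecb}\cD_N(\vecb) g_p(\vecb)$ is a polynomial in $p$ (of degree $\le k$) whose coefficients are linear in the variables $\cD_N(\vecb)$, and the constraint becomes a genuine $\forall p \in [0,1]$ over a polynomial of degree $\le k$ in $p$ (degree $k+1$ once one encodes $p\in[0,1]$ as $p(1-p)\ge 0$ or similar). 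So the whole statement is $\exists (2\cdot 2^k \text{ vars})\, \forall (1 \text{ var})\, (\text{Boolean combination of polynomial inequalities of degree} \le k+1)$, i.e. $2$ quantifier alternations, $O(2^k)$ variables, degree $\le k+1$. Its negation (the statement ``$K_\gamma^Y\cap K_\beta^N = \emptyset$'' we actually want to decide) has the dual shape $\forall\cdots\exists\cdots$, still $2$ alternations and the same variable and degree bounds.

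I expect the main obstacle to be the handling of the ``$\forall p$'' quantifier cleanly: one must verify that $g_p(\vecb)$ is genuinely a polynomial in $p$ with bounded degree and that collapsing the finite sum over $\veca$ does not inflate the degree beyond $k+1$ once the domain constraint $p \in [0,1]$ is folded in — this is the one place where a careless encoding could break the degree bound — and to make sure the ``for all $p\in[0,1]$'' is expressed with a single real quantifier rather than, say, passing to Sturm-sequence coefficient conditions (which would also work but changes the shape). Everything else is routine: linearity of expectations in $\cD$, compactness of the simplex, and the fact that linear images of compact convex sets are compact and convex. Once the formula is written in the form above, one invokes the standard fact (which \autoref{thm:main-intro} relies on downstream) that such a bounded-alternation formula over the reals with $O(2^k)$ variables and polynomials of degree $\mathrm{poly}(k)$ is decidable in $\mathrm{PSPACE}$ — more precisely in space $\mathrm{poly}(2^k,\ell)$ where $\ell$ is the bit-length of the rationals $\gamma,\beta$ — which is exactly what is needed to complete the deduction of \autoref{thm:main-intro} from \autoref{thm:main-detailed}.
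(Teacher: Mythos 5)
Your proposal is correct and follows essentially the same path as the paper's proof: establish that $S_\gamma^Y$ is a (linear) halfspace intersected with the simplex, that $S_\beta^N$ is an intersection of halfspaces indexed by $p\in[0,1]$, that the $K$-sets are linear projections of these compact convex bodies, and then write the intersection question as an $\exists\,\forall$ sentence over $\R$ in $O(2^k)$ variables. One small slip: the degree bound $k+1$ does \emph{not} come from encoding $p\in[0,1]$ (which is only a degree-$2$ constraint) but from the product $\cD_N(\vecb)\cdot g_p(\vecb)$, which is degree $1$ in the $\cD_N$-variables times degree $k$ in $p$; otherwise your argument is the paper's argument.
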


\begin{proof} We start by considering the sets $S_\gamma^Y$ and $S_\beta^N$. It is straightforward to see that $S_\gamma^Y$ is a bounded and convex polytope in $\R^{2^k}$. 
$S_\beta^N$ is a bit more subtle due to the universal quantification over $p \in [0,1]$. It is now specified by infinitely many linear inequalities in $\R^{2^k}$ and so is still a bounded and convex set (though not necessarily a polytope). $K_\gamma^Y$ (resp. $K_\beta^N$) is obtained by a linear projection from $\R^{2^k}$ to $\R^k$. So $K_\gamma^Y$ is a bounded, closed, and convex polytope in $\R^{k}$, while $K_\beta^N$ is still a bounded, closed, and convex set.

To get an intersection detection algorithm we use one more property. Note that for variable $p$, the condition
$\Exp_{\veca\sim \cD_N}\Exp_{\vecb \sim \bern(p)^k}[f(\vecb \odot \veca)]\le \beta$ is a polynomial inequality in $p$ of degree at most~$k$, with coefficients that are linear forms in $\cD_N(\vecb)$, $\vecb \in \{-1,1\}^k$. This allows us to express the condition $K_\gamma^Y \cap K_\beta^N \not= \emptyset$ using the following system of quantified polynomial inequalities:
\begin{eqnarray}
&\exists &\cD_Y,\cD_N \in \R^{2^k}, ~ \forall p \in [0,1] \mbox{ s.t. } \nonumber\\
&&\cD_Y, \cD_N \mbox{ are distributions,}\label{eq:poly-dist}\\
&& \forall i \in [k], ~ \Exp_{\vecb \sim \cD_Y} [b_i] = \Exp_{\vecb \sim \cD_N} [b_i],\label{eq:poly-marginal} \\
&& \Exp_{\vecb\sim \cD_Y} [f(\vecb)] \geq \gamma, \label{eq:poly-gamma}\\
&& \Exp_{\vecb\sim\cD_N} \Exp_{\veca\sim\bern(p)^k} [f(\veca \odot \vecb)] \le \beta. \label{eq:poly-beta}
\end{eqnarray}
Note that Equations (\ref{eq:poly-dist}), (\ref{eq:poly-marginal}) and (\ref{eq:poly-gamma}) are just linear inequalities in the variables $\cD_Y,\cD_N$ and do not depend on $p$. As noticed above Equation~(\ref{eq:poly-beta}) is an inequality in $p$, and $\cD_N$, of degree $k$ in $p$, and $1$ in $\cD_N$. We thus get that the intersection problem can be expressed in the quantified theory of the reals by an expression with two quantifier alternations, $2^k$ variables and $O(2^k)$ polynomial inequalities, with polynomials of degree at most $k+1$. (Most of the inequalities are of the form $\cD_Y(\vecb) \geq 0$ or $\cD_N(\vecb) \geq 0$. Only $O(k)$ inequalities are not of that form; and of these, only one is non-linear.) 
\end{proof}

The quantified theory of the reals is known to be solvable in PSPACE. In particular we use the following theorem.

\begin{theorem}[\protect{\cite[Theorem 14.11, see also Remark 13.10]{BasuPR}}]\label{thm:bpr}
The truth of a quantified formula with $w$ quantifier alternations over $K$ variables and polynomial (potentially strict) inequalities can be decided in space $K^{O(w)}$ and time $2^{K^{O(w)}}$. 
\end{theorem}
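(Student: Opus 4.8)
The plan is to sketch the classical quantifier-elimination decision procedure for the first-order theory of the reals, with the crucial refinement (due to Basu--Pollack--Roy, building on Renegar) that quantifiers are eliminated one \emph{block} at a time rather than one variable at a time; this is exactly what produces the stated bound $K^{O(w)}$ rather than something governed by $K$ alone. Write the input formula as $Q_1 X^{(1)}\cdots Q_w X^{(w)}\,F(X^{(1)},\dots,X^{(w)})$, where each $X^{(i)}$ is a block of $k_i$ variables (so $K=\sum_i k_i$), $Q_i\in\{\exists,\forall\}$, and $F$ is a Boolean combination of (strict or non-strict) inequalities among $s$ polynomials of degree at most $d$. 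Since $\forall Y\,G \equiv \neg\exists Y\,\neg G$, it suffices to be able to eliminate a single existentially quantified block and to track how $s$ and $d$ grow; applying this to the innermost block, then the next, and so on, removes all quantifiers, and the resulting variable-free Boolean combination of sign conditions on explicit rationals is evaluated directly. The argument is thus an induction on $w$ whose whole content is a ``block elimination'' subroutine plus this growth accounting.

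Block elimination is carried out by the critical-point (pseudo-critical-point) method made parametric. Fix an existential block $\exists Y\,G(Y,Z)$ with $Y$ a block of $\ell$ variables and $Z$ the remaining (parameter) variables. For each realizable sign pattern $\sigma$ of the polynomials occurring in $G$ one produces --- uniformly in $Z$ --- a finite family of sample points meeting every semialgebraically connected component of the set of $Y$ realizing $\sigma$ in the fiber over each value of $Z$. This uses the standard package: perturb by infinitesimals to put the relevant systems in general position, take critical points of a generic linear projection restricted to the real variety cut out by each subset of the polynomials, and extract the sample points as rational univariate representations whose defining univariate polynomial has coefficients that are themselves polynomials in $Z$ (computed via resultant/subresultant machinery). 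The number of sample points is $(sd)^{O(\ell)}$ and the degrees of the parametrizing polynomials are $d^{O(\ell)}$. Since $\exists Y\,G(Y,Z)$ holds at a value $z$ exactly when $G$ holds at one of the sample points over $z$, substituting the rational univariate representations into $G$ and rewriting sign conditions on the resulting algebraic numbers as sign conditions on polynomials in $Z$ yields a quantifier-free formula $\widetilde G(Z)$ with $(sd)^{O(\ell)}$ polynomials of degree $d^{O(\ell)}$.

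Iterating from the inside out: after eliminating block $w$ we have $(sd)^{O(k_w)}$ polynomials of degree $d^{O(k_w)}$; after block $w-1$, $(sd)^{O(k_w k_{w-1})}$ polynomials of degree $d^{O(k_w k_{w-1})}$; and in general after all $w$ eliminations the number of polynomials and their degrees are bounded by $(sd)^{\prod_i O(k_i)}\le (sd)^{K^{O(w)}}$. All arithmetic --- polynomial multiplication, subresultant computation, real root isolation and sign determination for univariate polynomials whose coefficients are algebraic over $\mathbb{Q}$ --- is performed in time polynomial in the bit-size of the data, so the total running time is $(sd)^{K^{O(w)}}$, which for any formula whose defining polynomials have reasonable size absorbs $\log s$ and $\log d$ into the exponent and is $2^{K^{O(w)}}$.

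The genuinely delicate point, which I expect to be the main obstacle, is the \emph{space} bound rather than the time bound: a literal implementation writes down the large intermediate quantifier-free formulas --- of size up to $(sd)^{K^{O(w)}}$ --- and hence uses exponential space. To get $K^{O(w)}$ space one needs (i) that each atomic operation of the critical-point method, in particular isolating and comparing the real roots of a univariate polynomial with algebraic-number coefficients, is itself carried out in polynomial space, and (ii) a Savitch-style ``recompute-on-demand'' organization of the recursion: instead of materializing the formula obtained after eliminating one block, one stores only a short description and regenerates each of its polynomial coefficients whenever the next elimination step queries it, so that the $w$ recursion levels contribute a multiplicative $K^{O(w)}$ to the space rather than an additive blow-up. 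The geometric core (that finitely many parametrized sample points suffice, with bounded-degree parametrizations) and the bookkeeping of the growth of $s$ and $d$ are by now classical and are carried out in full in \cite{BasuPR}.
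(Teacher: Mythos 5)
The paper itself gives no proof of this statement: it is imported verbatim from Basu--Pollack--Roy (their Theorem~14.11 for the time bound and Remark~13.10 for the space bound), so there is nothing internal to compare your argument against. Your sketch is an accurate pr\'ecis of the block-elimination quantifier-elimination algorithm in that reference: eliminate one quantifier block at a time via the parametric critical-point method with infinitesimal perturbations and rational univariate representations, track the multiplicative growth of the number and degree of polynomials (giving an exponent $\prod_i O(k_i)\le K^{O(w)}$ when $K=\sum_i k_i$ and there are $w$ blocks), and obtain the space bound by a recompute-on-demand organization of the recursion rather than by materializing intermediate quantifier-free formulas. This matches the cited proof in spirit and in the key quantitative bookkeeping; no gap.
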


(Specifically, Theorem 14.11 in \cite{BasuPR} asserts the time complexity above, and Remark 13.10 yields the space complexity.)

\autoref{thm:main-intro} now follows immediately.

\begin{proof}[Proof of \autoref{thm:main-intro}]
\autoref{thm:main-detailed} asserts that the $(\gamma,\beta)$-approximation version of $\maxf$ is easy if and only if $K_\gamma^Y \cap K_\beta^N = \emptyset$. \autoref{lem:convex} asserts that this condition is in turn expressible in the quantified theory of the reals with 2 quantifier alternations. Finally \autoref{thm:bpr} asserts that this can be decided in polynomial space. The theorem follows.
\end{proof}

We note that the literature on approximation algorithms usually considers a single parameter version of the problem. In our context we would say that an algorithm $A$ is a $\alpha$-approximation algorithm for $\maxf$ if for every instance $\Psi$, we have 
\[ \alpha \cdot \val_\Psi \leq A(\Psi) \leq \val_\Psi\, .\]
The following proposition converts our main theorem in terms of this standard notion.

\begin{proposition}\label{prop:approx-equivalence} 
Fix $f:\{-1,1\}^k$ and let $K_\gamma^Y$ and $K_\beta^N$ denote the space of marginals for this function $f$. Let 
\[\alpha = \inf_{\beta \in [0,1]} \left \{ \sup_{\gamma\in(\beta,1] \rm{~s.t~} K_\gamma^Y \cap K_\beta^N = \emptyset} \{\beta/\gamma\}\right\}.\]
Then for every $\epsilon > 0$, there is an $(\alpha-\epsilon)$-approximation algorithm for $\maxf$ that uses $O(\log n)$ space. Conversely every $(\alpha+\epsilon)$-approximation algorithm for $\maxf$ requires $\Omega(\sqrt{n})$ space.
\end{proposition}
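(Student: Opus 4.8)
The plan is to derive \autoref{prop:approx-equivalence} purely as a corollary of \autoref{thm:main-detailed}, by repackaging the ``two-parameter'' dichotomy into the ``one-parameter'' approximation-ratio language. The key conceptual point is that an $\alpha$-approximation algorithm, in the streaming regime where exact values cannot be computed, is essentially equivalent to a family of $(\gamma,\beta)$-distinguishers ranging over all $\beta$ and all $\gamma$ slightly above $\alpha\beta$; and conversely, having such distinguishers lets one binary-search (or take the max over a fine net of thresholds) to output an estimate within a $(1\pm\epsilon)$ factor of the optimum.

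\medskip
\noindent\textbf{Positive direction.} First I would fix $\epsilon > 0$ and show how to assemble an $(\alpha - \epsilon)$-approximation algorithm. Discretize $[0,1]$ into $O(1/\epsilon)$ thresholds $\beta_0 < \beta_1 < \cdots$ with consecutive gap $\Theta(\epsilon)$. By the definition of $\alpha$ as an infimum over $\beta$ of a supremum over admissible $\gamma$, for each $\beta = \beta_i$ there exists $\gamma_i \in (\beta_i, 1]$ with $K_{\gamma_i}^Y \cap K_{\beta_i}^N = \emptyset$ and $\beta_i/\gamma_i \geq \alpha - \epsilon/2$ (using that $\alpha$ is the inf, so $\sup_\gamma \{\beta_i/\gamma_i\} \geq \alpha$, and picking $\gamma_i$ achieving the sup up to $\epsilon/2$). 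By \autoref{thm:main-detailed}(1), each $(\gamma_i,\beta_i)$-$\maxf$ has an $O(\log n)$-space dynamic streaming algorithm $A_i$; run all of them in parallel (still $O(\log n)$ space since there are $O(1/\epsilon) = O(1)$ of them). On input $\Psi$, output the largest $\beta_i$ such that $A_i$ reports ``$\opt(\Psi) \geq \gamma_i$'', or $0$ if none does; call this output $v$. Then on one hand $v \leq \val_\Psi$ up to the $(\gamma,\beta)$-promise slack, which a standard padding/amplification argument (or directly adjusting thresholds by $\epsilon$) absorbs; on the other hand, if $\val_\Psi = \gamma^*$, then taking $\beta_i$ just below $\alpha\gamma^*$ one has $\gamma_i \leq \gamma^* = \val_\Psi$ so $A_i$ fires, giving $v \geq \beta_i \geq (\alpha - \epsilon)\val_\Psi$. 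One must be slightly careful about the boundary/continuity issues (the $\sup$ in the definition of $\alpha$ may not be attained, and $K_\gamma^Y$ varies with $\gamma$), but by \autoref{lem:convex} these sets are closed and convex and vary continuously, so shrinking $\gamma_i$ by an additional $\epsilon$-amount keeps disjointness while only helping the ratio.

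\medskip
\noindent\textbf{Negative direction.} Conversely, suppose $A$ is an $(\alpha+\epsilon)$-approximation algorithm using $s$ space; I want to show $s = \Omega(\sqrt n)$. By the definition of $\alpha$, there is a $\beta$ such that for every $\gamma \in (\beta, 1]$ with $\beta/\gamma > \alpha + \epsilon/2$ — equivalently $\gamma < \beta/(\alpha + \epsilon/2)$ — we have $K_\gamma^Y \cap K_\beta^N \neq \emptyset$. Pick such a $\gamma$ with $\beta/\gamma$ strictly between $\alpha + \epsilon/2$ and $\alpha + \epsilon$; then by \autoref{thm:main-detailed}(2), the $(\gamma - \epsilon', \beta + \epsilon')$-approximation version of $\maxf$ requires $\Omega(\sqrt n)$ space for any $\epsilon' > 0$. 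Now observe that an $(\alpha+\epsilon)$-approximation algorithm $A$ solves this gapped problem: on a YES instance ($\opt \geq \gamma - \epsilon'$), $A(\Psi) \geq (\alpha+\epsilon)(\gamma-\epsilon')$; on a NO instance ($\opt \leq \beta + \epsilon'$), $A(\Psi) \leq \beta + \epsilon'$; and for $\epsilon'$ small enough $(\alpha+\epsilon)(\gamma - \epsilon') > \beta + \epsilon'$ precisely because $(\alpha + \epsilon)\gamma > \beta$ (as $\beta/\gamma < \alpha + \epsilon$), so thresholding $A$'s output distinguishes the two cases. Hence $s = \Omega(\sqrt n)$.

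\medskip
\noindent\textbf{Main obstacle.} The routine parts are the threshold arithmetic; the delicate part is handling the infimum/supremum and the continuity of $\gamma \mapsto K_\gamma^Y$ carefully, so that the $\epsilon$-slack in both directions genuinely lines up — in particular making sure that in the positive direction the finitely many thresholds suffice to cover all possible values of $\val_\Psi$, and in the negative direction that a single well-chosen $(\gamma,\beta)$ pair witnesses the hardness. I expect this bookkeeping, rather than any new idea, to be the crux; everything structural is already supplied by \autoref{thm:main-detailed} and \autoref{lem:convex}.
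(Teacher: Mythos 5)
Your overall strategy matches the paper's: discretize the threshold range, run the Theorem~\ref{thm:main-detailed}-distinguishers for a grid of $(\gamma_i,\beta_i)$ pairs in parallel, output the largest firing threshold, and in the converse direction lift an $(\alpha+\eps)$-approximation to a $(\gamma-\eps',\beta+\eps')$-distinguisher for a hard pair witnessed by the infimum. Two points, however, are not quite right and one of them is a genuine gap.

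The gap is in the positive direction: your discretization has \emph{additive} gap $\Theta(\eps)$, but the goal is a \emph{multiplicative} $(\alpha-\eps)$ guarantee. When $\val_\Psi$ is small the additive slack swamps the multiplicative target, and ``a standard padding/amplification argument absorbs this'' does not by itself fix it. The paper handles it by always outputting $\beta' = \max\{\rho(f),\beta_0\}$ and choosing the grid spacing to be $\tau = \eps\rho(f)/2$; since $\val_\Psi \ge \rho(f)$ unconditionally, the additive error $\tau$ is then a $(\eps/2)$-fraction of the output, and the arithmetic $(\beta'+\tau)/\alpha + \tau \le \beta'/(\alpha-\eps)$ goes through. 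You need this normalization by $\rho(f)$ (or an equivalent lower bound on the value) to make the additive grid error compatible with a multiplicative guarantee; without it the claim ``$v \ge (\alpha-\eps)\val_\Psi$'' fails for instances of very low value.

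The second slip is minor but should be flagged: you propose ``shrinking $\gamma_i$ by an additional $\eps$-amount keeps disjointness while only helping the ratio.'' Shrinking $\gamma$ \emph{enlarges} $K_\gamma^Y$, so it can only destroy disjointness, not preserve it. In fact no nudging is needed at all: by definition of the supremum, for any $\eps/2>0$ one can pick $\gamma_i$ in the admissible set with $K_{\gamma_i}^Y\cap K_{\beta_i}^N=\emptyset$ and $\beta_i/\gamma_i \ge \alpha(\beta_i) - \eps/2 \ge \alpha - \eps/2$, and that suffices. Your negative direction is fine and is in fact spelled out a bit more completely than the paper's one-line version, since you make explicit the choice of a single $(\gamma,\beta)$ pair from the infimum with $K_\gamma^Y\cap K_\beta^N\neq\emptyset$ and $\alpha+\eps/2 < \beta/\gamma < \alpha+\eps$, and then verify the thresholding reduction works for small enough $\eps'$.
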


\begin{proof}
For the positive result, let $\tau \triangleq \epsilon\cdot \rho(f)/2$, where $\rho(f) = 2^{-k} \sum_{\veca \in \{-1,1\}^k} f(\veca)$ is the fraction of clauses satisfied by a random assignment. Let 
$$A_\tau = \{(i\tau,j\tau) \in [0,1]^2 ~\vert~ i,j \in \mathbb{Z}^{\geq 0}, i > j, K_{i \tau}^Y \cap K_{j\tau}^N = \emptyset \}.$$
By \autoref{thm:main-detailed}, for every $(\gamma,\delta) \in A_\tau$ there is a $O(\log n\log(1/\tau))$-space algorithm for $(\gamma,\beta)$-$\maxf$ with error probability $1/(10\tau^2)$, which we refer to as the $(\gamma,\beta)$-distinguisher below. In the following we consider the case where all $O(\tau^{-2})$ distinguishers output correct answers, which happens with probability at least $2/3$.

Our $O(\tau^{-2}\log(1/\tau)\log n) = O(\log n)$ space $(\alpha-\epsilon)$-approximation algorithm for $\maxf$ is the following: On input $\Psi$, run in parallel all the $(\gamma,\beta)$-distinguishers on $\Psi$, for every $(\gamma,\beta) \in A_\tau$. Let
\[
\beta_0 = \arg \max_{\beta} [\exists \gamma \text{ such that the }(\gamma,\beta)\text{-distinguisher outputs YES on }\Psi] \, .
\]
Output $\beta' = \max\{\rho(f),\beta_0\}$.

We now prove that this is an $(\alpha-\epsilon)$-approximation algorithm. First note that by the correctness of the distinguisher we have $\beta' \leq \val_\Psi$. 
Let $\gamma_0$ be the smallest multiple of $\tau$ satisfying $\gamma_0 \geq (\beta_0 + \tau)/\alpha$.
By the definition of $\alpha$, we have that $K_{\gamma_0}^Y \cap K_{\beta_0+\tau}^N = \emptyset$. So $(\gamma_0,\beta_0+\tau) \in A_\tau$ and so the $(\gamma_0,\beta_0+\tau)$-distinguisher must have output NO on $\Psi$ (by the maximality of $\beta_0$). By the correctness of this distinguisher we  conclude $\val_\Psi \leq \gamma_0 \leq (\beta_0 + \tau)/\alpha + \tau \leq (\beta'+\tau)/\alpha + \tau$. We now verify that $(\beta' + \tau)/\alpha + \tau \leq \beta'/(\alpha-\epsilon)$ and this gives us the desired approximation guarantee. We have 
\[(\beta' + \tau)/\alpha + \tau \leq (\beta'+2\tau)/\alpha \leq (\beta'/\alpha) \cdot (1 + 2\tau/\rho(f)) = (\beta'/\alpha)(1+\epsilon) \leq (\beta'/(\alpha(1-\epsilon))),
\]
where the first inequality uses $\alpha \leq 1$, the second uses $\beta' \geq \rho(f)$, the equality comes from the definition of $\tau$ and the final inequality uses $(1+\epsilon)(1- \epsilon) \leq 1$. This concludes the positive result.

The negative result is simpler. Given $\gamma,\beta$ with $\beta/\gamma \geq \alpha + \epsilon$, we can use an $(\alpha+\epsilon)$-approximation algorithm $A$  to solve the $(\gamma,\beta)$-$\maxf$, by outputting YES if $A(\Psi) \geq \beta$ and NO otherwise.
\end{proof}

\subsubsection{Approximation resistance}

We now turn to \autoref{cor:approx-res}. 
Recall that for a function $f:\{-1,1\}^k \to \{0,1\}$, we define $\rho(f) = 2^{-k} \cdot |\{\veca \in \{-1,1\}^k : f(\veca)=1\}|$ to be the probability that a uniformly random assignment satisfies~$f$. Recall further that $f$ is approximation-resistant if for every $\epsilon > 0$, the $(1,\rho(f)+\epsilon)$-approximation version of $\maxf$ requires polynomial space.

\begin{proof}[Proof of \autoref{cor:approx-res}]
By \autoref{thm:main-detailed} we have that $\maxf$ is approximation-resistant if and only if $K_1^Y \cap K_{\rho(f)+\eps}^N \ne \emptyset$ for every $\eps > 0$. In turn, this is equivalent to saying $\maxf$ is approximation resistant if and only if $K_1^Y \cap K_{\rho(f)}^N \ne \emptyset$. If $K_1^Y \cap K_{\rho(f)}^N = \emptyset$, then by the property that these sets are closed, we have that there must exist $\eps > 0$ such that  $K_{1 - \eps}^Y \cap K_{\rho(f)+\eps}^N = \emptyset$. In turn this implies, again by \autoref{thm:main-detailed}, that the $(1-\eps,\rho(f)+\eps)$-approximation version of $\maxf$ can be solved by a linear sketching algorithm with $O(\log n)$ space.
Finally, from \autoref{lem:convex} and \autoref{thm:bpr} the condition ``Is $K_1^Y \cap K_{\rho(f)}^N = \emptyset$?'' can be checked in polynomial space. 
\end{proof}

\subsection{Lower bounds in the streaming setting}\label{ssec:lb-detail-insert}

For two broad sets of special cases we are able to get lower bounds =in the streaming setting with general streaming algorithms where the lower bounds match the upper bounds derived using linear sketches. To define these classes we need some definitions.

We say that a distribution $\cD \in \Delta(\{-1,1\}^k)$ is {\em one-wise-independent} if $\vecmu(\cD) = 0^k$. We say that a pair of distributions $(\cD_Y,\cD_N)$ form a {\em padded one-wise pair} if there exists $\tau \in [0,1]$ and distributions $\cD_0,\cD'_Y$ and $\cD'_N$ such that (1) $\cD'_Y$ and $\cD'_N$ are one-wise independent and (2) $\cD_Y = \tau\cD_0 + (1-\tau)\cD'_Y$ and $\cD_N = \tau\cD_0 + (1-\tau)\cD'_N$. 


Our main lower bound in the streaming setting asserts that if $S^Y_\gamma\times S^N_\beta$ contains a padded one-wise pair $(\cD_Y,\cD_N)$ then $(\gamma,\beta)$-$\maxf$ requires $\Omega(\sqrt{n})$-space.

\begin{restatable}[Streaming lower bound]{theorem}{restatethmmainnegative}
\label{thm:main-negative}
For every function $f:\{-1,1\}^k\rightarrow\{0,1\}$ and for every $0\leq\beta<\gamma\leq1$, if there exists a padded one-wise pair of distributions $\cD_Y \in S^Y_\gamma$ and $\cD_N \in S^N_\beta$ then, for every $\epsilon>0$, then every streaming algorithm that solves $(\gamma-\eps,\beta+\eps)$-$\maxf$ requires $\Omega(\sqrt{n})$ space. 
Furthermore, if $\gamma = 1$ then $(1,\beta+\epsilon)$-$\maxf$ requires $\Omega(\sqrt{n})$ space. 
\end{restatable}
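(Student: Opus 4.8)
The plan is to reduce the streaming problem to a simultaneous-communication version of the Randomized Mask Detection problem, and to use the hypothesized padded one-wise pair $(\cD_Y,\cD_N)$ as the ``gadget'' governing the reduction. First I would set up the \emph{streaming} $\RMD$ problem with distributions $(\cD_Y,\cD_N)$: there is a hidden assignment $\vecx^\ast\in\{-1,1\}^n$, and a stream built from $T$ independent pieces $(M_t,\vecz_t)$, each $M_t$ a random $k$-uniform hypermatching with $\alpha n$ edges and $\vecz_t$ encoding $\vecx^\ast$ restricted to the matched vertices, masked edge-by-edge by a vector drawn from $\cD_Y$ (YES case) or $\cD_N$ (NO case). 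The first reduction step is the standard streaming-to-communication move: show that a streaming algorithm solving $\psRMD$ with space $s$ yields a $T$-player \emph{simultaneous} protocol in which player $t$ holds $(M_t,\vecz_t)$, each sends $s$ bits to a referee, and the referee distinguishes YES from NO. The key point (already flagged in the overview) is that the streaming-to-\emph{two-player} reduction fails in general, but the streaming-to-simultaneous reduction does not; when the marginals are uniform --- which is exactly the padded one-wise situation after stripping the $\cD_0$ padding --- one can additionally reduce the $T$-player simultaneous problem down to the two-player $\RMD$ problem, because each player's input is (conditionally on $\vecx^\ast$) an independent fresh matching and the referee can simulate ``Alice'' against the combined ``Bob.''

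Second, I would invoke \autoref{thm:communication lb matching moments} (the one-way two-player lower bound for $\RMD$ with matching marginals): any one-way protocol from Alice to Bob with $o(\sqrt n)$ communication has $o(1)$ advantage. Pushing this through the two reductions above gives that no $o(\sqrt n)$-space streaming algorithm can distinguish the YES and NO streams of $\psRMD$ with distributions $(\cD_Y,\cD_N)$. The padding distribution $\cD_0$ is handled by noting that edges masked by $\cD_0$ carry the \emph{same} distribution in YES and NO cases, so they contribute nothing to distinguishability; formally one couples the two streams on the $\cD_0$-edges and only the $(1-\tau)$-fraction of $\cD'_Y$ vs.\ $\cD'_N$ edges --- which have uniform marginals --- remain, and the Fourier-analytic bound of \autoref{sec:BHBHM 1 wise} applies to those.

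Third comes the reduction from $\psRMD$ to $(\gamma-\eps,\beta+\eps)$-$\maxf$. Given a $\psRMD$ stream, I would turn each hyperedge $e=(j_1,\ldots,j_k)$ of $M_t$ together with its label $\vecz_e\in\{-1,1\}^k$ into the constraint $f(z_{e,1}x_{j_1},\ldots,z_{e,k}x_{j_k})$ --- i.e.\ use $\vecz_e$ itself as the negation pattern $\vecb$. In the YES case the mask is drawn from $\cD_Y\in S^Y_\gamma$, so the hidden assignment $\vecx^\ast$ (after a global sign flip matching $\vecz$ to the ``all-ones'' reference frame) satisfies a $\gamma-o(1)$ fraction of the constraints in expectation, hence $\opt\ge\gamma-\eps$ with high probability over the $\alpha n T$ constraints by a Chernoff bound. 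In the NO case, $\cD_N\in S^N_\beta$ means no product distribution $\bern(p)^k$ on the variables beats $\beta$; I would argue that since $\vecx^\ast$ is uniform and the matching is random, for \emph{any} fixed assignment $\vecsigma$ the expected fraction of satisfied constraints is at most $\beta$ (this is where $S^N_\beta$'s quantification over all $\bern(p)$ is exactly what's needed, because the relevant object is the empirical bias of $\vecsigma$ relative to $\vecx^\ast$), and then take a union bound over all $2^n$ assignments --- the Chernoff deviation $e^{-\Omega(\alpha n T)}$ beats $2^n$ once $T$ is a large enough constant. The $\gamma=1$ strengthening uses that $\cD_Y\in S^Y_1$ means $\vecx^\ast$ satisfies \emph{every} constraint, so $\opt=1$ exactly, no Chernoff needed on the YES side.

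The main obstacle I expect is the second reduction step --- getting from the $T$-player simultaneous lower bound down to the two-player $\RMD$ bound in the padded case. The subtlety is that the referee must simulate the missing ``Alice'' player: it needs to sample a consistent $\vecx^\ast$ and feed Bob's side the aggregate of the $T$ players' matchings, and one must check that the induced two-player instance has the right YES/NO distribution and that the reduction only loses a constant factor in the advantage and in $n$ (the union of $T$ matchings lives on the same vertex set, so parameters are fine, but the conditioning on $\vecx^\ast$ must be handled carefully so that the simulated Alice input is genuinely uniform). Relatedly, verifying that the $\cD_0$-padding truly drops out requires the coupling argument to interact correctly with the random negation patterns $\vecz$; this is the place where the ``padded'' hypothesis, rather than plain one-wise independence, is being used, and it is the step most likely to hide a technical wrinkle.
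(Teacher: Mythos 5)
Your step three (turning a $\psRMD$ stream into a $\maxf$ instance and analyzing the value via a Chernoff bound in the YES case and a union bound over $2^n$ assignments in the NO case) matches the paper's \autoref{lem:csp value}, and your observation that $S^N_\beta$'s quantification over all $\bern(p)$ is exactly what lets the union bound go through is right on target. The problem is your reduction from streaming to communication. You assert that an insertion-only streaming algorithm with space $s$ for $\psRMD$ yields a $T$-player \emph{simultaneous} protocol in which player $t$ holds only $(M_t,\vecz_t)$ and sends $s$ bits to a referee. It does not. In an insertion-only stream, the state after block $t$ is a function of blocks $1,\ldots,t$; player $t+1$ must receive that state, so the generic conversion gives a $T$-player \emph{one-way chain}, not a simultaneous protocol. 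The collapse to a simultaneous protocol is available only via the linear-sketching normalization of Ai--Hu--Li--Woodruff, which the paper invokes only for the \emph{dynamic} strict-turnstile bound (\autoref{thm:main-negative dynamic}); it is not available in the insertion-only setting, which is precisely why the paper has to give a separate argument there.

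What the paper actually does (\autoref{lem:reduce to streaming}) is a hybrid argument directly to the \emph{two-player one-way} RMD game: find the hybrid index $t^*$ at which the total-variation gap between the YES- and NO-state sequences jumps, have Alice simulate the YES prefix up to block $t^*$ (including the $\cD_0$ padding block, which is block $0$) and send that state, and have Bob feed in the real block $t^*+1$ and run the distinguisher. The uniform-marginal hypothesis is used at exactly one place, but it is load-bearing: when $\cD_N=\unif(\{-1,1\}^k)$, the NO block $\vecsigma_N^{(t^*+1)}$ is \emph{independent} of $\vecx^*$, which is what licenses the data-processing step $\|\Gamma(S^Y_{t^*},\vecsigma_N^{(t^*+1)})-\Gamma(S^N_{t^*},\vecsigma_N^{(t^*+1)})\|_{tvd}\le\|S^Y_{t^*}-S^N_{t^*}\|_{tvd}$. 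Your write-up correctly flags ``the referee must simulate the missing Alice player'' and the correlation through $\vecx^*$ as the delicate point; the resolution is not to go through a simultaneous game but to pick a single uniform reference distribution, apply the hybrid twice (once from $\cD'_Y$ to $\unif$, once from $\cD'_N$ to $\unif$, via the triangle inequality), and only ever need Bob's block to be $\vecx^*$-independent on one side. Your treatment of the $\cD_0$ padding by coupling is morally the same as the paper's observation that block $0$ has identical YES/NO distributions so $\Delta_0=0$, so that part is fine; the gap is the streaming-to-simultaneous step.
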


\autoref{thm:main-negative} is proved in \autoref{sec:lb insert}. As stated above the theorem is more complex to apply than, say, \autoref{thm:main-detailed}, owing to the fact that the condition for hardness depends on the entire distribution (and the sets $S^Y_\gamma$ and $S^N_\beta$) rather than just marginals (or the sets $K^Y_\gamma$ and $K^N_\beta$). However it can be used to derive some clean results, specifically \autoref{thm:one-wise} and \autoref{thm:main-intro-k=2}, that do depend only on the marginals. We prove these (assuming \autoref{thm:main-negative}) below.

Recall that we say that a function $f$ {\em supports one-wise independence} if there exists a one-wise-independent distribution $D$ supported on the satisfying assignments to $f$. Note that this is equivalent to saying $0^k \in K^Y_1$. \autoref{thm:one-wise} asserts that every function that supports a one-wise independent distribution is approximation resistant in the streaming setting.

\begin{proof}[Proof of \autoref{thm:one-wise}]
Let $\rho = \rho(f)$. We first show that the vector $0^k$ belongs to both $K_1^Y$ and $K_{\rho}^N$. 
We then note that this implies the existence of a padded one-wise pair of distributions $\cD_Y \in S_1^Y$ and $\cD_N \in S_{\rho}^N$ allowing us to apply \autoref{thm:main-negative} to get the theorem.

Let $\cD_Y$ be the distribution proving that $f$ supports a one-wise independent distribution, i.e., $\cD_Y$ is supported on $f^{-1}(1)$ and satisfies $\Exp_{\vecb \in \cD_Y}[b_i] = 0$ for every $i \in [k]$. It follows that $\cD_Y \in S_1^Y$ and $0^k \in K_1^Y$.
Let $\cD_N$ be the uniform distribution on $\{-1,1\}^k$. Note that for every $\veca \in \{-1,1\}^k$ we have $\veca \odot \vecb$ is uniformly distributed over $\{-1,1\}^k$ if $\vecb \sim \cD_N$. Consequently, for every $\veca$ we get $\Exp_{\vecb \sim \cD_N} [f(\vecb \odot \veca)] = \rho$, and so for every $p \in [0,1]$, we have  
$$\Exp_{\veca \sim\bern(p)^k} \Exp_{\vecb \sim \cD_N} [f(\vecb \circ \veca)] = \rho.$$
We conclude the $\cD_N \in S_{\rho}^N$ and so $0^k \in K_{\rho}^N$. 

Now by definition we have that $\cD_Y$ and $\cD_N$ form a padded one-wise pair (using $\cD'_Y = \cD_Y$, $\cD'_N = \cD_N$ and $\tau = 0$) and so \autoref{thm:main-negative} is applicable to show that $\maxf$  is not $(1,\rho-\eps)$-approximable and so is approximation-resistant.
\end{proof}

We now turn to the proof of \autoref{thm:main-intro-k=2}. Indeed we prove a more detailed statement along the lines of \autoref{thm:main-detailed} in this case. For this part we use the fact, proved below, that any pair of distributions $\cD_Y, \cD_N \in \Delta(\{-1,1\}^2)$ with matching marginals form a padded one-wise pair.

\begin{proposition}\label{prop:k=2padded}
If $\cD_Y, \cD_N \in \Delta(\{-1,1\}^2)$ satisfy $\mu(\cD_Y) = \mu(\cD_N)$ then $(\cD_Y,\cD_N)$ form a padded one-wise pair.
\end{proposition}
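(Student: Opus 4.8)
The plan is to reduce to the one-dimensional structure of $\Delta(\{-1,1\}^2)$. A distribution $\cD \in \Delta(\{-1,1\}^2)$ is determined by four nonnegative numbers $\cD(1,1), \cD(1,-1), \cD(-1,1), \cD(-1,-1)$ summing to $1$. The two marginal constraints $\mu_1(\cD) = \Exp_{\vecb \sim \cD}[b_1]$ and $\mu_2(\cD) = \Exp_{\vecb\sim\cD}[b_2]$ are two linear functionals, and together with the normalization we have three linear constraints on a four-dimensional space; hence the set of distributions with a fixed marginal pair $(\mu_1,\mu_2)$ is (the nonnegative part of) an affine line. Concretely, parametrizing by $t = \cD(1,1)$, one finds $\cD(1,-1) = \frac{1+\mu_1}{2} - t$, $\cD(-1,1) = \frac{1+\mu_2}{2} - t$, and $\cD(-1,-1) = t - \frac{\mu_1+\mu_2}{2} + \text{(const)}$, so moving along the line amounts to adding a multiple of the signed measure $e := \delta_{(1,1)} - \delta_{(1,-1)} - \delta_{(-1,1)} + \delta_{(-1,-1)}$, which has zero marginals. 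This is the key observation: any two distributions with the same marginals differ by a scalar multiple of $e$.

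Next I would exhibit $e$ (up to scaling and sign) as a difference of two genuine one-wise-independent distributions. The distribution $\cD_+ = \frac12\delta_{(1,1)} + \frac12\delta_{(-1,-1)}$ (uniform on the "equal" pair) has both marginals $0$, and $\cD_- = \frac12\delta_{(1,-1)} + \frac12\delta_{(-1,1)}$ (uniform on the "unequal" pair) also has both marginals $0$; and $\cD_+ - \cD_- = \frac12 e$. So if $\cD_Y - \cD_N = c\,e$ for some scalar $c$, then writing $\lambda := |c|/2 \ge 0$ we have either $\cD_Y = \cD_N + \lambda(\cD_+ - \cD_-)$ or $\cD_Y = \cD_N + \lambda(\cD_- - \cD_+)$, depending on the sign of $c$.

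It remains to massage this into the exact form of a padded one-wise pair, namely to write $\cD_Y = \tau\cD_0 + (1-\tau)\cD'_Y$ and $\cD_N = \tau\cD_0 + (1-\tau)\cD'_N$ with $\cD'_Y,\cD'_N$ one-wise independent. Assume WLOG (swapping the roles of $\cD_+,\cD_-$ if needed) that $\cD_Y = \cD_N + \lambda(\cD_+ - \cD_-)$. Set $\tau = 1 - 2\lambda$ — note $\lambda \le 1/2$ automatically since $\cD_Y(1,-1), \cD_N(-1,1) \ge 0$ force $\lambda \le \min\{\cD_N(1,1), \cD_N(-1,-1), \cD_Y(1,-1), \cD_Y(-1,1)\} \le 1/2$, which I should check carefully — and set $\cD_0 = \frac{1}{1-2\lambda}(\cD_N - \lambda\cD_+ - \lambda\cD_-)$ when $\lambda < 1/2$ (and handle $\lambda = 1/2$ separately, where $\cD_Y = \cD_+$, $\cD_N = \cD_-$ and $\tau = 0$). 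One checks $\cD_0$ is a valid distribution because $\cD_N - \lambda\cD_+ - \lambda\cD_-$ is nonnegative coordinatewise (again using the marginal-forced bounds on $\lambda$) and sums to $1 - 2\lambda$. Then with $\cD'_Y = \cD_+$ and $\cD'_N = \cD_-$ (both one-wise independent) one verifies $\tau\cD_0 + (1-\tau)\cD'_Y = (\cD_N - \lambda\cD_+ - \lambda\cD_-) + 2\lambda\cD_+ = \cD_N + \lambda\cD_+ - \lambda\cD_- = \cD_Y$ and similarly $\tau\cD_0 + (1-\tau)\cD'_N = \cD_N$, as required.

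The main obstacle is purely bookkeeping: verifying that the scalar $\lambda$ is genuinely bounded by $1/2$ and that $\cD_0$ comes out nonnegative, both of which hinge on using all four nonnegativity constraints of $\cD_Y$ and $\cD_N$ together with the matching-marginals identity; there is also the mild case analysis on the sign of $c$ and the degenerate case $\lambda = 1/2$. None of this is deep, but it should be written out so the edge cases are visibly covered.
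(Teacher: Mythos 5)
Your overall strategy --- observe that distributions on $\{-1,1\}^2$ with a fixed marginal pair form a line, write $\cD_Y - \cD_N$ as a scalar multiple of $\cD_+ - \cD_-$ where $\cD_+ = \unif(\{(1,1),(-1,-1)\})$ and $\cD_- = \unif(\{(1,-1),(-1,1)\})$, and then exhibit the decomposition explicitly --- is precisely the paper's route and is sound. But the arithmetic in your final ``massaging'' paragraph is off, and the step you wave through with ``similarly'' actually fails.

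First, a normalization slip: since $\cD_+ - \cD_- = \tfrac12 e$, the relation $\cD_Y - \cD_N = c\,e$ gives $\cD_Y = \cD_N + 2c(\cD_+ - \cD_-)$, so the coefficient of $\cD_+-\cD_-$ is $2|c|$, not $|c|/2$. Taking your displayed equation $\cD_Y = \cD_N + \lambda(\cD_+ - \cD_-)$, $\lambda\ge0$, as the operative definition of $\lambda$, the target identities
\[
\cD_Y = \tau\cD_0 + (1-\tau)\cD_+, \qquad \cD_N = \tau\cD_0 + (1-\tau)\cD_-
\]
force, by subtraction, $1-\tau = \lambda$, i.e.\ $\tau = 1-\lambda$, not $1-2\lambda$. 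With your choices $\tau = 1-2\lambda$ and $\cD_0 = \tfrac{1}{1-2\lambda}(\cD_N - \lambda\cD_+ - \lambda\cD_-)$, the first check does reduce to $\cD_N + \lambda(\cD_+-\cD_-) = \cD_Y$ as you write, but the second gives $\tau\cD_0 + (1-\tau)\cD_- = \cD_N - \lambda\cD_+ + \lambda\cD_- = \cD_N - \lambda(\cD_+-\cD_-)$, which equals $\cD_N$ only when $\lambda = 0$. So the proposed decomposition does not recover $\cD_N$, and the proof as written is incorrect.

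The repair is $\tau = 1-\lambda$ and $\cD_0 = \tfrac{1}{1-\lambda}(\cD_N - \lambda\cD_-) = \tfrac{1}{1-\lambda}(\cD_Y - \lambda\cD_+)$; with these, both identities hold immediately. Nonnegativity of $\cD_0$ needs $\cD_N(1,-1),\cD_N(-1,1)\ge\lambda/2$, which follows from $\cD_Y(1,-1) = \cD_N(1,-1)-\lambda/2\ge0$ and likewise for $(-1,1)$; summing these gives $\lambda\le1$, so $\tau\in[0,1]$, and the degenerate case is $\lambda=1$ (not $\lambda=1/2$), where matching marginals force $\cD_Y=\cD_+$, $\cD_N=\cD_-$. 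After these corrections your argument becomes a rephrasing of the paper's, whose $\cD_0 = \tfrac1{1-2\delta}(p_{1,1}-\delta,p_{1,-1},p_{-1,1},p_{-1,-1}-\delta)$ is exactly $\tfrac{1}{1-\lambda}(\cD_Y-\lambda\cD_+)$ with $\delta=\lambda/2$.
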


\begin{proof}
Let $\cD_Y = (p_{1,1},p_{1,-1},p_{-1,1},p_{-1,-1})$ where $p_{i,j}$ denotes the probability $\Pr_{(a,b) \sim \cD_Y}[a=i,b=j]$. If $\cD_N$ has matching marginals with $\cD_Y$ then there exists a $\delta \in [-1,1]$ such that  $\cD_N = (p_{1,1}-\delta,p_{1,-1}+\delta,p_{-1,1}+\delta,p_{-1,-1}-\delta)$. Assume without loss of generality that $\delta \geq 0$. Let
$\tau = 1-2\delta$,
$\cD_0 = \frac1{1-2\delta}(p_{1,1}-\delta,p_{1,-1},p_{-1,1},p_{-1,-1}-\delta)$,
$\cD'_Y= (1/2,0,0,1/2)$ and $\cD'_N= (0,1/2,1/2,0)$. It can be verified that $\cD'_Y$ and $\cD'_N$ are one-wise independent, $\cD_Y = \tau\cD_0 + (1-\tau)\cD'_Y$ and $\cD_N = \tau\cD_0 + (1-\tau)\cD'_N$, thus proving the proposition.
\end{proof}

Combining \autoref{prop:k=2padded} and \autoref{thm:main-negative} we immediately get the following theorem, which in turn implies  \autoref{thm:main-intro-k=2}.

\begin{theorem}\label{thm:main-detailed-k=2} For every function $f:\{-1,1\}^2 \to \{0,1\}$, and for every $0 \leq \beta < \gamma \leq 1$, the following hold:
\begin{enumerate}
    \item If $K_\gamma^Y(f) \cap K_\beta^N(f) = \emptyset$, then $(\gamma,\beta)$-$\maxf$ admits a linear sketching algorithm that uses $O(\log n) $ space. 
    \item If $K_\gamma^Y(f) \cap K_\beta^N(f) \neq \emptyset$, then for every $\epsilon>0$,  every streaming algorithm that solves $(\gamma-\eps,\beta+\eps)$-$\maxf$ requires $\Omega(\sqrt{n})$ space\footnote{The constant hidden in the $\Omega$ notation may depend on $k$ and $\epsilon$.}. Furthermore, if $\gamma = 1$, then $(1,\beta+\epsilon)$-$\maxf$ requires $\Omega(\sqrt{n})$ space. 
 \end{enumerate}
\end{theorem}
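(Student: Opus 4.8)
The plan is to observe that Part (1) is simply the $k=2$ instance of Part (1) of \autoref{thm:main-detailed}, which is already proved (as \autoref{thm:main-positive}), so nothing new is needed there. The entire content of the theorem is therefore Part (2), and the strategy is to reduce it to \autoref{thm:main-negative}, whose hypothesis asks for a \emph{padded one-wise pair} $\cD_Y \in S^Y_\gamma$, $\cD_N \in S^N_\beta$. So the first and only real step is: given that $K_\gamma^Y(f) \cap K_\beta^N(f) \neq \emptyset$, extract witnessing distributions $\cD_Y \in S^Y_\gamma$ and $\cD_N \in S^N_\beta$ with matching marginals, and then invoke \autoref{prop:k=2padded} to conclude that this pair is automatically a padded one-wise pair (this is where the restriction $k=2$ is used, and it is the only place). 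Feeding the pair into \autoref{thm:main-negative} yields the $\Omega(\sqrt{n})$ lower bound in the insertion-only setting for $(\gamma-\eps,\beta+\eps)$-$\maxf$, and the ``furthermore'' clause about $\gamma=1$ comes along for free since \autoref{thm:main-negative} has exactly the same clause.

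In more detail: if $K_\gamma^Y(f) \cap K_\beta^N(f) \neq \emptyset$, pick $\vecmu$ in the intersection. By \autoref{def:marginals}, $\vecmu = \vecmu(\cD_Y)$ for some $\cD_Y \in S^Y_\gamma$ and $\vecmu = \vecmu(\cD_N)$ for some $\cD_N \in S^N_\beta$; in particular $\vecmu(\cD_Y) = \vecmu(\cD_N)$, so $\cD_Y,\cD_N \in \Delta(\{-1,1\}^2)$ have matching marginals. \autoref{prop:k=2padded} then gives that $(\cD_Y,\cD_N)$ is a padded one-wise pair, and since $\cD_Y \in S^Y_\gamma$ and $\cD_N \in S^N_\beta$ by construction, the hypothesis of \autoref{thm:main-negative} is met. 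That theorem delivers exactly the conclusion of Part (2).

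I expect no genuine obstacle here: the theorem is essentially a packaging of \autoref{prop:k=2padded} (the $k=2$ miracle that matching marginals force a padded one-wise decomposition) together with the already-established \autoref{thm:main-negative} and \autoref{thm:main-positive}. The only point requiring a moment's care is making sure the two projections of the point $\vecmu$ in the intersection are realized by distributions lying in $S^Y_\gamma$ and $S^N_\beta$ respectively — but this is immediate from the definition of $K^Y_\gamma$ and $K^N_\beta$ as images of $S^Y_\gamma$ and $S^N_\beta$ under the marginals map. Hence the proof is a two-line deduction, which is presumably why the excerpt states it as an immediate corollary of \autoref{prop:k=2padded} and \autoref{thm:main-negative}.
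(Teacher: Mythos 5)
Your proof is correct and matches the paper's proof exactly: Part~(1) is the $k=2$ specialization of Part~(1) of \autoref{thm:main-detailed}, and Part~(2) is obtained by lifting a point in $K_\gamma^Y\cap K_\beta^N$ to witnessing distributions $\cD_Y\in S_\gamma^Y$, $\cD_N\in S_\beta^N$ with matching marginals, invoking \autoref{prop:k=2padded} to get a padded one-wise pair, and then applying \autoref{thm:main-negative}. No gaps.
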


\begin{proof}
Part (1) is simply the specialization of Part (1) of \autoref{thm:main-detailed-k=2} to the case $k=2$. For Part (2), suppose $\mu \in K^Y_\gamma \cap K^N_\beta$. Let $\cD_Y \in S^Y_\gamma$ and $\cD_N \in S^N_\beta$ be distributions such that $\mu(\cD_Y) = \mu(\cD_N) = \mu$. Then by \autoref{prop:k=2padded} we have that $\cD_Y$ and $\cD_N$ form a padded one-wise pair, and so \autoref{thm:main-negative} can be applied to get Part (2). 
\end{proof}

\subsection{Examples}\label{sec:examples}


We illustrate the applicability of our results with two examples. The first is of the specific function $\maxtwoand$, i.e., $\maxf$ for $f(a,b) = a \wedge b$, i.e., $f(a,b) = 1$ if and only if $a = b = 1$. Here, since we are working with $k=2$ we get to use the stronger separation from \autoref{thm:main-detailed-k=2} (with general streaming lower bounds).

\begin{examplebox}{Example 1 (\textsf{Max-2AND}).}
For the function $f:\{-1,1\}\to\{0,1\}$ given by $f(1,1) = 1$ and $f(a,b) = 0$ otherwise, we would like to calculate the quantity $\inf_\beta\sup_{\gamma | K_\gamma^Y \cap K_\beta^N = \emptyset}\beta/\gamma$. We first note that due to the symmetry of $f$, we have $K^Y_\gamma$ is symmetric, i.e., $(\mu_1,\mu_2) \in K^Y_\gamma \Leftrightarrow (\mu_2,\mu_1) \in K^Y_\gamma$. Similarly with $K^\beta_N$. Further by convexity of $K^Y_\gamma$ and $K^N_\beta$ we get there exists a pair $(\mu_1,\mu_2) \in K^Y_\gamma\cap K^N_\beta$ if and only if there exists a $\mu$ such that $(\mu,\mu) \in K^Y_\gamma\cap K^N_\beta$. We now define two functions that will help us answer the question if such a $\mu$ exists. Let 
\[
\gamma(\mu) :=\max_{\gamma\, |\, (\mu,\mu)\in K^Y_\gamma}\{\gamma\} ~~~ \& ~~~  \beta(\mu) := \min_{\beta\, |\, (\mu,\mu)\in K^N_\beta}\{\beta\}.
\]
Note that $K_\gamma^Y \cap K_\beta^N \ne \emptyset$ if and only if there exists a $\mu$ such that $\gamma \leq \gamma(\mu)$ and $\beta \geq \beta(\mu)$. 
With some minimal calculations for $\gamma(\mu)$ and some slightly more involved ones for $\beta(\mu)$ we can show
\[
\gamma(\mu) = \frac{1+ \mu}{2} ~~~ \text{and} ~~~
\beta(\mu) = \left\{\begin{array}{ll}
|\mu|    & ,\  |\mu|\geq\frac{1}{3}\\
\frac{(1-|\mu|)^2}{4(1-2|\mu|)}    & ,\ \text{else.}
\end{array}
\right.
\]
With the above in hand we can analyze when $K^Y_\gamma \cap K^N_\beta = \emptyset$. 

 First, when $\beta < 1/4$ then $K^N_\beta = \emptyset$. Next, 
when $\gamma\leq1/2$, note that $(0,0)\in K^Y_\gamma$ and hence $K^Y_\gamma\cap K^N_\beta\neq\emptyset$ for all $\beta\geq1/4$. When $\gamma>1/2$, we set  $\mu = 2\gamma -1$ which leads to $|\mu| = \mu = 2\gamma -1$ and so 
we get
\[
\beta(\mu)\big|_{\mu = 2\gamma -1 } = \left\{ \begin{array}{ll}
          \frac{(1-\gamma)^2}{3-4\gamma}  & ,\ 1/2\leq\gamma < 2/3\\
          2\gamma-1 & ,\ 2/3\leq\gamma\,.
      \end{array} \right.
\]
We thus get that the set $H^{\cap} \triangleq \{(\gamma,\beta) \in [0,1]^2 | K^Y_\gamma \cap K^N_\beta \ne \emptyset\}$ (of {\em hard} problems) is given by:
\begin{align*}
H^\cap = & ~~~~~~ \left[0,\frac12\right]\times \left[\frac14,1\right] \\
         & \cup ~~\left\{(\gamma,\beta) | \gamma \in \left[\frac12,\frac23\right], \beta \in \left[\frac{(1-\gamma)^2}{3-4\gamma},1\right]\right\} \\
         & \cup  ~~\left\{(\gamma,\beta) |\gamma\in \left[\frac23,1\right], \beta \in \left[2\gamma - 1,1\right]\right\}.
\end{align*}  

The quantity $\alpha(\beta) = \sup_{\gamma\in[\beta,1]\, |\, K_\gamma^Y \cap K_\beta^N = \emptyset} \beta/\gamma$ is minimized at $\beta = 4/15$. 

At this point $\alpha = 4/9$, which is consistent with the findings in \cite{CGV20} for the $\textsf{Max-2AND}$ problem. Our more refined analysis also shows that
$\alpha(\beta)$ approaches $1$ as $\beta \to 1$ (suggesting that ``almost-satisfiable'' instances are better approximated). 
\begin{figure}[H]
    \centering
    \includegraphics[width=15cm]{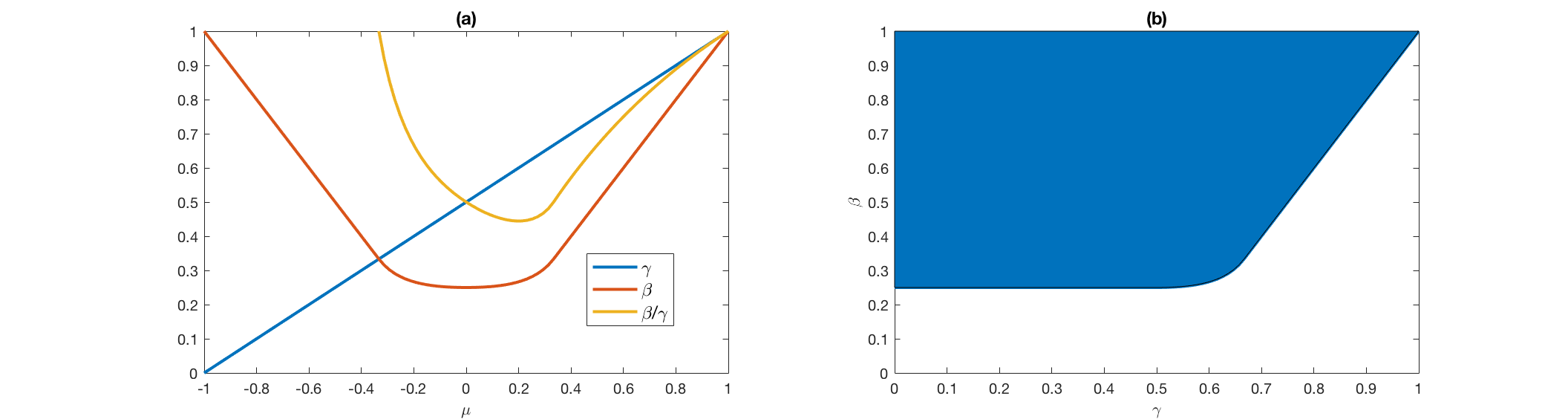}
    \caption{(a) A plot of $\gamma,\beta$, and $\beta/\gamma$ with respect to $\mu$. (b) A plot of $H^{\cap}$.
    }
    \label{fig:Max-2AND}
\end{figure}
\end{examplebox}

The second example we consider includes an entire family of functions.


\begin{lemma}[one-wise independence implies approximation resistance]\label{prop:symmetric}
For a symmetric function $f:\{-1,1\}^k\to \{0,1\}$, $\maxf$ is approximation resistant if and only if it supports a~one-wise independent distribution.
\end{lemma}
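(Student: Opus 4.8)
The plan is to prove both directions using the machinery already in place: the ``if'' direction follows immediately from \autoref{thm:one-wise}, so the real work is the ``only if'' direction, which I would prove contrapositively --- assuming a symmetric $f$ does \emph{not} support a 1-wise independent distribution, I would exhibit $\epsilon > 0$ for which the $(1,\rho(f)+\epsilon)$-approximation version of $\maxf$ admits an $O(\log n)$-space dynamic streaming algorithm, via \autoref{thm:main-detailed}. By \autoref{cor:approx-res} it suffices to show $K_1^Y(f) \cap K_{\rho(f)}^N(f) = \emptyset$; equivalently, since these are closed sets, to show they are disjoint, which I would do by pinning down exactly what both sets look like for symmetric $f$.

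First I would exploit symmetry: since $f$ is invariant under permutations of its $k$ inputs, both $S_1^Y$ and $S_\beta^N$ are closed under the induced action on $\Delta(\{-1,1\}^k)$, hence by convexity both $K_1^Y$ and $K_\beta^N$ are symmetric convex subsets of $\R^k$, so it suffices to check whether the diagonal points $(\nu,\ldots,\nu)$ lie in both. A point $(\nu,\ldots,\nu) \in K_1^Y$ iff there is a distribution on $f^{-1}(1)$ with all marginals equal to $\nu$; by averaging over permutations this is equivalent to the \emph{symmetric} distribution on $f^{-1}(1)$ with that marginal being supported on $f^{-1}(1)$ --- i.e. $\nu$ is a convex combination, with weights $w_i$ summing to $1$, of the values $\frac{k-2i}{k}$ over those Hamming-weight levels $i$ (number of $-1$'s) that are entirely contained in $f^{-1}(1)$. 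The key structural fact I would isolate: \emph{a symmetric $f$ supports 1-wise independence iff some level-union gives marginal exactly $0$}, and since the level $\frac{k-2i}{k}=0$ requires $k$ even and level $k/2$ to be satisfying, but more generally one can average two satisfied levels of opposite sign --- so ``not supporting 1-wise independence'' means \emph{every} satisfied level $i$ has $\frac{k-2i}{k}$ of one strict sign, say all positive (WLOG, after possibly negating), meaning $f$ is satisfied only on inputs with strictly fewer than $k/2$ coordinates equal to $-1$.

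Next I would analyze $K_{\rho(f)}^N$ on the diagonal. For the uniform-type reasoning: a symmetric $\cD_N$ with diagonal marginal $\nu$ assigns level $i$ total mass $q_i$ with $\sum_i q_i \frac{k-2i}{k} = \nu$; the quantity $\Exp_{\vecb\sim\cD_N}\Exp_{\veca\sim\bern(p)^k}[f(\vecb\odot\veca)]$ becomes, using that $\veca\odot\vecb$ restricted to a level-$i$ set of $\vecb$ is a product of biased coordinates, a bilinear-in-$p$, linear-in-$(q_i)$ expression; I want to show that when $f$ is satisfied only below the half-weight level, the constraint forces the diagonal marginal of any $\cD_N \in S_{\rho(f)}^N$ to be bounded \emph{strictly away from the reachable marginals of $K_1^Y$}. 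Concretely, pushing all the $\cD_N$-mass toward an all-$1$-ish bias (taking $p$ near $1$) makes $\Exp[f(\vecb\odot\veca)]$ exceed $\rho(f)$ unless the marginal is pulled down; the point is that there is a largest diagonal value $\nu_N^{\max} = \max\{\nu : (\nu,\ldots,\nu)\in K_{\rho(f)}^N\}$ and a smallest $\nu_Y^{\min} = \min\{\nu : (\nu,\ldots,\nu)\in K_1^Y\}$, and the hypothesis (all satisfying levels strictly above marginal $0$) should give $\nu_Y^{\min} > 0 \geq$ something, while an explicit computation (analogous to the $\beta(\mu)$ computation in Example 1) gives $\nu_N^{\max} < \nu_Y^{\min}$.

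The main obstacle I anticipate is the second analysis --- computing or at least lower-bounding the gap between $K_1^Y$ and $K_{\rho(f)}^N$ on the diagonal for an \emph{arbitrary} symmetric $f$ not supporting 1-wise independence. The difficulty is that $S_\beta^N$ is defined by infinitely many inequalities (the universal quantifier over $p$), so $K_{\rho(f)}^N$ is not a polytope, and its extreme diagonal point is determined by an optimization over both the worst-case $p$ and the distribution $\cD_N$; I would handle this by fixing the sign convention (all satisfying levels have positive marginal), observing monotonicity of $\Exp_{\veca\sim\bern(p)^k}[f(\vecb\odot\veca)]$ considerations that let me reduce to $p=1$ or a boundary $p$, and then showing the resulting one-variable inequality forces the marginal of $\cD_N$ to sit below the convex hull of the satisfying levels --- precisely because $\rho(f)$ is attained at $p=1/2$ and any attempt to raise the marginal of $\cD_N$ toward a value in $K_1^Y$ concentrates mass on high-marginal (low-$(-1)$-weight) configurations which, being satisfying, overshoot $\rho(f)$ when re-biased. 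If a fully general argument proves unwieldy, a clean fallback is to note that for symmetric $f$, ``supports 1-wise independence'' is equivalent to $0 \in [\min_i \{\tfrac{k-2i}{k} : \text{level } i \subseteq f^{-1}(1)\}, \max_i\{\cdots\}]$ (the convex hull of satisfiable-level marginals contains $0$), and run the disjointness check directly on this interval characterization together with the formula for $\beta(\mu)$-type bounds, which reduces everything to elementary one-variable inequalities.
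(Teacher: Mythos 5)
Your outline matches the paper's: handle the ``if'' direction by citing the 1-wise-independence hardness result, then show $K_1^Y(f)\cap K_{\rho(f)}^N(f)=\emptyset$ for the ``only if'' direction by exploiting symmetry to reduce to diagonal marginals $\mu^k$ and observing that if $f$ does not support 1-wise independence then its satisfying Hamming-weight levels all lie strictly to one side (WLOG positive), forcing every $\mu^k\in K_1^Y$ to have $\mu>0$. Up to this point you are on the same track as the paper.

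However, the decisive step --- showing that for $\mu>0$ no $\cD\in S_{\rho(f)}^N$ can have marginal $\mu^k$ --- is exactly the step you flag as your ``main obstacle'' and never actually carry out, and the specific intuitions you sketch for it do not work. You suggest taking $p$ near $1$ ``makes $\Exp[f(\vecb\odot\veca)]$ exceed $\rho(f)$ unless the marginal is pulled down,'' but as $p\to 1$ the inner expectation tends to $f(\vecb)$, and $\Exp_{\vecb\sim\cD}[f(\vecb)]$ need bear no relation to $\rho(f)$ whatsoever --- it can be $0$ if $\cD$ avoids $f^{-1}(1)$ --- so this direction gives no contradiction. The paper's actual argument is a \emph{local} perturbation around $p=1/2$: writing $p=1/2+\epsilon$ and expanding in $\epsilon$, one gets
$\Exp_{\vecb\sim\cD}\Exp_{\veca\sim\bern(p)^k}[f(\vecb\odot\veca)]=\rho+\Omega(\mu\tau\epsilon)-O(\epsilon^2)$
where $\tau=\sum_{\|w\|_1=1}\widehat f(w)$ is the sum of first-level Fourier coefficients. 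Symmetry forces all first-level coefficients to be equal, and the ``only positive levels'' hypothesis forces them positive, so $\tau>0$; together with $\mu>0$, a small enough $\epsilon>0$ pushes the expectation strictly above $\rho$, contradicting $\cD\in S_{\rho(f)}^N$. This first-order Fourier argument is the missing idea --- it is precisely why the ``infinitely many $p$-inequalities'' difficulty you identify is not an obstacle: one only needs to exhibit a \emph{single} $p$ violating the constraint, and the cheapest such $p$ is an infinitesimal perturbation of $1/2$, not a boundary value. Your fallback ``interval characterization'' gesture also does not discharge the step, since $K_{\rho(f)}^N$ is not determined by the level structure of $f$ alone but by the full $p$-family of inequalities; some such analytic input (the paper's Fourier expansion, or an equivalent derivative-at-$p=1/2$ calculation) is unavoidable.
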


\begin{proof}
One direction of the implication directly follows from \autoref{cor:approx-res}. For the other direction, we use Fourier analysis. The necessary definitions are included in \autoref{sec:fourier}.
A symmetric function~$f$ is given by a set of ``levels'' $L = \{\ell_1,\ldots,\ell_t\} \subseteq  \{-k,\ldots,k\}$ such that $f(a_1,\ldots,a_k) = 1$ if and only if $\|\veca\|_1=\sum_{i=1}^k a_i \in L$. 
If $L$ contains $0$, or if $L$ contains both positive and negative elements, then $f$ supports a one-wise independent distribution.\footnote{Indeed, if $\ell_1,\ell_2\in L$, where $\ell_1<0$ and $\ell_2>0$, then a distribution $\cD$ that with probability $p=\ell_2/(\ell_2-\ell_1)$ samples a random $\veca$ of Hamming weight $ \|\veca\|_1=\ell_1$ and with probability $1-p$ samples a random $\veca$ of weight $\|\veca\|_1=\ell_2$ is one-wise independent and is supported on $f^{-1}(1)$.} So we conclude $L$ contains only positive elements or only negative elements. Without loss of generality we consider the case where $L$ contains only positive elements.

Let $\rho = \rho(f)$, first note that both $K_1^Y$ and $K_\rho^N$ are symmetric since $f$ is symmetric. Thus, by the convexity of the sets, it suffices to consider vectors of the form $\mu^k = (\mu,\mu,\ldots,\mu)$ in $K_1^Y$ and $K_\rho^N$. Since $L$ contains only positive elements, it follows that for $\mu^k \in K_1^Y$, we must have $\mu > 0$. To prove that $\maxf$ is not approximation resistant, it suffices to show that for $\mu > 0$, $\mu^k$  is not contained in $K_\rho^N$. Consider a distribution $\cD \in S_\rho^N$ with $\mu(\cD) = \mu^k$. It can be shown by elementary Fourier analysis that if $\veca\sim\bern(1/2+\eps)^k$ and $\vecb \sim \cD$ then 
$$\Exp_{\vecb\sim \cD}\Exp_{\veca \sim \bern(p)^k}[f(\vecb \odot \veca)] = \rho + \Omega(\mu\tau\epsilon) - O(\epsilon^2),$$
where $\tau$ is the sum of the first level Fourier coefficients of $f$ (i.e., $\tau=\sum_{||w||_1 = 1} \hat{f}(w)$), and the $\Omega(\cdot)$ and $O(\cdot)$ notations hide constants depending on $f$ and $\cD$, but not on $\eps > 0$. Due to the symmetry of $f$, all the first level Fourier coefficients are equal, and due to the positivity of $L$, all these coefficients are positive. It follows that for some sufficiently small $\epsilon > 0$, the expected probability of satisfying a constraint is strictly larger than $\rho$ thus proving $\mu^k \not\in K_\rho^N$. We conclude $K_1^Y \cap K_\rho^N = \emptyset$, and so $\maxf$ is not approximation-resistant.
\end{proof}

\section{Preliminaries}\label{sec:preliminaries}

We will follow the convention that $n$ denotes the number of variables in the CSP as well as the communication game, $m$ denotes the number of constraints in the CSP, and $k$ denotes the arity of the CSP.  We use $\N$ to denote the set of natural numbers $\{1,2,3,\ldots\}$ and use $[n]$ to denote the set $\{1,2,\dots,n\}$. By default, the Boolean variable in this paper takes value in $\{-1,1\}$.

For variables of a vector form, we write them in boldface, \textit{e.g.,} $\vecx\in\{-1,1\}^n$, and its $i$-th entry is written without boldface, \textit{e.g.,} $x_i$. For variable being a vector of vectors, we write it, for example, as $\vecb=(\vecb(1),\vecb(2),\dots,\vecb(m))$ where $\vecb(i)\in\{-1,1\}^k$. The $j$-th entry of the $i$-th vector of $\vecb$ is then written as $\vecb(i)_j$. Let $\vecx$ and $\vecy$ be two vectors of the same length, $\vecx\odot\vecy$ denotes the entry-wise product of them.

For every $p\in[0,1]$, $\textsf{Bern}(p)$ denotes the Bernoulli distribution that takes value $1$ with probability~$p$ and takes value $-1$ with probability $1-p$.

\subsection{Approximate Constraint Satisfaction}


Let $f:\{-1,1\}^k\rightarrow\{0,1\}$ be a Boolean constraint function of arity $k$ and $x_1,\dots,x_n$ be variables. A constraint $C$ consists of $\vecj=(j_1,\dots,j_k)\in[n]^k$ and $\vecb=(b_1,\dots,b_k)\in\{-1,1\}^k$ where the $j_i$'s are distinct. The constraint $C$ reads as requiring $f(\vecb\odot\vecx|_\vecj)=f(b_1x_{j_1},\dots,b_kx_{j_k})=1$. A \textsf{Max-CSP}($f$) instance $\Psi$ contains $m$ constraints $C_1,\dots,C_m$ with non-negative weights $w_1,\ldots,w_m$ where $C_i=(\vecj(i),\vecb(i))$ and $w_i \in \R$ for each $i\in[m]$. For an assignment $\vecsigma\in\{-1,1\}^n$, the value $\val_\Psi(\vecsigma)$ of $\vecsigma$ on $\Psi$ is the fraction of weight of constraints satisfied by $\vecsigma$, i.e., $\val_\Psi(\vecsigma)=\tfrac{1}{W}\sum_{i\in[m]}w_i \cdot f(\vecb(i)\odot\vecsigma|_{\vecj(i)})$, where $W = \sum_{i=1}^m w_i$. The optimal value of $\Psi$ is defined as $\val_\Psi=\max_{\vecsigma\in\{-1,1\}^n}\val_\Psi(\vecsigma)$. The approximation version of \textsf{Max-CSP}($f$) is defined as follows.



\begin{definition}[$(\gamma,\beta)$-$\maxf$]
Let $f\colon\{-1,1\}^k\to\{0,1\}$ be a constraint function and $0\leq\beta<\gamma\leq1$. For each $m\in\N$, let $\Gamma_m=\{\Psi=(C_1,\dots,C_m; w_1,\ldots,w_m)\, |\, \val_\Psi\geq\gamma\}$ and $B_m=\{\Psi=(C_1,\dots,C_m; w_1,\ldots,w_m)\, |\, \val_\Psi\leq\beta\}$.

The task of $(\gamma,\beta)$-$\maxf$ is to distinguish between instances from $\Gamma=\cup_{m\leq\textsf{poly}(n)}\Gamma_m$ and instances from $B=\cup_{m\leq\textsf{poly}(n)}B_m$. Specifically we desire algorithms that output $1$ w.p. at least $2/3$ on inputs from $\Gamma$ and output $1$ w.p. at most $1/3$ on inputs from $B$. 
\end{definition}
Let $\rho(f)=2^{-k}\cdot|\{\veca\in\{-1,1\}^k\, |\, f(\veca)=1\}|$ denote the probability that a uniformly random assignment satisfies $f$. We say $f$ is \textit{streaming-approximation-resistant}  if for every $\epsilon > 0$, the $(1,\rho(f)+\epsilon)$-$\maxf$ requires $\Omega(n^\delta)$ space for some constant $\delta>0$.

We now define streaming and sketching algorithms in the context of $\maxf$. Note that the input to both algorithms are sequences of constraints. We use
$\C_{\cF,n}$ to denote the set of all constraints of $\maxf$ on $n$ variables. A stream is thus an element of $(\C_{\cF,n})^*$ and we use $\lambda$ to denote the empty stream. 

\begin{definition}[Streaming algorithm]\label{def:streaming alg}
A space $s$ general streaming algorithm $\ALG$ for $\maxf$ on $n$ variables is given by a (state-evolution) function $S: \{0,1\}^s \times \C_{\cF,n} \to \{0,1\}^s$ and a (output) function $v:\{0,1\}^s \to [0,1]$.
Let  $\wtS: (\C_{\cF,n})^*\to \{0,1\}^s$ given by $\wtS(\lambda) = 0^s$ and
$\wtS(\sigma_1,\ldots,\sigma_m)) = S(\wtS(\sigma_1,\ldots,\sigma_{m-1}),\sigma_m)$ denote the iterated state-evolution map.
Then the output of $\ALG$ on input $\sigma = (\sigma_1,\ldots,\sigma_m)$ is $v(\wtS(\sigma))$.
For the purposes of this paper, a randomized streaming algorithm is simply a distribution on the pairs $(S,v)$. 
\end{definition}

Sketching algorithms are a special class of streaming algorithms that have been widely used in both upper bounds and lower bounds.

\begin{definition}[Sketching algorithms]\label{def:sketching alg}
A (deterministic) space $s$ streaming algorithm $\ALG=(S,v)$ is a sketching algorithm if there exists a compression function $\COMP:(\C_{\cF,n})^* \to \{0,1\}^s$ and a combination function $\COMB:\{0,1\}^s\times\{0,1\}^s \to \{0,1\}^s$ such that the following hold:
\begin{itemize} 
\item $S(z,C) = \COMB(z,\COMP(C))$ for every $z \in \{0,1\}^s$ and $C \in \C_{\cF,n}$.
\item For every pair of streams $\sigma,\tau\in (\C_{\cF,n})^*$, we have
\[
\COMB(\COMP(\sigma),\COMP(\tau))=\COMP(\sigma\circ\tau)
\]
where $\sigma\circ\tau$ represents the concatenation of the streams $\sigma$ and $\tau$. A randomized algorithm $\ALG$ is a randomized sketching algorithm if it is a distribution over deterministic sketching algorithms.
\end{itemize}
\end{definition}

We remark that a linear sketching algorithm roughly
 associates with elements of a vector space $V$ (over some field) and $\COMB$ is simply vector addition in $V$. 



\subsection{Total variation distance}
The total variation distance between probability distributions plays an important role in our analysis.

\begin{definition}[Total variation distance of discrete random variables]
Let $\Omega$ be a finite probability space and $X,Y$ be random variables with support $\Omega$. The total variation distance between $X$ and $Y$ is defined as follows.
\[
\|X-Y\|_{tvd} :=\frac{1}{2}
\sum_{\omega\in\Omega}\left|\Pr[X=\omega]-\Pr[Y=\omega]\right| \, .
\]
\end{definition}
We will use the triangle and data processing inequalities for the total variation distance.
\begin{proposition}[E.g.,{\cite[Claim~6.5]{KKS}}]\label{prop:tvd properties}
For random variables $X, Y$ and $W$:
\begin{itemize}
\item (Triangle inequality) $\|X-Y\|_{tvd}\geq\|X-W\|_{tvd}-\|Y-W\|_{tvd}$.
\item (Data processing inequality) If $W$ is independent of both $X$ and $Y$, and $f$ is a function, then  $\|f(X,W)-f(Y,W)\|_{tvd}\leq\|X-Y\|_{tvd}$.
\end{itemize}
\end{proposition}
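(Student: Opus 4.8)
\textbf{Proof plan for \autoref{prop:tvd properties}.}

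The plan is to prove both inequalities directly from the definition of total variation distance, which I will abbreviate $\|\cdot\|_{tvd}$. For the triangle inequality, the key observation is that $\|X-Y\|_{tvd}$ is genuinely a metric on distributions over the finite space $\Omega$: writing $P$, $Q$, $R$ for the distributions of $X$, $Y$, $W$ respectively, we have $\|X-Y\|_{tvd} = \frac12\sum_{\omega\in\Omega}|P(\omega)-Q(\omega)| = \frac12\|P-Q\|_1$, i.e.\ it is (half) the $\ell_1$ distance between the probability vectors. First I would invoke the triangle inequality for the $\ell_1$ norm, $\|P-Q\|_1 \le \|P-R\|_1 + \|R-Q\|_1$, to obtain $\|X-Y\|_{tvd}\le \|X-W\|_{tvd}+\|W-Y\|_{tvd}$, and then rearrange to the stated form $\|X-Y\|_{tvd}\ge\|X-W\|_{tvd}-\|Y-W\|_{tvd}$. (Since the statement is symmetric in swapping the roles, one could equally derive it the other way; either ordering works.)

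For the data processing inequality, the plan is as follows. Since $W$ is independent of $X$, the joint distribution of $(X,W)$ has mass $P(x)R(w)$ at $(x,w)$, and similarly $(Y,W)$ has mass $Q(x)R(w)$. A first step is the elementary fact that appending an independent coordinate does not change total variation distance:
\[
\|(X,W)-(Y,W)\|_{tvd} = \tfrac12\sum_{x,w}|P(x)R(w)-Q(x)R(w)| = \tfrac12\sum_w R(w)\sum_x|P(x)-Q(x)| = \|X-Y\|_{tvd}.
\]
The second step is that applying any (deterministic) function $f$ can only decrease total variation distance: if $U,V$ are random variables on a finite space $\Omega'$ and $f:\Omega'\to\Omega''$, then grouping the sum over fibers of $f$ and using the triangle inequality for absolute value,
\[
\|f(U)-f(V)\|_{tvd} = \tfrac12\sum_{y\in\Omega''}\Bigl|\sum_{x\in f^{-1}(y)}(\Pr[U=x]-\Pr[V=x])\Bigr| \le \tfrac12\sum_{x\in\Omega'}|\Pr[U=x]-\Pr[V=x]| = \|U-V\|_{tvd}.
\]
Combining the two steps with $U=(X,W)$, $V=(Y,W)$ gives $\|f(X,W)-f(Y,W)\|_{tvd}\le\|(X,W)-(Y,W)\|_{tvd}=\|X-Y\|_{tvd}$, as desired.

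There is no real obstacle here: both claims are standard facts about total variation distance, and the only points requiring any care are (i) making sure the finite-support assumption is used so that all the sums are finite and reindexing over fibers is legitimate, and (ii) in the data processing step, noting that it suffices to treat deterministic $f$ — the statement as worded only asserts it for a function $f$, and the independence of $W$ from both $X$ and $Y$ is exactly what is needed to justify the "append an independent coordinate" step. (If one wanted the more general version with $f$ randomized, one would absorb the extra randomness into $W$, but that is not needed for the statement as given.) I would present the two displayed computations above essentially verbatim, preceded by the one-line reduction to $\ell_1$ distance for the triangle inequality.
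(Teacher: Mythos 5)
Your proof is correct and is the standard argument; the paper itself gives no proof of this proposition, deferring to the cited claim in \cite{KKS}, so there is nothing to compare against. Both halves are handled the right way: TVD is half the $\ell_1$ distance between the two distribution vectors, appending an independent coordinate preserves it, and pushing forward through a deterministic function is a contraction by the fiber-grouping argument you give.

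One minor transcription slip: from $\|X-Y\|_{tvd}\le\|X-W\|_{tvd}+\|W-Y\|_{tvd}$ you cannot rearrange to the stated $\|X-Y\|_{tvd}\ge\|X-W\|_{tvd}-\|Y-W\|_{tvd}$; you need to apply the $\ell_1$ triangle inequality in the grouping $\|X-W\|_{tvd}\le\|X-Y\|_{tvd}+\|Y-W\|_{tvd}$ and then move the last term to the left. Your parenthetical remark that ``either ordering works'' already shows you are aware of this, so it is only the displayed chain that should be corrected, not the idea.
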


\subsection{Concentration inequality}

We will use the following concentration inequality which is essentially an Azuma-Hoeffding style inequality for submartingales. The form we use is based on  \cite[Lemma~2.5]{KK19}, and allows for variables with different expectations. The analysis is a very slight modification of theirs.

\begin{lemma}\label{lem:our-azuma}
Let $X=\sum_{i\in[N]}X_i$ where $X_i$ are Bernoulli random variables such that for every $k\in[N]$, $\Exp[X_k \, |\, X_1,\dots,X_{k-1}]\leq p_k$ for some $p_k\in(0,1)$. Let $\mu=\sum_{k=1}^N p_k$. For every $\Delta>0$, we have:
\[
\Pr\left[X\geq\mu+\Delta\right]\leq\exp\left(-\frac{\Delta^2}{2\mu+2\Delta}\right) \, .
\]
\end{lemma}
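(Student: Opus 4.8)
The statement to prove is \autoref{lem:our-azuma}, the Azuma-Hoeffding style concentration inequality for submartingales.

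\medskip

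The plan is to follow the standard exponential-moment (Bernstein/Freedman-type) argument, adapted so as to handle the fact that the $X_i$ are Bernoulli with (conditionally) different expectations rather than i.i.d.\ or zero-mean. First I would fix a parameter $t > 0$ to be optimized at the end, and study the moment generating function $\Exp[e^{tX}]$ by peeling off one variable at a time. Concretely, writing $X^{(k)} = \sum_{i=1}^k X_i$, I would condition on $X_1,\dots,X_{k-1}$ and use that $X_k \in \{0,1\}$ together with $\Exp[X_k \mid X_1,\dots,X_{k-1}] \le p_k$ to bound
\[
\Exp\!\left[e^{tX_k}\,\middle|\,X_1,\dots,X_{k-1}\right] \le 1 + p_k(e^t - 1) \le \exp\!\left(p_k(e^t-1)\right),
\]
where the first inequality uses convexity of $x \mapsto e^{tx}$ on $[0,1]$ (so $e^{tx} \le 1 + x(e^t-1)$ for $x \in [0,1]$, evaluated at $x = X_k$ and then taking expectations, using $t>0$ so the coefficient $e^t - 1$ is positive and the upper bound on the conditional expectation is in the right direction), and the second is $1+z \le e^z$. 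Iterating this over $k = N, N-1, \dots, 1$ gives $\Exp[e^{tX}] \le \exp\!\big(\mu(e^t-1)\big)$.

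\medskip

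Next I would apply Markov's inequality to $e^{tX}$: for any $t > 0$,
\[
\Pr[X \ge \mu + \Delta] = \Pr\!\left[e^{tX} \ge e^{t(\mu+\Delta)}\right] \le \exp\!\left(\mu(e^t-1) - t(\mu+\Delta)\right).
\]
Then it remains to choose $t$ and simplify. The ``textbook'' choice $t = \ln(1 + \Delta/\mu)$ gives the sharp Poisson-tail bound $\exp(-\mu\, h(\Delta/\mu))$ with $h(x) = (1+x)\ln(1+x) - x$; to land exactly on the stated clean form $\exp\!\big(-\Delta^2/(2\mu+2\Delta)\big)$ I would instead invoke (or quickly reprove) the standard elementary inequality $h(x) \ge \frac{x^2}{2 + 2x/3} \ge \frac{x^2/\mu \cdot \mu}{\cdots}$ — more precisely, $\mu\, h(\Delta/\mu) \ge \frac{\Delta^2}{2\mu + 2\Delta/3} \ge \frac{\Delta^2}{2\mu+2\Delta}$, where the last step just weakens the constant $2/3$ to $2$ for a cleaner statement. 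Since the paper says this ``is a very slight modification'' of \cite[Lemma~2.5]{KK19}, I expect they may instead optimize $t$ in a way tailored to directly produce the $2\mu+2\Delta$ denominator; either route is routine.

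\medskip

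The only mild obstacle is bookkeeping around the direction of inequalities: because we only have an \emph{upper} bound $p_k$ on the conditional expectation (not equality), one must make sure every step that replaces $\Exp[X_k \mid \cdots]$ by $p_k$ is monotone in the right direction, which is exactly why the convexity bound $e^{tx} \le 1 + x(e^t-1)$ on $[0,1]$ with $t>0$ is the right tool — its right-hand side is increasing in $x$, so plugging in the upper bound $p_k$ for the conditional mean is legitimate. After that, the martingale peeling and the final scalar optimization are entirely standard, so I do not anticipate any real difficulty; the main care is just in presenting the conditioning cleanly (e.g.\ via the tower property, $\Exp[e^{tX^{(k)}}] = \Exp[e^{tX^{(k-1)}}\,\Exp[e^{tX_k}\mid X_1,\dots,X_{k-1}]]$).
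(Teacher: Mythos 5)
Your proposal is correct and follows essentially the same structure as the paper's proof: bound the conditional MGF of each Bernoulli via $e^{tx}\le 1+x(e^t-1)$ on $[0,1]$, peel by the tower property to get $\Exp[e^{tX}]\le\exp(\mu(e^t-1))$, and finish with Markov plus a scalar choice of $t$. The only divergence is at the final optimization: you take the textbook $t=\ln(1+\Delta/\mu)$ and then pass from the Poisson tail $\exp(-\mu h(\Delta/\mu))$ to the stated bound via the Bennett-to-Bernstein inequality $h(x)\ge x^2/(2+2x/3)$, whereas the paper sets $u=\ln(1+v)$ with $v=\Delta/(\mu+\Delta)$ and applies the elementary bound $e^{v-v^2/2}\le 1+v$, which lands directly on the $2\mu+2\Delta$ denominator without invoking $h$. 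Both are routine calculations and yield the same (in fact, your route gives a slightly stronger intermediate constant $2/3$ before being weakened), so there is no gap.
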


\begin{proof}
Let $v = \Delta/(\mu+\Delta)$ and $u = \ln (1+v)$. We have
\[ \Exp[e^{uX}] = \Exp[\prod_{k=1}^N e^{uX_k}] \leq (1+p_N(e^u-1))\cdot \Exp[\prod_{k=1}^{N-1} e^{uX_k}] \leq \prod_{i=1}^N (1 +p_k(e^u-1))
= \prod_{i=1}^N (1 +p_kv) \leq e^{v\mu},
\] 
where the final inequality uses $1+x \leq e^x$ for every $x$ (and the definition of $\mu$). 
Applying Markov to the above, we have:
\[\Pr\left[X\geq\mu+\Delta\right]=\Pr\left[e^{uX}\geq e^{u(\mu+\Delta)}\right]\leq \Exp[e^{uX}]/ e^{u(\mu+\Delta)} \leq e^{v\mu-u\mu - u\Delta}.
\]
From the inequality $e^{v-v^2/2} \leq 1+v$ we infer $u \geq v-v^2/2$ and so the final expression above can be bounded as:
\[
\Pr\left[X\geq\mu+\Delta\right]\leq e^{v\mu-u\mu - u\Delta} \leq e^{\frac{v^2}2 (\mu+\Delta) - v \Delta} = e^{-\frac{\Delta^2}{2(\mu+\Delta)}},
\]
where the final equality comes from our choice of $v$.
\end{proof}


\subsection{Fourier analysis}\label{sec:fourier}
We will need the following basic notions from Fourier analysis over the Boolean hypercube (see, for instance,~\cite{o2014analysis}).
For a Boolean function $f: \{-1,1\}^k \to \R$ its Fourier coefficients are defined by $\widehat{f}(\vecv) = \Exp_{\veca\in\{-1,1\}^k}[f(\veca)\cdot(-1)^{\vecv^\top\veca}]$, where $\vecv\in\{0,1\}^k$. We need the following two important tools.

\begin{lemma}[Parseval's identity]\label{prop:parseval}
For every function $f:\{-1,1\}^k\to\R$, 
\[
\|f\|_2^2=\frac{1}{2^k}\sum_{\veca\in\{-1,1\}^k}f(\veca)^2=\sum_{\vecv\in\{0,1\}^k}\widehat{f}(\vecv)^2 \, .
\]
\end{lemma}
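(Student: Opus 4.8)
The plan is to prove Parseval's identity the textbook way, by exhibiting the characters as an orthonormal family and double-counting. First I would equip the real vector space $\R^{\{-1,1\}^k}$ with the inner product $\langle g,h\rangle = \Exp_{\veca\in\{-1,1\}^k}[g(\veca)h(\veca)] = 2^{-k}\sum_{\veca\in\{-1,1\}^k}g(\veca)h(\veca)$, so that the left-hand side of the claimed identity is exactly $\langle f,f\rangle=\|f\|_2^2$, and, for $\vecv\in\{0,1\}^k$, the Fourier coefficient is $\widehat f(\vecv)=\langle f,\chi_\vecv\rangle$ where I write $\chi_\vecv(\veca)=(-1)^{\vecv^\top\veca}=\prod_{i\,:\,v_i=1}a_i$ for the $\vecv$-th character.

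The key step is the orthogonality relation $\sum_{\vecv\in\{0,1\}^k}\chi_\vecv(\veca)\chi_\vecv(\vecb)=2^k\cdot\One[\veca=\vecb]$ for all $\veca,\vecb\in\{-1,1\}^k$. To see it, note $\chi_\vecv(\veca)\chi_\vecv(\vecb)=\prod_{i\,:\,v_i=1}(a_ib_i)$, so summing over all $\vecv\in\{0,1\}^k$ factors coordinatewise as $\sum_{\vecv}\prod_{i\,:\,v_i=1}(a_ib_i)=\prod_{i=1}^k(1+a_ib_i)$; each factor equals $2$ when $a_i=b_i$ and $0$ when $a_i\ne b_i$, so the product is $2^k$ if $\veca=\vecb$ and $0$ otherwise. (Equivalently, this says $\langle\chi_\vecv,\chi_\vecw\rangle=\One[\vecv=\vecw]$, i.e. the characters are orthonormal, which is the usual phrasing; this is where the small $\{0,1\}$-versus-$\{-1,1\}$ bookkeeping in the exponent $\vecv^\top\veca$ lives.)

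Given the orthogonality relation, I would simply expand the right-hand side and swap the order of summation:
\[
\sum_{\vecv\in\{0,1\}^k}\widehat f(\vecv)^2=\frac{1}{2^{2k}}\sum_{\veca,\vecb\in\{-1,1\}^k}f(\veca)f(\vecb)\sum_{\vecv\in\{0,1\}^k}\chi_\vecv(\veca)\chi_\vecv(\vecb)=\frac{1}{2^{k}}\sum_{\veca\in\{-1,1\}^k}f(\veca)^2,
\]
where the first equality writes $\widehat f(\vecv)^2$ as a double sum over $\veca,\vecb$ and the second substitutes $\sum_\vecv\chi_\vecv(\veca)\chi_\vecv(\vecb)=2^k\One[\veca=\vecb]$ to collapse the inner sum. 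The right-most expression is $\|f\|_2^2$, which is the identity.

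There is no real obstacle here — this is standard (e.g.\ \cite{o2014analysis}) — so the write-up will be short; the only point requiring care is the convention for $\chi_\vecv$ and the elementary product identity $\sum_{\vecv}\prod_{i\,:\,v_i=1}(a_ib_i)=\prod_{i=1}^k(1+a_ib_i)$. An alternative packaging would first argue that the $2^k$ characters form an orthonormal basis of the $2^k$-dimensional space $\R^{\{-1,1\}^k}$, deduce the Fourier expansion $f=\sum_\vecv\widehat f(\vecv)\chi_\vecv$, and then expand $\langle f,f\rangle$ bilinearly; this is the same computation in different dress, and I would choose whichever is shorter to typeset.
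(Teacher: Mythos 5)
Your proof is correct, and it is the standard textbook argument. Note, though, that the paper does not actually prove this lemma — it is stated as a background fact with a pointer to O'Donnell's book — so there is no "paper proof" to compare against; your write-up simply supplies the canonical derivation that the paper takes for granted. The one point worth tightening is the identity $\chi_\vecv(\veca)=(-1)^{\vecv^\top\veca}=\prod_{i:v_i=1}a_i$: taken literally with $\veca\in\{-1,1\}^k$, the middle expression $(-1)^{\vecv^\top\veca}$ is not the character (its parity is $|\vecv|\bmod 2$ independent of $\veca$); this is a convention slip inherited from the paper, and one should either read $\veca$ in the exponent as the $\{0,1\}$-encoding or simply define $\chi_\vecv(\veca)=\prod_{i:v_i=1}a_i$ outright, as your subsequent computation in fact does. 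With that reading fixed, your orthogonality calculation $\sum_{\vecv}\prod_{i:v_i=1}(a_ib_i)=\prod_{i=1}^k(1+a_ib_i)=2^k\One[\veca=\vecb]$ and the double-sum expansion of $\sum_{\vecv}\widehat f(\vecv)^2$ are both exactly right.
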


Note that for every distribution $f$ on $\{-1,1\}^k$, $\widehat{f}(0^k)=2^{-k}$. For the uniform distribution $U$ on $\{-1,1\}^k$, $\widehat{U}(\vecv)=0$ for every $\vecv\neq0^k$. Thus, by \autoref{prop:parseval}, for any distribution $f$ on $\{-1,1\}^k$:
\begin{align}\label{eq:dist}
\|f-U\|_2^2=\sum_{\vecv\in\{0,1\}^k}\left(\widehat{f}(\vecv)-\widehat{U}(\vecv)\right)^2=\sum_{\vecv\in\{0,1\}^k\backslash\{0^k\}}\widehat{f}(\vecv)^2 \, .
\end{align}

Next, we will use the following consequence of hypercontractivity for Boolean functions as given in \cite[Lemma 6]{GKKRW} which in turns relies on a lemma from \cite{KKL88}. 
\begin{lemma}\label{lem:kkl}
Let $f:\{-1,1\}^n\rightarrow\{-1,0,1\}$ and $A=\{\veca\in\{-1,1\}^n\, |\, f(\veca)\neq0\}$. If $|A|\geq2^{n-c}$ for some $c\in\N$, then for every $\ell\in\{1,\dots,4c\}$, we have
\[
\frac{2^{2n}}{|A|^2}\sum_{\substack{\vecv\in\{0,1\}^n\\\|\vecv\|_1=\ell}}\widehat{f}(\vecv)^2\leq\left(\frac{4\sqrt{2}c}{\ell}\right)^\ell \, .
\]
\end{lemma}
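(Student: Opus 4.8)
The plan is to derive \autoref{lem:kkl} from the standard $(2\to q)$-hypercontractive (Bonami–Beckner) inequality, which is exactly the route of \cite{GKKRW,KKL88}; I would lift hypercontractivity itself as a black box from \cite{o2014analysis} rather than reprove it. Recall that for every $g:\{-1,1\}^n\to\R$ and every $q\ge 2$ one has $\|T_\rho g\|_q\le\|g\|_2$ whenever $\rho\le 1/\sqrt{q-1}$; applied to a function $g$ supported on a single Fourier level $\ell$ (on which $T_\rho$ acts as multiplication by $\rho^\ell$) this gives the degree-$\ell$ inequality $\|g\|_q\le (q-1)^{\ell/2}\|g\|_2$.

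The first step is the standard duality trick. Let $f^{=\ell}=\sum_{\|\vecv\|_1=\ell}\widehat f(\vecv)\chi_\vecv$, so that $\sum_{\|\vecv\|_1=\ell}\widehat f(\vecv)^2=\|f^{=\ell}\|_2^2=\langle f,f^{=\ell}\rangle$. By H\"older's inequality with conjugate exponents $q\ge2$ and $q'=q/(q-1)\in(1,2]$, followed by the degree-$\ell$ hypercontractive bound on $f^{=\ell}$,
\[
\|f^{=\ell}\|_2^2=\langle f,f^{=\ell}\rangle\le \|f\|_{q'}\,\|f^{=\ell}\|_q\le (q-1)^{\ell/2}\,\|f\|_{q'}\,\|f^{=\ell}\|_2 ,
\]
and dividing by $\|f^{=\ell}\|_2$ (the case $\|f^{=\ell}\|_2=0$ being trivial) yields $\|f^{=\ell}\|_2\le (q-1)^{\ell/2}\|f\|_{q'}$. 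The second step is to evaluate $\|f\|_{q'}$ using the hypothesis: since $f$ takes values in $\{-1,0,1\}$ and is nonzero exactly on $A$, $\|f\|_{q'}^{q'}=\Exp_\veca[|f(\veca)|^{q'}]=|A|/2^n$, hence $\|f\|_{q'}=(|A|/2^n)^{1/q'}$. Plugging this in and using $1/q'=1-1/q$,
\[
\frac{2^{2n}}{|A|^2}\,\|f^{=\ell}\|_2^2\le (q-1)^\ell\Big(\tfrac{|A|}{2^n}\Big)^{2/q'-2}=(q-1)^\ell\Big(\tfrac{2^n}{|A|}\Big)^{2/q}\le (q-1)^\ell\,2^{2c/q},
\]
where the last inequality uses $|A|\ge 2^{n-c}$.

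The third step is to optimize the free parameter $q$. Bounding $q-1\le q$ and writing the bound as $q^\ell e^{(2c\ln 2)/q}$, a one-line calculus computation shows it is minimized at $q=2c\ln 2/\ell$, with value $(2e\ln2\cdot c/\ell)^\ell$; since $2e\ln 2<4\sqrt2$, this is at most $(4\sqrt2\, c/\ell)^\ell$. This choice of $q$ is legal (i.e.\ $q\ge 2$) precisely when $\ell\le c\ln 2$. For the remaining range $c\ln2<\ell\le 4c$ I would instead take $q=2$, which gives the bound $2^{c}$, and check by an elementary estimate that $2^c\le(4\sqrt2\, c/\ell)^\ell$ throughout this interval (equivalently $t(2.5-\log_2 t)\ge 1$ for $t=\ell/c\in(\ln 2,4]$, which holds since the left-hand side is at least $2$ at both endpoints and unimodal on that range). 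Combining the two cases gives the claimed bound for all $\ell\in\{1,\dots,4c\}$.

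The only genuine content is invoking hypercontractivity; the one place requiring care is the parameter optimization, specifically the case split between small $\ell$ (optimal $q$) and large $\ell$ (fallback $q=2$) and verifying that the stated constant $4\sqrt2$ absorbs both regimes. Everything else is bookkeeping with norms and Parseval (\autoref{prop:parseval}).
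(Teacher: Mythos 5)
Your proof is correct: the Hölder/duality step, the evaluation $\|f\|_{q'}^{q'}=|A|/2^n$, and the exponent arithmetic $2/q'-2=-2/q$ all check out, and the parameter optimization is sound in both regimes — for $\ell\le c\ln 2$ the choice $q=2c\ln 2/\ell\ge 2$ gives $(2e\ln 2\cdot c/\ell)^\ell\le(4\sqrt2\,c/\ell)^\ell$, and for $c\ln 2<\ell\le 4c$ the fallback $q=2$ gives $2^c$, which is dominated by $(4\sqrt2\,c/\ell)^\ell$ since $t\mapsto t(2.5-\log_2 t)$ is concave on $[\ln 2,4]$ and is at least $2$ at both endpoints. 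One point of comparison: the paper does not prove \autoref{lem:kkl} at all — it is imported verbatim as Lemma~6 of \cite{GKKRW}, which in turn rests on the lemma of \cite{KKL88}. The KKL-style derivation bounds $\sum_{\vecv}\rho^{2\|\vecv\|_1}\widehat f(\vecv)^2=\|T_\rho f\|_2^2\le\|f\|_{1+\rho^2}^2=(|A|/2^n)^{2/(1+\rho^2)}$ and then optimizes over the noise rate $\rho$; your argument is the dual form of the same inequality, pairing $f$ against its level-$\ell$ projection via Hölder and using $(2\to q)$-hypercontractivity on $f^{=\ell}$. The two routes are essentially interchangeable and yield comparable constants; what yours buys is a self-contained proof (modulo Bonami–Beckner as quoted from \cite{o2014analysis}) of a statement the paper treats as a black box, at the cost of the explicit two-regime case analysis in $q$ that the single-parameter noise-operator computation handles in one pass.
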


\section{A Streaming Approximation Algorithm for \texorpdfstring{$\maxf$}{Max-CSP(f)}}\label{sec:algorithm}



In this section we give our main algorithmic result --- a $O(\log n)$-space linear sketching algorithm for $(\gamma,\beta)$-$\maxf$ if $K_\gamma^Y = K_\gamma^Y(f)$ and $K_\beta^N = K_\beta^N(f)$ are disjoint. (See \autoref{def:marginals}.)

The algorithm in fact works in the (general) dynamic setting where the input $\Psi = (C_1,\ldots,C_m; w_1,\ldots,w_m)$ is obtained by inserting and deleting (unweighted) constraints, possibly with repetitions and thus leading to a (integer) weighted instance. Formally $\Psi= (C_1,\ldots,C_m; w_1,\ldots,w_m)$ is presented as a stream $\sigma_1,\ldots,\sigma_\ell$ where $\sigma_t = (C'_t,w'_t)$ and $w'_t \in \{-1,1\}$ such that $w_i = \sum_{t \in [\ell] : C_i = C'_t} w'_t$. For the algorithmic result to hold, we require that $w_i$'s are non-negative at the end of the stream but the intermediate values can be arbitrary. Furthermore the algorithm requires that the length of the stream be polynomial in $n$ (or else there will be a logarithmic multiplicative factor in the length of the stream in the space usage). 

We state our main theorem of this section which simply restates Part (1) of \autoref{thm:main-detailed}.

\begin{theorem}\label{thm:main-positive}
For every function $f:\{-1,1\}^k \to \{0,1\}$ and for every $0\leq\beta<\gamma\leq1$, if $K_\gamma^Y(f) \cap K_\beta^N(f) = \emptyset$, then $(\gamma,\beta)$-$\maxf$ admits a probabilistic streaming algorithm in the dynamic setting that uses $O(\log n) $ space and succeeds with probability at least $2/3$. \label{positive-result}
\end{theorem}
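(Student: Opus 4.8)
The plan is to implement the ``bias-based algorithm'' sketched in the overview, using the disjointness of $K_\gamma^Y$ and $K_\beta^N$ to extract a separating direction $\veclambda$, and then to estimate a single $\ell_1$-norm in the turnstile model. First I would invoke \autoref{lem:convex}: since $K_\gamma^Y$ and $K_\beta^N$ are bounded, closed, convex subsets of $\R^k$ and are disjoint, the hyperplane separation theorem gives $\veclambda \in \R^k$ and a threshold $\theta$ together with a margin $c > 0$ such that $\langle \veclambda, \vecmu \rangle \le \theta - c$ for all $\vecmu \in K_\gamma^Y$ and $\langle \veclambda, \vecmu \rangle \ge \theta + c$ for all $\vecmu \in K_\beta^N$ (swapping signs if needed). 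Note all these quantities depend only on $f, \gamma, \beta$ and not on $n$, so they are constants.

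Next I would set up the reduction from an instance $\Psi$ on $n$ variables to a point in $\R^k$. Given the stream, maintain, for each $i \in [n]$ and $j \in [k]$, the quantity $p_{ij} - n_{ij}$ (number of occurrences of literal $x_i$ in coordinate $j$ minus occurrences of $-x_i$), via $k$ parallel instances of an $\ell_1$-sketch in the strict turnstile model (e.g.\ Indyk's sketch or the $\ell_1$-tournament of \cite{KNW10}-style sketches); each uses $O(\log n)$ space and on a $(1\pm\epsilon')$-approximation to $\|\bias_\veclambda(\Psi)\|_1 = \frac1W \sum_i |\sum_j \lambda_j (p_{ij}-n_{ij})|$. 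Actually one should fold the $\lambda_j$ weights into a single linear sketch: the vector $v \in \R^n$ with $v_i = \sum_j \lambda_j(p_{ij}-n_{ij})$ is a linear function of the stream updates, so a single $\ell_1$-sketch on $v$ suffices, giving $\widehat{B} = (1\pm\epsilon')\|\bias_\veclambda(\Psi)\|_1$ with probability $\ge 5/6$ in $O(\epsilon'^{-2}\log n) = O(\log n)$ space. The algorithm also needs $W = \sum_i w_i$, which is just a counter.

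The analytic core is relating $\|\bias_\veclambda(\Psi)\|_1$ to the separation. I would associate to $\Psi$ the distribution $\cD(\Psi)$ on $\{-1,1\}^k$ that, for each variable $x_i$, picks a uniformly random constraint touching $x_i$ and reads off the sign pattern $\vecb$ of that constraint, oriented so that $x_i$ sits in its correct coordinate --- more precisely, weight each constraint by $w_i$ and produce the pattern vector. The key identity to establish is that the marginal $\vecmu(\cD(\Psi))_j$ equals (up to the normalization by $W$) $\frac1W\sum_i \bias_{i,j}$-type sums, and that $\val_\Psi(1^k)$-type quantities and the $\bern(p)^k$-value quantities translate into membership of $\vecmu(\cD(\Psi))$ in $K_\gamma^Y$ resp.\ $K_\beta^N$. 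Concretely: if $\opt(\Psi) \ge \gamma$, fix an optimal assignment $\vecsigma$; after the standard ``fold $\vecsigma$ into the literals'' change of variables (replace $x_i$ by $\sigma_i x_i$), we may assume $1^n$ is a good assignment, and then the induced distribution on patterns lies in $S_\gamma^Y$, so its marginal is in $K_\gamma^Y$, whence $\langle \veclambda, \vecmu \rangle \le \theta - c$, and this pushes $\|\bias_\veclambda\|_1$ to one side; if $\opt(\Psi) \le \beta$ then no $\bern(p)^n$-product assignment beats $\beta$, which forces the pattern marginal into $K_\beta^N$, giving $\langle\veclambda,\vecmu\rangle \ge \theta+c$ and $\|\bias_\veclambda\|_1$ on the other side. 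Choosing $\epsilon'$ smaller than a constant multiple of the margin $c$ relative to $\|\veclambda\|_1$ lets the estimate $\widehat B$ distinguish the two cases with a fixed threshold; boosting the sketch's success probability to $2/3$ is routine.

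The main obstacle I anticipate is the precise definition of $\cD(\Psi)$ and the exact matching between the $\ell_1$ norm of the bias vector and the signed distance of $\vecmu(\cD(\Psi))$ to the separating hyperplane --- in particular handling the fold-in-$\vecsigma$ step carefully (the algorithm does not know $\vecsigma$, but the analysis applies it; the $\ell_1$-norm is invariant under flipping signs of coordinates of $v$, which is why this works), and dealing correctly with variables that appear in no constraint and with the normalization by $W$ versus by $m$. A secondary subtlety is that $S_\beta^N$ is defined via \emph{all} $p$-biased product distributions, so one must argue that an instance with $\opt \le \beta$ really does land in $K_\beta^N$ and not merely in some larger set; this is exactly the reason the ``no'' set was defined with the universal quantifier over $p$, and I would lean on that definitional choice rather than trying to reprove it.
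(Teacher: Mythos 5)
Your proposal is essentially the paper's proof: it uses the same separating-hyperplane / $\ell_1$-sketch strategy, the same distribution $\cD(\Psi^\vecsigma)$ obtained by folding an assignment into the literals, the same appeal to $S_\gamma^Y$ (via the optimal assignment) and $S_\beta^N$ (via the universal quantifier over $\bern(p)$), and the same observation that the $\ell_1$-norm is the maximum of $\langle\veca,\bias_\veclambda(\Psi)\rangle$ over sign vectors $\veca$ — which is exactly the paper's Lemma~4.7 ($B_\veclambda(\Psi)=\max_\veca \lambda(\cD(\Psi^\veca))$), and its Lemmas~4.8 and~4.9 for the two cases. One detail to fix when you formalize: with your stated sign convention ($\langle\veclambda,\vecmu\rangle \le \theta - c$ on $K_\gamma^Y$ and $\ge \theta + c$ on $K_\beta^N$), the YES case only upper-bounds one term inside a maximum, which says nothing about $\|\bias_\veclambda(\Psi)\|_1$. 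You want the opposite orientation — $K_\gamma^Y$ on the $\geq$ side — so that the YES case \emph{lower}-bounds the max and the NO case (which holds for every folding $\veca$, not just the identity) \emph{upper}-bounds it; you hedged with ``swapping signs if needed,'' so this is a slip rather than a gap.
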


The overview of the algorithm is as follows: We use the separability of $K_\gamma^Y$ and $K_\beta^N$ to obtain a hyperplane with normal vector $\veclambda$ that separates the two sets.
We then estimate a $\veclambda$-weighted bias of a given instance $\Psi$ and accept $\Psi$ if this bias falls on the $K_\gamma^Y$ side of the hyperplane. We note that the bias can be approximated arbitrarily well using well-known $\ell_1$-norm approximators in the turnstile setting. The bulk of the work is in analyzing the correctness of our algorithm.

We will use the following streaming algorithm for approximating the~$\ell_1$ norm of a vector.
\begin{proposition}[{\protect \cite{Indyk},\cite[Theorem 2.1]{KNW10}}]
\label{prop:ell1}
Given a stream $S$ of $\poly(n)$ updates $(i, v) \in [n] \times \{-M,-(M-1),\ldots,M-1,M\}$ where $M=\textsf{poly}(n)$, let $x_i=\sum_{(i,v)\in S} v$ for $i\in[n]$. For every $\eps>0$, there exists a linear sketch that uses $O(\log{n})$ bits of memory and outputs a $(1\pm\epsilon)$-approximation to the value $\|x\|_1=\sum_i |x_i|$ with probability at least $2/3$.
\end{proposition}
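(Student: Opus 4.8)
The plan is to realize \autoref{prop:ell1} via Indyk's $p$-stable sketch at $p=1$, analyze a median-of-absolute-values estimator, and then discretize and derandomize the sketch so that it fits in $O(\log n)$ bits. Recall the standard Cauchy distribution, with density $t \mapsto \tfrac{1}{\pi(1+t^2)}$, is $1$-stable: if $C_1,\ldots,C_n$ are i.i.d.\ standard Cauchy then for every $a \in \R^n$ the sum $\sum_i a_i C_i$ has the same law as $\|a\|_1 \cdot C$ for a single standard Cauchy $C$. The algorithm fixes $r = \Theta(\eps^{-2})$, draws a matrix $S \in \R^{r\times n}$ with i.i.d.\ standard Cauchy entries, and maintains counters $Y_1,\ldots,Y_r$ (initially $0$); on each stream update $(i,v)$ it performs $Y_j \gets Y_j + v\,S_{ji}$ for all $j \in [r]$. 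At the end $Y_j = \langle S_j, x\rangle$ where $x_i = \sum_{(i,v)\in S} v$, so by $1$-stability the $Y_j$ are mutually independent, each distributed as $\|x\|_1 \cdot C_j$ with $C_j$ i.i.d.\ standard Cauchy. The estimator is $\widehat L = \operatorname{median}_{j\in[r]} |Y_j|$ (with the trivial case $\|x\|_1=0$ giving $\widehat L = 0$).

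For correctness of the median, note the folded variable $|C|$ has CDF $F(t) = \tfrac{2}{\pi}\arctan t$; since $F(1) = \tfrac12$ and $F'(1) = \tfrac1\pi > 0$, there is a constant $c_0 > 0$ (one may take $c_0 = \tfrac1{2\pi}$) with $F(1-\eps) \le \tfrac12 - c_0\eps$ and $F(1+\eps) \ge \tfrac12 + c_0\eps$ for all sufficiently small $\eps$. Because $|Y_j| = \|x\|_1\,|C_j|$, this gives $\Pr[\,|Y_j| \le (1-\eps)\|x\|_1\,] \le \tfrac12 - c_0\eps$ and $\Pr[\,|Y_j| \ge (1+\eps)\|x\|_1\,] \le \tfrac12 - c_0\eps$. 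Letting $N_-$ (resp.\ $N_+$) count the indices $j$ with $|Y_j|$ below $(1-\eps)\|x\|_1$ (resp.\ above $(1+\eps)\|x\|_1$), each is a sum of independent Bernoullis of mean at most $\tfrac12 - c_0\eps$, so Hoeffding's inequality gives $\Pr[N_\pm \ge r/2] \le \exp(-2c_0^2\eps^2 r)$, which is at most $1/6$ once $r = \Theta(\eps^{-2})$ is large enough. Off both of these events fewer than $r/2$ values lie outside $[(1-\eps)\|x\|_1, (1+\eps)\|x\|_1]$, so the empirical median lies inside, i.e.\ $\widehat L \in [(1-\eps)\|x\|_1,(1+\eps)\|x\|_1]$; a union bound makes this hold with probability at least $2/3$.

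It remains to control the space, which has two parts. First, infinite precision: I would sample $C = \tan(\pi(U-\tfrac12))$ for $U \sim \unif[0,1]$, round $U$ to $O(\log n)$ bits and clamp $|C|$ to $n^{O(1)}$; this changes each $Y_j$ by a $(1\pm n^{-\Omega(1)})$ factor and perturbs the tail probabilities above by $n^{-\Omega(1)}$, which is absorbed into the analysis, and now every $S_{ji}$ and every counter $Y_j$ (a sum of $n$ quantities of magnitude $n^{O(1)}$, since each $|x_i|$ is at most the number of updates times $M$, i.e.\ $n^{O(1)}$) is an $O(\log n)$-bit number, so the $r=O_\eps(1)$ counters cost $O(\log n)$ bits. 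Second, the sketch appears to need $rn$ independent Cauchy samples: but the map from the random bits defining $S$ (together with the stream) to $(Y_1,\ldots,Y_r)$ is a read-once computation using working space $O(\eps^{-2}\log n)=O(\log n)$, so Nisan's pseudorandom generator against small-space machines produces these $rn$ discretized entries from a seed of length $O(\log n \cdot \log(rn)) = O(\log^2 n)$ while changing the output distribution by only $n^{-\Omega(1)}$ in total variation. To reach the stated $O(\log n)$ bits (for fixed $\eps$) one replaces this last step with the tighter argument of \cite[Theorem 2.1]{KNW10}, which maintains a carefully rounded sketch and shows that limited independence of the stable entries already suffices for the median to concentrate, achieving total seed plus counter length $O(\eps^{-2}\log n)$.

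The estimator analysis of the second paragraph is routine; the real obstacle is this final derandomization, i.e.\ driving both the randomness and the sketch representation down to $O(\log n)$ bits. Nisan's PRG already yields $O(\log^2 n)$ with little effort, and the extra logarithmic factor can only be shaved by the delicate rounding-and-limited-independence analysis of \cite{KNW10}. I would therefore present the Cauchy sketch and median analysis in full and invoke \cite[Theorem 2.1]{KNW10} as a black box for the exact $O(\log n)$-bit bound, since reproving that optimization is orthogonal to the rest of this paper.
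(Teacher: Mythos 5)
The paper states \autoref{prop:ell1} as a cited fact (Indyk; Kane--Nelson--Woodruff) and gives no proof, which is exactly where your proposal also lands: you give a correct exposition of Indyk's Cauchy sketch with the median estimator and the standard discretization/derandomization issues, and then invoke \cite[Theorem 2.1]{KNW10} as a black box for the final $O(\log n)$-bit bound. So your write-up is sound and consistent with the paper's treatment; it adds a correct recap of the well-known argument but does not (and does not claim to) reprove the delicate limited-independence step that gets from $O(\log^2 n)$ down to $O(\log n)$.
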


\subsection{Algorithm}
Let us start with the definition of $\veclambda$-bias.
\begin{definition}[Bias (vector)]
For $\veclambda = (\lambda_1,\ldots,\lambda_k) \in \R^k$, and instance $\Psi = (C_1,\ldots,C_m; w_1,\ldots,w_m)$ of $\maxf$ where $C_i = (\vecj(i),\vecb(i))$ and $w_i \geq 0$, we let the {\em $\veclambda$-bias vector} of $\Psi$, denoted $\bias_{\veclambda}(\Psi)$, be the vector in $\R^n$ given by 
\[
\bias_{\veclambda}(\Psi)_\ell = \frac{1}W \cdot \sum_{i \in [m], t \in [k] : j(i)_t = \ell} \lambda_t w_i b(i)_t \, ,
\]
for $\ell \in [n]$, where $W = \sum_{i \in [m]} w_i$. 
The $\veclambda$-bias of $\Psi$, denoted $B_{\veclambda}(\Psi)$, is the $\ell_1$ norm of $\bias_{\veclambda}(\Psi)$, i.e., $B_{\veclambda}(\Psi) = \sum_{\ell=1}^n |\bias_{\veclambda}(\Psi)_\ell|$.
\end{definition}

By directly applying the known $\ell_1$-sketching algorithm (i.e.,~\autoref{prop:ell1}), the following lemma shows that $\veclambda$-bias can be estimated in $O(\log n)$ space.
\begin{lemma}\label{prop:ell1norm}
For every vector $\veclambda \in \R^k$ and $\epsilon > 0$, there exists a $O(\log n)$ 
space algorithm $\cA$ that on input a stream $\sigma_1,\ldots,\sigma_\ell$, representing an instance $\Psi = (C_1,\ldots,C_m;w_1,\ldots,w_m)$, outputs a $(1\pm \epsilon)$-approximation to $B_{\veclambda}(\Psi)$, i.e., for every $\Psi$, $(1-\epsilon)B_{\veclambda}(\Psi) \leq \cA(\Psi) \leq (1+\epsilon)B_{\veclambda}(\Psi)$, with probability at least $2/3$. 
\end{lemma}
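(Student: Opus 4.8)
The plan is to reduce the estimation of $B_{\veclambda}(\Psi)$ to the problem of $\ell_1$-norm estimation of an integer vector under turnstile updates, so that \autoref{prop:ell1} applies directly. The key observation is that each stream element $\sigma_t = (C'_t, w'_t)$ with $C'_t = (\vecj'(t), \vecb'(t))$ and $w'_t \in \{-1,1\}$ contributes, for each coordinate position $s \in [k]$ of the constraint, an additive update to a single coordinate $\ell = j'(t)_s$ of an auxiliary vector. Specifically, I would define $\vecy \in \R^n$ to be (a scaled version of) $\bias_{\veclambda}(\Psi)$: let $y_\ell = \sum_{i \in [m], s \in [k]\,:\, j(i)_s = \ell} \lambda_s w_i b(i)_s$, so that $\bias_{\veclambda}(\Psi)_\ell = y_\ell / W$ and hence $B_{\veclambda}(\Psi) = \|\vecy\|_1 / W$.

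First I would handle the scaling issue: \autoref{prop:ell1} estimates $\|x\|_1$ for an \emph{integer}-valued vector with updates in $\{-M,\ldots,M\}$, whereas the $\lambda_s$ are arbitrary reals. I would fix this by approximating each $\lambda_s$ by a rational $\widetilde{\lambda}_s = p_s / q$ with a common denominator $q$, to within relative error $\eps/2$; then $q \cdot \vecy$ (with the $\widetilde{\lambda}_s$'s) is integer-valued, each stream element generates $k$ turnstile updates of magnitude at most $\max_s |p_s| = \poly(1)$ (a constant depending only on $\veclambda, \eps$, not on $n$), and the total number of updates is $k\ell = \poly(n)$. Running the algorithm of \autoref{prop:ell1} with error parameter $\eps/2$ on this derived stream yields, with probability at least $2/3$, a value $v$ with $(1-\eps/2)\|q\vecy\|_1 \le v \le (1+\eps/2)\|q\vecy\|_1$ (using the $\widetilde\lambda$'s). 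Outputting $\cA(\Psi) = v/(qW)$ — noting $W = \sum_i w_i$ is computable exactly in $O(\log n)$ space by a counter, since it is the running sum of the $w'_t$ — and composing the two relative errors (the $\eps/2$ from rounding the $\lambda_s$'s and the $\eps/2$ from the sketch) gives a $(1\pm\eps)$-approximation to $B_{\veclambda}(\Psi)$ for $\eps$ small enough, after renaming. The space is $O(\log n)$ for the sketch plus $O(\log n)$ for the weight counter, i.e., $O(\log n)$ total.

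The main thing to be careful about — though it is not really an obstacle — is the bookkeeping for the strict-turnstile promise: one must check that the derived stream is a valid input for \autoref{prop:ell1}, which only requires the \emph{final} vector and intermediate update magnitudes to be polynomially bounded, not nonnegativity of coordinates; this holds since all $|w'_t| = 1$, all $|p_s|$ are constants, and $|y_\ell| \le km \cdot \max_s|\lambda_s| = \poly(n)$. There is also the minor point that the approximation factor in \autoref{prop:ell1} is one-sided-free (two-sided multiplicative), so nothing special is needed to preserve the $(1-\eps)B \le \cA \le (1+\eps)B$ form. Overall this is a direct application of known $\ell_1$-sketching with a routine rational-approximation preprocessing step, so I would keep the writeup short.
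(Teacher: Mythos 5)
Your proposal matches the paper's proof in all essentials: both convert each stream element into $k$ turnstile updates of the form $(\vecj(i)_s, w'_i \lambda_s b(i)_s)$ to an auxiliary vector, invoke \autoref{prop:ell1} to sketch its $\ell_1$ norm, maintain $W = \sum_i w_i$ separately, and handle non-integrality of $\veclambda$ by rounding to rationals with a common denominator (the paper relegates this to a footnote). The extra care you take about the turnstile promise and error composition is consistent with what the paper does implicitly, so this is the same argument, not a different route.
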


\begin{proof}

Note that since $k$ and $\epsilon$ are constants with respect to $n$, we can without loss of generality assume that each entry of $\lambda$ is an integer and $\epsilon$ has constant bit complexity.~\footnote{Concretely, round $\epsilon$ to $2^{-t}$ where $t$ is the smallest integer such that $\epsilon\geq2^{-t}$. As for $\veclambda$, let $\lambda_{\min}=\min_{j\in[k]}|\lambda_j|$ and round it the same way as we did for $\epsilon$. Next, for each $j\in[k]$, scale and round $\lambda_j$ to $\ceil{\frac{4\lambda_j}{\lambda_{\min}}}$. It is not difficult to verify that scaling down the new $\veclambda$-bias by a factor of $\lambda_{\min}/4$, it is a $(1\pm\epsilon/2)$-approximation to the original $\veclambda$-bias.}

On input a stream $\sigma_1,\ldots,\sigma_\ell$ representing an instance $\Psi=(C_1,\ldots,C_m;w_1,\ldots,w_m)$ (see \autoref{rem:stream-instance}) with $\sigma_i = (C'_i = (\vecj(i),\vecb(i)),w'_i)$, the algorithm $\cA$ proceeds as follows. It implicitly maintains a vector $\vecv \in \R^n$ which is initially zero. Each stream element $\sigma_i$ is converted into $k$ updates to $\vecv$ given by $(\vecj(i)_1,w'_i\cdot\lambda_1),\ldots,(\vecj(i)_k,w'_i\cdot\lambda_k)$ (where the notation of ``updating by $(i,x)$" indicates that $x$ is added to $v_i$). It then applies the algorithm from \autoref{prop:ell1} to compute a $(1\pm \epsilon)$ approximation $B'$ to $\|v\|_1 =  \sum_{i \in [m], t \in [k] : j(i)_t = \ell} \lambda_t w_i b(i)_t$. (Note that since $\ell=\textsf{poly}(n)$ and $k$ is a constant, we know that there are only $\textsf{poly}(n)$ updates and each update is a constant integer and so the conditions of \autoref{prop:ell1} are satisfied, and so $B'$ is a $(1\pm\epsilon)$ approximation to $\|v\|_1$ with probability at least $2/3$.) Finally $\cA$ outputs $B'/W$ which is a $(1\pm\epsilon)$-approximation to $B_{\veclambda}(\Psi)$ if and only if $B'$ is a $(1\pm\epsilon)$ approximation to $\|v\|_1$.
\end{proof}

We will use the following form of the hyperplane separation theorem for convex bodies (see, e.g., \cite[Exercise 2.22]{boyd2004convex}).

\begin{proposition}\label{prop: Hyperplane separation}
Let $K^Y$ and $K^N$ be two disjoint nonempty closed convex sets in $\R^k$ at least one of which is compact. Then there exists a nonzero vector $\veclambda = (\lambda_1,\ldots,\lambda_k)$ and real numbers $\tau_Y > \tau_N$ such that 
\[
 \forall \vecx \in K^Y, ~~ \langle \veclambda,\vecx \rangle \ge \tau_Y \text{~~and~~ } \forall \vecx \in K^N, ~~ \langle \veclambda,\vecx \rangle \le \tau_N \, .
\]
\end{proposition}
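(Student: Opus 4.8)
The plan is to reduce the claim to the most classical case --- separating the origin from a closed convex set that does not contain it --- by passing to the Minkowski difference. Set $Q := K^Y - K^N = \{\vecx - \vecy : \vecx \in K^Y,\ \vecy \in K^N\} \subseteq \R^k$. Then $Q$ is convex (a difference of convex sets) and nonempty (both $K^Y$ and $K^N$ are nonempty), and since $K^Y \cap K^N = \emptyset$ we have $0 \notin Q$. I will take $\veclambda$ to be the point of $Q$ of smallest Euclidean norm, and then read off the thresholds $\tau_Y,\tau_N$ from the first-order optimality (supporting-hyperplane) inequality at $\veclambda$.

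Two facts must be checked first, and the compactness hypothesis is spent entirely on the first. \emph{(a) $Q$ is closed.} Assume without loss of generality that $K^N$ is compact (if instead $K^Y$ is the compact one, run the argument with the roles of $K^Y$ and $K^N$ exchanged, then negate $\veclambda$ and the thresholds at the end). If $\vecx_n - \vecy_n \to \vecz$ with $\vecx_n \in K^Y$ and $\vecy_n \in K^N$, compactness of $K^N$ lets us pass to a subsequence along which $\vecy_n \to \vecy \in K^N$; then $\vecx_n = (\vecx_n - \vecy_n) + \vecy_n \to \vecz + \vecy$, which lies in $K^Y$ because $K^Y$ is closed, so $\vecz = (\vecz + \vecy) - \vecy \in Q$. \emph{(b) $Q$ has a minimum-norm element.} Fixing any $\vecz_0 \in Q$, the set $Q \cap \{\vecz : \|\vecz\| \le \|\vecz_0\|\}$ is closed (by (a)), bounded, and nonempty, hence compact, so $\vecz \mapsto \|\vecz\|$ attains a minimum on it; this minimizer $\veclambda$ is a global minimizer of the norm over $Q$, and $\veclambda \ne 0$ since $0 \notin Q$.

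It remains to extract the separation. For any $\vecz \in Q$ and $t \in (0,1]$, convexity gives $(1-t)\veclambda + t\vecz \in Q$, hence $\|(1-t)\veclambda + t\vecz\|^2 \ge \|\veclambda\|^2$; expanding, cancelling $\|\veclambda\|^2$, dividing by $t$, and letting $t \to 0^+$ yields $\langle \veclambda, \vecz\rangle \ge \|\veclambda\|^2$ for every $\vecz \in Q$. Applying this with $\vecz = \vecx - \vecy$ for arbitrary $\vecx \in K^Y$ and $\vecy \in K^N$ gives $\langle \veclambda, \vecx\rangle \ge \langle \veclambda, \vecy\rangle + \|\veclambda\|^2$. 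Now set $\tau_N := \sup_{\vecy \in K^N} \langle \veclambda, \vecy\rangle$ (finite, as $K^N$ is compact) and $\tau_Y := \inf_{\vecx \in K^Y} \langle \veclambda, \vecx\rangle$; taking the supremum over $\vecy$ and then the infimum over $\vecx$ in the previous inequality shows $\tau_Y \ge \tau_N + \|\veclambda\|^2$, which is finite and, since $\|\veclambda\| > 0$, strictly larger than $\tau_N$. By construction $\langle \veclambda, \vecx\rangle \ge \tau_Y$ for all $\vecx \in K^Y$ and $\langle \veclambda, \vecx\rangle \le \tau_N$ for all $\vecx \in K^N$, which is the assertion (undoing the WLOG swap, if one was made, by replacing $(\veclambda, \tau_Y, \tau_N)$ with $(-\veclambda, -\tau_N, -\tau_Y)$, which interchanges the two halfspaces and restores the stated orientation).

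The only genuinely delicate point is step (a): the Minkowski difference of two closed convex sets need not be closed, and indeed two disjoint closed convex sets separated by no slab do exist (for example $\{(x,y) : xy \ge 1,\ x > 0\}$ and the $x$-axis in $\R^2$), so the compactness assumption is necessary and is used precisely where it must be. Every other ingredient --- convexity of $Q$, existence of the projection of the origin onto $Q$, and the first-order optimality inequality --- is routine.
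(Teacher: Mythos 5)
Your proof is correct. Note that the paper itself does not prove this proposition --- it simply cites it as a standard fact, pointing to Boyd--Vandenberghe, Exercise 2.22 --- so there is no ``paper proof'' to compare against; you have supplied a complete argument where the paper defers to the literature. Your route (reduce to separating $0$ from the Minkowski difference $K^Y - K^N$, use compactness of one factor to show the difference is closed, take the minimum-norm point $\veclambda$, and extract the separating slab from first-order optimality) is the standard textbook proof and every step checks out: the closedness argument in (a) is exactly where the compactness hypothesis is spent, the minimum-norm element exists by the truncation-to-a-ball compactness argument in (b), the optimality inequality $\langle\veclambda,\vecz\rangle \ge \|\veclambda\|^2$ for $\vecz\in Q$ follows from the quadratic expansion as you describe, and the thresholds $\tau_Y = \inf_{K^Y}\langle\veclambda,\cdot\rangle$ and $\tau_N = \sup_{K^N}\langle\veclambda,\cdot\rangle$ satisfy $\tau_Y \ge \tau_N + \|\veclambda\|^2 > \tau_N$ with both finite (for $\tau_Y$, finiteness is implicit in the chain $\tau_Y \ge \tau_N + \|\veclambda\|^2$ once $\tau_N$ is known finite, which you could state explicitly but which is clear from context). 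The WLOG-swap bookkeeping and the counterexample $\{(x,y):xy\ge 1,\ x>0\}$ vs.\ the $x$-axis showing compactness is needed are both accurate.
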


We are now ready to describe our algorithm for $(\gamma,\beta)$-$\maxf$. 

\begin{algorithm}[H]
	\caption{A streaming algorithm for $(\gamma,\beta)$-$\maxf$}
	\label{alg:main alg}
\begin{algorithmic}[1]
		\Input a stream $\sigma_1,\ldots,\sigma_\ell$ representing an instance $\Psi$ of $\maxf$. 
		    \State Let $\veclambda \in \R^k$ and $\tau_N < \tau_Y$ be as given by~\autoref{prop: Hyperplane separation} separating $K_\gamma^Y(f)$ and $K_\beta^N(f)$.
			\State Let $\epsilon = \frac{\tau_Y-\tau_N}{2(\tau_Y+\tau_N)}$ (so that $(1-\epsilon)\tau_Y > (1+\epsilon)\tau_N$). 
			\State Use the algorithm $\cA$ from \autoref{prop:ell1norm} to  compute $\tilde{B}$ to be a $(1\pm\epsilon)$ approximation to $B_{\veclambda}(\Psi)$, i.e., $(1-\epsilon)B_{\veclambda}(\Psi) \leq \tilde{B} \leq (1+\epsilon)B_{\veclambda}(\Psi)$ with probability at least $2/3$. 
			\If{$\tilde{B} \leq \tau_N (1+\epsilon)$}
				\Statex {\bf Output:} NO.
			\Else
				\Statex {\bf Output:} YES.
			\EndIf
	\end{algorithmic}
\end{algorithm}

It is clear that the algorithm above runs in $O(\log n)$ space (in particular by using~\autoref{prop:ell1} via \autoref{prop:ell1norm} for Step 3). We now turn to analyzing the correctness of the algorithm.

\subsection{Analysis of the correctness of Algorithm~\ref{alg:main alg}}

\begin{lemma}\label{lem:correctness_algorithm}
\autoref{alg:main alg} correctly solves $(\gamma,\beta)$-$\maxf$, if $K_\gamma^Y(f)$ and $K_\beta^N(f)$ are disjoint.  
Specifically, for every $\Psi$, let $\tau_Y,\tau_N,\epsilon,\veclambda,\tilde{B}$ be as given in~\autoref{alg:main alg}, we have:
\begin{eqnarray*}
\val_\Psi \geq \gamma & \Rightarrow & B_{\veclambda}(\Psi) \geq \tau_Y \mbox{ and } \tilde{B} > \tau_N(1+\epsilon) \, , \\
\mbox{ and } \val_\Psi \leq \beta & \Rightarrow & B_{\veclambda}(\Psi) \leq \tau_N \mbox{ and } \tilde{B} \leq \tau_N(1+\epsilon) \, ,
\end{eqnarray*}
provided $(1-\epsilon)B_\lambda(\Psi) \leq \tilde{B} \leq (1+\epsilon)B_\lambda(\Psi)$.
\end{lemma}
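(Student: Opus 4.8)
The statement relates the combinatorial value $\val_\Psi$ of an instance to the geometric quantity $B_{\veclambda}(\Psi)$, the $\ell_1$ norm of the bias vector. The plan is to pass from the $n$-variable instance $\Psi$ to a canonical distribution on $\{-1,1\}^k$ (an ``instance on $k$ variables'') that captures exactly the information $B_{\veclambda}(\Psi)$ sees, and then invoke the hyperplane separation set up in \autoref{alg:main alg}. Concretely, first I would, for each variable $\ell \in [n]$, aggregate the constraints touching $\ell$: when variable $\ell$ appears in position $t$ of constraint $C_i$ with sign $b(i)_t$, this contributes weight $w_i$ to a ``local distribution'' over $\{-1,1\}^k$ obtained by rotating the constraint so that $x_\ell$ sits in a fixed coordinate. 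Averaging these (weighted by total weight at $\ell$) and then averaging over $\ell$ produces a single distribution $\cD = \cD(\Psi) \in \Delta(\{-1,1\}^k)$. The key bookkeeping identity will be that $\langle \veclambda, \vecmu(\cD') \rangle$ for the appropriate per-variable conditional distribution $\cD'$ is, up to sign, exactly $\bias_{\veclambda}(\Psi)_\ell$ scaled by the fraction of weight at $\ell$, so that summing absolute values recovers $B_{\veclambda}(\Psi)$ as (essentially) an average of $|\langle \veclambda, \vecmu(\cdot)\rangle|$-type terms. This is the step I expect to require the most care: getting the signs, the normalization by $W$ versus the per-variable weight, and the ``every variable appears in every position'' reduction all consistent.

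Next, for the YES case ($\val_\Psi \geq \gamma$): fix an optimal assignment $\vecsigma$ with $\val_\Psi(\vecsigma) \geq \gamma$. The plan is to use $\vecsigma$ to flip signs so that the relevant per-variable distribution, restricted to variables assigned $+1$ (resp. $-1$) by $\vecsigma$, lies in $S_\gamma^Y$ or its sign-flip; more precisely, the ``$\vecsigma$-folded'' version of $\cD(\Psi)$ has $1^k$ satisfying at least a $\gamma$ fraction of constraints in expectation, hence its marginal lies in $K_\gamma^Y$. By \autoref{prop: Hyperplane separation} this marginal has inner product with $\veclambda$ at least $\tau_Y$. Translating back through the bookkeeping identity — and crucially using that taking absolute values coordinate-by-coordinate in the definition of $B_{\veclambda}$ only helps, i.e.\ $\sum_\ell |\bias_{\veclambda}(\Psi)_\ell| \geq |\sum_\ell \pm \bias_{\veclambda}(\Psi)_\ell|$ with signs chosen according to $\vecsigma$ — I would conclude $B_{\veclambda}(\Psi) \geq \tau_Y$. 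Then since $\tilde B \geq (1-\epsilon) B_{\veclambda}(\Psi) \geq (1-\epsilon)\tau_Y > (1+\epsilon)\tau_N$ by the choice of $\epsilon$ in Line 2 of the algorithm, the algorithm outputs YES.

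For the NO case ($\val_\Psi \leq \beta$): here the subtlety is that we must bound $B_{\veclambda}(\Psi)$ from \emph{above}, and the natural assignment to feed in is not a Boolean one but a biased product distribution. The plan is to show that the per-variable distribution $\cD$ built from $\Psi$ lies in $S_\beta^N$: for any $p$, assigning each of $x_1,\dots,x_k$ independently via $\bern(p)$ and evaluating $f$ on the folded constraint distribution corresponds to applying a global product assignment to $\Psi$ (modulo the averaging), whose value is at most $\val_\Psi \leq \beta$. Hence $\vecmu(\cD) \in K_\beta^N$, so $\langle \veclambda, \vecmu(\cD)\rangle \leq \tau_N$. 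The final point — which I should be careful about — is that $B_{\veclambda}(\Psi) = \sum_\ell |\bias_{\veclambda}(\Psi)_\ell|$ has absolute values, whereas $\langle \veclambda, \vecmu(\cD)\rangle$ does not; the resolution is that $K_\beta^N$ is closed under sign flips in each coordinate when... actually, more robustly, one argues separately for each $\ell$ that the per-variable marginal (with the ``natural'' sign) lies in $K_\beta^N$ and so does its negation — since negating all signs of constraints at $\ell$ corresponds to negating $\sigma_\ell$, which preserves membership in the NO-set — so $|\bias_{\veclambda}(\Psi)_\ell| = |\langle \veclambda, \text{(per-}\ell\text{ marginal)}\rangle| \cdot (\text{weight fraction})$ and each such inner product is $\leq \tau_N$ in absolute value. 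Summing the weight fractions to $1$ gives $B_{\veclambda}(\Psi) \leq \tau_N$, hence $\tilde B \leq (1+\epsilon)B_{\veclambda}(\Psi) \leq (1+\epsilon)\tau_N$ and the algorithm outputs NO. The main obstacle throughout is purely organizational: setting up the per-variable folded distributions with the right signs and weights so that both the inner-product bounds and the $\ell_1$ identity come out cleanly; once the dictionary between ``instances on $n$ variables / $\ell_1$ of bias'' and ``distributions on $\{-1,1\}^k$ / marginals'' is fixed, the YES and NO cases are short.
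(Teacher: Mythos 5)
Your YES case is essentially right, and your claim that the global distribution built from $\Psi$ lies in $S_\beta^N$ is the correct starting point for the NO case (that is precisely the paper's Lemma~\ref{lem:optimum_NO}, which uses the fact that a $\bern(p)^k$ assignment on the $k$-variable instance corresponds to a $\bern(p)^n$ assignment on $\Psi$). The gap is what you do with it. To go from $\langle \veclambda,\vecmu(\cD)\rangle\le\tau_N$ to $B_{\veclambda}(\Psi)=\sum_\ell|\bias_{\veclambda}(\Psi)_\ell|\le\tau_N$, you switch to a coordinate-by-coordinate decomposition and claim that the ``per-$\ell$ marginal'' (and its negation) lie in $K_\beta^N$. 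That claim is false in general, and the argument you sketch for it cannot be repaired: the membership $\cD(\Psi)\in S_\beta^N$ is a global fact driven by $\val_\Psi\le\beta$, and it does not survive restriction to the sub-instance of constraints touching a single variable $\ell$. Concretely, if some variable $\ell$ participates in exactly one constraint $f(x_\ell,x_{\ell'})$ with $f=\wedge$, then the ``per-$\ell$'' distribution is a point mass at $(1,1)$, whose $\bern(p)^k$ value at $p=1$ is $1>\beta$; so the per-$\ell$ marginal is not in $K_\beta^N$ regardless of how small $\val_\Psi$ is. There are two further mismatches in the bookkeeping you propose: the ``rotate $x_\ell$ into a fixed coordinate'' step does not commute with the inner product against $\veclambda$ because $\veclambda$ is not permutation-invariant across coordinates (the factor $\lambda_t$ in $\bias_{\veclambda}(\Psi)_\ell$ is tied to the \emph{actual} position $t$ at which $\ell$ occurs), and the per-variable weight fractions sum to $k$, not $1$, since each constraint touches $k$ variables.

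The fix is to abandon the sum-over-$\ell$ decomposition and instead use the exact variational identity $\|v\|_1=\max_{\veca\in\{-1,1\}^n}\langle\veca,v\rangle$, which you already invoke as an inequality in the YES case but need as an equality in the NO case. The paper's key identity (Lemma~\ref{lem:bias_comparsion}) is $\langle\veca,\bias_{\veclambda}(\Psi)\rangle = \langle\veclambda,\vecmu(\cD(\Psi^{\veca}))\rangle$ for every $\veca\in\{-1,1\}^n$, where $\cD(\Psi^{\veca})$ is the (global) distribution of negation patterns of $\Psi$ after flipping by $\veca$. Since $\val_{\Psi^{\veca}}=\val_{\Psi}\le\beta$ for every $\veca$, your $S_\beta^N$-membership argument applies to \emph{every} $\cD(\Psi^{\veca})$, and hence $\langle\veca,\bias_{\veclambda}(\Psi)\rangle\le\tau_N$ for every $\veca$; taking the maximum over $\veca$ gives $B_{\veclambda}(\Psi)\le\tau_N$ with no $\ell$-by-$\ell$ accounting needed. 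No separate per-coordinate argument, sign-flip closure of $K_\beta^N$, or weight-fraction bookkeeping is required.
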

In the rest of this section, we will prove~\autoref{lem:correctness_algorithm}. 
The key to our analysis is a distribution $\cD(\Psi^\veca) \in \Delta(\{-1,1\}^k)$ that we associate with every instance $\Psi$ and assignment $\veca\in \{-1,1\}^n$ to the variables of $\Psi$. Recall that in~\autoref{def:marginals}, we define $\vecmu(\cD)=(\mu_1,\dots,\mu_k)$ where $\mu_i=\Exp_{\vecb\sim\cD}[b_i]$. If $\Psi$ is $\gamma$-satisfied by assignment $\veca$, we prove that $\vecmu(\cD(\Psi^\veca)) \in K_\gamma^Y$. On the other hand, if $\Psi$ is not $\beta$-satisfiable by any assignment, we prove that for every $\veca$, $\vecmu(\cD(\Psi^\veca)) \in K_\beta^N$. Finally we also show that the bias $B_{\veclambda}(\Psi)$ relates to $\veclambda(\cD(\Psi^\veca)) \triangleq \langle\vecmu(\cD(\Psi^\veca)), \veclambda \rangle$, where the latter quantity is exactly what needs to be computed (by \autoref{prop: Hyperplane separation}) to distinguish the membership of $\vecmu(\cD(\Psi^\veca))$ in $K_\gamma^Y$ versus the membership in $K_\beta^N$.

We start with recalling some notations. For an instance $\Psi=(C_1,\ldots,C_m;w_1,\ldots,w_m)$ on $n$ variables with $C_i = (\vecj(i),\vecb(i))$, and an assignment $\veca \in \{-1,1\}^n$, let $\Psi^\veca$ denote the new instance obtained by flipping the variables according to $\veca$. Specifically $\Psi^\veca = (C_1^\veca,\ldots,C_m^\veca;w_1,\ldots,w_m)$ where $C_i^\veca = (\vecj(i),\veca|_{\vecj(i)} \odot \vecb(i))$. 

Given instance $\Psi$, let $\cD(\Psi) \in \Delta(\{-1,1\}^k)$ be the distribution obtained by sampling a constraint at random (according to its weight) from $\Psi$ and outputting the ``negation pattern''. Formally, to sample a random vector $\vecb\sim\cD(\Psi)$, we sample $i \in [m]$ with probability $w_i/W$ where $W=\sum_{i \in [m]} w_i$, and output $\vecb(i)$ where $C_i = (\vecj(i),\vecb(i))$.

The next lemma relates the $\veclambda$-bias vector of $\Psi$ to $\veclambda(\cD(\Psi^\veca))$ and uses this to relate the bias of $\Psi$ to the maximum over $\veca$ of $\veclambda(\cD(\Psi^\veca))$.

\begin{lemma}\label{lem:bias_comparsion}
For every vector $\veca\in\{-1,1\}^n$, we have $\veclambda(\cD(\Psi^\veca)) = \langle \veca, \bias_{\veclambda}(\Psi) \rangle$. Consequently we have $B_{\veclambda}(\Psi) = \max_{\veca\in\{-1,1\}^n} \{ {\veclambda}(\cD(\Psi^\veca)) \}$.
\end{lemma}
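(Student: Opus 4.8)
The plan is to unwind both sides of the claimed identity $\veclambda(\cD(\Psi^\veca)) = \langle \veca, \bias_{\veclambda}(\Psi)\rangle$ directly from the definitions, and then obtain the consequence by optimizing over $\veca$. First I would write out the left-hand side: by definition $\veclambda(\cD(\Psi^\veca)) = \langle \vecmu(\cD(\Psi^\veca)), \veclambda\rangle = \sum_{t=1}^k \lambda_t \mu_t(\cD(\Psi^\veca))$, where $\mu_t(\cD(\Psi^\veca)) = \Exp_{\vecb \sim \cD(\Psi^\veca)}[b_t]$. Now $\cD(\Psi^\veca)$ samples constraint $i$ with probability $w_i/W$ and outputs the negation pattern $\veca|_{\vecj(i)} \odot \vecb(i)$, whose $t$th coordinate is $a_{j(i)_t} \cdot b(i)_t$. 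Hence $\mu_t(\cD(\Psi^\veca)) = \frac1W \sum_{i \in [m]} w_i\, a_{j(i)_t}\, b(i)_t$, and so
\[
\veclambda(\cD(\Psi^\veca)) = \frac1W \sum_{i \in [m]} \sum_{t=1}^k \lambda_t\, w_i\, a_{j(i)_t}\, b(i)_t.
\]

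Next I would rewrite the right-hand side. By the definition of the bias vector, $\bias_{\veclambda}(\Psi)_\ell = \frac1W \sum_{i \in [m],\, t \in [k]\,:\, j(i)_t = \ell} \lambda_t w_i b(i)_t$, so
\[
\langle \veca, \bias_{\veclambda}(\Psi)\rangle = \sum_{\ell=1}^n a_\ell \cdot \frac1W \sum_{\substack{i \in [m],\, t \in [k] \\ j(i)_t = \ell}} \lambda_t w_i b(i)_t = \frac1W \sum_{i \in [m]} \sum_{t=1}^k \lambda_t w_i\, a_{j(i)_t}\, b(i)_t,
\]
where the last step just reorganizes the double sum by grouping the pairs $(i,t)$ according to which variable index $\ell = j(i)_t$ they hit — each pair $(i,t)$ is counted exactly once since the $j(i)_t$ are well-defined. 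Comparing the two displays gives the first identity.

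For the consequence, I would use that $B_{\veclambda}(\Psi) = \|\bias_{\veclambda}(\Psi)\|_1 = \sum_{\ell=1}^n |\bias_{\veclambda}(\Psi)_\ell|$, and that for any real vector $\vecv \in \R^n$ we have $\|\vecv\|_1 = \max_{\veca \in \{-1,1\}^n} \langle \veca, \vecv\rangle$, the maximum being attained by $a_\ell = \sign(v_\ell)$ (with ties broken arbitrarily). Applying this with $\vecv = \bias_{\veclambda}(\Psi)$ and substituting the first identity yields $B_{\veclambda}(\Psi) = \max_{\veca \in \{-1,1\}^n} \langle \veca, \bias_{\veclambda}(\Psi)\rangle = \max_{\veca \in \{-1,1\}^n} \veclambda(\cD(\Psi^\veca))$, as claimed. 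I do not anticipate a genuine obstacle here — the lemma is essentially a bookkeeping identity — so the only thing to be careful about is the index reorganization in the double sum (making sure the bias vector's definition and the distribution's definition are being compared on the same grouping of $(i,t)$ pairs) and the standard fact that the $\ell_1$ norm is the support function of the cube $\{-1,1\}^n$.
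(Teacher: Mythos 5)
Your proof is correct and follows essentially the same path as the paper's: you unwind $\veclambda(\cD(\Psi^\veca))$ via the marginals of the sampled negation pattern $\veca|_{\vecj(i)}\odot\vecb(i)$, reorganize the double sum over $(i,t)$ by the index $\ell=j(i)_t$, match it to $\langle\veca,\bias_{\veclambda}(\Psi)\rangle$, and then invoke $\|\vecv\|_1=\max_{\veca\in\{-1,1\}^n}\langle\veca,\vecv\rangle$ for the consequence. The only cosmetic difference is that you expand both sides to a common expression whereas the paper computes one side directly into the other.
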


\begin{proof}
We start with the first equality. Fix $\veca \in \{-1,1\}^n$. We have
\begin{align*}
\veclambda(\cD(\Psi^\veca)) &= \langle \vecmu(\cD(\Psi^\veca)),\veclambda \rangle ~~~\mbox{(By definition of $\lambda(\cdot)$)}\\
    &= \Exp_{\vecy \sim \cD(\Psi^\veca)}[\langle \vecy, \veclambda \rangle]~~~\mbox{(By definition of $\vecmu(\cD)$ and linearity of inner product})\\
    &= \Exp_{i} \left[\langle \vecb^\veca(i), \veclambda \rangle\right]~~~\mbox{(By definition of $\cD(\Psi^\veca)$)}\\
    &= \Exp_{i} \left[\sum_{t\in[k]} b^\veca(i)_t\cdot \lambda_t \right]~~~\mbox{(Expanding the inner product)}\\
    &= \frac1W \sum_{i \in [m]} w_i \sum_{\ell \in [n]} \sum_{t \in [k]} \One_{\vecj(i)_t = \ell} \cdot \lambda_t \cdot a_\ell \cdot b(i)_t~~~\mbox{(Using definition of $\Psi^\veca$)}\\
    & = \frac1W\sum_{\ell\in[n]} a_\ell \sum_{t\in[k]}\lambda_t \sum_{i\in[m]} \One_{\vecj(i)_t = \ell} \cdot w_i \cdot b(i)_t~~~\mbox{(Exchanging summations)}\\
    & = \sum_{\ell \in [n]} a_\ell\cdot \bias_{\veclambda}(\Psi)_\ell~~~\mbox{(By definition of $\bias_{\veclambda}(\cdot)$)}\\
    & = \langle \veca, \bias_{\veclambda}(\Psi) \rangle\, ,
\end{align*}
yielding the first equality.

The second part is immediate from the observation that for every vector $\vecv \in \R^n$, we have $||\vecv||_1 = \max_{\veca \in \{-1,1\}^n} \langle \veca, \vecv \rangle$ and so 
\[
B_{\veclambda}(\Psi) = ||\bias_{\veclambda}(\Psi)||_1 = \max_{\veca \in \{-1,1\}^n} \{\langle \veca, \bias_{\veclambda}(\Psi)\rangle\} = \max_{\veca \in \{-1,1\}^n} \{{\veclambda}(\cD(\Psi^\veca))\} \, .
\]
\end{proof}

We now turn to connecting $\val_\Psi$ to properties of $\cD(\Psi^\veca)$.

\begin{lemma}\label{lem:optimum_YES}
For every $\Psi$ and $\veca$, if $\val_\Psi(\veca) \geq \gamma$ then $\cD(\Psi^\veca) \in S_\gamma^Y$. 
\end{lemma}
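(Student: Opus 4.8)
The plan is to unwind the definitions and observe that the quantity $\Exp_{\vecb \sim \cD(\Psi^\veca)}[f(\vecb)]$ is literally $\val_\Psi(\veca)$. Recall that $\cD(\Psi^\veca)$ is obtained by sampling $i \in [m]$ with probability $w_i/W$ and outputting the negation pattern $\vecb^\veca(i) = \veca|_{\vecj(i)} \odot \vecb(i)$ of the $i$th constraint in the flipped instance $\Psi^\veca$. Hence by linearity of expectation
\[
\Exp_{\vecb \sim \cD(\Psi^\veca)}[f(\vecb)] \;=\; \frac{1}{W}\sum_{i \in [m]} w_i\, f\!\left(\veca|_{\vecj(i)} \odot \vecb(i)\right).
\]
The key observation is that $f(\veca|_{\vecj(i)} \odot \vecb(i)) = f(\vecb(i) \odot \veca|_{\vecj(i)}) = C_i(\veca)$, i.e.\ the $i$th original constraint $C_i = (\vecj(i),\vecb(i))$ is satisfied by the assignment $\veca$ exactly when the corresponding flipped constraint is satisfied by the all-ones assignment. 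Therefore the right-hand side above is precisely $\frac{1}{W}\sum_{i\in[m]} w_i C_i(\veca) = \val_\Psi(\veca)$.

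Given this identity, the conclusion is immediate: if $\val_\Psi(\veca) \geq \gamma$, then $\Exp_{\vecb \sim \cD(\Psi^\veca)}[f(\vecb)] \geq \gamma$, which is exactly the defining condition of membership in $S_\gamma^Y$ from \autoref{def:sets}. So $\cD(\Psi^\veca) \in S_\gamma^Y$, as claimed.

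There is no real obstacle here; the only thing to be careful about is keeping the bookkeeping straight between ``flipping the instance by $\veca$ and evaluating at $1^k$'' versus ``keeping the instance and evaluating at $\veca$'' — these are the same by the commutativity of the coordinatewise product $\odot$ and the fact that $a_\ell^2 = 1$ for $\veca \in \{-1,1\}^n$. (This same correspondence underlies \autoref{lem:bias_comparsion}, so it is consistent with the surrounding development.) I expect the companion statement for the NO case — that $\val_\Psi \leq \beta$ forces $\cD(\Psi^\veca) \in S_\beta^N$ for \emph{all} $\veca$ — to be the genuinely substantive one, since $S_\beta^N$ involves a universal quantifier over $\bern(p)^k$ assignments rather than a single assignment, and that will require relating the $\bern(p)^k$-averaged value of $\cD(\Psi^\veca)$ to $\val_\Psi$ of a suitable randomized assignment; but the present lemma is just the straightforward direction.
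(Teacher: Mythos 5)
Your proof is correct and follows exactly the same route as the paper's: unwinding the definition of $\cD(\Psi^\veca)$ to show $\Exp_{\vecb \sim \cD(\Psi^\veca)}[f(\vecb)] = \frac1W\sum_i w_i\, C_i(\veca) = \val_\Psi(\veca)$, from which membership in $S_\gamma^Y$ is immediate. The paper's version is just a one-line display of the same chain of equalities; your additional remarks about the bookkeeping and the anticipated difficulty of the NO case are accurate.
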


\begin{proof}
Follows from the fact that 
\[
\Exp_{\vecb \sim \cD(\Psi^\veca)} [f(\vecb)] = \frac{1}{W}\sum_{i\in[m]}w_i \cdot f(\vecb(i)\odot \veca|_{\vecj(i)})  = \frac{1}{W} \sum_{i\in[m]} w_i \cdot C_i(\veca) = \val_\Psi(\veca) \geq \gamma \, ,
\]
implying $\cD(\Psi^\veca)\in S_\gamma^Y$.
\end{proof}

\begin{lemma}\label{lem:optimum_NO}
For every $\Psi$, if $\val_{\Psi} \leq \beta$, then for all $\veca$, we have $\cD(\Psi^\veca) \in S_\beta^N$.
\end{lemma}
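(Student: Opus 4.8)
The plan is to exploit the fact that averaging over a random Bernoulli assignment is dominated by the optimum value, and to observe that this average is precisely the quantity appearing in the definition of $S_\beta^N$. Fix $\Psi$ with $\val_\Psi \le \beta$ and fix $\veca \in \{-1,1\}^n$. First I would note that flipping variables does not change the optimum: since $\val_{\Psi^\veca}(\vecsigma) = \val_\Psi(\vecsigma \odot \veca)$ for every $\vecsigma$, we have $\val_{\Psi^\veca} = \val_\Psi \le \beta$. By \autoref{def:sets}, it suffices to show that for every $p \in [0,1]$,
\[
\Exp_{\vecb \sim \cD(\Psi^\veca)} \Exp_{\vecc \sim \bern(p)^k}[f(\vecb \odot \vecc)] \le \beta \, .
\]

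Next, fix $p \in [0,1]$ and consider drawing a random assignment $\vecsigma \sim \bern(p)^n$ to the $n$ variables of $\Psi^\veca$. On the one hand, since the average of $\val_{\Psi^\veca}(\cdot)$ over any distribution is at most $\val_{\Psi^\veca}$, we get $\Exp_{\vecsigma \sim \bern(p)^n}[\val_{\Psi^\veca}(\vecsigma)] \le \beta$. On the other hand, expanding the definition of $\val_{\Psi^\veca}(\vecsigma)$ and using linearity of expectation,
\[
\Exp_{\vecsigma \sim \bern(p)^n}[\val_{\Psi^\veca}(\vecsigma)] = \frac1W \sum_{i \in [m]} w_i \cdot \Exp_{\vecsigma \sim \bern(p)^n}\!\left[ f(\vecb^\veca(i) \odot \vecsigma|_{\vecj(i)}) \right] ,
\]
where $\vecb^\veca(i) = \veca|_{\vecj(i)} \odot \vecb(i)$ and $W = \sum_i w_i$. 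The key point here is that since the coordinates $j(i)_1,\ldots,j(i)_k$ of a constraint are distinct, the restricted assignment $\vecsigma|_{\vecj(i)}$ is distributed exactly as $\bern(p)^k$; hence each inner expectation equals $\Exp_{\vecc \sim \bern(p)^k}[f(\vecb^\veca(i) \odot \vecc)]$.

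Finally I would assemble these: the displayed sum equals $\Exp_{i}\,\Exp_{\vecc \sim \bern(p)^k}[f(\vecb^\veca(i) \odot \vecc)]$ where $i$ is drawn with probability $w_i/W$, which by definition of $\cD(\Psi^\veca)$ is exactly $\Exp_{\vecb \sim \cD(\Psi^\veca)} \Exp_{\vecc \sim \bern(p)^k}[f(\vecb \odot \vecc)]$. Combining with the bound from the previous paragraph gives that this quantity is at most $\beta$ for every $p$, so $\cD(\Psi^\veca) \in S_\beta^N$. I do not anticipate a real obstacle here; the only point requiring care is the distributional identity $\vecsigma|_{\vecj(i)} \sim \bern(p)^k$, which relies on the distinctness of the variables within each constraint (built into the definition of a constraint), and the routine check that $\val_{\Psi^\veca} = \val_\Psi$.
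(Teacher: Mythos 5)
Your proof is correct and uses essentially the same argument as the paper: the key step in both is the observation that for $\vecsigma \sim \bern(p)^n$, the restriction $\vecsigma|_{\vecj(i)}$ to the $k$ distinct coordinates of a constraint is distributed as $\bern(p)^k$, which identifies $\Exp_{\vecsigma\sim\bern(p)^n}[\val_{\Psi^\veca}(\vecsigma)]$ with $\Exp_{\vecb\sim\cD(\Psi^\veca)}\Exp_{\vecc\sim\bern(p)^k}[f(\vecb\odot\vecc)]$ and then bounds it by $\val_{\Psi^\veca}=\val_\Psi\le\beta$. The paper phrases this via the contrapositive and first reduces to the case $\veca=1^n$, but these are superficial differences from your direct presentation.
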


\begin{proof}
We claim if $\val_{\Psi} \leq \beta$, then  $\cD(\Psi) \in S_\beta^N$.
This suffices to prove the lemma, since for every $\veca\in\{-1,1\}^n$ we have $\val_{\Psi^\veca} = \val_{\Psi}$. So if 
$\val_\Psi\leq \beta$ then $\val_{\Psi^\veca} \leq \beta$ and so by the claim above applied to $\Psi^\veca$, we have $\cD(\Psi^\veca) \in S_\beta^N$.  

We prove the contrapositive, i.e., we assume $\cD(\Psi) \not\in S_\beta^N$ and show this implies $\val_\Psi > \beta$.
If $\cD(\Psi) \not\in S_\beta^N$, then there exists $p \in [0,1]$ such that $\Exp_{\vecb \sim \cD(\Psi)} \Exp_{\vecc \sim \bern(p)^{k}} [f(\vecb \odot \vecc)] > \beta$. But this implies, as we show below, that if $\vecsigma \sim \bern(p)^{n}$, then $\Exp_{\vecsigma\sim\bern(p)^{n}}[\val_{\Psi}(\vecsigma)] > \beta$. 
We have:
\begin{align*}
    \Exp_{\vecsigma\sim\bern(p)^{n}}[\val_{\Psi}(\vecsigma)]
    &= \Exp_{\vecsigma\sim\bern(p)^{n}} \Exp_{i}[C_i(\vecsigma)]~~~\mbox{(By definition of $\Psi$)} \\
    &= \Exp_{\vecsigma\sim\bern(p)^{n}} \Exp_{i}[f(\vecb(i) \odot\vecsigma|_{\vecj(i)})]~~~\mbox{(By definition of $C_i$)} \\
    &= \Exp_{i}  \Exp_{\vecsigma|_{\vecj(i)}\sim \bern(p)^{k}} [f(\vecb(i) \odot \vecsigma|_{\vecj(i)})]~~~\mbox{(Exchanging summations)} \\
    &= \Exp_{i}  \Exp_{\vecc\sim \bern(p)^{k}} [f(\vecb(i) \odot \vecc)]~~~\mbox{(Renaming variables)} \\
    &= \Exp_{\vecb\sim\cD(\Psi)} \Exp_{\vecc\sim \bern(p)^{k}} [f(\vecb \odot \vecc)]~~~\mbox{(By definition of $\cD(\Psi)$)} \\
    &> \beta ~~~\mbox{(By the contrapositive assumption)}
\end{align*}
Since $\val_\Psi \triangleq \max_{\vecsigma} \{\val_\Psi(\vecsigma)\} \geq \Exp_{\vecsigma\sim\bern(p)^{n}}[\val_{\Psi}(\vecsigma)]$, we get a 
contradiction to $\val_\Psi\leq\beta$. This concludes the proof of the claim and hence the lemma.
\end{proof}

Before turning to the proof of~\autoref{lem:correctness_algorithm}, we first do a quick post-analysis of the proof above. The proof above is the key reason why the definition of $S_\beta^N$ is chosen as it is: In particular, from the fact that there was an i.i.d. distribution, namely $\bern(p)^{k}$, according to which a random assignment satisfied the ``instance'' underlying $\cD(\Psi)$ with value more than $\beta$ allowed us to extend this to a (again i.i.d., but this was not necessary) distribution over assignments to $\Psi$ that also achieved value of at least $\beta$. Note that the mere existence of an assignment of value greater than $\beta$ on $\cD(\Psi)$ would have been insufficient for this step to go through, explaining our choice of definition of $S_\beta^N$. 

We are now ready to prove~\autoref{lem:correctness_algorithm}.

\begin{proof}[Proof of~\autoref{lem:correctness_algorithm}]
Let $\val_\Psi \geq \gamma$. Then there exists $\veca\in\{-1,1\}^n$ such that $\val_\Psi(\veca) \geq \gamma$. By~\autoref{lem:optimum_YES}, we have that $\cD(\Psi^\veca) \in S_\gamma^Y$. By our choice of $\veclambda$, we have $\lambda(\cD) \geq \tau_Y$ for every $\cD \in S_\gamma^Y$ and so in particular we have $\veclambda(\cD(\Psi^\veca)) \geq \tau_Y$. By~\autoref{lem:bias_comparsion}, we have $B_{\veclambda}(\Psi) = \max_{\vecc \in \{-1,1\}^n}\{\veclambda(\cD(\Psi^\vecc))\}$. Putting these together we have 
\[
B_{\veclambda}(\Psi) = \max_{\vecc \in \{-1,1\}^n}\{\veclambda(\cD(\Psi^\vecc))\}\geq \veclambda(\cD(\Psi^\veca)) \geq \tau_Y \, .
\]
Finally, since $\tilde{B} \geq (1-\epsilon)B_\lambda(\Psi)$, we get $\tilde{B} \geq (1-\epsilon)\tau_Y > (1+\epsilon)\tau_N$, where the final inequality holds by our choice of $\epsilon$. 

The case $\val_\Psi \leq \beta$ is similar. In this case, by~\autoref{lem:optimum_NO} we have $\cD(\Psi^\veca) \in S_\beta^N$ for every $\veca$. Now applying~\autoref{lem:bias_comparsion} we get that for every $\veca$, $\langle \veca, \bias_{\veclambda} \rangle = \veclambda(\cD(\Psi^\veca)) \leq \tau_N$. We conclude that $B_{\veclambda}(\Psi) = \max_{\veca\in\{-1,1\}^n} \{\langle \veca, \bias_{\veclambda} \rangle\} \leq \tau_N$. since $\tilde{B} \leq (1+\epsilon)B_{\veclambda}(\Psi)$, we get $\tilde{B} \leq (1+\epsilon)\tau_N$.
\end{proof}

We now conclude the section with a formal proof of~\autoref{thm:main-positive}.

\begin{proof}[Proof of~\autoref{thm:main-positive}]
The desired algorithm is Algorithm 1. Its space complexity is bounded by the space required for Step 3, which by~\autoref{prop:ell1norm} is $O(\log n)$. Assuming Step 3 works correctly, which happens with probability at least $2/3$,~\autoref{lem:correctness_algorithm} shows that it correctly solves $(\gamma,\beta)$-$\maxf$ whenever $K_\gamma^Y(f) \cap K_\beta^N(f) = \emptyset$. 
\end{proof}

\section{Sketching and Streaming Space Lower Bounds for \texorpdfstring{$\maxf$}{Max-CSP(f)}}\label{sec:SpaceLowerBound insertion}

In this section, we prove our main lower bound results, modulo a communication complexity lower bound which is proved in \autoref{sec:BHBHM 1 wise} and \autoref{sec:BHBHM general}. We start by recalling the results to be proved. First we restate the lower bound in the general streaming setting.
Recall that $(\cD_Y,\cD_N)$ form a padded one-wise pair if there exist $\tau \in [0,1]$, and $\cD_0, \cD'_Y, \cD'_N$ such that for $i \in \{Y,N\}$ we have $\cD_i = \tau \cD_0+ (1-\tau)\cD'_i$ and $\cD'_i$ has uniform marginals.

\restatethmmainnegative*

We also restate the lower bound against sketching algorithms from \autoref{thm:main-detailed} as a separate theorem below.

\begin{theorem}[Lower bounds against sketching algorithms]\label{thm:main-negative dynamic}
For every function $f:\{-1,1\}^k\rightarrow\{0,1\}$ and for every $0\leq\beta<\gamma\leq1$, if $K_\gamma^Y(f) \cap K_\beta^N(f) \neq \emptyset$, then for every $\epsilon>0$, every sketching algorithm for $(\gamma-\eps,\beta+\eps)$-$\maxf$ requires $\Omega(\sqrt{n})$ space\footnote{The constant hidden in the $\Omega$ notation may depend on $k$ and $\epsilon$.}. Furthermore, if $\gamma = 1$. then $(1,\beta+\epsilon)$-$\maxf$ requires $\Omega(\sqrt{n})$ space. 
\end{theorem}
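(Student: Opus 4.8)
\textbf{Proof plan for \autoref{thm:main-negative dynamic}.}
The plan is to reduce a streaming version of the Randomized Mask Detection problem to $(\gamma-\eps,\beta+\eps)$-$\maxf$, and then to derive the communication lower bound for that streaming RMD problem from a $T$-player simultaneous communication lower bound, which in turn follows (as indicated in the overview) from the two-player lower bound \autoref{thm:communication lb matching moments} together with the simultaneous-to-streaming connection of Ai--Hu--Li--Woodruff~\cite{ai2016new}. First I would use the hypothesis $K_\gamma^Y(f)\cap K_\beta^N(f)\neq\emptyset$: pick $\vecmu$ in the intersection and distributions $\cD_Y\in S_\gamma^Y$, $\cD_N\in S_\beta^N$ with $\vecmu(\cD_Y)=\vecmu(\cD_N)=\vecmu$. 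These two distributions have matching marginals but are otherwise arbitrary, which is exactly the regime covered by \autoref{thm:communication lb matching moments}; note that unlike in the insertion-only case (\autoref{thm:main-negative}) we do \emph{not} need them to be a padded one-wise pair, because in the dynamic setting we route through simultaneous communication rather than through a reduction to two-player RMD.

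Next I would set up the reduction. Given a $T$-player instance of simultaneous RMD with mask distributions $\cD_Y,\cD_N$ — player $t$ holds a $k$-hypermatching $M_t$ with $\alpha n$ edges on $[n]$ together with the masked label vector $\vecz_t$ — each player locally produces a batch of $\alpha n$ constraints of $\maxf$: for each hyperedge $e=(j_1,\dots,j_k)$ with mask bits read off from $\vecz_t$, emit the constraint $f(b_1 x_{j_1},\dots,b_k x_{j_k})$ with the negation pattern $\vecb$ taken to be (the relevant coordinates of) $\vecz_t$. Concatenating the $T$ batches gives a stream of $\Theta(Tn)$ unit-weight constraints, which trivially also lives in the dynamic model. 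The standard calculation (the same one underlying the analysis of $\cD(\Psi)$ in \autoref{sec:algorithm}) shows: in the YES case the planted assignment $\vecx^*$ satisfies, in expectation over the randomness, a $\Exp_{\vecb\sim\cD_Y}[f(\vecb)]\ge\gamma$ fraction of constraints, and concentration (\autoref{lem:our-azuma}, using $T$ large enough that the number of constraints is $\omega(n/\eps^2)$) pushes the value of $\Psi$ above $\gamma-\eps$ with high probability; in the NO case, for \emph{every} assignment $\vecsigma$ the expected fraction satisfied is $\Exp_{\vecb\sim\cD_N}\Exp_{\vecc\sim\bern(p)^k}[f(\vecb\odot\vecc)]\le\beta$ for the induced bias parameter $p$ — here one uses that a fixed $\vecsigma$ together with the uniformly random planted vertex colouring makes the effective mask on each edge distributed as $\cD_N$ xored with an i.i.d.\ $\bern$ pattern — and a union bound over the $2^n$ assignments (again via \autoref{lem:our-azuma}, which is why we need $T=\Theta(n)$, i.e.\ a total of $\Theta(n^2)$ constraints, still $\poly(n)$) gives $\val_\Psi\le\beta+\eps$ with high probability. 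Thus a streaming algorithm for $(\gamma-\eps,\beta+\eps)$-$\maxf$ using $s$ space yields, by running it on the concatenated stream and having the referee simulate it across the $T$ player messages, a $T$-player simultaneous protocol with per-player communication $O(s)$; the Ai--Hu--Li--Woodruff framework converts an $s$-space dynamic (strict turnstile) streaming algorithm into an $s\cdot\polylog n$ simultaneous protocol, so it suffices to lower-bound the simultaneous complexity of RMD.

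For the $\gamma=1$ sharpening, I would note that when $\gamma=1$ one should instead invoke $\cD_Y\in S_1^Y$, i.e.\ $\cD_Y$ is supported entirely on $f^{-1}(1)$, so that in the YES case the planted assignment satisfies \emph{every} constraint exactly (no concentration loss on the YES side), giving value exactly $1$ rather than $1-\eps$; the NO side is unchanged. Finally, the simultaneous lower bound itself: the $T$ players' inputs are independent conditioned on $\vecx^*$, and a hybrid/augmentation argument (replace players' inputs one at a time by NO-distributed ones, relate consecutive hybrids via the two-player bound of \autoref{thm:communication lb matching moments}) shows any simultaneous protocol with $o(\sqrt n)$ total communication — equivalently $o(\sqrt n /T)$ per player, but the standard information-theoretic packing shows even $o(\sqrt n)$ total suffices after accounting for $T=\poly(n)$ by re-parametrising $n$ — has $o(1)$ advantage. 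I expect the main obstacle to be precisely this last step: pinning down the exact quantitative form of the simultaneous lower bound (how the advantage and the communication budget scale with $T$ and $n$) so that it both follows cleanly from \autoref{thm:communication lb matching moments} and plugs into the Ai--Hu--Li--Woodruff reduction with the right parameters; the streaming reduction and the YES/NO value calculations are routine given \autoref{lem:our-azuma} and the $\cD(\Psi)$ machinery already developed in \autoref{sec:algorithm}.
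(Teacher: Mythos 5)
Your high-level plan matches the paper's proof exactly: pick $\cD_Y\in S_\gamma^Y$, $\cD_N\in S_\beta^N$ with matching marginals in the intersection, use AHLW to turn an $s$-space strict-turnstile streaming algorithm into a linear sketch and hence a $T$-player simultaneous protocol, reduce simultaneous RMD to two-player RMD by a hybrid, then invoke the two-player lower bound (\autoref{thm:communication lb matching moments}). The $\gamma=1$ observation ($\cD_Y$ supported on $f^{-1}(1)$ yields value exactly $1$) is also what the paper does. The ``obstacle'' you flagged is, however, exactly where your parameters go wrong, and the error propagates.

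You claim the NO-side union bound forces $T=\Theta(n)$, i.e.\ $\Theta(n^2)$ constraints. It does not. The instance already has $m=\alpha n T$ constraints, so with $T$ a \emph{constant} (the paper takes $T=T_0=\Theta(1/(\eps^2\alpha))$) one still has $m=\Theta(n)$ constraints, and the Azuma-type bound (\autoref{lem:our-azuma}) already gives, for each fixed assignment, a failure probability $\exp(-\Omega(\eps^2 m))=\exp(-\Omega(\eps^2\alpha T n))$. Choosing $T$ a sufficiently large constant makes this $\le c^{-n}$ for any $c>2$, which beats the $2^n$ union bound — no $\Theta(n^2)$ constraints needed. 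This matters because if you force $T=\Theta(n)$ the hybrid argument from $T$-player simultaneous to one-way RMD produces a protocol with advantage $\Delta/T=\Delta/\Theta(n)$, which vanishes with $n$; \autoref{thm:communication lb matching moments} only gives $\Omega(\sqrt n)$ for a fixed constant advantage $\delta$, so your ``re-parametrise $n$'' fix does not go through. With $T=O(1)$, the hybrid loses only a constant factor in advantage and a constant factor $T-1$ in communication (\autoref{lem:reduce RMD to simul-RMD}), and \autoref{thm:communication lb matching moments} then applies directly with $\delta=\Omega(1)$. Also, the conversion via AHLW is to an $O(s)$-space \emph{linear sketch} (\autoref{thm:AHLW}); the simultaneous protocol is then built by having each player send its portion of the sketch $A\sum_i\vecsigma^{(t)}_i\pmod q$ and the referee sum them — there is no $\polylog n$ overhead. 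With $T$ set to a constant, your sketched argument becomes precisely the paper's.
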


To prove both theorems, we introduce the \textit{Randomized Mask Detection (RMD)} communication game in \autoref{sec:BHBHM definition and theorems}. We then state a lower bound for the communication complexity of this game (\autoref{thm:communication lb matching moments}), and use the lower bound to prove \autoref{thm:main-negative} in \autoref{sec:lb insert} and  \autoref{thm:main-negative dynamic} in \autoref{sec:proof_dynamic}. The proof of \autoref{thm:communication lb matching moments} appears in \autoref{sec:BHBHM general}.


\begin{figure}[h]
    \centering
    \includegraphics[width=15cm]{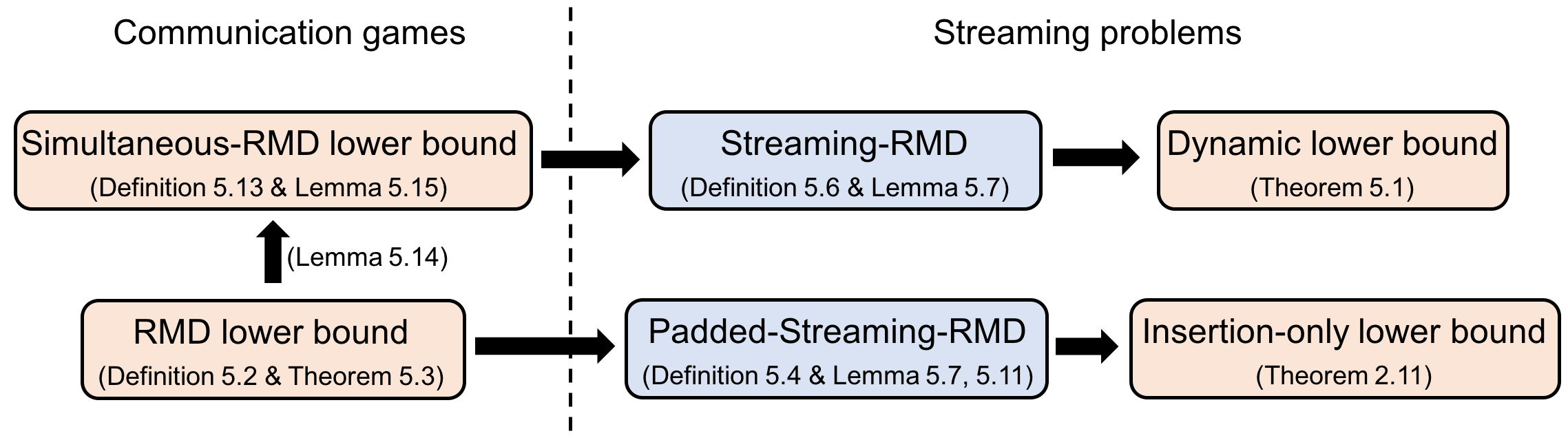}
    \caption{Roadmap of this section.}
    \label{fig:outline}
\end{figure}

\subsection{2-Player Communication Games and the Randomized Mask Detection Problem}\label{sec:BHBHM definition and theorems}

In most of this section and the rest of this paper, we will be considering the complexity of 2-player 1-way communication games. Broadly such games are described by two (parameterized set of) distributions $\cY$ and $\cN$. An instance of the game is a pair $(X,Y)$ either drawn from $\cY$ or from $\cN$ and $X$ is given as input to Alice and $Y$ to Bob. A (one-way communication) protocol $\Pi = (\Pi_A,\Pi_B)$ is a pair of functions with $\Pi_A(X) \in \{0,1\}^c$ denoting Alice's message to Bob, and $\Pi_B(\Pi_A(X),Y)\in \{\yes,\no\}$ denoting the protocol's output. We denote this output by $\Pi(X,Y)$. The complexity of this protocol is the parameter $c$ specifying the length of $\Pi_A(X)$ (maximized over all $X$). The advantage of the protocol $\Pi$ is the quantity 
$$\left| \Pr_{(X,Y)\sim\cY} [ \Pi(X,Y) = \yes] - \Pr_{(X,Y)\sim \cN} [\Pi(X,Y) = \yes] \right|. $$

The Randomized Mask Detection (RMD) communication game is an instance of such a communication game. Let $n,k\in\mathbb{N}$ and $\alpha\in(0,1)$ with $k\leq n$ and $\alpha k\leq 1$. Alice receives a private input $\vecx^*$ drawn uniformly at random from $\{-1,1\}^n$ while Bob receives private inputs of a $k$-uniform hypermatching of size $\alpha n$ and a vector $\vecz\in\{-1,1\}^{\alpha kn}$ of the form $\vecz=(\vecz(1),\dots,\vecz(\alpha n))$ where $\vecz(i)\in\{-1,1\}^k$ for each $i\in[\alpha n]$. Alice's input $\vecx^*$ encodes a random bipartition of the vertex set according to the $\pm1$ pattern. Bob's $k$-uniform hypermatching is encoded by a matrix $M\in\{0,1\}^{\alpha kn\times n}$ where the $(k(i-1)+1)$-th to the $(ki)$-th rows encode the $i$-th hyperedge by putting exactly one $1$ in each row to the corresponding vertices. During the game, Alice sends a message to Bob and Bob has to discover the hidden structure of the vector $\vecz$. The following definition formally describes the problem. 

\begin{definition}[Randomized Mask Detection (RMD) Problem]\label{def:rmd}
For $k \in \mathbb{N}$, $\alpha \in (0,1/k]$ and a pair of distributions $\cD_Y,\cD_N \in \Delta(\{-1,1\}^k)$, the $(\cD_Y,\cD_N;\alpha,k)$-\textsf{RMD} problem is the $2$-player communication game given by a family of instances $(\cY_n,\cN_n)_{n \in \mathbb{N}, n \geq 1/\alpha}$ where for a given $n$, $\cY = \cY_n$ and $\cN = \cN_n$ are as follows: 
Both $\cY$ and $\cN$ are supported on triples $(\vecx^*,M,\vecz)$ where $\vecx^* \in \{-1,1\}^n$, $M \in \{0,1\}^{k\alpha n \times n} $ and $\vecz \in \{-1,1\}^{k\alpha n}$, where $\vecx^*$ is Alice's input and the pair $(M,\vecz)$ are Bob's inputs. We now specify the distributions of $\vecx^*,M$ and $\vecz$ in $\cY$ and $\cN$:
\begin{itemize}
\item In both $\cY$ and $\cN$, $\vecx^*$ is distributed uniformly over $\{-1,1\}^n$.
\item In both $\cY$ and $\cN$ the matrix $M\in\{0,1\}^{\alpha kn\times n}$ is chosen uniformly (and independently of $\vecx^*$) among matrices with exactly one $1$ per row and at most one $1$ per column. (Thus $M$ represents a $k$-hypermatching where each block of $k$ rows describes a hyperedge.)
\item The vector $\vecz$ is obtained by ``masking'' (i.e., xor-ing) $M\vecx^*$ by a random vector $\vecb\in \{-1,1\}^{\alpha kn}$ whose distribution differs in $\cY$ and $\cN$. Specifically let $\vecb=(\vecb(1),\dots,\vecb(\alpha n))$ be sampled from one of the following distributions (independent of $\vecx^*$ and $M$):
\begin{itemize}
    \item $\cY$:  Each $\vecb(i)\in\{-1,1\}^k$ is sampled independently according to $\cD_Y$.
    \item $\cN$:  Each $\vecb(i)\in\{-1,1\}^k$ is sampled independently according to $\cD_N$.
\end{itemize}
We now set $\vecz = (M \vecx^*)\odot \vecb$ (recall that that $\odot$ denotes coordinatewise product).
\end{itemize}
\end{definition}
We will typically suppress $k$ and $\alpha$ from the notation when they are clear from context and simply refer to the $(\cD_Y, \cD_N)$-\textsf{RMD}. We will refer to $n$ as the length parameter or refer to ``instances of length $n$" when the instances are drawn from $\cY_n$ vs. $\cN_n$. 
The goal of a protocol solving \textsf{RMD} is to distinguish between case where the masks are sampled from $\cD_Y$ from the case where the masks are sampled from $\cD_N$ and advantage measures this probability of distinguishing.

We note that our communication game is slightly different from those in previous works: Specifically the problem studied in \cite{GKKRW,KKS} is called the \textit{Boolean Hidden Matching (BHM)} problem from~\cite{GKKRW} and the works \cite{KKSV17,KK19} study a variant called the \textit{Implicit Hidden Partition} problem. While these problems are similar, they are less expressive than our formulation, and specifically do not seem to capture all $\maxf$ problems. 

There are two main differences between the previous settings and our setting. The first difference is the way to encode the matching matrix $M$. In all the previous works, each edge (or hyperedge) is encoded by a single row in $M$ where the corresponding columns are assigned to $1$, so that $m = \alpha n$. However, it turns out that this encoding hides too much information and hence we do not know how to reduce the problem to general \textsf{Max-CSP}. We unfold the encoding by using $k$ rows to encode a single $k$-hyperedge (leading to the setting of $m = k\alpha n$ in our case). The second difference is that we allow the masking vector $\vecb$ to be sampled from a more general distribution. This is also for the purpose of establishing a reduction to general \textsf{Max-CSP}.
That being said, it is possible to describe some of the previous results in our language: all the papers consider the complexity of distinguishing the distribution $\cD_Y = \textsf{Unif}(\{(1,1),(-1,-1)\})$ from the distribution $\cD_N = \textsf{Unif}(\{-1,1\}^2)$. This problem is shown to have a communication lower bound of $\Omega(\sqrt{n})$ in \cite{GKKRW}. And a variant of this problem (not captured by our formulation above) is shown to have an $\Omega(n)$ lower bound in \cite{KK19}.

Due to the above two differences, it is not clear how to derive communication lower bounds for general $\cD_Y$ and $\cD_N$ by reduction from the previous works.
The main technical contribution of this part of the paper is a communication lower bound for \textsf{RMD} for general $\cD_Y$ and $\cD_N$. We summarize the result in the following theorem.

\begin{theorem}[\textsf{RMD} Lower bound for distributions with matching marginals]\label{thm:communication lb matching moments}
For every  $k\in\N$, there exists $\alpha_0(k) > 0$ such that for every $\alpha\in(0,\alpha_0(k))$ and $\delta>0$ the following holds: For every pair of distributions $\cD_Y,\cD_N\in \Delta(\{-1,1\}^k)$ with $\vecmu(\cD_Y) = \vecmu(\cD_N)$
there exists $\tau > 0$ and $n_0$ such that for every $n \geq n_0$, every protocol for $(\cD_Y,\cD_n)$-\textsf{RMD} achieving advantage $\delta$ on instances of length $n$ requires $\tau\sqrt{n}$ bits of communication. 
\end{theorem}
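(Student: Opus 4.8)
The proof has a Fourier-analytic core handling the special case of \emph{uniform} marginals, followed by a reduction-and-interpolation step that lifts it to all pairs with matching marginals. The convenient bookkeeping device is, for a \emph{fixed} one-way protocol, the total variation distance $d(\cD,\cD')$ between the law of the transcript when Bob's per-hyperedge masks are i.i.d.\ $\cD$ and the law when they are i.i.d.\ $\cD'$. The protocol's advantage on $(\cD_Y,\cD_N)$-\textsf{RMD} is at most $d(\cD_Y,\cD_N)$; and by \autoref{prop:tvd properties}, $d$ is sub-additive along a path of intermediate mask distributions (triangle inequality) and non-increasing under any randomized map from one \textsf{RMD} instance to another that does not depend on Alice's input (data processing). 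So it suffices to (i) upper bound $d(\mathcal{E},\mathcal{E}')$ for one-wise-independent $\mathcal{E},\mathcal{E}'$, and (ii) connect $\cD_Y$ to $\cD_N$ by a short path each of whose steps reduces, via data processing, to case (i).

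\paragraph{Base case: uniform marginals.} I would extend the Fourier-analytic argument of \cite{KKS} (itself following \cite{GKKRW}) from $k=2$, and from the single pair $\unif(\{(1,1),(-1,-1)\})$ vs.\ $\unif(\{-1,1\}^2)$, to all $k$ and all one-wise-independent pairs $\mathcal{E},\mathcal{E}'$. Fix a $c$-bit protocol and condition on Alice's message; for a typical message Alice's input $\vecx^*$ is uniform over a set $A\subseteq\{-1,1\}^n$ with $|A|\ge 2^{n-O(c)}$. Since $M$ is a matching (at most one $1$ per column) its hyperedges are disjoint, so the law of $\vecz$ given the message and $M$ is the projection of $\unif(A)$ onto the hyperedge coordinates, convolved block-by-block (over the group $\{-1,1\}^k$) with i.i.d.\ copies of $\mathcal{E}$ (resp.\ $\mathcal{E}'$). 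Applying Parseval (\autoref{prop:parseval}) to the difference of these two laws, the only surviving Fourier characters carry weight $\ge 2$ inside some block --- because $\mathcal{E}$ and $\mathcal{E}'$ agree on all Fourier coefficients of levels $0$ and $1$ --- and the Fourier weight of $\One_A$ at level $\ell\ge 2$ is at most $(|A|/2^n)^2 (O(c)/\ell)^\ell$ by hypercontractivity, in the form of \autoref{lem:kkl}. Summing the level-$\ell$ contributions over the $\alpha n$ hyperedges and over $\ell\ge 2$ shows $d(\mathcal{E},\mathcal{E}')\to 0$ as $c/\sqrt n\to 0$; the hypothesis $\alpha<\alpha_0(k)$ is what keeps the hyperedges sparse enough for this estimate. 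Quantitatively, for every $\delta>0$ there is $\tau=\tau(k,\alpha,\delta)>0$ with $d(\mathcal{E},\mathcal{E}')<\delta$ once $c\le\tau\sqrt n$.

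\paragraph{From uniform to matching marginals.} For the general case I would single out a class of ``handled'' pairs, concretely the padded one-wise pairs: $(\cA,\cA')$ with $\cA=\theta\cD_0+(1-\theta)\mathcal{E}$ and $\cA'=\theta\cD_0+(1-\theta)\mathcal{E}'$ for a common $\cD_0$ and one-wise-independent $\mathcal{E},\mathcal{E}'$. For such a pair there is a randomized, Alice-input-independent map producing a $(\cA,\cA')$-\textsf{RMD} instance on $n$ vertices out of a $(\mathcal{E},\mathcal{E}')$-\textsf{RMD} instance on $\Theta(n)$ vertices: it peels off the common ``$\cD_0$ part'' of each hyperedge's mask and simulates the corresponding hyperedges on freshly added vertices, using that a $\cD_0$-masked hyperedge placed on uniformly random vertex-values has a law independent of those values --- the communication analogue of the padding underlying \autoref{thm:main-negative}. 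Data processing then gives $d(\cA,\cA')\le d(\mathcal{E},\mathcal{E}')$, which the base case bounds. It remains to join arbitrary $\cD_Y,\cD_N$ with $\vecmu(\cD_Y)=\vecmu(\cD_N)$ by a path $\cD_Y=\cA_0,\cA_1,\dots,\cA_L=\cD_N$ with $L\le\poly(k!)$, each consecutive pair being a padded one-wise pair: decompose the zero-marginal signed measure $\cD_Y-\cD_N$ into a bounded number of ``elementary'' zero-marginal moves (generalized $2\times 2$ swaps fixing all marginals), order them so every partial sum $\cA_j$ is a genuine distribution, and verify each step is realizable as a padded one-wise pair --- for $k=2$ this is precisely \autoref{prop:k=2padded}, with $L=1$. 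Summing along the path by the triangle inequality gives $d(\cD_Y,\cD_N)\le L\cdot d(\mathcal{E},\mathcal{E}')$, so shrinking $\tau$ by the factor $L$ (still a positive constant depending only on $\cD_Y,\cD_N,k,\alpha,\delta$) yields the theorem.

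\paragraph{The main obstacle.} I expect the technical crux to be the realizability of the path's steps. Writing $\cA_j=\theta\cD_0+(1-\theta)\mathcal{E}$ with $\mathcal{E}$ one-wise independent needs $\cA_j$ to dominate a scaled one-wise-independent measure with a ``budget'' $\theta$ bounded away from $1$; this can fail near $\cD_Y$ or $\cD_N$ when those lie on the boundary of the marginal polytope $P_\vecmu$ and so have zero entries, and then a naive subdivision requires unboundedly many steps. Overcoming this requires a face analysis of $P_\vecmu$: a zero-support pattern forces parity-type constraints that effectively lower the arity, so boundary faces are handled by induction on $k$, and a counting argument must bound the number of faces --- hence path segments --- by $\poly(k!)$. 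A secondary, more clerical difficulty is that every reduction must respect the exact \textsf{RMD} encoding (the $k$-rows-per-hyperedge matrix $M$ with at most one $1$ per column), so the padding map must reproduce the precise joint law of $(\vecx^*,M,\vecz)$, including where the simulated hyperedges are placed.
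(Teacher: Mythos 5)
Your plan's skeleton — a Fourier-analytic bound in the uniform-marginal case, a padding reduction that transfers it to ``padded one-wise pairs,'' and a path from $\cD_Y$ to $\cD_N$ whose steps are padded one-wise pairs, closed by the triangle inequality for transcript TVD — matches the architecture of the paper's proof (\autoref{thm:communication lb 1 wise} for the base case, \autoref{lem:polarization indis} for the single step, \autoref{lem:polarization finite} for the path length, assembled in the final proof of \autoref{thm:communication lb matching moments}). The padded-one-wise-pair reduction you sketch (adding dummy vertices, peeling off the common $\cD_0$ part of each mask, with care about the $k$-rows-per-hyperedge encoding of $M$) is essentially the paper's Lemma~\ref{lem:polarization indis}.

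The genuine gap is in step (ii), and you flag it yourself: constructing a path of \emph{uniformly bounded} length. Your proposed fix — a face analysis of the marginal polytope with an induction on $k$, and a counting argument over faces — is not the paper's route and is not obviously repairable: ``decompose $\cD_Y-\cD_N$ into finitely many elementary zero-marginal moves and reorder them so every partial sum is a distribution'' is precisely the statement that needs a proof, and greedy reordering can get stuck near the boundary of $\Delta(\{-1,1\}^k)$ where multiple coordinates vanish simultaneously. The paper's solution has two ideas you are missing. First, it does not walk directly from $\cD_Y$ to $\cD_N$ but \emph{routes through a canonical endpoint}: the chain-supported distribution $\cD_{\vecmu}$ (\autoref{def:canonical}), the unique distribution with marginal $\vecmu$ whose support is a chain in the $\leq$ partial order, characterized as the unique maximizer of the potential $\Phi(\cD)=\Exp_{\vecb\sim\cD}[(\sum_j b_j)^2]$ among distributions with marginal $\vecmu$. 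Second, the elementary move is a specific operator — the polarization $\cD\mapsto\cD_{\vecu,\vecv}$ for an incomparable pair $\vecu,\vecv\in\supp(\cD)$, which moves mass $\epsilon=\min\{\cD(\vecu),\cD(\vecv)\}$ to $\vecu\wedge\vecv$ and $\vecu\vee\vecv$ — and each such move is automatically a padded one-wise pair (restricted to the coordinates where $\vecu,\vecv$ disagree, $\unif(\{\vecu,\vecv\})$ and $\unif(\{\vecu\wedge\vecv,\vecu\vee\vecv\})$ are both one-wise independent), so your base-case and reduction machinery applies to each step. The finite bound is \emph{not} $\poly(k!)$ by face counting; it is $N(k)\le(k^2+3)(1+N(k-1))$, proved by a recursive algorithm (\textsc{Polarize}, \autoref{alg:polarization}) with a nontrivial termination argument: after fixing the subcube $\{x_k=-1\}$ and $\{x_k=1\}$ to be chain-supported, the indices $(i_t,j_t)$ of the polarized pair strictly increase in lexicographic order (\autoref{clm:key-terminate}), so the while loop runs at most $k^2$ times. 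The potential $\Phi$ alone is insufficient because individual steps can increase it by arbitrarily small amounts, and the paper explicitly notes this; the lexicographic invariant is the substitute. Without some such argument, your path construction is an unproven claim, not a proof step.
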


We prove~\autoref{thm:communication lb matching moments} in two parts. First, in~\autoref{sec:BHBHM 1 wise}, we prove a communication lower bound for the special case where the marginals of $\cD_Y$ and $\cD_N$ are all zero. While this captures many new cases, it fails to capture the more interesting scenarios (involving non-approximation resistant problems). To get lower bounds for the general case, we reduce the $0$-marginal case to the general case in~\autoref{sec:BHBHM general}. 

In the rest of this section, we use~\autoref{thm:communication lb matching moments} to prove \autoref{thm:main-negative} and \autoref{thm:main-negative dynamic}. 

\subsection{The streaming lower bound}\label{sec:BHBHM boosting}

The hardness of \textsf{RMD} suggests a natural path for hardness of $\maxf$ problems in the streaming setting. Such a reduction would take two distributions $\cD_Y \in S_\gamma^Y$ and $\cD_N \in S_\beta^N$ with matching marginals, construct  distributions $\cY$ and $\cN$ of \textsf{RMD}, and then interpret these distributions (in a natural way) as distributions over instances of $\maxf$ that are indistinguishable to small space algorithms. While the exact details of this ``interpretation'' need to be spelled out, every step in this path can be achieved. Unfortunately this does not mean any hardness for $\maxf$ since the CSPs generated by this reduction would consist of instances that have at most one constraint per variable, and such instances are easy to solve! 

To go from the instance suggested by the  \textsf{RMD} problem to hard CSP instances, we instead pick $T$ samples (somewhat) independently from the distributions $\cY$ and $\cN$ suggested by the \textsf{RMD} problem and concatenate these. With an appropriate implementation of this notion (see \autoref{def:srmd}) it turns out it is possible to use the membership of the underlying distributions in $S^Y_\gamma$ and $S^N_\beta$ to argue that the resulting instances $\Psi$ do (almost always) have $\val_\Psi \geq \gamma$ or $\val_\Psi \leq \beta$. (We prove this after appropriate definitions in \autoref{lem:csp value}.) But now to one needs to connect the streaming problem generated from the $T$-fold sampled version to the RMD problem.

To this end we formalize the $T$-fold streaming problem, which we call $(\cD_Y,\cD_N,T)$\textsf{-streaming-RMD} problem, in \autoref{def:srmd}. Unfortunately, we are not able to reduce the $(\cD_Y, \cD_N)$\textsf{-RMD} problem to $(\cD_Y,\cD_N,T)$\textsf{-streaming-RMD} problem for all 
$\cD_Y$ and $\cD_N$. (Roughly this problem arises from the fact that the $T$ samples $(\vecx^{*}(t),M(t),\vecz(t))$ are not sampled independently from $\cY$ (or $\cN$) for $t \in [T]$. Instead they are sampled independently conditioned on $\vecx^*(1) = \cdots = \vecx^*(T)$. This hidden correlation in {\em both} the \yes\ and the \no\ cases turns out to be a serious problem.) But in the setting where $\cD_Y$ and $\cD_N$ have uniform marginals, we are able to effect the reduction and thus show that the streaming problem requires large space. This is a special case of \autoref{lem:reduce to streaming} and \autoref{cor:space lb 1 wise} which we discuss next.

We are able to extend our reduction from \textsf{RMD} to \textsf{streaming-RMD} slightly beyond the uniform marginal case, to the case where $\cD_Y$ and $\cD_N$ form a padded one-wise pair, but both the streaming problem and the analysis of the resulting CSP value need to be altered to deal with this case, as elaborated next. Let $\tau \in [0,1]$ and $\cD_0, \cD'_Y, \cD'_N$ be such that for $i \in \{Y,N\}$ we have $\cD_i = \tau \cD_0+ (1-\tau)\cD'_i$ and $\cD'_i$ has uniform marginals. Our padded streaming problem, denoted $(\cD'_Y,\cD'_N,T,\cD_0,\tau)$\textsf{-padded-streaming-RMD} problem, includes an appropriately large number of constraints generated according to $\cD_0$, followed by $T$ samples chosen according to the $(\cD'_Y,\cD'_N,T)$\textsf{-streaming-RMD} problem. See \autoref{def:srmd} for a formal definition. In \autoref{lem:csp value} we show that the CSP value of the resulting streaming problem inherits the properties of $\cD_Y$ and $\cD_N$ (which is not as immediate for \textsf{padded-streaming-RMD} as for \textsf{streaming-RMD}). We then show effectively that $(\cD'_Y,\cD'_N)$-\textsf{RMD} reduces to $(\cD'_Y,\cD'_N,T,\cD_0,\tau)$\textsf{-padded-streaming-RMD}. See \autoref{lem:reduce to streaming} and \autoref{cor:space lb 1 wise}. Putting these together leads to a proof of \autoref{thm:main-negative}. 

\newcommand{\strm}{\textrm{stream}}

\subsubsection{The (Padded) Streaming RMD Problem}


\begin{definition}[$(\cD_Y,\cD_N,T)$-\textsf{streaming-RMD}]\label{def:srmd}
For $k,T\in\mathbb{N}$, $\alpha\in(0,1/k]$, distributions $\cD_Y,\cD_N$ over $\{-1,1\}^k$, the streaming problem $(\cD_Y,\cD_N,T; \alpha,k)$-\textsf{streaming-RMD}  is the task of distinguishing, for every $n$, $\vecsigma \sim \cY_{\strm,n}$ from $\vecsigma \sim \cN_{\strm,n}$ where for a given length parameter $n$, the distributions $\cY_\strm = \cY_{\strm,n}$ and $\cN_\strm=\cN_{\strm,n}$ are defined as follows:
\begin{itemize}
    \item Let $\cY$ be the distribution over instances of length $n$, i.e., triples $(\vecx^*,M,\vecz)$, from the definition of $(\cD_Y,\cD_N)$-\textsf{RMD}. For $\vecx \in \{-1,1\}^n$, let $\cY|_\vecx$ denote the distribution $\cY$ conditioned on $\vecx^* = \vecx$. The stream $\vecsigma\sim\cY_\strm$ is sampled as follows: Sample $\vecx^*$ uniformly from $\{-1,1\}^n$.  Let $(M^{(1)},\vecz^{(1)}),\ldots,(M^{(T)},\vecz^{(T)})$ be sampled independently according to $\cY|_{\vecx^*}$. Let $\vecsigma^{(t)}$ be the pair $(M^{(t)},\vecz^{(t)})$ presented as a stream of edges with labels in $\{-1,1\}^k$. Specifically for $t \in [T]$ and $i \in [\alpha n]$, let $\vecsigma^{(t)}(i) = (e^{(t)}(i),\vecz^{(t)}(i))$ where $e^{(t)}(i)$ is the $i$-th hyperedge of $M^{(t)}$, i.e., $e^{(t)}(i) = (j^{(t)}(k(i-1)+1),\ldots,j^{(t)}(k(i-1)+k)$ and $j^{(t)}(\ell)$ is the unique index $j$ such that $M^{(t)}_{j,\ell} = 1$. Finally we let  $\vecsigma = \vecsigma^{(1)} \circ \cdots \circ \vecsigma^{(T)}$ be the concatenation of the $\vecsigma^{(t)}$s. 
    \item $\vecsigma\sim\cN_\strm$ is sampled similarly except we now sample $(M^{(1)},\vecz^{(1)}),\ldots,(M^{(T)},\vecz^{(T)})$  independently according to $\cN|_{\vecx^*}$ where $\cN|_\vecx$ is the distribution $\cN$ condition on $\vecx^* = \vecx$.
\end{itemize}
\end{definition}
Again when $\alpha$ and $k$ are clear from context we suppress them and simply refer to the $(\cD_Y,\cD_N,T)$-\textsf{streaming-RMD} problem. 
\begin{remark}\label{rem:uniform-srmd}
We note that when $\cD_N = \textsf{Unif}(\{-1,1\}^k)$, then the distributions $\cN|_{\vecx^*}$ are identical for all $\vecx^*$ (and the variables
$\vecz^{(t)}(i)$ is distributed uniformly over $\{-1,1\}^k$ independently for every $t,i$).
\end{remark}

For technical reasons, we need the following \textit{padded} version of \textsf{streaming-RMD} to extend our lower bound techniques in the streaming setting beyond uniform marginals.
\begin{definition}[$(\cD_Y,\cD_N,T,\cD_0,\tau)$-\textsf{padded-streaming-RMD}]\label{def:psrmd}
For $k,T\in\mathbb{N}$, $\alpha\in(0,1/k]$, $\tau\in[0,1)$, distributions $\cD_Y,\cD_N,\cD_0$ over $\{-1,1\}^k$, the streaming problem $(\cD_Y,\cD_N,T,\cD_0,\tau; \alpha,k)$-\textsf{padded-streaming-RMD} is the task of distinguishing, for every $n$, $\vecsigma \sim \cY_{\pstrm,n}$ from $\vecsigma \sim \cN_{\pstrm,n}$ where for a given length parameter $n$, the distributions $\cY_\pstrm = \cY_{\pstrm,n}$ and $\cN_\pstrm=\cN_{\pstrm,n}$ are defined as follows: Sample $\vecx^*$ from $\{-1,1\}^n$ uniformly. For each $i\in[\frac{\tau}{1-\tau}\alpha nT]$, uniformly sample a tuple $e^{(0)}(i)=(i_1,\dots,i_k)\in\binom{[n]}{k}$ and $\vecb^{(0)}(i)\sim\cD_0$, let $\vecsigma^{(0)}(i)=(e^{(0)}(i),\vecx^*|_{e^{(0)}(i)}\odot\vecb^{(0)}(i))$. Next, sample $\vecsigma^{(1)},\dots,\vecsigma^{(T)}$ according to the Yes and No distribution of $(\cD_Y,\cD_N,T)$-\textsf{streaming-RMD} respectively. Finally, let $\vecsigma = \vecsigma^{(0)} \circ \cdots \circ \vecsigma^{(T)}$ be the concatenation of the $\vecsigma^{(t)}$s.
\end{definition}

Note that when $\tau=0$, $(\cD_Y,\cD_N,T,\cD_0,\tau)$-\textsf{padded-streaming-RMD} is the same as $(\cD_Y,\cD_N,T)$-\textsf{streaming-RMD}.

\subsubsection{CSP value of \psRMD}\label{sssec:csp value}

There is a natural way to convert instances of \psRMD\ to a $\maxf$\ problem. In this section we make this conversion explicit and show how to use properties of the underlying distributions $\cD_0,\cD_Y,\cD_N$ to get bounds on the value of the instances produced.

Note that 
an instance 
$\vecsigma$ of \psRMD\ is simply a sequence $(\sigma(1),\ldots,\sigma(m))$ where each $\sigma(i) = (\vecj(i),\vecz(i))$ with $\vecj(i) \in [n]^k$ and $\vecz(i) \in \{-1,1\}^k$. This sequence is already syntactically very close to the description of a $\maxf$ instance. The only missing ingredient is any reference to the function $f$ itself! Indeed the reduction from \psRMD\ to $\maxf$ involves just applying this function $f$ to the literals indicated by $\sigma(i)$. 

Formally, given an instance $\vecsigma = (\sigma(1),\ldots,\sigma(m))$ of \psRMD, let $\Psi(\vecsigma)$ denote the instance of $\maxf$ on variables $\vecx = (x_1,\ldots,x_n)$ with the constraints $C_1,\ldots,C_m$ with $C_i = \sigma(i) = (\vecj(i),\vecz(i))$ is the constraint satisfied if $f(\vecz(i) \odot \vecx|_{\vecj(i)})=1$.

In what follows we show that if $\tau\cD_0+(1-\tau)\cD_Y  \in S_\gamma^Y$ then for all sufficiently large constant $T$, and sufficiently large $n$, if we draw $\vecsigma \sim \cY_{\pstrm,n}$, then with high probability, $\Psi(\vecsigma)$ has value at least $\gamma-o(1)$. Conversely if $\tau\cD_0+(1-\tau)\cD_N\in S_\beta^N$, then for all sufficiently large $n$, if we draw $\vecsigma \sim \cN_{\pstrm,n}$, then with high probability $\Psi(\vecsigma)$ has value at most $\beta+o(1)$. 

\begin{lemma}[CSP value of \psRMD]\label{lem:csp value}
For every $k\in\mathbb{N}$, $f:\{-1,1\}^k \to \{0,1\}$,  $0 \leq \beta < \gamma \leq 1$, $\epsilon > 0$, $\tau=[0,1)$, distributions $\cD_Y, \cD_N, \cD_0 \in \Delta(\{-1,1\}^k)$ there exists $\alpha_0$ such that for every $\alpha\in(0,\alpha_0]$, there exists an integer $T_0$ such that for every $T \geq T_0$
the following holds:
\begin{enumerate} 
\item If $\tau\cD_0+(1-\tau)\cD_Y \in S_\gamma^Y$, then for every sufficiently large $n$, the $(\cD_Y,\cD_N,T,\cD_0,\tau)$-\psRMD\ \yes\ instance $\vecsigma \sim \cY_{\pstrm,n}$ satisfies $\Pr[\val_{\Psi(\vecsigma)} < (\gamma-\epsilon)] \leq \exp(-n)$. 
\item If $\tau\cD_0+(1-\tau)\cD_N \in S_\beta^N$, then for every sufficiently large $n$, the $(\cD_Y,\cD_N,T,\cD_0,\tau)$-\psRMD\ \no\ instance $\vecsigma \sim \cN_{\pstrm,n}$ satisfies $\Pr[\val_{\Psi(\vecsigma)} > (\beta+\epsilon)] \leq \exp(-n)$. 
\end{enumerate} 
Furthermore, if $\gamma = 1$ then $\Pr_{\vecsigma\sim\cY_{\pstrm,n}}\left[\val_{\Psi(\vecsigma)}=1\right]=1$. 
\end{lemma}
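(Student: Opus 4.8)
The plan is to control $\val_{\Psi(\vecsigma)}$ by isolating, in both the \yes\ and \no\ cases, an assignment whose value is an average of (conditionally) independent bounded random variables whose common mean is dictated by the underlying distributions $\cD_0,\cD_Y,\cD_N$.

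\textbf{The \yes\ case (Part 1 and the ``Furthermore'').} The key observation is a cancellation: in $\cY_{\pstrm,n}$ every constraint has the form $(\vecj,\vecz)$ with $\vecz=\vecx^*|_\vecj\odot\vecb$, where $\vecb\sim\cD_0$ for the $\tfrac{\tau}{1-\tau}\alpha nT$ padding constraints and $\vecb\sim\cD_Y$ for the $\alpha nT$ hypermatching constraints (here one uses that $M^{(t)}\vecx^*$ restricted to the $i$-th block of rows equals $\vecx^*|_{e^{(t)}(i)}$). Hence $\vecz\odot\vecx^*|_\vecj=\vecb$, so the indicator that $\vecx^*$ satisfies the constraint is exactly $f(\vecb)$, and these indicators are mutually independent with mean $\Exp_{\vecb\sim\cD_0}[f(\vecb)]$ or $\Exp_{\vecb\sim\cD_Y}[f(\vecb)]$. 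Thus $\val_{\Psi(\vecsigma)}(\vecx^*)$ is an average of $m=\tfrac{1}{1-\tau}\alpha nT=\Theta(\alpha nT)$ i.i.d.\ $\{0,1\}$ variables whose mean is $\Exp_{\vecb\sim\tau\cD_0+(1-\tau)\cD_Y}[f(\vecb)]\ge\gamma$ (by $\tau\cD_0+(1-\tau)\cD_Y\in S_\gamma^Y$ and \autoref{def:sets}; a negligible rounding error appears in the padding count). A Chernoff lower-tail bound (equivalently, \autoref{lem:our-azuma} applied to the complementary variables) gives $\Pr[\val_{\Psi(\vecsigma)}(\vecx^*)<\gamma-\epsilon]\le\exp(-\Omega(\epsilon^2\alpha nT))\le\exp(-n)$ for $n$ large, and since $\val_{\Psi(\vecsigma)}\ge\val_{\Psi(\vecsigma)}(\vecx^*)$ this proves Part 1. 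If $\gamma=1$ then $\Exp_{\vecb\sim\tau\cD_0+(1-\tau)\cD_Y}[f(\vecb)]=1$, so $f\equiv1$ on $\supp(\cD_Y)$ (and on $\supp(\cD_0)$ when $\tau>0$); every mask appearing in the stream then satisfies $f$, so $\vecx^*$ satisfies every constraint with probability $1$, giving the ``Furthermore''.

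\textbf{The \no\ case, Part 2: the deterministic bound on the mean.} For an \emph{arbitrary} assignment $\vecsigma'\in\{-1,1\}^n$, set $\vecy=\vecx^*\odot\vecsigma'$ and let $p$ be the fraction of $+1$'s in $\vecy$. A constraint with edge $e$ and label $\vecz=\vecx^*|_e\odot\vecb$ is satisfied by $\vecsigma'$ iff $f(\vecy|_e\odot\vecb)=1$. Each edge $e$ (padding or hypermatching) is marginally a uniform $k$-subset of $[n]$, so $\vecy|_e$ is $k$ coordinates of $\vecy$ sampled without replacement, which lies within total variation distance $O(k^2/n)$ of $\bern(p)^k$. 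Therefore
\[
\Exp\big[\val_{\Psi(\vecsigma)}(\vecsigma')\big]\ \le\ \Exp_{\veca\sim\bern(p)^k}\Exp_{\vecb\sim\tau\cD_0+(1-\tau)\cD_N}[f(\veca\odot\vecb)]\ +\ O(k^2/n)\ \le\ \beta+O(k^2/n),
\]
the last step being the defining property of $\tau\cD_0+(1-\tau)\cD_N\in S_\beta^N$, which holds for every $p$ and hence uniformly over all $\vecsigma'$; for $n$ large this is at most $\beta+\epsilon/2$.

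\textbf{The \no\ case, Part 2: concentration and union bound.} Fix $\vecsigma'$. I would concentrate $\val_{\Psi(\vecsigma)}(\vecsigma')$ in two stages. Conditioned on all hyperedges (the $T$ random $k$-matchings and the padding edges), the satisfaction indicators are independent Bernoullis, so \autoref{lem:our-azuma} bounds the deviation of $\val_{\Psi(\vecsigma)}(\vecsigma')$ above its conditional mean by $\epsilon/4$ with probability $\le\exp(-\Omega(\epsilon^2 m))=\exp(-\Omega(\epsilon^2\alpha nT))$. The conditional mean is a $\tau$-weighted average of $\Exp_{\vecb\sim\cD_0}[f(\vecy|_e\odot\vecb)]$ over i.i.d.\ padding edges and of $\Exp_{\vecb\sim\cD_N}[f(\vecy|_e\odot\vecb)]$ over $T$ independent matchings: the padding part concentrates by Hoeffding, and each matching's edge-average concentrates around $\Exp_{e}\Exp_{\vecb\sim\cD_N}[f(\vecy|_e\odot\vecb)]$ by concentration for functions of a uniform random permutation (swapping two vertices changes it by $O(1/(\alpha n))$), so after averaging over the $T$ batches the conditional mean exceeds $\beta+\epsilon/2$ with probability $\le\exp(-\Omega(\epsilon^2\alpha^2 nT))$ for $n$ large. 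Combining, $\Pr[\val_{\Psi(\vecsigma)}(\vecsigma')>\beta+\epsilon]\le\exp(-\Omega(\epsilon^2\alpha^2 nT))$. Choosing $T_0$ (depending on $\epsilon,\alpha,k$) so that the constant in this exponent exceeds $2\ln2$, and taking a union bound over the $2^n$ assignments $\vecsigma'$ (with $\val_{\Psi(\vecsigma)}=\max_{\vecsigma'}\val_{\Psi(\vecsigma)}(\vecsigma')$), yields $\Pr_{\vecsigma\sim\cN_{\pstrm,n}}[\val_{\Psi(\vecsigma)}>\beta+\epsilon]\le\exp(-n)$, which is Part 2.

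\textbf{Main obstacle.} The real difficulty is the union bound over all $2^n$ assignments in the \no\ case: it forces the per-assignment failure probability down to $e^{-\Omega(n)}$ with a sufficiently large leading constant, which is exactly why $T$ (hence $m=\Theta(\alpha nT)$) must be taken large and why the lemma only asserts its conclusion for $T\ge T_0$. Compounding this, the hyperedges within one matching are not independent, so a Chernoff bound cannot be applied to $\val_{\Psi(\vecsigma)}(\vecsigma')$ directly; the two-stage split (first condition on all edges to decouple the indicators, then handle edge-randomness through the $T$ independent matchings plus permutation concentration) is the device that resolves this. One must also keep the $O(k^2/n)$ without-replacement error below $\epsilon$, so ``$n$ sufficiently large'' depends on $k$ and $\epsilon$.
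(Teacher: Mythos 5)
Your \yes\ case is essentially identical to the paper's: the same cancellation $\vecz\odot\vecx^*|_\vecj=\vecb$ makes the satisfaction indicators i.i.d., and the $\gamma=1$ argument matches as well. Your \no\ case proves the same statement but by a genuinely different decomposition of the concentration. The paper runs a single submartingale on the indicators $Z_1,\dots,Z_m$ in stream order and directly bounds $\Exp[Z_i\mid Z_1,\dots,Z_{i-1}]\le\eta_i+\epsilon/2$: within a matching the constraints are dependent only through the requirement that edges be vertex-disjoint, so conditioning on the set $S$ of still-available vertices replaces $\bern(p)$ by $\bern(p_S)$, and the choice $\alpha_0\approx\epsilon/(100k^2)$ keeps $\|\bern(p)^k-\bern(p_S)^k\|_{tvd}\le\epsilon/100$. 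A single application of \autoref{lem:our-azuma} then closes the case; no separate concentration on the conditional mean is ever proved. You instead condition on the entire edge sequence first (decoupling the indicators), apply \autoref{lem:our-azuma} to the now-independent Bernoullis, and only then must separately concentrate the conditional mean via Hoeffding on the i.i.d.\ padding edges and a bounded-difference (Doob-martingale on the underlying permutations) argument for each of the $T$ matchings, combined across batches to beat $2^n$. This is a valid route: a $k$-hypermatching is a prefix of a uniform permutation, a swap changes at most two edges and hence the edge-average by $O(1/(\alpha nT))$, and summing over the $nT$ revelation steps gives your claimed $\exp(-\Omega(\epsilon^2\alpha^2 nT))$ (a slightly weaker rate than the paper's $\exp(-\Omega(\epsilon^2\alpha nT))$, which only moves $T_0$ from $\Theta(1/(\epsilon^2\alpha))$ to $\Theta(1/(\epsilon^2\alpha^2))$ — harmless). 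The trade-off is that the paper's argument is entirely self-contained given \autoref{lem:our-azuma}, whereas your two-stage split imports an additional tool (McDiarmid-type concentration for functions of random permutations) that the paper never needs. Both are correct.
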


\begin{proof}
We prove the lemma for $\alpha_0 = \epsilon/(100k^2)$ and $T_0 = 1000/(\epsilon^2\alpha)$.
Roughly our proof uses the fact that the definition of $S^Y_\gamma$ is setup so that $\Psi(\vecsigma)$ achieves value $\gamma$ under the ``planted'' assignment $\vecx^*$. 
Similarly $S^N_\beta$ is setup so that for every assignment, the expected value is not more than $\beta$. 

We recall that the condition $\tau\cD_0+(1-\tau)\cD_Y \in S_\gamma^Y$ implies that $\Exp_{\veca\sim\tau\cD_0+(1-\tau)\cD_Y}[f(\veca)] \geq \gamma$. Now consider a random \yes\ instance $\vecsigma \sim \cY_{\pstrm,n}$ of $(\cD_Y,\cD_N,T,\cD_0,\tau)$-\psRMD\ and let $\vecx^*$ denote the underlying vector corresponding to this draw. We show that for $\Psi = \Psi(\vecsigma)$ we have $\val_\Psi(\vecx^*) \geq \gamma-\epsilon$ with high probability.
We consider the constraints given by $\sigma(i)$ one at a time. Let $m = \frac{\tau}{1-\tau}\alpha n T+\alpha n T=\frac{\alpha n T}{1-\tau}$ denote the total number of constraints of $\Psi$. Let $Z_i = C_i(\vecx^*) = f(\vecz(i)\odot\vecx^*|_{\vecj(i)})$ denote the indicator of the event that the $i$th constraint is satisfied by $\vecx^*$. 
By construction of $\vecz(i)$ (from \autoref{def:rmd} and passed through \autoref{def:srmd}), we have $\vecz(i) = \vecb(i) \odot \vecx^*|_{\vecj(i)}$ where $\vecb(i) \sim \cD_Y$ independently of all other choices. We thus have
$Z_i = f(\vecb(i) \odot \vecx^*|_{\vecj(i)} \odot\vecx^*|_{\vecj(i)}) = f(\vecb(i))$. Thus $Z_i$ is a random variable,  chosen independently of $Z_{1},\ldots,Z_{i-1}$, with expectation $\Exp[Z_i | Z_1,\ldots,Z_{i-1}] = \Exp_{\vecb \sim \cD_0} [f(\vecb)]$ when $i\leq\frac{\tau}{1-\tau}\alpha n T$ and $\Exp[Z_i | Z_1,\ldots,Z_{i-1}] = \Exp_{\vecb \sim \cD_Y} [f(\vecb)]$ otherwise. In particular,
\begin{align*}
\Exp\left[\sum_{i=1}^mZ_i\right] &= \frac{\tau}{1-\tau}\alpha nT\cdot\Exp_{\vecb \sim \cD_0} [f(\vecb)] + \alpha n T\cdot \Exp_{\vecb \sim \cD_Y} [f(\vecb)]\\
&= \frac{\alpha n T}{1-\tau}\cdot \Exp_{\vecb \sim \tau\cD_0+(1-\tau)\cD_Y} [f(\vecb)] \\
&\geq\frac{\alpha nT}{1-\tau}\cdot\gamma ~ = ~ \gamma \cdot m\, .
\end{align*}

By applying a concentration bound (\autoref{lem:our-azuma} suffices, though even simpler Chernoff bounds would suffice) we get that $\Pr_{\vecsigma\sim\cY_{\strm,n}}[\val_{\Psi(\vecsigma)} = \frac1m{\sum_{i=1}^m Z_i} < (\gamma-\epsilon)] \leq \exp(-\epsilon^2 m) = \exp(-\epsilon^2 \alpha T n)$. This yields Part (1) of the lemma.

Note that, if $\gamma = 1$, then $Z_i = 1$ deterministically for every $i$, and so we get $\val_\Psi = 1$ with probability $1$, yielding the furthermore part of the lemma.

We now turn to the analysis of the \no\ case. By the condition $\tau\cD_0+(1-\tau)\cD_N \in S_\beta^N$ we have that for every $p \in [0,1]$, we have 
\begin{equation}\label{eqn:no-cond}
    \Exp_{\vecb\sim \tau\cD_0+(1-\tau)\cD_N}\Exp_{\veca \sim \bern(p)^k}[f(\vecb \odot \veca)]\leq \beta.
\end{equation}
Now consider any fixed assignment $\vecnu \in \{-1,1\}^n$. In what follows we show
that for a random
\no\ instance $\vecsigma \sim \cN_{\pstrm,n}$ of $(\cD_Y,\cD_N,T,\cD_0,\tau)$-\psRMD\ if we let $\Psi = \Psi(\vecsigma)$, then $\Pr[\val_{\Psi}(\vecnu) > (\beta + \epsilon)] \le c^{-n}$ for $c > 2$. This allows us to take a union bound over the $2^n$ possible $\vecnu$'s to claim $\Pr[\val_{\Psi} > (\beta + \epsilon)] \le 2^n\cdot c^{-n} = o(1)$. 

We thus turn to analyzing $\val_{\Psi}(\vecnu)$. Recall that $\vecsigma$ is chosen by picking $\vecx^* \in \{-1,1\}^n$ uniformly and then picking $\sigma(i)$'s based on this choice --- but our analysis will work for every choice of $\vecx^*\in \{-1,1\}^n$. Fix such a choice and let $\vecnu^* = \vecnu \odot \vecx^*$. Now  for $i \in [m]$ (where $m=\frac{\alpha n T}{1-\tau}$) let $Z_i$ denote the indicator of the event that $\vecnu$ satisfies $C_i$. We have $Z_i = f(\vecb(i) \odot \vecx^*_{\vecj(i)} \odot \vecnu_{\vecj(i)}) = f(\vecb(i) \odot \vecnu^*|_{\vecj(i)})$. Our goal is to prove that $\Pr[\sum_{i=1}^m Z_i > (\beta + \epsilon) \cdot m] \leq c^{-n}$. 
To this end,  let $\eta_i = \Exp[Z_i]$. Below we prove the following: (1) $\sum_{i=1}^m \eta_i \leq  (\beta+o(1))\cdot m$, and (2) For every $i$, and $Z_1,\ldots,Z_{i-1}$,
$\Exp[Z_i | Z_1,\ldots,Z_{i-1}] \leq \eta_i + \epsilon/2$. With (1) and (2) in hand, a straightforward application of Azuma's inequality yields that $\Pr[\sum_i Z_i > (\beta+\epsilon)\cdot m] \leq c_1^{-m}$ for some $c_1 > 1$. Picking $T$ large enough now ensures this is at most $c^{-n}$ for some $c > 2$.

We start by analyzing the $\eta_i$'s. Let $m_0 = \tau\cdot m$ and let $m_1 = \alpha n$ so that $m = m_0 + m_1 \cdot T$.
Let $p$ be the fraction of $1$'s in $\vecnu^*$.
When $i \leq m_0$,
we have
$\eta_i \le \Exp_{\vecb(i)\sim\cD_0} \Exp_{\veca \sim \bern(p)^k} [f(\vecb(i) \odot \veca)] + o(1)$, where the $o(1)$ term accounts for the difference between sampling $k$ elements from $n$ distinct elements with repetition and without. 
When $i > m_0$, we have 
$\eta_i \le \Exp_{\vecb(i)\sim\cD_N} \Exp_{\veca \sim \bern(p)^k} [f(\vecb(i) \odot \veca)] + o(1)$. 
By linearity of expectations, we now get 
\begin{eqnarray*} 
\sum_{i=1}^m \eta_i & \le & \tau m \cdot \Exp_{\vecb(i)\sim\cD_0} \Exp_{\veca \sim \bern(p)^k} [f(\vecb(i) \odot \veca)]
+ (1-\tau)m \cdot \Exp_{\vecb(i)\sim\cD_N} \Exp_{\veca \sim \bern(p)^k} [f(\vecb(i) \odot \veca)] +o(1)\cdot m \\
& \le & m \cdot \Exp_{\vecb(i)\sim\tau\cD_0 + (1-\tau)\cD_N} \Exp_{\veca \sim \bern(p)^k} [f(\vecb(i) \odot \veca)] +o(1)\cdot m
 \\
 & \leq & (\beta+o(1)) \cdot m \,.  \mbox{~~~ (By \autoref{eqn:no-cond})}
\end{eqnarray*}  
This yields (1).

Turning to part (2) we need to understand how the $Z_i$'s depend on each other. 
For the case $i \leq m_0$, note that $Z_1,\ldots,Z_{m_0}$ are independent by construction (\autoref{def:psrmd}).
So we have
$\Exp[Z_i\, |\, Z_1,\ldots,Z_{i-1}] = \eta_i$ in this case. 
We now consider $i>m_0$.  For $t \in [T]$, let us partition the variables $Z_{m_0+1},\ldots,Z_{m}$ into $T$ block $B_1,\ldots,B_T$ with $B_t = (Z_{m_0+(t-1)m_1 + 1},\ldots,Z_{m_0+t m_1})$ for $t \in [T$]. By construction (\autoref{def:srmd} via \autoref{def:psrmd}) we have that the blocks $B_1,\ldots,B_T$ are identically distributed and independent conditioned on $Z_1,\ldots,Z_{m_0}$. Thus the only dependence between the $Z_i$'s is within the $Z_i$'s in the same block. Within a block two $Z_i$'s may depend on each other due to the restriction that the underlying set of variables are disjoint. Thus, in particular when choosing the variables of $\vecsigma^{(t)}(i')$ (corresponding to $Z_{m_0+(t-1)m_1+i'}$, some subset $S \subseteq [n]$ of the variables may already be involved in constraints of the $t$-th block. Let $S$ be the set of variables not assigned to constraints in the $t$-th block at this time, and let $p_S$ denote the fraction of 1's in $\vecnu^*|_S$. (Note both $S$ and $p_S$ are random variables.) Since $|S| \geq n - k\alpha n$ and $\alpha \leq \epsilon/(100k^2)$ we have $|S|\geq (1 - \epsilon/(100k))n$ and so $|p - p_S| \leq \epsilon/(100k)$. In turn this implies that $\|\bern(p)^k - \bern(p_S)^k\|_{tvd} \leq \epsilon/100$. Using these bounds we now have:
\begin{align*}
\Exp[Z_i | Z_1,\ldots,Z_{i-1}] & = \Exp[Z_i | S]  \\
& = 
\Exp_{\vecj(i),\vecb(i)} [f(\vecb(i) \odot \vecnu^*|_{\vecj(i)})]\\
&\leq \Exp_{\vecb(i)\sim\cD_N} \Exp_{\veca \sim \bern(p_S)^k} [f(\vecb(i) \odot \veca)] + \frac{k^2}{|S|}\\
&\leq \Exp_{\vecb(i)\sim\cD_N} \Exp_{\veca \sim \bern(p)^k} [f(\vecb(i) \odot \veca)] + \frac{\epsilon}{100} + \frac{k^2}{|S|}\\
&\leq \Exp_{\vecb(i)\sim\cD_N} \Exp_{\veca \sim \bern(p)^k} [f(\vecb(i) \odot \veca)] + \frac{\epsilon}{2} \\
&= \eta_i + \frac{\epsilon}{2}\,. 
\end{align*}
(In the second equality above $\vecj(i)$ denotes a random sequence of distinct variables from $S$. The next inequality comes from the difference between sampling $k$ elements from $S$ with repetition and without. The following inequality is the key one, using $\|\bern(p)^k - \bern(p_S)^k\|_{tvd} \leq \epsilon/100$. The final inequality uses the fact that $n$ and hence $|S|$ are large enough, and the final equality uses the value of $\eta_i$ from Part (1).)
This concludes Part (2). 

Finally we use a version of a concentration bound for submartingales to combine (1) and (2) to get the desired bound on $\Pr[\val_{\Psi}(\vecnu) > (\beta + \epsilon)] \le c^{-n}$. Specifically, we apply \autoref{lem:our-azuma} with $N=m$, $p_i = \eta_i$ for $i \in [N]$ and $\Delta = (\epsilon/2)N$ to conclude that $\Pr[\val_{\Psi}(\vecnu) > (\beta + \epsilon)] = \Pr[\sum_i Z_i > (\beta + \epsilon)\cdot m] \leq \exp(-O(\epsilon^2 \alpha n T))$. Given $c$ we can choose $T$ to be large enough so that this is at most $c^{-n}$.
This concludes the analysis of the \no\ case, and thus the lemma.
\end{proof}

\subsubsection{Reduction from one-way \texorpdfstring{$(\cD_Y,\cD_N)$-\RMD}{(DY,DN)-RMD)}\ to \texorpdfstring{$\psRMD$}{padded-streaming-RMD}}

We start by reducing \textsf{RMD} to \textsf{padded-streaming-RMD} in the special case where $\cD_N = \textsf{Unif}(\{-1,1\}^k)$. Note that since the former is hard in this case for all $\cD_Y$ with uniform marginals, applying this argument twice shows hardness of \textsf{padded-streaming-RMD} for all $\cD_Y$ and $\cD_N$ with uniform marginals.

\begin{lemma}\label{lem:reduce to streaming}
Let $T,k\in\mathbb{N}$, $\alpha \in (0,\alpha_0(k)]$, $\tau\in[0,1)$, and $\cD_Y,\cD_N,\cD_0\in\Delta(\{-1,1\}^k)$ with $\vecmu(\cD_Y)=0^k$ and $\cD_N = \textsf{Unif}(\{-1,1\}^k)$. Suppose there is a streaming algorithm $\ALG$ solves $(\cD_Y,\cD_N,T,\cD_0,\tau)$-\textsf{padded-streaming-RMD} on instances of length $n$ with advantage $\Delta$ and space $s$, then there is a one-way protocol for $(\cD_Y,\cD_N)$-\textsf{RMD} on instances of length $n$ using at most $sT$ bits of communication achieving advantage at least $\Delta/T$.
\end{lemma}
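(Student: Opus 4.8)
The plan is a hybrid argument that threads a single \textsf{RMD} instance through one of the $T$ blocks of a \psRMD\ instance. First I would reduce to a deterministic algorithm: since $\ALG$ has advantage $\Delta$ on $(\cD_Y,\cD_N,T,\cD_0,\tau)$-\psRMD\ over its internal coins, averaging gives a fixing $r^*$ of the coins with $\left|\Pr_{\vecsigma\sim\cY_{\pstrm,n}}[\ALG_{r^*}(\vecsigma)=1]-\Pr_{\vecsigma\sim\cN_{\pstrm,n}}[\ALG_{r^*}(\vecsigma)=1]\right|\geq\Delta$, and $\ALG_{r^*}$ is deterministic with space $\le s$. Next I would define hybrid distributions $H_0,\dots,H_T$ over streams: in $H_j$ one samples $\vecx^*$ uniformly, emits the $\cD_0$-padding $\vecsigma^{(0)}$ exactly as in \autoref{def:psrmd}, and then emits $T$ blocks conditioned on this $\vecx^*$, where the first $j$ blocks use masks drawn from $\cD_Y$ and the last $T-j$ use masks drawn from $\cD_N=\textsf{Unif}(\{-1,1\}^k)$. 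By construction $H_0=\cN_{\pstrm,n}$ and $H_T=\cY_{\pstrm,n}$, and $H_{j-1}$ and $H_j$ differ only in the mask distribution of block $j$.

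Now I would describe the protocol. Alice holds $\vecx^*\in\{-1,1\}^n$ and Bob holds an \textsf{RMD} pair $(M,\vecz)$ whose mask is $\cD_Y$ in the \yes\ case and $\cD_N$ in the \no\ case. For each $j\in[T]$, Alice uses fresh private randomness to sample the $\cD_0$-padding $\vecsigma^{(0)}$ (she can, since she knows $\vecx^*$) together with blocks $1,\dots,j-1$ from $\cD_Y$ conditioned on $\vecx^*$, runs $\ALG_{r^*}$ on this prefix, and records the resulting memory state $\mathsf{st}_j\in\{0,1\}^{\le s}$. She sends $(\mathsf{st}_1,\dots,\mathsf{st}_T)$ to Bob in a single message of at most $sT$ bits. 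Bob picks $j\in[T]$ uniformly at random, converts $(M,\vecz)$ into the stream format of one block, resumes $\ALG_{r^*}$ from $\mathsf{st}_j$ on that block, then feeds $\ALG_{r^*}$ a further $T-j$ freshly sampled blocks with uniform labels, and outputs $\ALG_{r^*}$'s answer. The point that makes this last step legal is precisely that $\cD_N$ is uniform: by \autoref{rem:uniform-srmd} a $\cD_N$-block has labels that are uniform and independent of $\vecx^*$, so Bob can generate it knowing only $n$ and his private randomness.

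For correctness, note that if Bob's instance is a \yes\ instance the stream fed to $\ALG_{r^*}$ is distributed exactly as $H_j$, and if it is a \no\ instance it is distributed as $H_{j-1}$. Averaging over Bob's uniform choice of $j$ telescopes:
\begin{align*}
&\Pr[\text{Bob outputs }1 \mid \yes] - \Pr[\text{Bob outputs }1 \mid \no]\\
&\quad = \frac1T\sum_{j=1}^T\Big(\Pr_{H_j}[\ALG_{r^*}=1]-\Pr_{H_{j-1}}[\ALG_{r^*}=1]\Big)\\
&\quad = \frac1T\big(\Pr_{H_T}[\ALG_{r^*}=1]-\Pr_{H_0}[\ALG_{r^*}=1]\big),
\end{align*}
whose absolute value equals $\Delta/T$ since $H_T=\cY_{\pstrm,n}$ and $H_0=\cN_{\pstrm,n}$. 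Hence the protocol is one-way (a single Alice-to-Bob message), uses at most $sT$ bits, and has advantage $\Delta/T$, as claimed.

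The only genuinely delicate design choice — and the thing I expect to be the main obstacle — is the orientation of the hybrid. A $\cD_Y$-block truly depends on $\vecx^*$ (its labels are $(M\vecx^*)\odot\vecb$ with $\vecb\sim\cD_Y$, which is not uniform), so only the party holding $\vecx^*$ can generate it; therefore all the $\cD_Y$-blocks must lie in the prefix that Alice simulates, while the suffix that Bob simulates without $\vecx^*$ must consist of $\cD_N=\textsf{Unif}$ blocks. This is exactly why the hypothesis $\cD_N=\textsf{Unif}(\{-1,1\}^k)$ is needed here (whereas $\vecmu(\cD_Y)=0^k$ and the precise value of $\alpha_0(k)$ play no role in this particular reduction, only in its downstream uses). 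The remaining care is purely bookkeeping: emitting exactly $\frac{\tau}{1-\tau}\alpha nT$ padding constraints and exactly $T$ blocks, in the padding-then-blocks order, so that the hybrid endpoints coincide on the nose with $\cY_{\pstrm,n}$ and $\cN_{\pstrm,n}$.
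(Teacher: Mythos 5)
Your argument is correct. It reaches the same conclusion as the paper but by a somewhat different route: both are hybrid arguments over the $T$ blocks that exploit the fact $\cD_N=\textsf{Unif}(\{-1,1\}^k)$ makes the \no-masked blocks generatable without knowledge of $\vecx^*$, but the two executions diverge thereafter. The paper fixes a single \emph{worst} hybrid index $t^*$ via a total-variation analysis (using the triangle and data-processing inequalities for \textsc{tvd}), has Alice send the state history $S_{0:t^*}^Y$, and has Bob stop after applying his one block and output via a \textsc{tvd}-extremal distinguisher $T_A$ that depends on Alice's message. You instead have Bob choose a hybrid index at random, let Alice send all $T$ candidate states, and have Bob \emph{complete} the stream himself with $T-j$ freshly sampled uniform blocks, so that $\ALG$'s own output can be used directly and the advantage telescopes by averaging. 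Your version is more elementary in that it dispenses with the \textsc{tvd} machinery and the extraction of an optimal distinguisher, at the small notational cost of Bob re-entering simulation; the paper's version is slightly more abstract and would also apply if one could not simulate the suffix but could still locate the step with maximal \textsc{tvd} gap (though here both rely on $\cD_N$ being uniform, for independence from $\vecx^*$). Two tiny points: the signed quantity you compute should be wrapped in an absolute value to match the paper's definition of advantage (yielding ``at least $\Delta/T$'' rather than ``equals $\Delta/T$,'' since after fixing $\ALG$'s coins its advantage may exceed $\Delta$); and you should state explicitly that the $j-1$ $\cD_Y$-blocks Alice samples are drawn conditioned on the same $\vecx^*$ and independently of Bob's input $(M,\vecz)$, which is what makes the composite stream distributed exactly as $H_j$ (resp.\ $H_{j-1}$). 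Both are presentational rather than substantive.
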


The proof of~\autoref{lem:reduce to streaming} is based on a hybrid argument (\textit{e.g.,}~\cite[Lemma 6.3]{KKS}). We provide a proof here based on the proof of \cite[Lemma 4.11]{CGV20}.

\begin{proof}[Proof of~\autoref{lem:reduce to streaming}]
Note that since we are interested in distributional advantage, we can fix the randomness in $\ALG$ so that it becomes a deterministic algorithm. By an averaging argument the randomness can be chosen to ensure the advantage does not decrease. Let $\Gamma$ denote the evolution of function of $\ALG$ as it processes a block of $\alpha n$ edges. That is, if the algorithm is in state $s$ and receives a stream $\vecsigma$ of length $\alpha n$ then it ends in state $\Gamma(s,\vecsigma)$. Let $s_0$ denote its initial state. \\


We consider the following collection of (jointly distributed) random variables: Let $\vecx^* \sim \textsf{Unif}(\{-1,1\}^n)$. Denote $\cY=\cY_{\pstrm,n}$ and $\cN=\cN_{\pstrm,n}$. Let $(\vecsigma_Y^{(0)},\vecsigma_Y^{(1)},\ldots,\vecsigma_Y^{(T)}) \sim \cY|_{\vecx^*}$. Similarly, let $(\vecsigma_N^{(0)},\vecsigma_N^{(1)},\ldots,\vecsigma_N^{(T)}) \sim \cN|_{\vecx^*}$. Recall by \autoref{rem:uniform-srmd} that since $\cD_N$ is the uniform distribution, we have $\cN|_{\vecx^*}$ is independent of $\vecx^*$, a feature that will be crucial to this proof. 

Let $S_t^Y$ denote the state of 
$\ALG$ after processing $\vecsigma_Y^{(0)},\ldots,\vecsigma_Y^{(t)}$, i.e., $S_0^Y = \Gamma(s_0,\vecsigma_Y^{(0)})$ and $S_t^Y = \Gamma(S_{t-1}^Y,\vecsigma_Y^{(t)})$ where $s_0$ is the fixed initial state (recall that $\ALG$ is deterministic). 
Similarly let $S_t^N$ denote the state of $\ALG$ after processing $\vecsigma_N^{(0)},\ldots,\vecsigma_N^{(t)}$. Note that since $\vecsigma_Y^{(0)}$ has the same distribution (conditioned on the same $\vecx^*$) as $\vecsigma_N^{(0)}$ by definition, we have $\|S_{0}^Y-S_{0}^N\|_{tvd}=0$.

Let $S^Y_{a:b}$ denote the sequence of states $(S_a^Y,\ldots,S_b^Y)$ and similarly for $S^N_{a:b}$. Now let $\Delta_t = \|S_{0:t}^Y-S_{0:t}^N\|_{tvd}$. 
Observe that $\Delta_0=0$ while $\Delta_T\geq\Delta$. (The latter is based on the fact that $\ALG$ distinguishes the two distributions with advantage $\Delta$.)
Thus $\Delta \leq \Delta_T - \Delta_0 = \sum_{t=0}^{T-1} (\Delta_{t+1} - \Delta_t)$ and so there exists $t^*\in\{0,1,\dots,T-1\}$ such that
\[
\Delta_{t^*+1} - \Delta_{t^*} = \|S_{0:t^*+1}^Y-S_{0:t^*+1}^N\|_{tvd}-\|S_{0:t^*}^Y-S_{0:t^*}^N\|_{tvd}\geq\frac{\Delta}{T} \, .
\]

Now consider the random variable $\tilde{S} = \Gamma(S_{t^*}^Y,\vecsigma_N^{(t^*+1)})$ (so the previous state is from the \yes\ distribution and the input is from the \no\ distribution). We claim below that $\| S_{t^*+1}^Y - \tilde{S} \|_{tvd} = \Exp_{A \sim_d S_{0:t^*}^Y}[\| S_{t^*+1}^Y|_{S_{0:t^*}^Y=A} - \tilde{S}|_{S_{0:t^*}^Y=A} \|_{tvd} ] \geq \Delta_{t^*+1} - \Delta_{t^*}$.
Once we have the claim, we show how to get a space $T\cdot s$ protocol for $(\cD_Y,\cD_n)$-\textsf{RMD} with advantage $\Delta_{t^*+1} - \Delta_{t^*}$ concluding the proof of the lemma.

\begin{claim}
$\| S_{t^*+1}^Y - \tilde{S} \|_{tvd} \geq \Delta_{t^*+1} - \Delta_{t^*}$.
\end{claim}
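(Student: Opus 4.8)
The plan is to deduce the claim from the triangle inequality and the data processing inequality for total variation distance (\autoref{prop:tvd properties}), after first recognizing that the quantity written $\|S_{t^*+1}^Y - \tilde{S}\|_{tvd}$ is really a total variation distance between the two \emph{joint} histories that share their first $t^*+1$ coordinates. Concretely, write $(S_{0:t^*}^Y,\tilde{S})$ for the length-$(t^*+2)$ state sequence obtained by running $\ALG$ on $\vecsigma_Y^{(0)},\dots,\vecsigma_Y^{(t^*)}$ and then on the \no-block $\vecsigma_N^{(t^*+1)}$; by the standard decomposition of TVD over a common coordinate (which is exactly the displayed equality in the text, with $A = S_{0:t^*}^Y$), $\|S_{t^*+1}^Y - \tilde{S}\|_{tvd} = \|S_{0:t^*+1}^Y - (S_{0:t^*}^Y,\tilde{S})\|_{tvd}$. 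Applying the triangle inequality to this joint quantity against $S_{0:t^*+1}^N$ gives
\[
\|S_{0:t^*+1}^Y - (S_{0:t^*}^Y,\tilde{S})\|_{tvd} \ \geq\ \|S_{0:t^*+1}^Y - S_{0:t^*+1}^N\|_{tvd} - \|(S_{0:t^*}^Y,\tilde{S}) - S_{0:t^*+1}^N\|_{tvd} \ =\ \Delta_{t^*+1} - \|(S_{0:t^*}^Y,\tilde{S}) - S_{0:t^*+1}^N\|_{tvd},
\]
so it suffices to prove $\|(S_{0:t^*}^Y,\tilde{S}) - S_{0:t^*+1}^N\|_{tvd} \leq \Delta_{t^*}$.

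For this last bound I would observe that both $(S_{0:t^*}^Y,\tilde{S})$ and $S_{0:t^*+1}^N = (S_{0:t^*}^N, S_{t^*+1}^N)$ arise by applying one and the same deterministic map to a state-history and to the \no-block $\vecsigma_N^{(t^*+1)}$: namely the map $\phi$ that, given a history $A$ and a block $w$, appends $\Gamma(\text{last state of }A,\, w)$ to $A$. Thus $(S_{0:t^*}^Y,\tilde{S}) = \phi(S_{0:t^*}^Y,\vecsigma_N^{(t^*+1)})$ and $S_{0:t^*+1}^N = \phi(S_{0:t^*}^N,\vecsigma_N^{(t^*+1)})$. Provided the block $\vecsigma_N^{(t^*+1)}$ is independent of \emph{both} $S_{0:t^*}^Y$ and $S_{0:t^*}^N$, the data processing inequality then yields $\|(S_{0:t^*}^Y,\tilde{S}) - S_{0:t^*+1}^N\|_{tvd} \leq \|S_{0:t^*}^Y - S_{0:t^*}^N\|_{tvd} = \Delta_{t^*}$, completing the argument.

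The one point requiring care — and the step I expect to be the crux, since it is precisely the ``hidden correlation through $\vecx^*$'' issue flagged earlier — is verifying that $\vecsigma_N^{(t^*+1)}$ is independent of both histories, and this is exactly where the hypothesis $\cD_N = \textsf{Unif}(\{-1,1\}^k)$ is used. By \autoref{rem:uniform-srmd}, when $\cD_N$ is uniform the conditional distribution $\cN|_{\vecx^*}$ does not depend on $\vecx^*$, so $\vecsigma_N^{(0)},\dots,\vecsigma_N^{(T)}$ are i.i.d.\ and independent of $\vecx^*$; being drawn (conditionally on $\vecx^*$) independently of the \yes-blocks, they are moreover independent of $(\vecx^*,\vecsigma_Y^{(0)},\dots,\vecsigma_Y^{(t^*)})$. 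Since $S_{0:t^*}^N$ is a deterministic function of $(\vecx^*,\vecsigma_N^{(0)},\dots,\vecsigma_N^{(t^*)})$ and $S_{0:t^*}^Y$ is a deterministic function of $(\vecx^*,\vecsigma_Y^{(0)},\dots,\vecsigma_Y^{(t^*)})$, the block $\vecsigma_N^{(t^*+1)}$ is independent of each of them, as needed. All the remaining steps are routine TVD manipulations, so no further obstacle is anticipated; I would close by noting that the resulting protocol for $(\cD_Y,\cD_N)$-\textsf{RMD} (Alice sends the $T$ intermediate states, Bob plugs his instance into position $t^*+1$ and completes the stream using the uniformity of $\cD_N$) inherits advantage $\Delta_{t^*+1}-\Delta_{t^*} \geq \Delta/T$.
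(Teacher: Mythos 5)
Your proof is correct and follows essentially the same route as the paper: triangle inequality to pull in $S_{0:t^*+1}^N$, then the data processing inequality with the map that appends $\Gamma(\cdot,\vecsigma_N^{(t^*+1)})$ to the history, with the independence hypothesis discharged by $\cD_N = \textsf{Unif}(\{-1,1\}^k)$ via \autoref{rem:uniform-srmd}. Your version is in fact slightly more careful than the paper's, since you make explicit that $\|S_{t^*+1}^Y - \tilde S\|_{tvd}$ and the two quantities in the triangle inequality are total variation distances between the \emph{joint} histories (including $S_{0:t^*}$), which is what the paper's notation $\|S_{t^*}^Y - S_{t^*}^N\|_{tvd} = \Delta_{t^*}$ implicitly requires.
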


\newcommand{\cpl}{\textrm{couple}} 

\begin{proof}
First, by triangle inequality for the total variation distance, we have
\[
\|S_{t^*+1}^Y - \tilde{S}\|_{tvd} \geq \|S_{t^*+1}^Y - S_{t^*+1}^N\|_{tvd} - \|\tilde{S} - S_{t^*+1}^N\|_{tvd} \, .
\]
Recall that $\tilde{S} = \Gamma(S_{t^*}^Y,\vecsigma_N^{(t^*+1)})$ and $S^N_{t^*+1} = \Gamma(S_{t^*}^N,\vecsigma_N^{(t^*+1)})$. Also, note that $\vecsigma_N^{(t^*+1)}$ is uniformly distributed over $(\{-1,1\}^k)^{\alpha n}$ and in particular is independent of $S_{t^*}^Y$ and $S_{t^*}^N$. (This is where we rely crucially on the property $\cD_N = \textsf{Unif}(\{-1,1\}^k)$.)
Furthermore $\Gamma$ is a deterministic function, and so we can apply the data processing inequality (Item (2) of~\autoref{prop:tvd properties} with $X=S^Y_{t^*}$, $Y=S^N_{t^*}$, $W=\vecsigma_N^{(t^*+1)}$, and $f=\Gamma$) to conclude 
\[\|\tilde{S} - S_{t^*+1}^N\|_{tvd} = \|\Gamma(S_{t^*}^Y,\vecsigma_N^{(t^*+1)}) -  \Gamma(S_{t^*}^N,\vecsigma_N^{(t^*+1)})\|_{tvd}\leq\|S_{t^*}^Y-S_{t^*}^N\|_{tvd}.\]
Combining the two inequalities above we get 
\[
\|S_{t^*+1}^Y - \tilde{S}\|_{tvd} \geq \|S_{t^*+1}^Y - S_{t^*+1}^N\|_{tvd} - \|S_{t^*}^Y - S_{t^*}^N\|_{tvd}=\Delta_{t^*+1}-\Delta_{t^*}
\]
as desired.

\end{proof}

We now show how a protocol can be designed for $(\cD_Y,\cD_N)$-\textsf{RMD} that achieves advantage at least $\theta = \Exp_{A \sim_d S_{0:t^*}^Y}[\| S_{t^*+1}^Y|_{S_{0:t^*}=A} - \tilde{S}|_{S_{0:t^*}=A} \|_{tvd} ] \geq \Delta_{t^*+1} - \Delta_{t^*}$ concluding the proof of the lemma. The protocol uses the distinguisher $T_A:\{0,1\}^{s} \to \{0,1\}$ such that $\Exp_{A,S_{t^*+1}^Y,\tilde{S}}[T_A(S_{t^*+1}^Y)] - \Exp[T_A(\tilde{S})] \geq \theta$ which is guaranteed to exist by the definition of total variation distance.

Our protocol works as follows: Let Alice receive input $\vecx^*$ and Bob receive inputs $(M,\vecz)$ sampled from either $\cY_{\RMD}|_{\vecx^*}$ or $\cN_{\RMD}|_{\vecx^*}$ where $\cY_{\RMD}$ and $\cN_{\RMD}$ are the Yes and No distribution of $(\cD_Y,\cD_N)$-$\RMD$ respectively.
\begin{enumerate}
\item Alice samples $(\vecsigma^{(0)},\vecsigma^{(1)},\ldots,\vecsigma^{(T)}) \sim \cY|_{\vecx^*}$ and computes $A = S_{0:t^*}^Y \in \{0,1\}^{(t^*+1)s}$ and sends $A$ to Bob.
\item Bob extracts $S_{t^*}^Y$ from $A$, computes $\widehat{S} = \Gamma(S_{t^*}^Y,\vecsigma)$, where $\vecsigma$ is the encoding of $(M,\vecz)$ as a stream, and outputs \yes\ if $T_A(\widehat{S})=1$ and \no\ otherwise.
\end{enumerate}
Note that if $(M,\vecz)\sim \cY_{\RMD}|_{\vecx^*}$ then $\widehat{S} \sim_d S_{t^*+1}^Y|_{S^Y_{0:t^*} = A}$ while if $(M,\vecz)\sim \cN_{\RMD}|_{\vecx^*}$ then $\widehat{S} \sim \tilde{S}_{S^Y_{0:t^*} = A}$. 
It follows that the advantage of the protocol above exactly equals $\Exp_A[T_A(S_{t^+1}^Y)] - \Exp_A[T_A(\tilde{S})] \geq \theta \geq \Delta_{t^*+1}-\Delta_{t^*} \geq \Delta/T$.
This concludes the proof of the lemma.
\end{proof}

By combining~\autoref{lem:reduce to streaming} with~\autoref{thm:communication lb matching moments}, we immediately have the following consequence.

\begin{lemma}\label{cor:space lb 1 wise}
For $k \in \N$ let  $\alpha_0(k)$ be as given by \autoref{thm:communication lb matching moments}.
Let $T\in\mathbb{N}$, $\alpha\in(0,\alpha_0(k)], \tau\in[0,1)$, and $\cD_0,\cD_Y,\cD_N$ be three distributions over $\{-1,1\}^k$ with $\vecmu(\cD_Y) = \vecmu(\cD_N) = 0^k$. Then every streaming algorithm $\ALG$ solving $(\cD_Y,\cD_N,T,\cD_0,\tau)$-\textsf{padded-streaming-RMD} 
with advantage $1/8$ for all lengths uses space $\Omega(\sqrt{n})$.
\end{lemma}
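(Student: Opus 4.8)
The plan is to chain the two main results already in hand — the communication lower bound \autoref{thm:communication lb matching moments} for $(\cD_Y,\cD_N)$-\RMD\ with matching marginals, and the reduction \autoref{lem:reduce to streaming} from $(\cD_Y,\cD_N)$-\RMD\ to the streaming problem. The only wrinkle is that \autoref{lem:reduce to streaming} is stated only when the \no-side mask distribution is $U := \textsf{Unif}(\{-1,1\}^k)$, whereas here $\cD_Y$ and $\cD_N$ are both arbitrary subject to $\vecmu(\cD_Y)=\vecmu(\cD_N)=0^k$. So the first move is to interpose $U$ as an intermediate distribution and argue by the triangle inequality, which is exactly the ``apply the argument twice'' remark preceding \autoref{lem:reduce to streaming}.

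In detail: suppose $\ALG$ solves $(\cD_Y,\cD_N,T)$-\sRMD\ with advantage $1/8$ in space $s$ for all lengths; since streaming-RMD is the $\tau=0$ special case of \psRMD\ (with any padding distribution $\cD_0$), and since fixing the internal randomness of $\ALG$ only helps, assume $\ALG$ is a deterministic map on streams. Write $p_\cD$ for the probability that $\ALG$ outputs $1$ on a length-$n$ streaming-RMD stream built with mask distribution $\cD$. Then $|p_{\cD_Y}-p_{\cD_N}|\geq 1/8$, so by the triangle inequality one of $|p_{\cD_Y}-p_U|\geq 1/16$ or $|p_U-p_{\cD_N}|\geq 1/16$ holds. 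In the first case $\ALG$ solves $(\cD_Y,U,T)$-\sRMD\ with advantage $\geq 1/16$; since $\vecmu(\cD_Y)=0^k$ and the \no-distribution is $U$, \autoref{lem:reduce to streaming} converts $\ALG$ into a one-way protocol for $(\cD_Y,U)$-\RMD\ on length-$n$ instances using at most $sT$ bits with advantage $\geq 1/(16T)$. As $\vecmu(\cD_Y)=\vecmu(U)=0^k$, \autoref{thm:communication lb matching moments} applied with $\delta=1/(16T)$ (a positive constant, since $T$ and $k$ are fixed) yields a constant $\tau_0>0$ and $n_0$ with $sT\geq\tau_0\sqrt n$ for all $n\geq n_0$, hence $s=\Omega(\sqrt n)$. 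In the second case, relabeling \yes/\no\ (i.e.\ replacing $\ALG$ by $1-\ALG$) shows $\ALG$ solves $(\cD_N,U,T)$-\sRMD\ with advantage $\geq 1/16$ in the same space, and since $\vecmu(\cD_N)=0^k$ the identical argument with $\cD_N$ in place of $\cD_Y$ again gives $s=\Omega(\sqrt n)$.

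There is no genuinely hard step here; the content is just careful bookkeeping of advantages and constants. The points needing a little attention are: (i) verifying that $(\cD_Y,U,T)$-\sRMD\ is literally a case of \psRMD\ so that \autoref{lem:reduce to streaming} applies; (ii) tracking the advantage losses — a factor $2$ from the triangle-inequality split and a factor $T$ from the hybrid argument inside \autoref{lem:reduce to streaming} — so that the final advantage $1/(16T)$ is a constant and the $\tau_0(\delta,k)$ produced by \autoref{thm:communication lb matching moments} is bounded away from $0$; and (iii) ensuring $\alpha$ lies in the common admissible range of \autoref{lem:reduce to streaming} and \autoref{thm:communication lb matching moments} (shrinking $\alpha_0(k)$ if necessary). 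The ``main obstacle'', such as it is, is simply the structural observation that \autoref{lem:reduce to streaming} cannot be invoked directly for two non-uniform distributions, and that routing through $U$ repairs this — which is precisely why the hypothesis of that lemma is phrased for $U$-valued \no-distributions.
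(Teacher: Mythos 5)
Your proposal is correct and follows essentially the same route as the paper's own proof: interpose $U=\textsf{Unif}(\{-1,1\}^k)$, split via the triangle inequality into two cases (the paper calls it $\cD_M$), invoke \autoref{lem:reduce to streaming} in whichever case holds to get a one-way RMD protocol with $sT$ bits and advantage $\geq 1/(16T)$, and close by applying \autoref{thm:communication lb matching moments} with $\delta=1/(16T)$. The extra bookkeeping remarks you flag (streaming-RMD as the $\tau=0$ case of padded-streaming-RMD, the constancy of $1/(16T)$, and the common admissible range of $\alpha$) are sound and match the paper's intent, though the paper leaves most of them implicit.
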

\begin{proof}
Let $\ALG$ be an algorithm using space $s$ solving  $(\cD_Y,\cD_N,T)$-\textsf{padded-streaming-RMD} with advantage $1/8$.
Let $\cD_M = \textsf{Unif}(\{-1,1\}^k)$. Then by the triangle inequality $\ALG$ solves either the $(\cD_Y,\cD_M,T,\cD_0,\tau)$-\textsf{padded-streaming-RMD} with advantage $1/16$ or it solves the 
$(\cD_N,\cD_M,T,\cD_0,\tau)$-\textsf{padded-streaming-RMD} with advantage $1/16$. Assume without loss of generality it is the former. 
Then by \autoref{lem:reduce to streaming}, there exists a  one-way protocol for $(\cD_Y,\cD_M)$-\textsf{RMD} using at most $sT$ bits of communication with advantage at least $1/(16T)$. Applying \autoref{thm:communication lb matching moments} with $\delta = 1/(16T) > 0$, we now get that $s = \Omega(\sqrt{n})$. 


\end{proof}


\subsubsection{Proof of the streaming lower bound}\label{sec:lb insert}

We are now ready to prove \autoref{thm:main-negative}.

\begin{proof}[Proof of \autoref{thm:main-negative}]
We combine \autoref{thm:communication lb matching moments}, \autoref{cor:space lb 1 wise} and \autoref{lem:csp value}. So in particular we set our parameters $\alpha$ and $T$ so that the conditions of these statements are satisfied. Specifically  
$k$ and $\epsilon>0$, let $\alpha^{(1)}_0$ be the constant from \autoref{thm:communication lb matching moments}
and let $\alpha^{(2)}_0$ be the constant from \autoref{lem:csp value}. Let $\alpha_0 = \min\{\alpha_0^{(1)},\alpha_0^{(2)}\}$, 
Given $\alpha \in (0,\alpha_0)$ let $T_0$ be the constant from \autoref{lem:csp value} and let $T = T_0$. (Note that these choices allow for both \autoref{thm:communication lb matching moments} and \autoref{lem:csp value} to hold.)
Suppose there exists a streaming algorithm $\ALG$ that solves $(\gamma-\epsilon,\beta+\epsilon)$-\textsf{Max-CSP}($f$). 
Let $\tau\in[0,1)$ and $\cD_Y, \cD_N,\cD_0$ be distributions such that (i) $\vecmu(\cD_Y)=\vecmu(\cD_N)=0^k$, (ii) $\tau\cD_0+(1-\tau)\cD_Y\in S_\gamma^Y(f)$, (iii) $\tau\cD_0+(1-\tau)\cD_N\in S_\beta^N(f)$, and (iv) $\vecmu=\tau\vecmu(\cD_0)$.
Let $n$ be sufficiently large and let $\cY_{\strm,n}$ and $\cN_{\strm,n}$ denote the distributions of \yes\ and \no\ instances of $(\cD_Y,\cD_N,T,\cD_0,\tau)$-\psRMD\ of length $n$. 
Since $\alpha$ and $T$ satisfy the conditions of \autoref{lem:csp value}, we have for every sufficiently large $n$
\[
\Pr_{\vecsigma\sim\cY_{\strm,n}}\left[\val_{\Psi(\vecsigma)}<\left(\gamma-\epsilon\right)\right]=o(1)
\text{~~~and~~~}
\Pr_{\vecsigma\sim\cN_{\strm,n}}\left[\val_{\Psi(\vecsigma)}>\left(\beta+\epsilon\right)\right]=o(1) \, .
\]

We conclude that $\ALG$ can distinguish \yes\ instances of \textsf{Max-CSP}($f$)  from \no\ instances  with advantage at least $1/4-o(1)\geq 1/8$. 
However, since $\cD_Y,\cD_N$ and $\alpha$ satisfy the conditions of \autoref{cor:space lb 1 wise} (in particular $\vecmu(\cD_Y) = \vecmu(\cD_N) = 0^k$ and $\alpha \in (0,\alpha_0(k))$) such an algorithm requires space at least $\Omega(\sqrt{n})$. Thus, we conclude that any streaming algorithm that solves $(\gamma-\epsilon,\beta+\epsilon)$-\textsf{Max-CSP}($f$) requires $\Omega(\sqrt{n})$ space.

Finally, note that if $\gamma = 1$ then in \autoref{lem:csp value}, we have $\val_\Psi=1$ with probability one. Repeating the above reasoning with this information, shows that $(1,\beta+\epsilon)-\maxf$ requires $\Omega(\sqrt{n})$-space.

\end{proof}

\subsection{The lower bound against sketching algorithms}\label{sec:lb dynamic} 

In the absence of a reduction from \textsf{RMD} to \textsf{streaming-RMD} for general $\cD_Y$ and $\cD_N$, we turn to other means of using the hardness of \textsf{RMD}. In particular, we use lower bounds on the communication complexity of a $T$-player communication game in the {\em simultaneous communication} setting --- one which is significantly easier to obtain lower bounds for than the one-way setting. 
Below we describe a family of  $T$-player simultaneous communication games, which we call $(\cD_Y,\cD_N,T)$-\textsf{simultaneous-RMD} (See \autoref{def:simrmd}.) We then show a simple reduction from $(\cD_Y,\cD_N)$-\textsf{RMD} to $(\cD_Y,\cD_N,T)$-\textsf{simultaneous-RMD}. Combining this reduction with our lower bounds on \textsf{RMD} and the reduction from $(\cD_Y,\cD_N,T)$-\textsf{simultaneous-RMD} to streaming complexity leads to the proof of \autoref{thm:main-negative dynamic}.

\subsubsection{\texorpdfstring{$T$}{T}-Player Simultaneous Version of RMD}\label{sec:T-RMD}

\newcommand{\simul}{\textrm{simul}}

In this section, we consider the complexity of \textit{$T$-player number-in-hand simultaneous message passing communication games} (abbrev. $T$-player simultaneous communication games). Such games are described by two distributions $\cY$ and $\cN$. An instance of the game is a $T$-tuple $(X^{(1)},\dots,X^{(T)})$ either drawn from $\cY$ or from $\cN$ and $X^{(t)}$ is given as input to the $t$-th player. A (simultaneous communication) protocol $\Pi = (\Pi^{(1)},\dots,\Pi^{(T)},\Pi_{\text{ref}})$ is a $(T+1)$-tuple of functions with $\Pi^{(t)}(X^{(t)}) \in \{0,1\}^c$ denoting the $t$-th player's message to the \textit{referee}, and $\Pi_{\text{ref}}(\Pi^{(1)}(X^{(1)}),\dots,\Pi^{(T)}(X^{(T)}))\in \{\yes,\no\}$ denoting the protocol's output. We denote this output by $\Pi(X^{(1)},\dots,X^{(T)})$. The complexity of this protocol is the parameter $c$ specifying the maximum length of $\Pi^{(1)}(X^{(1)}),\dots,\Pi^{(T)}(X^{(T)})$ (maximized over all $X$). The advantage of the protocol $\Pi$ is the quantity 
$$\left| \Pr_{(X^{(1)},\dots,X^{(T)})\sim\cY} [ \Pi(X^{(1)},\dots,X^{(T)}) = \yes] - \Pr_{(X^{(1)},\dots,X^{(T)})\sim \cN} [\Pi(X^{(1)},\dots,X^{(T)}) = \yes] \right|. $$

\begin{definition}[$(\cD_Y,\cD_N,T)$-\textsf{simultaneous-RMD}]\label{def:simrmd}
For $k,T\in\N$, $\alpha\in(0,1/k]$, distributions $\cD_Y,\cD_N$ over $\{-1,1\}^k$, the $(\cD_Y,\cD_N,T)$-\textsf{simultaneous-RMD} is a $T$-player communication game given by a family of instances $(\cY_{\simul,n},\cN_{\simul,n})_{n\in\N,n\geq1/\alpha}$ where for a given $n$, $\cY=\cY_{\simul,n}$ and $\cN=\cN_{\simul,n}$ are as follows: Both $\cY$ and $\cN$ are supported on tuples $(\vecx^*,M^{(1)},\dots,M^{(T)},\vecz^{(1)},\dots,\vecz^{(T)})$ where $\vecx^*\in\{-1,1\}^n$, $M^{(t)}\in\{0,1\}^{k\alpha n\times n}$, and $\vecz^{(t)}\in\{-1,1\}^{k\alpha n}$, where the pair $(M^{(t)},\vecz^{(t)})$ are the $t$-th player's inputs for all $t\in[T]$. We now specify the distributions of $\vecx^*$, $M^{(t)}$, and $\vecz^{(t)}$ in $\cY$ and $\cN$:
\begin{itemize}
\item In both $\cY$ and $\cN$, $\vecx^*$ is distributed uniformly over $\{-1,1\}^n$.
\item In both $\cY$ and $\cN$, the matrix $M^{(t)}\in\{0,1\}^{\alpha kn\times n}$ is chosen uniformly (and independently of $\vecx^*$) among matrices with exactly one $1$ per row and at most one $1$ per column.
\item The vector $\vecz^{(t)}$ is obtained by ``masking'' (i.e., xor-ing) $M^{(t)}\vecx^*$ by a random vector $\vecb^{(t)}\in \{-1,1\}^{\alpha kn}$ whose distribution differs in $\cY$ and $\cN$. Specifically, let $\vecb^{(t)}=(\vecb^{(t)}(1),\dots,\vecb^{(t)}(\alpha n))$ be sampled from one of the following distributions (independent of $\vecx^*$ and $M$):
\begin{itemize}
    \item $\cY$:  Each $\vecb^{(t)}(i)\in\{-1,1\}^k$ is sampled independently according to $\cD_Y$.
    \item $\cN$:  Each $\vecb^{(t)}(i)\in\{-1,1\}^k$ is sampled independently according to $\cD_N$.
\end{itemize}
We now set $\vecz^{(t)} = (M^{(t)} \vecx^*)\odot \vecb^{(t)}$ (recall that that $\odot$ denotes coordinatewise product).
\end{itemize}
Given an instance $\vecsigma = (\vecx^*,M^{(1)},\dots,M^{(T)},\vecz^{(1)},\dots,\vecz^{(T)})$ and a function $f:\{-1,1\}^k\to\{0,1\}$,  we will let $\Psi(\vecsigma)$ represent the instance of $\maxf$ it corresponds to, presented as a stream of $T\alpha n$ constraints. 
\end{definition}

Note that the instance $\Psi(\vecsigma)$ obtained in the \yes\ and \no\ cases of $(\cD_Y,\cD_N,T)$-\textsf{simultaneous-RMD} are distributed exactly according to instances derived in the \yes\ and \no\ cases of $(\cD_Y,\cD_N,T,\cD_0,\tau=0)$-\textsf{padded-streaming-RMD} and thus \autoref{lem:csp value} can still be applied to conclude that \yes\ instances usually have $\val_{\Psi(\vecsigma)} \geq \gamma - o(1)$ and \no\ instances usually have $\val_{\Psi(\vecsigma)} \leq \beta - o(1)$. We will use this property when proving \autoref{thm:main-negative dynamic}.

{

We start by showing the \textsf{simultaneous-RMD} problems above do not have low-communication protocols when the marginals of $\cD_Y$ and $\cD_N$ match.

\begin{lemma}\label{lem:reduce RMD to simul-RMD}
Let $k,T\in\N$, $\cD_Y,\cD_N\in \Delta(\{-1,1\}^k)$, and let $\alpha\in(0,1/k]$.
Suppose there is a protocol $\Pi$ that solves $(\cD_Y,\cD_N,T)$-\textsf{simultaneous-RMD} on instances of length $n$ with advantage $\Delta$ and space $s$, then there is a one-way protocol for $(\cD_Y,\cD_N)$-\textsf{RMD} on instances of length $n$ using at most $s(T-1)$ bits of communication achieving advantage at least $\Delta/T$.
\end{lemma}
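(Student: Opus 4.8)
The plan is a standard hybrid argument across the $T$ players. For $j \in \{0, 1, \ldots, T\}$, define a distribution $\mathcal{H}_j$ on instances of $(\cD_Y, \cD_N, T)$-\textsf{simultaneous-RMD} exactly as in \autoref{def:simrmd}, except that the mask vectors $\vecb^{(t)}$ of players $t \leq j$ are drawn (per coordinate) from $\cD_Y$ while those of players $t > j$ are drawn from $\cD_N$, with all players still consistent with a single uniformly chosen $\vecx^*$ and with the hypermatchings $M^{(t)}$ sampled as in \autoref{def:simrmd}. Then $\mathcal{H}_0 = \cN_{\simul,n}$ and $\mathcal{H}_T = \cY_{\simul,n}$, so if $\Pi$ achieves advantage $\Delta$ on $(\cD_Y,\cD_N,T)$-\textsf{simultaneous-RMD} then $\left|\Pr_{\mathcal{H}_T}[\Pi = \yes] - \Pr_{\mathcal{H}_0}[\Pi = \yes]\right| \geq \Delta$. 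By the triangle inequality there is an index $j^* \in \{0, \ldots, T-1\}$ with $\left|\Pr_{\mathcal{H}_{j^*+1}}[\Pi = \yes] - \Pr_{\mathcal{H}_{j^*}}[\Pi = \yes]\right| \geq \Delta/T$. By an averaging argument we may first fix the internal randomness of $\Pi$ so that it becomes deterministic without decreasing this gap.

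Next I would use $j^*$ to build the one-way protocol for $(\cD_Y,\cD_N)$-\textsf{RMD}. The key observations are that $\mathcal{H}_{j^*}$ and $\mathcal{H}_{j^*+1}$ differ only in the distribution of player $(j^*+1)$'s mask vector; that conditioned on $\vecx^*$ the inputs of the remaining $T-1$ players are independent of player $(j^*+1)$'s input; and that, since the hypermatching distribution and the masking rule $\vecz^{(t)} = (M^{(t)}\vecx^*)\odot\vecb^{(t)}$ are identical in \autoref{def:rmd} and \autoref{def:simrmd}, the conditional distribution of $(M^{(j^*+1)}, \vecz^{(j^*+1)})$ given $\vecx^*$ is precisely $\cY_{\RMD}|_{\vecx^*}$ in $\mathcal{H}_{j^*+1}$ and $\cN_{\RMD}|_{\vecx^*}$ in $\mathcal{H}_{j^*}$. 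The protocol is then: Alice, holding $\vecx^*$, samples inputs $(M^{(t)}, \vecz^{(t)})$ for each $t \neq j^*+1$ from the appropriate conditional distribution ($\cD_Y$-type masks for $t \leq j^*$, $\cD_N$-type masks for $t > j^*+1$), computes the corresponding player messages $\Pi^{(t)}(M^{(t)}, \vecz^{(t)})$, and sends all $T-1$ of them to Bob, using at most $s(T-1)$ bits; Bob, holding $(M,\vecz)$, computes $\Pi^{(j^*+1)}(M,\vecz)$, feeds the full tuple of $T$ messages to the referee $\Pi_{\mathrm{ref}}$, and outputs the referee's answer.

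Finally I would verify correctness. When $(M,\vecz) \sim \cY_{\RMD}|_{\vecx^*}$ with $\vecx^*$ uniform, the tuple assembled by the protocol is distributed exactly as $\mathcal{H}_{j^*+1}$, so the output is $\yes$ with probability $\Pr_{\mathcal{H}_{j^*+1}}[\Pi = \yes]$; when $(M,\vecz) \sim \cN_{\RMD}|_{\vecx^*}$ it is distributed as $\mathcal{H}_{j^*}$, giving $\yes$ with probability $\Pr_{\mathcal{H}_{j^*}}[\Pi = \yes]$. Hence the advantage of the RMD protocol equals $\left|\Pr_{\mathcal{H}_{j^*+1}}[\Pi = \yes] - \Pr_{\mathcal{H}_{j^*}}[\Pi = \yes]\right| \geq \Delta/T$, as required. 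I do not expect a genuine obstacle here; the only steps needing care are (i) checking that the per-coordinate distributions of the hypermatching and of $\vecz$ coincide between the two games so Alice can faithfully simulate the other $T-1$ players, and (ii) the bookkeeping that Alice transmits $T-1$ messages of at most $s$ bits each, matching the claimed bound.
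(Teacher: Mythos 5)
Your proposal is correct and follows essentially the same route as the paper's proof: a hybrid argument over the $T$ players (with Alice simulating the messages of the $T-1$ players $t\neq j^*+1$ and sending them to Bob, who supplies the $(j^*+1)$-st message and runs the referee), using the facts that, conditioned on $\vecx^*$, the players' inputs are independent and that Alice can sample everything locally since she holds $\vecx^*$. The only cosmetic difference is that you locate the critical hybrid index before fixing the internal randomness, whereas the paper fixes randomness first; either order works.
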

\begin{proof}
Let us first fix the randomness in $\Pi$ so that it becomes a deterministic protocol. Note that by an averaging argument the advantage of $\Pi$ does not decrease. Recall that $\cY$ and $\cN$ are Yes and No input distribution of $(\cD_Y,\cD_N,T)$-\textsf{simultaneous-RMD} and we have
\[
\Pr_{X\sim\cY}[\Pi(X)=\yes]-\Pr_{X\sim\cN}[\Pi(X)=\yes]\geq\Delta \, .
\]
Now, we define the following distributions $\cD_0,\dots,\cD_{T}$. Let $\cD_0=\cY$ and $\cD_{T}=\cN$. For each $t\in[T-1]$, we define $\cD_t$ to be the distribution of input instances of $(\cD_Y,\cD_N,T)$-\textsf{simultaneous-RMD} by sampling $\vecb^{(t')}(i)$ independently according to $\cD_Y$ (resp. $\cD_N$) for all $t'\leq t$ (resp. $t'>t$) and $i$ (see~\autoref{def:simrmd} to recall the definition).  Next, for each $t\in[T]$, let
\[
\Delta_t=\Pr_{X\sim\cD_t}[\Pi(X)=\yes]-\Pr_{X\sim\cD_{t-1}}[\Pi(X)=\yes] \, .
\]
Observe that $\sum_{t\in[T]}\Delta_t=\Delta$ and hence there exists $t^*\in[T]$ such that $\Delta_{t^*}\geq\Delta/T$.

Now, we describe a protocol $\Pi'$ for $(\cD_Y,\cD_N)$-\textsf{RMD} as follows. On input $(\vecx^*,M,\vecz)$, Alice receives $\vecx^*$ and Bob receives $(M,\vecz)$. Alice first samples matrices $M^{(1)},\dots,M^{(t^*-1)},M^{(t^*+1)},\dots,M^{(T)}$ as the second item in~\autoref{def:simrmd}. Next, Alice samples $\vecb^{(t')}$ according to $\cD_Y$ (resp. $\cD_N$) for all $t'<t^*$ (resp. $t'>t^*$) and sets $\vecz^{(t')}=(M^{(t')}\vecx^*)\odot\vecb^{(t')}$ as the third item in~\autoref{def:simrmd}. Note that this is doable for Alice because she possesses $\vecx^*$. Finally, Alice sends $\{\Pi^{(t')}(M^{(t')},\vecz^{(t')})\}_{t'\in[T]\backslash\{t^*\}}$ to Bob. After receiving Alice's message $(X^{(1)},\dots,X^{(t^*-1)},X^{(t^*+1)},\dots,X^{(T)})$, Bob computes $\Pi^{(t^*)}(M,\vecz)$ and outputs $\Pi'(M,\vecz)=\Pi_{\text{ref}}(X^{(1)},\dots,X^{(t^*-1)},\Pi^{(t^*)}(M,\vecz),X^{(t^*+1)},\dots,X^{(T)})$.

It is clear from the construction that the protocol $\Pi'$ uses at most $s(T-1)$ bits of communication. To see $\Pi'$ has advantage at least $\Delta/T$, note that if $(\vecx^*,M,\vecz)$ is sampled from the Yes distribution $\cY_{\textsf{RMD}}$ of $(\cD_Y,\cD_N)$-\textsf{RMD}, then $((M^{(1)},\vecz^{(1)}),\dots,(M^{(t^*-1)},\vecz^{(t^*-1)}),(M,\vecz),(M^{(t^*+1)},\vecz^{(t^*+1)}),\dots,(M^{(T)},\vecz^{(T)}))$ follows the distribution $\cD_{t^*}$. Similarly, if $(\vecx^*,M,\vecz)$ is sampled from the No distribution $\cN_{\textsf{RMD}}$ of $(\cD_Y,\cD_N)$-\textsf{RMD}, then $((M^{(1)},\vecz^{(1)}),\dots,(M^{(t^*-1)},\vecz^{(t^*-1)}),(M,\vecz),(M^{(t^*+1)},\vecz^{(t^*+1)}),\dots,(M^{(T)},\vecz^{(T)}))$ follows the distribution $\cD_{t^*-1}$. Thus, the advantage of $\Pi'$ is at least
\begin{align*}
&\Pr_{(M,\vecz)\sim\cY_{\textsf{RMD}},\Pi'}[\Pi'(M,\vecz)=\yes]-\Pr_{(M,\vecz)\sim\cN_{\textsf{RMD}},\Pi'}[\Pi'(M,\vecz)=\yes]\\
=\, &\Pr_{X\sim\cD_{t^*}}[\Pi(X)=\yes]-\Pr_{X\sim\cD_{t^*-1}}[\Pi(X)=\yes]=\Delta_{t^*}\geq\Delta/T \, .
\end{align*}
We conclude that there is a one-way protocol for $(\cD_Y,\cD_N)$-\textsf{RMD} using at most $s(T-1)$ bits of communication achieving advantage at least $\Delta/T$.
\end{proof}

As an immediate consequence of~\autoref{thm:communication lb matching moments} and~\autoref{lem:reduce RMD to simul-RMD} we get that $(\cD_Y,\cD_N,T)$-\textsf{simultaneous-RMD} requires $\Omega(\sqrt{n})$ bits of communication when the marginals of $\cD_Y$ and $\cD_N$ match.

\begin{lemma}\label{cor:communication lb matching moments sim}
For every $k\in\N$, there exists $\alpha_0 > 0$ such that for every $\alpha\in(0,\alpha_0)$ and $\delta>0$ the following holds: For every $T\in\N$ and every pair of distributions $\cD_Y,\cD_N\in \Delta(\{-1,1\}^k)$ with $\vecmu(\cD_Y) = \vecmu(\cD_N)$,
there exists $\tau > 0$ and $n_0$ such that for every $n \geq n_0$, every protocol for $(\cD_Y,\cD_N,T)$-\textsf{simultaneous-RMD} achieving advantage $\delta$ on instances of length $n$ requires $\tau\sqrt{n}$ bits of communication. 
\end{lemma}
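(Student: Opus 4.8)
The plan is to derive this lemma as an immediate corollary of the reduction in \autoref{lem:reduce RMD to simul-RMD} together with the \textsf{RMD} lower bound of \autoref{thm:communication lb matching moments}, arguing by contraposition. The substantive work all lies in those two ingredients; here it is just a matter of chaining them correctly and keeping track of quantifiers.

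First I would fix $k \in \N$ and take $\alpha_0$ to be the constant $\alpha_0(k)$ supplied by \autoref{thm:communication lb matching moments}; then fix $\alpha \in (0,\alpha_0)$ and $\delta > 0$. Given $T \in \N$ and a pair $\cD_Y,\cD_N \in \Delta(\{-1,1\}^k)$ with $\vecmu(\cD_Y) = \vecmu(\cD_N)$, I would invoke \autoref{thm:communication lb matching moments} with advantage parameter $\delta/T > 0$ playing the role of $\delta$ there. This produces a constant $\tau' > 0$ and a threshold $n_0$ such that, for every $n \ge n_0$, every one-way protocol for $(\cD_Y,\cD_N)$-\textsf{RMD} achieving advantage $\delta/T$ on instances of length $n$ must use at least $\tau'\sqrt{n}$ bits of communication.

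Now suppose, toward a contradiction, that for some $n \ge n_0$ there is a protocol $\Pi$ for $(\cD_Y,\cD_N,T)$-\textsf{simultaneous-RMD} on length-$n$ instances with advantage $\delta$ and communication $s < \tfrac{\tau'}{T-1}\sqrt{n}$. Applying \autoref{lem:reduce RMD to simul-RMD} to $\Pi$ yields a one-way protocol for $(\cD_Y,\cD_N)$-\textsf{RMD} on length-$n$ instances that uses at most $s(T-1) < \tau'\sqrt{n}$ bits and achieves advantage at least $\delta/T$, contradicting the lower bound just obtained. Hence any such $\Pi$ must have $s \ge \tfrac{\tau'}{T-1}\sqrt{n}$, and setting $\tau = \tau'/(T-1)$ — which is positive and is allowed to depend on $T$, $\delta$, and $k$ — gives exactly the conclusion of the lemma.

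There is no real technical obstacle in this argument, since both inputs are already established (\autoref{lem:reduce RMD to simul-RMD} in this section and \autoref{thm:communication lb matching moments} proved later). The only points that require mild care are the quantifier bookkeeping — in particular that the advantage must be scaled down by the factor $T$ before \autoref{thm:communication lb matching moments} is applied, which is precisely why the resulting constant $\tau$ is permitted to depend on $T$ — and verifying that the hypotheses $\alpha \in (0,\alpha_0(k))$ and $\vecmu(\cD_Y)=\vecmu(\cD_N)$ needed by both statements match exactly those assumed here.
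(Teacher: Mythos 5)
Your proof is correct and is exactly the argument the paper intends — the paper states this lemma as an immediate consequence of Theorem 5.6 and Lemma 5.15 without writing out the details, and your version fills in the quantifier bookkeeping in the natural way (scaling the advantage to $\delta/T$ before invoking the \textsf{RMD} lower bound, and absorbing the factor $T-1$ into the final constant $\tau$).
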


We are now ready to prove \autoref{thm:main-negative dynamic}.

\subsubsection{Proof of Theorem~\ref{thm:main-negative dynamic}}\label{sec:proof_dynamic}

\begin{proof}[Proof of \autoref{thm:main-negative dynamic}]
The proof is a straightforward combination of \autoref{lem:csp value} and \autoref{cor:communication lb matching moments sim}
and so we pick parameters so that all these are applicable. 
Given $\epsilon$ and $k$, let $\alpha^{(1)}_0$ be as given by \autoref{lem:csp value} and let $\alpha^{(2)}_0$ be as given by \autoref{cor:communication lb matching moments sim}. Let $\alpha = \min\{\alpha^{(1)}_0,\alpha^{(2)}_0\}$. Given this choice of $\alpha$, let $T_0$ be as given by \autoref{lem:csp value}. We set $T = T_0$ below. Let $n$ be sufficiently large. 

Throughout this proof we will be considering integer weighted instances of $\maxf$ on $n$ variables with constraints. Note that such an instance $\Psi$ can be viewed as a vector in $\Z^N$ where $N = O(n^k)$ represents the number of possibly distinct constraints applications on $n$ variables. 
Let $\Gamma = \{\Psi | \val_\Psi \geq \gamma - \epsilon\}$. 
Let $B = \{\Psi | \val_\Psi \leq \beta + \epsilon\}$. 
Suppose there exists a sketching algorithm $\ALG_1$ that solves $(\gamma-\epsilon,\beta+\epsilon)$-\textsf{Max-CSP}($f$) using at most $s(n)$ bits of space. Note that $\ALG_1$ must achieve advantage at least $1/3$ on the problem $(\Gamma,B)$. 
By running several independent copies of $\ALG_1$ and thresholding appropriately, we can get an algorithm $\ALG_2$ with space $O(s)$ and advantage $1 - \frac{1}{100}$ solving $(\Gamma,B)$.

Now, let $\COMP$ and $\COMB$ be the compression and combination functions as given by this sketching algorithm (see \autoref{def:sketching alg}). We use these to design a protocol for $(\cF,\cD_Y,\cD_N,T)$-\textsf{simultaneous-RMD} as follows.


Let $(M^{(t)},\vecz^{(t)})$ denote the input to the $t$-th player in $(\cD_Y,\cD_N,T)$-\textsf{simultaneous-RMD}. Each player turn his/her inputs into $\vecsigma^{(t)}=(\sigma^{(t)}_1,\dots,\sigma^{(t)}_{\alpha n})$ where $\sigma^{(t)}_i$ corresponds to the constraint $(\vecj^{(t)}(i),\vecz^{(t)}_i)$ with $\vecj^{(t)}_i\in[n]^k$ the indicator vector for the $i$-th hyperedge of $M^{(t)}$. Next, the players use shared randomness to compute the sketch of his/her input $\COMP(\vecsigma^{(t)})$ and send it to the referee. Finally, the referee computes the sketch for all streams $\COMB( \COMP(\vecsigma^{(1)}),\COMP(\vecsigma^{(2)}),\dots,\COMP(\vecsigma^{(T)})$ and outputs the corresponding answer.


To analyze the above, note that the communication is $O(s)$. Next, by the advantage of the sketching algorithm, we have that 
\begin{equation}
\min_{\Psi \in \Gamma}[\ALG_2(\Psi) = 1] - \max_{\Psi \in B}[\ALG_2(\Psi) = 1] \geq 1-12/100.
\label{eq:dyn-one}    
\end{equation} 
Now we consider what happens when $\Psi \sim \cY_{\simul,n}$ and $\Psi \sim \cN_{\simul,n}$. By \autoref{lem:csp value} we have that $\Pr_{\Psi \sim \cY_{\simul,n}}[\Psi \in \Gamma] \geq  1 - o(1)$ and $\Pr_{\Psi \sim \cN_{\simul,n}}[\Psi \in B] \geq  1 - o(1)$. Combining with \autoref{eq:dyn-one} we thus get
\[
\Pr_{\Psi \sim \cY_{\simul,n}}[\ALG_2(\Psi) = 1] - \Pr_{\Psi \sim \cN_{\simul,n}}[\ALG_2(\Psi) = 1] \geq 1 - 12/100 - o(1) \geq 1/2,
\]
We thus get that there is a $O(s)$ simultaneous communication protocol for $(\cD_Y,\cD_N,T)$-\textsf{simultaneous-RMD} with advantage at least $1/2$. 

Now we conclude by applying \autoref{cor:communication lb matching moments sim} with $\delta = 1/2$ to get that $s = \Omega(\sqrt{n})/T = \Omega(\sqrt{n})$, thus yielding the theorem.
\end{proof}

}

\section{Communication Lower Bound: A Special Case of 1-wise Independence}\label{sec:BHBHM 1 wise}

The goal of this section is to prove a special case of~\autoref{thm:communication lb matching moments} when the distributions are $1$-wise independent, i.e., their marginals are all $0$. The main theorem of this section is summarized below.

\begin{theorem}[Lower bound for 1-wise distributions]\label{thm:communication lb 1 wise}
For every $k\geq 2$, there exists an $\alpha_0 > 0$ such that for every  $\alpha \in (0,1/\alpha_0)$, $\delta \in (0,1/2)$, and every $\cD_Y,\cD_N \in \Delta(\{-1,1\}^k)$ with $\mu(\cD_Y) = \mu(\cD_N) = 0^k$, there exists $\tau > 0$, and $n_0$ such that for every $n\geq n_0$, we have that every protocol for $(\cD_Y,\cD_N)$-RMD with parameter $\alpha$ that achieves advantage $\delta$ requires at least $\tau\sqrt{n}$ bits of communication on instances of length $n$.
\end{theorem}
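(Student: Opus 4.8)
The plan is to follow the Fourier-analytic approach of \cite{KKS}, which in turn builds on \cite{GKKRW}, but carried out for general $k$ and for arbitrary $1$-wise independent distributions $\cD_Y,\cD_N$ (rather than just $\textsf{Unif}(\{(1,1),(-1,-1)\})$ versus $\textsf{Unif}(\{-1,1\}^2)$). First I would reduce to the case of a deterministic one-way protocol $\Pi = (\Pi_A,\Pi_B)$ with $c$-bit messages, and partition $\{-1,1\}^n$ into at most $2^c$ ``atoms'' $A_m = \Pi_A^{-1}(m)$ according to Alice's message. Since $\vecx^*$ is uniform, the expected size $|A_m|$ is $2^{n-c}$, and by a Markov/union-bound argument it suffices to show that \emph{conditioned on} $\vecx^* \in A$ for any single set $A$ with $|A| \geq 2^{n-c'}$ (where $c' = O(c)$), Bob's view in the YES case is $o(1)$-close in total variation to his view in the NO case, provided $c = o(\sqrt n)$. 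This is the standard ``large atom'' reduction.

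The core is then to bound, for a fixed large set $A \subseteq \{-1,1\}^n$, the total variation distance between the distribution of $(M,\vecz)$ when $\vecx^* \sim \unif(A)$ and masks are drawn from $\cD_Y$ versus from $\cD_N$. I would condition further on the hypermatching $M$ (which is distributed identically in both cases and independent of everything), so it remains to compare, for a fixed $k$-hypermatching with $\alpha n$ hyperedges, the distribution of $\vecz = (M\vecx^*)\odot\vecb$ in the two cases. Writing $f_A = \One_A / (|A| 2^{-n})$ for the (normalized) indicator, the key computation is a Fourier expansion: the distribution of $M\vecx^*$ restricted to the matched coordinates is governed by the Fourier coefficients $\widehat{f_A}(\vecv)$ for $\vecv$ supported on the matched vertices, and xor-ing by $\vecb$ replaces $\widehat{f_A}(\vecv)$ by $\widehat{f_A}(\vecv) \cdot \widehat{\cD}_\bullet(\text{pattern of }\vecv\text{ within each hyperedge})$. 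Because $\cD_Y$ and $\cD_N$ are both $1$-wise independent, their Fourier coefficients at weight-$0$ (trivially $2^{-k}$) and at \emph{weight-$1$} patterns (which are $0$) agree; hence any $\vecv$ that touches some hyperedge in $0$ or $1$ coordinates contributes \emph{identically} to the YES and NO distributions. Thus the squared $\ell_2$ difference of the two $\vecz$-distributions is controlled by $\sum \widehat{f_A}(\vecv)^2$ over $\vecv$ that touch \emph{at least two} coordinates within \emph{at least one} hyperedge — and more precisely, by the levels $\ell = \|\vecv\|_1$ that are not too small relative to the number of hyperedges they must span. This is where \autoref{lem:kkl} (the KKL/hypercontractivity bound, with $c$ replaced by $c' = O(c)$) enters: it gives $\sum_{\|\vecv\|_1 = \ell} \widehat{f_A}(\vecv)^2 \leq |A|^2 2^{-2n}(4\sqrt2 c'/\ell)^\ell$, which decays rapidly once $\ell \gg c'$. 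Combining this with a counting bound on how many $\vecv$ of a given weight can be ``absorbed'' into $\alpha n$ hyperedges of arity $k$ (each hyperedge can swallow at most $k$ coordinates of $\vecv$, so low-weight $\vecv$ are forced to touch many hyperedges in $\leq 1$ coordinate, killing their contribution), and then summing over $\ell$, yields a total bound of the form $\sum_\ell \binom{k\alpha n}{\ell}\cdot(\text{stuff})\cdot (4\sqrt 2 c'/\ell)^\ell$; with $\alpha = \alpha_0(k)$ chosen small enough and $c = o(\sqrt n)$ this is $o(1)$. Finally I convert the $\ell_2$ bound to a total variation bound via Cauchy–Schwarz (this is the step that costs a factor like $\sqrt{2^{k\alpha n}}$, which is exactly why $\alpha$ must be taken sufficiently small depending on $k$), and then undo the conditioning on $M$ and on the atom $A$ to conclude.

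The main obstacle — and the place where genuine work beyond \cite{KKS} is needed — is the combinatorial/Fourier bookkeeping for general $k$ and general $1$-wise independent $\cD_Y,\cD_N$: one must carefully argue that only Fourier levels $\vecv$ with $\|\vecv\|_1 = \Omega(\text{(number of hyperedges }\vecv\text{ nontrivially touches)})$ survive the cancellation, set up the hypercontractive estimate with the correct constant $c'$ so that \autoref{lem:kkl}'s hypothesis $|A| \geq 2^{n-c'}$ is met, and then choose $\alpha_0(k)$ small enough that the $2^{k\alpha n}$ loss in the Cauchy–Schwarz step is absorbed by the $(4\sqrt2 c'/\ell)^\ell$ decay. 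I would also need to handle the (minor) bookkeeping arising from our richer encoding of the hypermatching $M$ (the $k$-rows-per-hyperedge convention of \autoref{def:rmd}), but since $M$ is common to both distributions and independent of $\vecx^*,\vecb$, this only affects notation, not the estimates. Everything else — the atom decomposition, Markov bound on atom sizes, triangle inequality over atoms, and the final contradiction with a hypothetical $o(\sqrt n)$-communication protocol of advantage $\delta$ — is routine.
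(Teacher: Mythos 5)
Your proposal follows essentially the same route as the paper: reduce to a deterministic protocol and large atoms, condition on the hypermatching, do a Fourier expansion in which every $\vecs$ with some block of weight exactly $1$ kills the contribution (by $1$-wise independence), invoke the KKL/\cite{GKKRW} hypercontractive bound (\autoref{lem:kkl}) on the atom's indicator, and pair it with a combinatorial estimate on how many weight-$\ell$ vectors $\vecv$ can be written as $M^\top\vecs$ for some ``good'' $\vecs$ (every block of weight $\ne 1$). The paper packages the distance bound as \autoref{thm:tvd 1 wise} (showing $\Exp_M\bigl[\|p_{M,\cD}-U\|_{tvd}^2\bigr]\le\delta^2$ against the uniform distribution) and closes the gap between $\cD_Y$ and $\cD_N$ by a triangle inequality at the protocol level, whereas you propose comparing $\cD_Y$ and $\cD_N$ directly; this is cosmetic, since the difference of Fourier coefficients vanishes on exactly the same set of $\vecs$ and the magnitude bounds are identical. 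Two small inaccuracies in your write-up: (i) it is not enough for $\vecv$ to touch some hyperedge in $\ge 2$ coordinates --- the surviving $\vecs$ must have \emph{every} block of weight $0$ or $\ge 2$, which is exactly why only weights $\ell$ with $\ell/k\le t\le \ell/2$ (where $t$ is the number of touched hyperedges) contribute in \autoref{lem:step 2 1 wise}; and (ii) the Cauchy--Schwarz factor $2^{2\alpha k n}$ in passing from $\|\cdot\|_{tvd}$ to $\|\cdot\|_2$ cancels exactly against the $2^{-2\alpha k n}$ normalization in the Fourier coefficients of the posterior $p_{M,\cD}$ (compare \autoref{eq:1 wise fourier coefficient} to \autoref{eq:upper bound tvd 1 wise}), so the need for $\alpha$ small relative to $k$ comes entirely from the combinatorial matching count $g(\ell)\le(\ell/n)^{\ell/2}$, not from Cauchy--Schwarz. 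Neither point is a gap in the approach, just a misattribution.
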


Our proof of \autoref{thm:communication lb 1 wise} follows the methodology of \cite{GKKRW} with minor modifications as required by the RMD formulation. Their proof uses Fourier analysis to reduce the task of proving a communication lower bound to that of proving some combinatorial identities about randomly chosen matchings. We follow the same approach and this leads us to slightly different conditions about randomly chosen hypermatchings which requires a fresh analysis (though at the end our bounds are qualitatively similar to those in \cite{GKKRW}).

The proof is by contradiction. We show that if the number of bits communicated is $o(\sqrt{n})$, then the \textit{posterior distribution} of Bob's input $\vecz$ is close to the uniform distribution in total variation distance, and hence contradicts the assumed advantage of the protocol. In \autoref{thm:tvd 1 wise} we show that this total variation distance is small when Alice's message is a ``typical'' one, in that the number of Alice inputs leading to this message is not too small. We show immediately after stating \autoref{thm:tvd 1 wise} how to go from the case of typical messages to all messages, and this gives a proof of \autoref{thm:communication lb 1 wise}. 

For each $k$-uniform hypermatching $M$, distribution $\cD$ over $\{-1,1\}^k$, and a fixed Alice's message, the posterior distribution function $p_{M,\cD}:\{-1,1\}^{\alpha kn}\rightarrow[0,1]$ is defined as follows. For each $\vecz\in\{-1,1\}^{\alpha kn}$, let
\[
p_{M,\cD}(\vecz) := \Pr_{\substack{\vecx^*\in \{-1,1\}^n\\\vecb\sim\cD^{\alpha n}}}[\vecz=(M\vecx^*)\odot\vecb\ |\ M,\ \text{Alice's message}]=\Exp_{\vecx^*\in A}\Exp_{\vecb\sim\cD^{\alpha n}}[\mathbf{1}_{\vecz=(M \vecx^*)\odot \vecb}] \,,
\]
where $A\subset\{-1,1\}^n$ is the set of Alice's inputs that correspond to the message. If the number of bits communicated is at most $c$, then there exists a message such that the corresponding $A$ satisfies $|A|\geq2^{n-c}$.

\begin{theorem}\label{thm:tvd 1 wise}
For every $k\in\mathbb{N}$, there exists $\alpha_0 > 0$ such that for every  $\alpha\in(0,\alpha_0)$, $\delta\in(0,1/2)$, there exists a $\tau_0 > 0$ such that the following holds for every sufficiently large $n$. 
Let $A\subseteq\{-1,1\}^n$ be a set satisfying $|A| \geq 2^{n - \tau_0 \sqrt{n}}$, and let $\cD$ be a distribution over $\{-1,1\}^k$ satisfying $\Exp_{\veca\sim\cD}[a_j]=0$ for all $j\in[k]$.
Then 
\begin{equation}\label{eq:1 wise ub}
\Exp_M\left[\|p_{M,\cD}-U\|_{tvd}^2\right]\leq\delta^2 \, ,
\end{equation}
where $U$ denotes the uniform distribution over $\{-1,1\}^{k \alpha n}$.
\end{theorem}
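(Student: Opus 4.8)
The plan is to follow the Fourier-analytic strategy of \cite{GKKRW,KKS}: bound $\Exp_M[\|p_{M,\cD}-U\|_{tvd}^2]$ by $\Exp_M[\|p_{M,\cD}-U\|_2^2]$ up to the $2^{k\alpha n}$ normalization (Cauchy–Schwarz: $\|g\|_{tvd} \le \tfrac12 2^{k\alpha n/2}\|g\|_2$ for $g$ on $\{-1,1\}^{k\alpha n}$), and then expand $\|p_{M,\cD}-U\|_2^2$ via Parseval (\autoref{prop:parseval}) as $\sum_{\vecv \ne 0^{k\alpha n}} \widehat{p_{M,\cD}}(\vecv)^2$. First I would compute the Fourier coefficients of $p_{M,\cD}$ explicitly. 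Writing $\vecz = (M\vecx^*)\odot \vecb$, one gets for each $\vecv \in \{0,1\}^{k\alpha n}$ (viewed as a union of $k$-bit blocks $\vecv(1),\dots,\vecv(\alpha n)$) that $\widehat{p_{M,\cD}}(\vecv) = \left(\prod_{i=1}^{\alpha n} \widehat{\cD}(\vecv(i))\right)\cdot \widehat{\One_A/ (|A|/2^n)}(M^\top \vecv)$ up to normalization constants, where $M^\top\vecv \in \{0,1\}^n$ records, for each variable, the parity of the $\vecv$-entries of the hyperedge slots touching it. The key point: since $\cD$ is $1$-wise independent, $\widehat{\cD}(\vecw)=0$ whenever $\|\vecw\|_1 = 1$, so the only surviving blocks $\vecv(i)$ are those with $\|\vecv(i)\|_1 \ne 1$, i.e. $\|\vecv(i)\|_1 = 0$ or $\|\vecv(i)\|_1 \ge 2$.

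Next I would take the expectation over the random hypermatching $M$. After expanding, $\Exp_M[\|p_{M,\cD}-U\|_2^2]$ becomes a sum over nonzero $\vecv$, weighted by $\prod_i \widehat{\cD}(\vecv(i))^2$ and by $\Exp_M[\widehat{g_A}(M^\top\vecv)^2]$, where $g_A = \One_A \cdot 2^n/|A|$. Grouping by $\ell = \|M^\top\vecv\|_1$ (the number of variables receiving odd parity), the inner Fourier mass $\sum_{\|\vecw\|_1 = \ell} \widehat{g_A}(\vecw)^2$ is controlled by the KKL/hypercontractivity bound \autoref{lem:kkl}: with $|A| \ge 2^{n-c}$ and $c = \tau_0\sqrt{n}$, this sum is at most $(4\sqrt2 c/\ell)^\ell$ for $\ell \le 4c$. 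Since each surviving block $\vecv(i) \ne 0$ contributes at least $2$ to $\|\vecv\|_1$ and a nonzero $\vecv$ forces at least one such block, but the parities can cancel at shared variables, I need the combinatorial estimate that $\ell = \|M^\top\vecv\|_1$ is, with high probability over $M$, proportional to the number of "active" blocks (blocks with $\vecv(i)\ne 0$): the hyperedges of $M$ on distinct vertex sets are unlikely to overlap too much, so heavy cancellation is rare. This is the analogue of the matching-overlap estimates in \cite{GKKRW}, adapted to $k$-uniform hypermatchings; I would prove that for a set $S$ of $s$ active blocks, $\Pr_M[\|M^\top \vecv\|_1 \le \ell] $ decays geometrically once $\ell < s$ (roughly $\binom{ks}{(ks-\ell)/2}(ks/n)^{(ks-\ell)/2}$-type bounds), since forcing the parity to be small requires many of the $ks$ incident variable-slots to collide.

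Putting these together: the contribution of $\vecv$ with $s$ active blocks is bounded by $\left(\max_{\vecw: \|\vecw\|_1\ge 2}\widehat{\cD}(\vecw)^2\right)^{s}$ times a sum over $\ell$ of $\Pr_M[\|M^\top\vecv\|_1 = \ell]\cdot (4\sqrt2 c/\ell)^\ell$, and after summing over the $\binom{\alpha n}{s}$ choices of which blocks are active and the $O(2^k)^s$ choices of each block, one obtains a bound of the form $\sum_{s\ge 1} (C_k \alpha)^s \cdot \mathrm{poly}(c)$, which for $\alpha$ small enough and $c = \tau_0\sqrt n$ with $\tau_0$ small enough is at most $2\delta^2 / 2^{k\alpha n} \cdot 2^{k\alpha n}$—wait, more precisely the normalization: $\|p_{M,\cD}-U\|_{tvd}^2 \le \tfrac14 2^{k\alpha n}\|p_{M,\cD}-U\|_2^2$, so I need $\Exp_M[\|p_{M,\cD}-U\|_2^2] \le 4\delta^2 2^{-k\alpha n}$; the Parseval expansion already carries a $2^{-k\alpha n}$ from the definition of Fourier coefficients, and the remaining geometric series in $s$ can be made smaller than $4\delta^2$ by the choice of $\alpha$ and $\tau_0$. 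The main obstacle I anticipate is the hypermatching-overlap combinatorics: unlike the simple-graph matching case in \cite{GKKRW}, here each hyperedge touches $k$ variables and a single $\vecv(i)$ can have weight anywhere in $\{2,\dots,k\}$, so I must carefully track how the parity vector $M^\top\vecv$ shrinks under collisions and ensure the bound $\ell \gtrsim s$ holds with exponentially good probability, uniformly over the weight profile of the blocks. Once that estimate is in hand, plugging into \autoref{lem:kkl} and summing the geometric series is routine.
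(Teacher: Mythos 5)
Your high-level plan matches the paper's: Cauchy--Schwarz to pass from $\|\cdot\|_{tvd}$ to $\|\cdot\|_2$, Parseval to expand the posterior's Fourier mass, the factorization $\widehat{p_{M,\cD}}(\vecs) \propto \bigl(\prod_i \widehat{\cD}(\vecs(i))\bigr)\cdot \widehat{f}(M^\top\vecs)$, the observation that $1$-wise independence kills all blocks of Hamming weight exactly $1$, and then the KKL/hypercontractivity bound (\autoref{lem:kkl}) for the Fourier weight of the indicator of $A$. That skeleton is correct.

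The gap is in the combinatorial estimate, and it comes from a misconception about $M$. You write that ``the parities can cancel at shared variables'' and propose to prove that, for a fixed set of $s$ active blocks, $\Pr_M[\|M^\top\vecs\|_1 \le \ell]$ decays once $\ell < s$, ``since forcing the parity to be small requires many of the $ks$ incident variable-slots to collide.'' But in the \textsf{RMD} formalization $M$ has exactly one $1$ per row and \emph{at most one $1$ per column}, i.e., the $k\alpha n$ slots map injectively into $[n]$: the hyperedges are vertex-disjoint. Consequently $(M^\top\vecs)_\ell$ is a single coordinate $s_r$ (or $0$), $M^\top$ is weight-preserving, and $\|M^\top\vecs\|_1 = \|\vecs\|_1$ deterministically for every $M$ and $\vecs$. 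There is no cancellation to control, and the probability you propose to bound is identically $0$ or $1$ — the estimate you plan to prove is vacuous, and the ``hypermatching-overlap'' lemma you anticipate as the main obstacle does not arise.

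The combinatorial quantity that actually needs bounding is different: after changing variables by the injection $M^\top$ and grouping by $\ell = \|\vecv\|_1$, one needs, for each fixed $\vecv \in \{0,1\}^n$ of weight $\ell$, an upper bound on
\[
g(\ell) = \Pr_M\Bigl[\exists\, \vecs \ne 0 \text{ with } |\vecs(i)| \ne 1\ \forall i,\ M^\top\vecs = \vecv\Bigr],
\]
which is the probability that the support $V = \supp(\vecv)$ is covered by the hypermatching \emph{and} every hyperedge meets $V$ in either $0$ or at least $2$ vertices. By symmetry one may instead fix the matching and pick $V$ uniformly of size $\ell$, and the bound $g(\ell) \le (\ell/n)^{\ell/2}$ follows by counting the $t$ touching hyperedges ($\ell/k \le t \le \ell/2$, the upper bound coming precisely from the ``$\ge 2$ per hyperedge'' constraint), choosing $T$ via $\binom{\alpha n}{t}$, and choosing $V$ inside $T$'s vertices via $\binom{kt}{\ell}$. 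This is a very different calculation from the one you sketch, and it is what makes the sum over $\ell$ telescope with the $(4\sqrt2 c/\ell)^\ell$ factor from KKL.

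A secondary remark: you propose to track $\bigl(\max_{\|\vecw\|_1\ge 2}|\widehat\cD(\vecw)|\bigr)^{2s}$ and sum a geometric series in the number $s$ of active blocks. This is unnecessary once the structure above is in place — the paper simply bounds $|\prod_i \widehat{\cD}(\vecs(i))| \le 1$ and groups by $\ell$ rather than $s$. Keeping the $\widehat\cD$ factors is sound but would require extra bookkeeping (the weight profile of the active blocks ties $s$ to $\ell$ only loosely, $2s \le \ell \le ks$) to reconcile with the $\ell$-indexed KKL bound, and does not buy anything the $g(\ell)$ bound doesn't already give. You should also handle the regime $\ell > 4c$, where \autoref{lem:kkl} does not apply; the paper falls back on Parseval ($\sum_\vecv \widehat f(\vecv)^2 \le 1$) there.
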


Assuming \autoref{thm:tvd 1 wise}, we prove \autoref{thm:communication lb 1 wise} below.

\begin{proof}[Proof of~\autoref{thm:communication lb 1 wise}]
Let $\delta$ be as in the theorem statement and let $\delta'=\delta/8$.
Let $\tau_0$ be the constant given by \autoref{thm:tvd 1 wise} when invoked with parameter $\alpha$ and $\delta'$.
Let $\tau = \tau_0/2$, $c' = \tau_0\sqrt{n}$, and $c = c' - \log(1/\delta')$. Note that for large enough $n$, we have $c \geq \tau\sqrt{n}$. 

We will prove the theorem for this choice of $\tau$. The proof is by contradiction. 
Suppose there exists a protocol for $(\cD_Y,\cD_N)$-\textsf{RMD} on instances of length $n$ with advantage at least $\delta$ using at most $c$ bits of communication. Let $\cD_{unif}$ be the uniform distribution over $\{-1,1\}^k$.  By triangle inequality, there is a protocol for either $(\cD_Y,\cD_{unif})$-\textsf{RMD} or $(\cD_N,\cD_{unif})$-\textsf{RMD} with advantage at least $\frac{\delta}{2}$ using at most $c$ bits of communication. Without loss of generality, suppose there is protocol for $(\cD_Y,\cD_{unif})$-\textsf{RMD} with advantage at least $\frac{\delta}{2}$. We have
\[
\|p_{M,\cD_Y}-p_{M,\cD_{unif}}\|_{tvd}\geq\frac{\delta}{2} \, .
\]

Next, by Yao's principle~\cite{yao1977probabilistic} we may assume that the message sent by Alice is deterministic. Namely, the message partitions the set $\{-1,1\}^n$ of $\vecx^*$ into $2^{c}$ sets $A_1,A_2,\dots,A_{2^{c}}$. Using a simple counting argument, we can show that with probability at least $1-\delta'$, the message sent by Alice corresponds to a set $A_i\subset\{-1,1\}^n$ of size at least $2^{n-c-\log1/\delta'}\geq2^{n-c'}$. We call such an event $\textsf{GOOD}$. That is, 
\[
\textsf{GOOD}=\bigcup_{i\in[2^{c}]:|A_i|\geq2^{n-c'}}A_i \, .
\]

Now for each $A_i$ with $|A_i|\geq2^{n-c'}$, we apply~\autoref{thm:tvd 1 wise} with parameters $\alpha$ and $\delta'$ to get 
\[
\|p_{M,\cD_Y}-p_{M,\cD_{unif}}\|_{tvd}|_{\vecx^*\in A_i}=\Exp_M[\|p_{M,\cD_Y}-U\|_{tvd}|_{\vecx^*\in A_i}]\leq\delta' \, .
\]
Now, for $\vecx^*\sim\textsf{Unif}(\{-1,1\}^n)$, we have
\begin{align*}
\|p_{M,\cD_Y}-U\|_{tvd}&=\Pr[\vecx^*\in\textsf{GOOD}]\cdot\|p_{M,\cD_Y}-U\|_{tvd}|_{\vecx^*\in\textsf{GOOD}}\\
&+\Pr[\vecx^*\not\in\textsf{GOOD}]\cdot\|p_{M,\cD_Y}-U\|_{tvd}|_{\vecx^*\not\in\textsf{GOOD}}\\
&\leq1\cdot\delta'+\delta'\cdot1 = \frac{\delta}4 <\frac{\delta}{2} \, .
\end{align*}
But this contradicts our assumption that 
\[
\|p_{M,\cD_Y}-U\|_{tvd}=\|p_{M,\cD_Y}-p_{M,\cD_{unif}}\|_{tvd}\geq\frac{\delta}{2} \, .
\]
This completes the proof of~\autoref{thm:communication lb 1 wise}.
\end{proof}

The rest of this section is devoted to the proof of~\autoref{thm:tvd 1 wise}. In~\autoref{sec:1 wise reduce to combo}, we reduce the upper bound for~\autoref{eq:1 wise ub} to a combinatorial problem. Next, we analyze the combinatorial problem in~\autoref{sec:1 wise combo ub}, and finally complete the proof of~\autoref{thm:tvd 1 wise} in~\autoref{sec:1 wise wrap up}.

\subsection{Reduction to a combinatorial problem}\label{sec:1 wise reduce to combo}
Let $A\subseteq\{-1,1\}^n$ be the set of Alice’s inputs that correspond to the message. We define $f:\{-1,1\}^n\rightarrow\{0,1\}$ to be the indicator function of $A$, i.e., $f(\vecx^*)=1$ iff $\vecx^*\in A$. In this subsection, we apply Fourier analysis on the left hand side of~\autoref{eq:1 wise ub} and get an upper bound in terms of a combinatorial quantity related to the random matching and the Fourier coefficients of $f$. The reduction is summarized in the following lemma.

In what follows, we will write a vector $\vecs\in\{0,1\}^{\alpha k n}$ as a concatenation of $\alpha n$ vectors, i.e.,  $\vecs = (\vecs(1),\ldots,\vecs(\alpha n))$ where $\vecs(i) \in \{0,1\}^k$. We use $|\vecs(i)|$ to denote the Hamming weight of $\vecs(i)$.

\begin{lemma}\label{lem:step 1 1 wise}
Let $A\subseteq\{-1,1\}^n$ and $f:\{-1,1\}^n\rightarrow\{0,1\}$ be its indicator function. Let $k\in\mathbb{N}$ and $\alpha\in(0,1/100k)$. Let $\cD$ be a distribution over $\{-1,1\}^k$ such that $\mathbb{E}_{\veca\sim\cD}[a_j]=0$ for all $j\in[k]$. For each $\ell\in[n]$, let us denote by $\vecv_\ell\in\{0,1\}^n$, the vector where the first $\ell$ entries are $1$, and the remaining entries are $0$. We have
\[
\Exp_M[\|p_{M,\cD}-U\|_{tvd}^2]\leq\frac{2^{2n}}{|A|^2}\sum_{\ell\geq2}^{\alpha kn}g(\ell)\cdot\sum_{\substack{\vecv\in\{0,1\}^n\\|\vecv|=\ell}}\widehat{f}(\vecv)^2 \, ,
\]
where
\[
g(\ell)=\Pr_M\left[\exists \vecs\in\{0,1\}^{\alpha kn}\backslash\{0^{\alpha kn}\},\ |\vecs(i)|\neq1\, \forall i,\ M^\top \vecs=\vecv_\ell\right] \, .
\]
\end{lemma}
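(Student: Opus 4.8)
The plan is to expand the total variation distance squared using Cauchy-Schwarz to convert it to an $\ell_2$ distance, apply the Fourier-analytic identity \eqref{eq:dist} to write $\|p_{M,\cD} - U\|_2^2$ as a sum of squared Fourier coefficients of $p_{M,\cD}$, and then compute those Fourier coefficients in terms of $\widehat{f}$ and the structure of $M$. Concretely, first I would write
\[
\|p_{M,\cD} - U\|_{tvd}^2 = \left(\frac12 \sum_{\vecz} |p_{M,\cD}(\vecz) - 2^{-\alpha k n}|\right)^2 \leq \frac{2^{\alpha k n}}{4} \sum_{\vecz} (p_{M,\cD}(\vecz) - 2^{-\alpha k n})^2 = \frac{2^{\alpha k n}}{4} \cdot 2^{\alpha k n} \|p_{M,\cD} - U\|_2^2,
\]
using Cauchy-Schwarz on the support of size $2^{\alpha k n}$ (with the convention that $\|\cdot\|_2^2$ is the normalized/expectation inner product, so $\|g\|_2^2 = 2^{-\alpha k n}\sum_\vecz g(\vecz)^2$; the exact bookkeeping of normalization factors is routine). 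Then by \eqref{eq:dist} applied to the function $p_{M,\cD}$ (which is a probability distribution on $\{-1,1\}^{\alpha k n}$), we get $\|p_{M,\cD} - U\|_2^2 = \sum_{\vecs \neq 0^{\alpha k n}} \widehat{p_{M,\cD}}(\vecs)^2$.

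Next I would compute $\widehat{p_{M,\cD}}(\vecs)$ for $\vecs = (\vecs(1),\ldots,\vecs(\alpha n)) \in \{0,1\}^{\alpha k n}$. Since $p_{M,\cD}(\vecz) = \Exp_{\vecx^* \in A} \Exp_{\vecb \sim \cD^{\alpha n}}[\mathbf{1}_{\vecz = (M\vecx^*)\odot \vecb}]$, the Fourier coefficient factors: $\widehat{p_{M,\cD}}(\vecs) = \Exp_{\vecx^* \in A}\left[(-1)^{\langle \vecs, M\vecx^*\rangle}\right] \cdot \prod_{i=1}^{\alpha n} \widehat{\cD}(\vecs(i))$, where I use that $\vecz = (M\vecx^*) \odot \vecb$ means $z_j = (M\vecx^*)_j b_j$, so the character splits as a product over the $M\vecx^*$ part and the $\vecb$ part, and the $\vecb$'s are independent across edges. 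Now $\langle \vecs, M\vecx^* \rangle = \langle M^\top \vecs, \vecx^*\rangle$, and $\Exp_{\vecx^* \in A}[(-1)^{\langle M^\top \vecs, \vecx^*\rangle}] = \frac{2^n}{|A|} \widehat{f}(M^\top \vecs)$ by definition of $\widehat f$. The key observation is that $\widehat{\cD}(\vect) = 0$ whenever $|\vect| = 1$, because $\widehat{\cD}(\vece_j) = \Exp_{\veca \sim \cD}[(-1)^{a_j \text{ bit}}]$ is (up to sign convention) the marginal $\Exp[a_j] = 0$; and $|\widehat{\cD}(\vect)| \leq 2^{-k}\cdot 2^k \cdot \ldots$ — actually the cleanest bound is $|\widehat{\cD}(\vect)| \leq 1$ for all $\vect$ since $\cD$ is a probability distribution, which suffices. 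Hence the product $\prod_i \widehat{\cD}(\vecs(i))$ vanishes unless $|\vecs(i)| \neq 1$ for every $i$, and in that case it is bounded by $1$ in absolute value.

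Putting this together, $\Exp_M[\|p_{M,\cD}-U\|_{tvd}^2] \leq \frac{2^{2\alpha k n}}{4}\cdot 2^{-\alpha k n} \Exp_M \sum_{\vecs} \widehat{p_{M,\cD}}(\vecs)^2$ — wait, I need to be careful that the power-of-two prefactors all cancel to leave exactly $\frac{2^{2n}}{|A|^2}$; the $2^{\alpha k n}$ factors from Cauchy-Schwarz and the $2^{-\alpha k n}$ from the normalized $\ell_2$ norm must combine correctly, and I will track them precisely in the writeup (this is exactly the kind of routine calculation to verify but not belabor). The essential content is: $\Exp_M[\|p_{M,\cD}-U\|_{tvd}^2] \leq \frac{2^{2n}}{|A|^2} \Exp_M \sum_{\vecs \neq 0 :\, |\vecs(i)|\neq 1\,\forall i} \widehat{f}(M^\top \vecs)^2$. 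Now I swap the order, grouping by the value $\vecv = M^\top \vecs \in \{0,1\}^n$. For fixed $\vecv$ with $|\vecv| = \ell$, by symmetry of the random matching $M$ the probability that there exists such an $\vecs$ with $M^\top \vecs = \vecv$ depends only on $\ell$, and equals $g(\ell) = \Pr_M[\exists \vecs \neq 0^{\alpha k n}, |\vecs(i)|\neq 1\,\forall i, M^\top \vecs = \vecv_\ell]$. Moreover for a \emph{given} $M$ there is at most one such $\vecs$ mapping to a given $\vecv$ (since $M$ has full column rank among touched columns — each touched vertex lies in exactly one edge, so $M^\top \vecs = \vecv$ determines $\vecs$ on all touched edges, and untouched-edge coordinates of $\vecs$ must be $0^k$... modulo checking the weight-$\neq 1$ constraint forces this; this requires a short argument about $M$ having at most one $1$ per column). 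Hence $\Exp_M \sum_{\vecs} \widehat{f}(M^\top\vecs)^2 \leq \sum_\vecv \Pr_M[\exists \vecs \ldots M^\top\vecs = \vecv] \widehat{f}(\vecv)^2 = \sum_{\ell} g(\ell) \sum_{|\vecv| = \ell} \widehat{f}(\vecv)^2$. Finally $\ell = |M^\top \vecs| \geq 2$ because each nonzero block $\vecs(i)$ has weight $\geq 2$ and contributes $|\vecs(i)| \geq 2$ distinct touched vertices to $\vecv$ (distinct because $M$ is a matching, so different coordinates within an edge hit different vertices, and different edges hit disjoint vertex sets); and $\ell \leq \alpha k n$ trivially since only $\alpha k n$ vertices are touched. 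That gives exactly the claimed bound.

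**The main obstacle** I anticipate is the clean bookkeeping that for a fixed matching $M$, the map $\vecs \mapsto M^\top \vecs$ restricted to the set $\{\vecs : |\vecs(i)| \neq 1 \,\forall i\}$ is injective onto its image and that $|M^\top \vecs| \geq 2$ for nonzero such $\vecs$ — both rely on $M$ being a genuine hypermatching (one $1$ per row, at most one $1$ per column) so that the edges partition their union of touched vertices. This is not deep but needs to be stated carefully to justify replacing the sum over $\vecs$ by a sum over $\vecv$ with the indicator collapsing to $g(\ell)$. The Fourier computation of $\widehat{p_{M,\cD}}(\vecs)$ itself is mechanical once one fixes the sign convention $\widehat{g}(\vecv) = \Exp_\veca[g(\veca)(-1)^{\vecv^\top \veca}]$ from \autoref{sec:fourier} (interpreting $\veca \in \{-1,1\}^k$ via its $\{0,1\}$-coordinates in the exponent), and the vanishing of weight-one Fourier coefficients of $\cD$ is immediate from the hypothesis $\Exp_{\veca\sim\cD}[a_j] = 0$.
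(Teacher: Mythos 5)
Your proposal is correct and follows essentially the same route as the paper's proof: Cauchy--Schwarz to pass from total variation to $\ell_2$, Parseval to write the $\ell_2$ distance as a sum of squared Fourier coefficients of $p_{M,\cD}$, factorizing each coefficient into a part involving $\widehat{f}(M^\top\vecs)$ and a product of Fourier coefficients of $\cD$ (whose weight-one coefficients vanish by the zero-marginal hypothesis), and then using injectivity of $\vecs\mapsto M^\top\vecs$ for fixed $M$ to regroup the sum by $\vecv$, with the probability depending only on $|\vecv|$ by symmetry. Just be sure to track the $2^{-\alpha kn}$ normalization consistently when you write it up: either use the paper's normalized $\widehat{\cD}$ with the bound $|\widehat{\cD}(\vect)|\le 2^{-k}$, or the unnormalized character average with the bound $\le 1$ together with an explicit $2^{-\alpha kn}$ prefactor in the factorization of $\widehat{p_{M,\cD}}(\vecs)$ — mixing them, as the current sketch nearly does, would leave a stray $2^{2\alpha kn}$.
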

\begin{proof}
By Cauchy–Schwarz inequality and \autoref{eq:dist}, 
\begin{align}
\Exp_M\left[\|p_{M,\cD}-U\|_{tvd}^2\right]&\leq2^{2\alpha kn}\Exp_M\left[\|p_{M,\cD}-U\|_2^2\right]\nonumber\\
&=2^{2\alpha kn}\Exp_M\left[\sum_{\vecs\in\{0,1\}^{\alpha kn}\backslash\{0^{\alpha kn}\}}\widehat{p_{M,\cD}}(\vecs)^2\right] \, . \label{eq:upper bound tvd 1 wise}
\end{align}

The following claim shows that the expected sum of the Fourier coefficients (corresponding to non-empty subsets of $[\alpha k n]$) of the posterior distribution $p_{M,\cD}$ can be upper bounded by an expected sum of certain Fourier coefficients of the indicator function $f$.

\begin{claim}\label{claim:1 wise fourier bound}
\[
\Exp_M[\|p_{M,\cD}-U\|_{tvd}^2]\leq\frac{2^{2n}}{|A|^2}\sum_{\vecs\in\GOOD\backslash\{0^{\alpha kn}\}}\Exp_M\left[\widehat{f}(M^\top \vecs)^2\right] \, . 
\]
\end{claim}
\begin{proof}

For every $\vecs\in\{0,1\}^{\alpha kn}\backslash\{0^{\alpha kn}\}$, consider $\vecs\in\{0,1\}^{\alpha kn}$ to be $\alpha n$ blocks $\vecs(1),\dots,\vecs(\alpha n)\in\{0,1\}^{k}$ of length $k$. Observe that
\begin{align*}
\widehat{p_{M,\cD}}(\vecs)&= \frac{1}{2^{\alpha kn}}\sum_{\vecz\in\{-1,1\}^{\alpha kn}}p_{M,\cD}(\vecz)\prod_{\substack{i\in[\alpha n],j\in[k]\\s(i)_{j}=1}}z(i)_{j} \, .
\intertext{By substituting $p_{M,\cD}(\vecz)=\Exp_{\vecx^*\in A}\Exp_{\vecb\sim\cD^{\alpha n}}[\mathbf{1}_{\vecz=M \vecx^*\odot \vecb}]$, the equation becomes}
\widehat{p_{M,\cD}}(\vecs)&=\frac{1}{2^{\alpha kn}}\cdot\Exp_{\vecx^*\in A}\left[\prod_{\substack{i\in[\alpha n],j\in[k]\\s(i)_{j}=1}}(M \vecx^*)_{i,j}\right]\Exp_{\vecb\sim\cD^{\alpha n}}\left[\prod_{\substack{i\in[\alpha n],j\in[k]\\s(i)_{j}=1}}b(i)_{j}\right] \\
&=\frac{1}{2^{\alpha kn}}\cdot\Exp_{\vecx^*\in A}\left[\prod_{\substack{i\in[\alpha n],j\in[k]\\s(i)_{j}=1}}(M \vecx^*)_{i,j}\right]\prod_{i\in [\alpha n]}\Exp_{\vecb(i)\sim\cD}\left[\prod_{\substack{j\in[k]\\s(i)_{j}=1}}b(i)_{j}\right]\, .
\intertext{Since $\Exp_{\veca\sim\cD}[a_j]=0$ for all $j\in[k]$, the right hand side expression becomes zero if there exists $i\in[\alpha n]$ such that $|\vecs(i)|=1$. Define $\GOOD:=\{\vecs\in\{0,1\}^{\alpha kn}\, |\, |\vecs(i)|\neq1\ \forall i\}$. We have}
\widehat{p_{M,\cD}}(\vecs)&\leq\frac{1}{2^{\alpha kn}}\cdot\left|\Exp_{\vecx^*\in A}\left[\prod_{\substack{i\in[\alpha n],j\in[k]\\s(i)_{j}=1}}(M \vecx^*)_{i,j}\right]\right|\cdot\mathbf{1}_{\vecs\in\GOOD} \, .
\intertext{Since each row and column in $M$ has at most one non-zero entry, we can rewrite the right hand side as}
&=\frac{1}{2^{\alpha kn}}\cdot\left|\Exp_{\vecx^*\in A}\left[\prod_{\substack{i\in[n]\\(M^\top \vecs)_{i}=1}} \vecx^*_{i}\right]\right|\cdot\mathbf{1}_{\vecs\in\GOOD}
\end{align*}
Now we relate the above quantity to the Fourier coefficients of $f$. Recall that $f$ is the indicator function of the set $A$ and hence for each $\vecv\in\{0,1\}^n$, we have
\[
\widehat{f}(\vecv)=\frac{1}{2^n}\sum_{\vecx^*}f(\vecx^*)\prod_{i\in[n]:v_i=1}\vecx^*_i=\frac{1}{2^n}\sum_{\vecx^*\in A}\prod_{i\in[n]:v_i=1}\vecx^*_i \, .
\]
Thus, the Fourier coefficient of $p_M$ corresponding to a set $\vecs\in\{0,1\}^{\alpha kn}$ can be bounded as follows:
\begin{equation}\label{eq:1 wise fourier coefficient}
\widehat{p_{M,\cD}}(\vecs)\leq\frac{1}{2^{\alpha kn}}\cdot\frac{2^n}{|A|}\left|\widehat{f}(M^\top \vecs)\right|\cdot\mathbf{1}_{\vecs\in\GOOD} \, .
\end{equation}
By plugging~\autoref{eq:1 wise fourier coefficient} into~\autoref{eq:upper bound tvd 1 wise}, we have the desired bound, and this completes the proof of~\autoref{claim:1 wise fourier bound}.
\end{proof}

It follows from~\autoref{claim:1 wise fourier bound} that
\begin{align*}
\Exp_M[\|p_{M,\cD}-U\|_{tvd}^2]&\leq\frac{2^{2n}}{|A|^2}\sum_{\vecs\in\GOOD\backslash\{0^{\alpha kn}\}}\Exp_M\left[\widehat{f}(M^\top \vecs)^2\right] \, . 
\intertext{Since for a fixed $M$, the map $M^\top$ is injective, the right hand side of the above inequality has the following combinatorial form.}
&\frac{2^{2n}}{|A|^2}\sum_{\vecv\in\{0,1\}^n\backslash\{0^n\}}\Pr_M\left[\exists \vecs\in\GOOD\backslash\{0^{\alpha kn}\},\ M^\top \vecs=\vecv\right]\widehat{f}(\vecv)^2 \, . 
\intertext{By symmetry, the above probability term will be the same for $\vecv$ and $\vecv'$ which have the same Hamming weight. For each $\ell\in[n]$, denote $g(\ell)=\Pr_M\left[\exists \vecs\in\GOOD\backslash\{0^{\alpha kn}\},\ M^\top \vecs=\vecv_\ell\right]$. Therefore, the expression simplifies to}
&\frac{2^{2n}}{|A|^2}\sum_{\ell\geq1}^ng(\ell)\cdot\sum_{\substack{\vecv\in\{0,1\}^n\\|\vecv|=\ell}}\widehat{f}(\vecv)^2 \, .
\intertext{
Note that for $\ell=1$ or $\ell>\alpha kn$, $g(\ell)=0$ by definition. Thus, the above expression further simplifies to the following:}
&\frac{2^{2n}}{|A|^2}\sum_{\ell\geq2}^{\alpha kn}g(\ell)\cdot\sum_{\substack{\vecv\in\{0,1\}^n\\|\vecv|=\ell}}\widehat{f}(\vecv)^2 \, .
\end{align*}
We conclude that
\[
\Exp_M[\|p_{M,\cD}-U\|_{tvd}^2]\leq\frac{2^{2n}}{|A|^2}\sum_{\ell\geq2}^{\alpha kn}g(\ell)\cdot\sum_{\substack{\vecv\in\{0,1\}^n\\|\vecv|=\ell}}\widehat{f}(\vecv)^2 \, .
\]

This completes the proof of~\autoref{lem:step 1 1 wise}.
\end{proof}

\subsection{An upper bound for the combinatorial problem}\label{sec:1 wise combo ub}
In this subsection, we upper bound the combinatorial term $g(\ell)$ in~\autoref{lem:step 1 1 wise}. The result is summarized in the following lemma.

\begin{lemma}\label{lem:step 2 1 wise}
For every $k$, there exists an $\alpha_0 > 0$ such that for every $\alpha \in (0,\alpha_0)$,
and for every
$n$ and $\ell \leq n/2$, we have 
\[
g(\ell)=\Pr_M\left[\exists \vecs\neq0,\ |\vecs(i)|\neq1\ \forall i,\ M^\top \vecs=\vecv_\ell\right]\leq \left(\frac{\ell}n\right)^{\ell/2} \, . 
\]
\end{lemma}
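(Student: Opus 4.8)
## Proof proposal for Lemma~\ref{lem:step 2 1 wise}

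The plan is to bound the probability that there exists a nonzero $\vecs \in \{0,1\}^{\alpha kn}$ with all blocks $\vecs(i)$ of Hamming weight $\neq 1$ satisfying $M^\top \vecs = \vecv_\ell$. The first observation is that $M^\top \vecs = \vecv_\ell$ means: the set of columns of $M$ "selected" by the $1$-entries of $\vecs$ covers exactly the first $\ell$ coordinates of $[n]$, each exactly once. Since $M$ encodes a $k$-uniform hypermatching where block $i$ of rows corresponds to hyperedge $e_i \subseteq [n]$ with $|e_i| = k$, the condition $|\vecs(i)| \neq 1$ for all $i$ means that for each hyperedge $e_i$, the number of vertices of $e_i$ lying in $[\ell]$ that $\vecs$ "uses" is either $0$ or $\geq 2$. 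But because $M^\top \vecs = \vecv_\ell$ forces $\vecs$ to cover each vertex of $[\ell]$ exactly once and nothing outside, $\vecs$ uses a hyperedge $e_i$ iff it uses exactly $|e_i \cap [\ell]|$ of its vertices — so the condition becomes: every hyperedge $e_i$ satisfies $|e_i \cap [\ell]| \neq 1$, and moreover the hyperedges with $|e_i \cap [\ell]| \geq 2$ must exactly partition $[\ell]$. In particular, for such an $\vecs$ to exist it is \emph{necessary} that the random hypermatching $M$ has the property that no hyperedge touches $[\ell]$ in exactly one vertex; equivalently, the vertices of $[\ell]$ are "clustered" into hyperedges, each hyperedge either avoiding $[\ell]$ or containing $\geq 2$ vertices of it.

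So I would bound $g(\ell) \leq \Pr_M[\text{every hyperedge } e_i \text{ has } |e_i \cap [\ell]| \neq 1]$. To estimate this, expose the hyperedges of $M$ one at a time (or expose the positions of the $\ell$ special vertices among the matched slots). A clean way: think of the $\alpha n$ hyperedges as occupying $k\alpha n$ distinct slots among $[n]$, chosen uniformly; equivalently think of a uniformly random injection of the $k\alpha n$ slots into $[n]$, grouped into blocks of $k$. Reveal which slots receive the $\ell$ vertices of $[\ell]$. The event "$|e_i \cap [\ell]| \neq 1$ for all $i$" requires that the $\ell$ special vertices, viewed as landing in the $\alpha n$ blocks, never land alone in a block — they must come in groups of size $\geq 2$ within the same block. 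I would first condition on the (unmatched) vertices, and count: the number of ways to place $\ell$ labeled special vertices into $\alpha n$ blocks of capacity $k$ so that every block gets $0$ or $\geq 2$, divided by the number of ways with no restriction. A convenient upper bound: pair up (or group) the $\ell$ vertices; since they must be covered by hyperedges each containing $\geq 2$ of them, there are at most $\lfloor \ell/2 \rfloor$ such hyperedges, so we are choosing a set of $\leq \ell/2$ hyperedges (from $\alpha n$) and an assignment of the $\ell$ vertices into their $\leq k \cdot (\ell/2)$ slots. The dominant term comes from the probability that a \emph{second} special vertex lands in the same hyperedge as a given first one: for any fixed special vertex sitting in some hyperedge, the chance that another of the remaining $\ell - 1$ special vertices also lies in that same hyperedge (rather than one of the $\sim n$ other positions) is $O(k\ell / n)$. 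Iterating this for $\ell/2$ "pairings" gives a bound of the form $(Ck\ell/n)^{\ell/2}$, and for $\alpha$ small enough (so $C k$ can be absorbed and using $\ell \le \alpha k n \le n/2$) this is at most $(\ell/n)^{\ell/2}$.

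More concretely, the step I would carry out is: union bound over the choice of the $\leq \ell/2$ hyperedges that must cover $[\ell]$ and over the partition of $[\ell]$ among them. For a fixed way of partitioning the $\ell$ special vertices into groups $G_1, \dots, G_r$ (with $r \le \ell/2$, each $|G_j| \ge 2$, $|G_j| \le k$) and a fixed choice of which $r$ hyperedges they go to, the probability that hyperedge $i_j$ contains exactly the vertex set $G_j \subseteq [\ell]$ (in some order) among its $k$ slots is at most $\binom{k}{|G_j|} |G_j|! / (n-k\alpha n)^{\downarrow |G_j|} \le (k^{|G_j|})/(n/2)^{|G_j|}$, and these are essentially independent across $j$ given the matching structure, so the joint probability is at most $\prod_j (2k/n)^{|G_j|} = (2k/n)^{\ell}$. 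Summing over the at most $(\alpha n)^r \le (\alpha n)^{\ell/2}$ choices of hyperedges and the at most (number of partitions of $[\ell]$ into blocks of size $\ge 2$) $\le \ell^{\ell/2} \cdot 2^\ell$ choices of grouping, we get
\[
g(\ell) \;\le\; \ell^{\ell/2} \, 2^\ell \, (\alpha n)^{\ell/2} \, (2k/n)^{\ell} \;=\; \left(\frac{4k^2 \alpha \ell}{n}\right)^{\ell/2} \cdot \text{(lower-order factors)},
\]
which, for $\alpha$ small enough in terms of $k$ (absorbing constants, e.g. $\alpha_0 = 1/(100k^2)$ or similar), is at most $(\ell/n)^{\ell/2}$ as claimed, using $\ell \le n/2$ to control the $(n-k\alpha n)$ denominators.

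The main obstacle I anticipate is the bookkeeping in the union bound: making sure the count of "groupings of $[\ell]$ into hyperedges" and the per-group probabilities are bounded by quantities that multiply to beat the $(\alpha n)^{\ell/2}$ factor from choosing the hyperedges, with enough slack to absorb the combinatorial prefactors ($\ell^{\ell/2}$, $2^\ell$, Stirling-type terms) into the final clean bound $(\ell/n)^{\ell/2}$. The key leverage is that the exponent $\ell/2$ (not $\ell$) appears because each hyperedge used soaks up at least two special vertices, so the number of hyperedges to choose is only $\ell/2$ — this is exactly where the hypothesis $|\vecs(i)| \neq 1$ is used, and it is what makes the sum over $\ell$ of $g(\ell) \sum_{|\vecv| = \ell} \widehat f(\vecv)^2$ converge when combined with the hypercontractivity bound of Lemma~\ref{lem:kkl} in the next subsection. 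Choosing $\alpha_0$ sufficiently small as a function of $k$ gives the room needed to absorb all the $k$- and constant-dependent factors.
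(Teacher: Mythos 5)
Your argument is essentially the same as the paper's: the key point in both is that the constraint $|\vecs(i)|\neq 1$ forces every hyperedge meeting $[\ell]$ to meet it in at least $2$ vertices, so $[\ell]$ must live inside at most $\ell/2$ hyperedges, and this $\ell/2$ (rather than $\ell$) in the exponent is what produces the bound $(\ell/n)^{\ell/2}$. The bookkeeping differs. The paper symmetrizes --- it fixes the matching to be $e_i = \{(i-1)k+1,\dots,ik\}$ and views $\vecv$ as a uniformly random weight-$\ell$ vector --- which lets it count directly: $\binom{\alpha n}{t}$ choices of touched edges times $\binom{kt}{\ell}$ ways to place $V$ inside them, divided by $\binom{n}{\ell}$, summed over $t\le \ell/2$. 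You instead hold $V=[\ell]$ fixed, randomize $M$, and union bound over (i) a partition of $[\ell]$ into blocks of size $\ge 2$, (ii) an injection of the blocks into hyperedge indices, and (iii) the per-block inclusion probabilities. Both are fine; the paper's version avoids having to bound the number of partitions of $[\ell]$ with all parts $\ge 2$ and avoids the joint-probability argument across hyperedges (you correctly note this factorizes by iterated conditioning, since each $e_{i_{j+1}}$ is drawn from the remaining $\ge n-k\alpha n$ vertices). Your stated bound $\ell^{\ell/2}\cdot 2^{\ell}$ on the number of such partitions is not quite the sharpest exponent of $2$, but any bound of the form $\ell^{\ell/2}C^{\ell}$ suffices, and such a bound does hold (e.g., choose one representative per block and a map from the remaining elements to blocks: $\sum_{r\le\ell/2}\binom{\ell}{r}r^{\ell-r}$, dominated by $r\approx\ell/2$). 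Plugging in gives $g(\ell)\le (C'\alpha k^2\ell/n)^{\ell/2}$, which is $\le(\ell/n)^{\ell/2}$ for $\alpha_0$ small in terms of $k$, matching the paper's conclusion (the paper uses $\alpha_0=(2e^2k)^{-k}$; your route needs roughly $\alpha_0\lesssim k^{-2}$). Note also that both you and the paper deliberately over-count: neither actually enforces the "each touched edge is touched $\ge 2$ times" condition exactly, only the consequence "at most $\ell/2$ edges are touched," which is all that is needed.
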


\begin{proof}
We set $\alpha_0= (1/(2e^2k))^k$ so that 
$2 \alpha_0 ^{1/k} e^{3/2} k \leq 1$. We reformulate our events. Instead of fixing $\vecv = \vecv_\ell$ and picking the matching $M$ at random, we note that it is equivalent to fixing the matching $M$ and letting $\vecv$ be a uniformly random vector of weight $\ell$. We thus let $M$ be the matching $e_1,\ldots,e_{\alpha n}$, where $e_i = \{(i-1)k + 1,\ldots,(i-1)k+k\}$. Letting $V$ denote the support of the vector $\vecv$, the event we wish to consider is: ``$V \subseteq [k \alpha n]$ and $|V \cap e_i|\neq 1$ for every $i \in [\alpha n]$.''

\begin{figure}[ht]
    \centering
    \includegraphics[width=6cm]{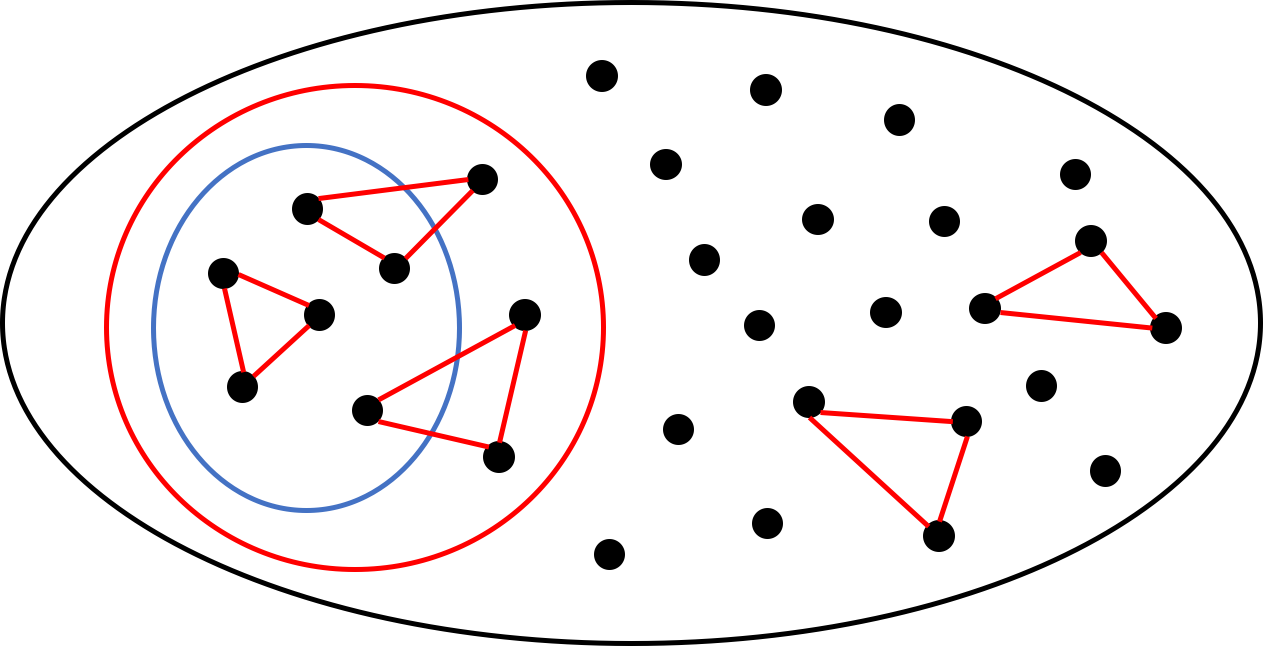}
    \caption{An example with $n=30$, $k=3$, $\alpha=0.5$, and $\ell=6$. The red triangles denote the edges $e_1,\dots,e_5$. The blue circle denotes the set $V$ and the red circle denotes the set $T$ with $t = 3$. Note that the figure illustrates the over-counting we do in the proof of the lemma - the set $V$ actually intersects one of the edges just once, and so should not be counted. Our counting will nevertheless include the set since it is contained in at most $\ell/2 = 3$ edges.}
    \label{fig:matching 1}
\end{figure}

We bound the probability as follows. Let $T = \{i \in [\alpha n]\mid  V\cap e_i \ne \emptyset\}$ denote the set of edges that touch $V$, and let $|T| = t$. Note that $\ell/k \leq t \leq \ell/2$, where the latter inequality follows from the fact that every intersection is of size at least $2$. We pick $V$ by first picking $T$ (there are at most 
$\binom{\alpha n}t$ ways of doing this), and then picking $V$ as a subset of the vertices incident to the edges of $T$ (there are $\binom{kt}\ell$ ways of doing this). (See \autoref{fig:matching 1}.)  Summing over $t$ and dividing by the total number of choices of $V$ gives the final bound. We give the calculations below (which use the inequalities $(a/b)^b \leq \binom{a}b \leq (ea/b)^b$). 
\begin{align*}   
& \Pr_{V} [ V \subseteq [k \alpha n], |V \cap e_i| \neq 1, \forall i\in [\alpha n] ] \\
& \leq  \frac{\sum_{t=\ell/k}^{\ell/2} \binom{\alpha n}t \binom{kt}\ell}{\binom{n}\ell} \\
& \leq \sum_{t=\ell/k}^{\ell/2} \left(\frac{e \alpha n}{t}\right)^t \cdot \left(\frac{e kt}{\ell}\right)^\ell  \cdot \left(\frac{n}{\ell}\right)^{-\ell} \\
& =  \sum_{t=\ell/k}^{\ell/2} e^{t+\ell} \alpha^t k^\ell (t/n)^{\ell-t} \\
& \leq  \alpha^{\ell/k} e^{3\ell/2} k^\ell (\ell/n)^{\ell/2} \sum_{t'=0}^{\infty} (\ell/n)^{t'}\\
& \leq  2 (\alpha^{1/k} e^{3/2} k)^\ell (\ell/n)^{\ell/2}\\
& \leq (2 \alpha^{1/k} e^{3/2} k)^\ell (\ell/n)^{\ell/2}\\
& \leq (\ell/n)^{\ell/2} \, .
\end{align*}

\end{proof}

\subsection{Proof of Theorem~\ref{thm:tvd 1 wise}}\label{sec:1 wise wrap up}
\begin{proof}[Proof of~\autoref{thm:tvd 1 wise}]

By~\autoref{lem:step 1 1 wise} and~\autoref{lem:step 2 1 wise}, we have
\begin{align*}
\Exp_M[\|p_{M,\cD}-U\|_{tvd}^2]&\leq\frac{2^{2n}}{|A|^2}\cdot\sum_{\ell=2}^{\alpha kn}\frac{\ell^{\ell/2}}{n^{\ell/2}} \sum_{\substack{\vecv\in\{0,1\}^n\\|\vecv|=\ell}}\widehat{f}(\vecv)^2 \, .
\end{align*}

We use \autoref{lem:kkl} to upper bound the sum of level-$\ell$ Fourier coefficients for small $\ell$ as follows. Let $c = \tau_0\sqrt{n}$ so that $|A| \geq 2^{n-c}$. 
For $\ell\in[4c]$, we have
\begin{align*}
\frac{2^{2n}}{|A|^2}\sum_{\substack{\vecv\in\{0,1\}^n\\|\vecv|=\ell}}\widehat{f}(\vecv)^2&\leq\left(\frac{4\sqrt{2}c}{\ell}\right)^\ell \, .
\end{align*}

Next, we apply Parseval's inequality (\autoref{prop:parseval}) and have $\sum_{\vecv}\widehat{f}(\vecv)^2\leq1$. Thus,
\begin{align*}
\Exp_M[\|p_{M,\cD}-U\|_{tvd}^2]
&\leq \sum_{\ell=2}^{4c}\frac{\ell^{\ell/2}}{n^{\ell/2}} \cdot\left(\frac{4\sqrt{2}c}{\ell}\right)^\ell+\frac{2^{2n}}{|A|^2}\cdot\max_{4c<\ell\leq\alpha kn}\left\{\frac{\ell^{\ell/2}}{n^{\ell/2}}\right\}
\intertext{The second term on the right hand side is maximized at $\ell=4c+1$, and hence}
\Exp_M[\|p_{M,\cD}-U\|_{tvd}^2]&\leq\sum_{\ell=2}^{4c}\left(\frac{32c^2}{\ell \cdot n}\right)^{\ell/2} + \left(\frac{8c}{n}\right)^{2c} \\
&\leq \sum_{\ell=2}^{4c} (16 \tau_0^2)^{\ell/2} + (8\tau_0)^{2c}\\
&\leq\delta^2 \, ,
\end{align*}
where the final expression determines our choice of $\tau_0$. Specifically, we set 
$\tau_0 = \delta/2^6$ so that each term is at most $\delta^2/2$. 
This completes the proof of~\autoref{thm:tvd 1 wise}.
\end{proof}
\section{Communication Lower Bound: General Case}\label{sec:BHBHM general}

In this section we finally prove \autoref{thm:communication lb matching moments}. In other words we
show that for every $\cD_Y,\cD_N\in \Delta(\{-1,1\}^k)$ with matching marginals, any protocol for $(\cD_Y,\cD_N)$-\textsf{RMD} with positive advantage requires $\Omega(\sqrt{n})$ bits of communication. We start with an overview.

The first step is to observe that we can prove indistinguishability of {\em some} distributions with matching non-zero marginals. For example, given that $\cD_1 = \textsf{Unif}(\{(-1,-1),(1,1)\}$ is indistinguishable from $\cD_2 = \textsf{Unif}(\{-1,1\}^2)$, it can also be shown that $\cD'_1 = \frac12\{(1,1)\} + \frac12\cD_1$ is indistinguishable from $\cD'_2 = \frac12\{(1,1)\} + \frac12\cD_2$ (see \autoref{lem:polarization indis} for a related statement). Note that $\cD'_1$ and $\cD'_2$ are distributions with non-zero but matching marginals.

The bulk of this section is devoted to proving that for every pair of distributions $\cD_Y$ and $\cD_N$, we can find a path (a sequence) of intermediate distributions $\cD_Y = \cD_0,\cD_1,\ldots,\cD_L = \cD_N$ such that adjacent pairs in this sequence are indistinguishable by a ``basic'' argument, where a basic argument is a combination of an indistinguishability result from~\autoref{thm:communication lb 1 wise} and a shifting argument formalized in \autoref{lem:polarization indis}. Our proof comes in the following steps:
\begin{enumerate}
    \item For every marginal vector $\vecmu$, we identify a {\em canonical} distribution $\cD_{\vecmu}$ that we use as the endpoint of the path. So it suffices to prove that for all $\cD$, $\cD$ is indistinguishable from $\cD_{\vecmu(\cD)}$, i.e., there is a path of finite length from $\cD$ to $\cD_{\vecmu(\cD)}$.
    \item We identify a measure $\Phi(\cD)$ associated with distributions that helps measure progress on a path. Among distributions with marginal $\vecmu(\cD)$, this measure is uniquely maximized by $\cD_{\vecmu(\cD)}$. We show that for every distribution $\cD$ that is not canonical one can take a basic step that increases $\vecmu(\cD)$. 
    Unfortunately the measure $\Phi$ is real-valued and the increases per step can be by arbitrarily small amounts, so we are not done.
    \item We give a combinatorial proof that there is a path of finite length (some function of $k$) that takes us from an arbitrary distribution to the canonical one.
\end{enumerate}
Putting the three ingredients together, along with a proof that a ``basic step'' is indistinguishable  gives us the final theorem.  

We start with the definition of the chain and the canonical distribution. For a distribution $\cD \in \Delta(\{-1,1\}^k)$, its support is the set  $\supp(\cD) = \{\veca\in \{-1,1\}^k\, |\, \cD(\veca) > 0\}$. Next, we consider the following partial order on $\{-1,1\}^k$. For vectors $\veca,\vecb \in \{-1,1\}^k$ we use the notation $\veca \leq \vecb$ if $a_i \leq b_i$ for every $i \in [k]$. Further we use $\veca < \vecb$ if $\veca \leq \vecb$ and $\veca \ne \vecb$.

\begin{definition}[Chain]\label{def:chain}
We refer to a sequence $\veca(0)< \veca(1) < \cdots < \veca(\ell)$, $\veca(i) \in \{-1,1\}^k$ for every $i \in \{0,\ldots,\ell\}$,  as a {\em chain} of length $\ell$. Note that chains in $\{-1,1\}^k$ have length at most $k$.
\end{definition}

\begin{definition}[Canonical distribution]\label{def:canonical}
Given a vector of marginals $\vecmu = (\mu_1,\ldots,\mu_k) \in [-1,1]^k$, the {\em canonical distribution} associated with $\vecmu$, denoted $\cD_{\vecmu}$, is defined as follows: Let $\rho:[k]\to[k]$ be a permutation such that $-1\leq \mu_{\rho(1)} \leq \cdots \leq \mu_{\rho(k)} \leq 1$. For $i \in \{0,\ldots,k\}$, let $\veca(i) \in \{-1,1\}^k$ be given by $\veca(i)_j = -1$ if $j \in \{\rho(1),\ldots,\rho(k-i)\}$ and $\veca(i)_j = 1$ otherwise. (Note that $\veca(0) < \cdots < \veca(k)$.) Then $\cD_{\vecmu}(\veca(i)) = \frac{1}{2}(\mu_{\rho(k-i+1)} - \mu_{\rho(k-i)})$, where we define $\mu_{\rho(0)} = -1$ and $\mu_{\rho(k+1)} = 1$. Finally, $\cD_{\vecmu}(\veca) = 0$ for all $\veca\notin \{\veca(0),\ldots,\veca(k)\}$. 
\end{definition}

It is easy to verify that $\cD_{\vecmu}$ is indeed a distribution, and that it has the desired marginals, i.e., $\vecmu(\cD_{\vecmu}) = \vecmu$. 
Note that a distribution is a canonical distribution if and only if its support is a chain. Furthermore, the canonical distribution is uniquely determined even though $\rho$, and hence the chain $\veca(0),\ldots,\veca(k)$, may not be uniquely determined. This is so since $\rho$ is non-unique only if $\mu_{\rho(i)} = \mu_{\rho(i+1)}$ for some $i$, and in this case $\cD_{\vecmu}(\veca(i)) = 0$ so the ``non-uniqueness of $\veca(i)$ does not affect $\cD_{\vecmu}$. 

Next we define a potential associated with distributions. For a distribution $\cD \in \Delta(\{-1,1\}^k)$ define its potential to be 
$$
\Phi(\cD) = \Exp_{\vecb\sim\cD}\left[ ~\left(\sum_{j\in[k]} b_j\right)^2 ~\right]\, .
$$ 
We will show shortly that $\cD_{\vecmu}$ is the distribution with maximum potential among all distributions with marginal $\vecmu$. In the process of showing this we will introduce a ``polarization operator'' which maps a distribution $\cD$ to a new one that increases the potential for typical distributions. Since this operator is useful also for further steps, we start with defining this operator and analyzing its effect on the potential.

\subsection{Polarization}
Briefly, suppose the support of a distribution contains both $(-1)^i(1)^{k-i}$ and $1^i(-1)^{k-i}$. Then the polarization operator moves some of this mass (as much as possible while maintaining the property that the result is a distribution) to the more ``polarized'' points $(-1)^k$ and $1^k$. The operator is defined more generally to allow the two starting points to agree on some coordinates. To define this operator, the following notation will be useful.

For $\vecu,\vecv \in \{-1,1\}^k$, let 
$\vecu \wedge \vecv = (\min\{u_1,v_1\},\ldots,\min\{u_k,v_k\})$ and let 
$\vecu \vee \vecv = (\max\{u_1,v_1\},\ldots,\max\{u_k,v_k\})$. We say $\vecu$ and $\vecv$ are incomparable if $\vecu \not\leq \vecv$ and $\vecv \not\leq \vecu$. Note that if $\vecu$ and $\vecv$ are incomparable then $\{\vecu,\vecv\}$ and $\{\vecu \vee \vecv,\vecu\wedge \vecv \}$ are disjoint\footnote{To see this, suppose $\vecu=\vecu\wedge\vecv$, then we have $u_j=\min\{u_j,v_j\}$ for all $j\in[k]$ and hence $\vecu\leq\vecv$, which is a contradiction. The same analysis works for the other cases.}.

\begin{definition}[Polarization (update) operator]\label{def:polarization operator}
Given a distribution $\cD \in \Delta(\{-1,1\}^k)$ and incomparable elements $\vecu,\vecv\in \{-1,1\}^k$, we define the $(\vecu,\vecv)$-{\em polarization} of $\cD$, denoted $\cD_{\vecu,\vecv}$, to be the distribution as given below. Let $\epsilon = \min\{\cD(\vecu),\cD(\vecv)\}$. 
\[
\cD_{\vecu,\vecv}(\vecb) =\left\{\begin{array}{ll}
\cD(\vecb)-\epsilon     & ,\ \vecb\in\{\vecu,\vecv\} \\
\cD(\vecb)+\epsilon       & ,\ \vecb \in \{\vecu \vee \vecv,\vecu\wedge \vecv \} \\
\cD(\vecb) & ,\ \text{otherwise.}
\end{array}\right.
\]
We refer to $\epsilon(\cD,\vecu,\vecv) = \min\{\cD(\vecu),\cD(\vecv)\}$ as the polarization amount.
\end{definition}

It can be verified that the polarization operator preserves the marginals, i.e., $\vecmu(\cD) = \vecmu(\cD_{\vecu,\vecv})$. Note also that this operator is non-trivial, i.e., $\cD_{\vecu,\vecv}=\cD$, if  $\{\vecu,\vecv\} \not\subseteq \supp(\cD)$.
By correlating the ``$+1$''s and ``$-1$''s, the polarization operator makes the support of $\cD$ more polarized in the sense quantified in the following lemma.

\begin{lemma}[Polarization increases potential]\label{lem:polarization}
Let $\cD\in \Delta(\{-1,1\}^k)$ be a distribution with marginal vector $\vecmu=\vecmu(\cD)$ and let $\vecu,\vecv\in \supp(\cD)$ be incomparable. Then we have
\[
\Phi(\cD_{\vecu,\vecv}) = \Phi(\cD) + 8 \cdot \epsilon \cdot s \cdot t 
\]
where $\epsilon = \epsilon(\cD,\vecu,\vecv)$ is the polarization amount, and $s = |\{ j \in [k]\, |\, u_j = -v_j = 1\}|$ and $t = |\{ j \in [k]\, |\, u_j = -v_j = -1\}|$. In particular $\Phi(\cD_{\vecu,\vecv}) > \Phi(\cD)$.
\end{lemma}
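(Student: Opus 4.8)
\textbf{Proof plan for Lemma~\ref{lem:polarization} (polarization increases potential).}
The plan is to compute $\Phi(\cD_{\vecu,\vecv}) - \Phi(\cD)$ directly from the definition of the polarization operator. Write $w(\vecb) = (\sum_{j\in[k]} b_j)^2$ for $\vecb \in \{-1,1\}^k$, so that $\Phi(\cD) = \sum_{\vecb} \cD(\vecb) w(\vecb)$. Since $\cD_{\vecu,\vecv}$ differs from $\cD$ only on the four points $\vecu, \vecv, \vecu\vee\vecv, \vecu\wedge\vecv$ (and these four points are genuinely four distinct points because $\vecu,\vecv$ are incomparable, by the footnote in Definition~\ref{def:polarization operator}), the difference telescopes to
\[
\Phi(\cD_{\vecu,\vecv}) - \Phi(\cD) = \epsilon \cdot \bigl( w(\vecu\vee\vecv) + w(\vecu\wedge\vecv) - w(\vecu) - w(\vecv) \bigr).
\]
So the whole lemma reduces to the purely arithmetic identity $w(\vecu\vee\vecv) + w(\vecu\wedge\vecv) - w(\vecu) - w(\vecv) = 8st$, where $s,t$ are as defined in the statement.

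To prove that identity, first I would introduce the right coordinate bookkeeping. Partition $[k]$ into four sets according to the pair of signs $(u_j, v_j)$: let $P = \{j : u_j = v_j = 1\}$, $N = \{j : u_j = v_j = -1\}$, $S = \{j : u_j = 1, v_j = -1\}$ (so $|S| = s$), and $T = \{j : u_j = -1, v_j = 1\}$ (so $|T| = t$). Let $p = |P|$ and $\nu = |N|$, so $p + \nu + s + t = k$. Then one reads off the four relevant sums of coordinates: $\sum u_j = p - \nu + s - t$, $\sum v_j = p - \nu - s + t$, $\sum (\vecu\vee\vecv)_j = p - \nu + s + t$ (the $\vee$ puts $+1$ on all of $S$ and $T$), and $\sum (\vecu\wedge\vecv)_j = p - \nu - s - t$. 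Writing $c = p - \nu$, we need $(c+s+t)^2 + (c-s-t)^2 - (c+s-t)^2 - (c-s+t)^2$. Expanding, the first pair gives $2c^2 + 2(s+t)^2$ and the second pair gives $2c^2 + 2(s-t)^2$, so the difference is $2(s+t)^2 - 2(s-t)^2 = 8st$. This establishes the identity, hence $\Phi(\cD_{\vecu,\vecv}) = \Phi(\cD) + 8\epsilon s t$.

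For the final ``in particular'' claim, $\Phi(\cD_{\vecu,\vecv}) > \Phi(\cD)$, I would note that $\epsilon = \min\{\cD(\vecu),\cD(\vecv)\} > 0$ since $\vecu,\vecv \in \supp(\cD)$, and that $s \geq 1$ and $t \geq 1$: indeed, since $\vecu \not\leq \vecv$ there is some coordinate $j$ with $u_j = 1 > v_j$, so $j \in S$ and $s \geq 1$; symmetrically $\vecv \not\leq \vecu$ forces $t \geq 1$. Hence $8\epsilon s t > 0$. I do not anticipate a real obstacle here — the only thing to be careful about is confirming the four points $\vecu,\vecv,\vecu\vee\vecv,\vecu\wedge\vecv$ are distinct so that the mass updates don't collide (otherwise the telescoping above would be wrong), and this is exactly the incomparability hypothesis, already justified in the paper's footnote.
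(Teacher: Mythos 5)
Your proof is correct and follows essentially the same route as the paper's: telescope the difference $\Phi(\cD_{\vecu,\vecv})-\Phi(\cD)$ to the four altered points, express the four coordinate sums in terms of $c=p-\nu$ (the paper's $\ell$), $s$, and $t$, and verify the algebraic identity $(c+s+t)^2+(c-s-t)^2-(c+s-t)^2-(c-s+t)^2=8st$. Your explicit partition $P,N,S,T$ and the check that the four points are distinct are both sound and just make the same argument slightly more careful.
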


\begin{proof}
We look at the difference $\Phi(\cD_{\vecu,\vecv}) - \Phi(\cD)$. Let $\ell = \sum_{j \in [k] : u_j = v_j} u_j$. We have:
\begin{align*}
\Phi(\cD_{\vecu,\vecv}) - \Phi(\cD) &= \sum_{\vecb \in \{-1,1\}^k} (\cD_{\vecu,\vecv}(\vecb)-\cD(\vecb)) \cdot \Phi(\vecb) \\
&= \epsilon\cdot (
\Phi(\vecu \wedge \vecv)
+ \Phi(\vecu \vee \vecv)
- \Phi(\vecu)
- \Phi(\vecv))\\
&= \epsilon\cdot ((\ell + s + t)^2 + (\ell - s - t)^2 - (\ell + s - t)^2 - (\ell - s + t)^2) \\
&= 8\cdot \epsilon \cdot s \cdot t \, .
\end{align*}
Finally note that $s,t > 0$ since $\vecu$ and $\vecv$ are incomparable, and $\epsilon > 0$ since $\vecu,\vecv \in \supp(\cD)$, thus yielding $\Phi(\cD_{\vecu,\vecv}) > \Phi(\cD)$.
\end{proof}

\begin{lemma}[$\cD_{\vecmu}$ maximizes potential]\label{lem:potential polarized}
For every distribution $\cD \in \Delta(\{-1,1\}^k)$ with $\vecmu = \vecmu(\cD)$ we have $\Phi(\cD) \leq \Phi(\cD_{\vecmu})$. Furthermore the inequality is strict if $\cD \ne \cD_{\vecmu}$. 
\end{lemma}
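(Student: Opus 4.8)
The plan is to show that starting from any distribution $\cD$ with marginal $\vecmu$, one can repeatedly apply polarization operators, each of which strictly increases $\Phi$ (by Lemma~\ref{lem:polarization}) unless the support is already a chain, and that this process terminates at $\cD_{\vecmu}$ — the unique distribution with marginal $\vecmu$ whose support is a chain. The key structural claim is: \emph{if $\supp(\cD)$ is not a chain, then there exist incomparable $\vecu,\vecv\in\supp(\cD)$, and hence $\Phi(\cD_{\vecu,\vecv}) > \Phi(\cD)$ with $\vecmu(\cD_{\vecu,\vecv}) = \vecmu$.} This is immediate from the definition of a chain: a chain is a totally ordered subset, so if $\supp(\cD)$ fails to be a chain there must be two incomparable elements in it.

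First I would establish that $\cD_{\vecmu}$ is the \emph{unique} distribution with marginal $\vecmu$ whose support is a chain. Existence is noted after Definition~\ref{def:canonical} (one checks $\vecmu(\cD_{\vecmu})=\vecmu$ by a telescoping sum). For uniqueness: a chain in $\{-1,1\}^k$ has the form $\veca(0) < \veca(1) < \cdots < \veca(\ell)$ where consecutive elements differ by flipping some coordinates from $-1$ to $+1$; for a distribution supported on such a chain, the marginal $\mu_j = \Exp[b_j]$ determines, for each coordinate $j$, the total probability mass on chain-elements having $b_j = +1$, namely $(1+\mu_j)/2$. Since the sets $\{i : \veca(i)_j = 1\}$ are nested upward intervals of the chain indices, these "threshold" constraints over all $j$ pin down the probabilities on each chain element uniquely (after sorting coordinates by $\mu_j$, one recovers exactly the formula in Definition~\ref{def:canonical}). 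I would also record that any chain of \emph{maximal} relevant length through the support, once the zero-mass points are discarded, must coincide with the canonical chain.

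Next I would run the termination argument. Define the sequence $\cD = \cD^{(0)}, \cD^{(1)}, \ldots$ where $\cD^{(i+1)} = (\cD^{(i)})_{\vecu,\vecv}$ for some incomparable pair $\vecu,\vecv \in \supp(\cD^{(i)})$ (chosen, say, greedily). Each step preserves the marginal and does not enlarge the support (polarization only moves mass from $\{\vecu,\vecv\}$ to $\{\vecu\wedge\vecv, \vecu\vee\vecv\}$, and at least one of $\vecu,\vecv$ leaves the support since $\epsilon = \min\{\cD(\vecu),\cD(\vecv)\}$). Termination needs a monotone integer quantity; here I would \emph{not} rely on $\Phi$ (real-valued, possibly decreasing gaps), but instead on a combinatorial potential such as $\sum_{\vecb \in \supp(\cD^{(i)})} \left(\sum_j b_j\right)^2$ paired with a tie-break, or more robustly on the argument already sketched in the overview: a basic step can be chosen to preserve earlier-established chain structure, and there is a path of length at most some function of $k$ (in fact the "ingredients (1)–(3)" in the section overview give exactly this — step (3) is the combinatorial bound of finite path length). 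I would invoke that finite-length path result: after finitely many polarization steps, the support becomes a chain, so the resulting distribution is $\cD_{\vecmu'}$ for $\vecmu' = \vecmu$, i.e., $\cD_{\vecmu}$. Along the way $\Phi$ was nondecreasing, and strictly increasing at every step where $\cD^{(i)} \ne \cD_{\vecmu}$ (such a step exists since a non-canonical $\cD^{(i)}$ has incomparable support elements); hence $\Phi(\cD) \le \Phi(\cD_{\vecmu})$ with equality iff no step was needed, i.e., iff $\cD = \cD_{\vecmu}$.

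The main obstacle is the termination/finiteness step: showing that the greedy polarization process actually halts, rather than cycling or running forever through ever-smaller mass transfers. Since $\Phi$ is real-valued this is not automatic. The cleanest route is to lean on the combinatorial path-length bound promised in ingredient (3) of the section overview (which produces a path of length depending only on $k$); alternatively, one argues directly that each polarization step strictly decreases the number of incomparable pairs in the support, or strictly increases $|\{(\vecu,\vecv) \in \supp^2 : \vecu \le \vecv\}|$, both of which are bounded integers — a careful case check that moving mass to $\vecu\wedge\vecv, \vecu\vee\vecv$ does not create new incomparabilities that were not already "dominated" is what makes this slightly delicate, but it is routine. With termination in hand, the monotonicity of $\Phi$ from Lemma~\ref{lem:polarization} and uniqueness of the chain-supported distribution close the argument.
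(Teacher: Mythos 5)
Your proposal takes a genuinely different route from the paper's, and that route has a real gap. The paper does \emph{not} run an iterated polarization process and prove termination. Instead it argues by compactness: the set of distributions in $\Delta(\{-1,1\}^k)$ with a fixed marginal vector $\vecmu$ is a closed bounded polytope and $\Phi$ is continuous, so a maximizer $\cD^*$ exists; if $\supp(\cD^*)$ contained an incomparable pair $\vecu,\vecv$, then \autoref{lem:polarization} gives $\Phi(\cD^*_{\vecu,\vecv}) > \Phi(\cD^*)$ while $\vecmu(\cD^*_{\vecu,\vecv}) = \vecmu$, contradicting maximality. Hence $\supp(\cD^*)$ is a chain, and the uniqueness of chain-supported distributions with a given marginal (which you argue correctly via the nested-threshold observation) gives $\cD^* = \cD_{\vecmu}$. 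Uniqueness of the maximizer yields the strictness clause. No termination argument is needed at all.

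Your route needs termination of greedy polarization, and neither fallback you offer works. Leaning on ``ingredient (3)'' (the finite path-length bound, i.e.\ \autoref{lem:polarization finite}) is circular in the paper's logical order: the proof of the \textsc{Polarize} algorithm's correctness (\autoref{claim:polarization correctness}) already invokes the uniqueness fact established inside \autoref{lem:potential polarized}, and the paper's finite-path argument is itself substantial (a bespoke recursive algorithm with a lexicographic progress measure), precisely because greedy termination is \emph{not} routine. Your second fallback --- that a polarization step strictly decreases the number of incomparable pairs in the support --- is false. A counterexample in $k=6$: take $\vecu = (1,1,-1,-1,1,-1)$, $\vecv = (-1,-1,1,1,1,-1)$ (incomparable), so $\vecu\wedge\vecv = (-1,-1,-1,-1,1,-1)$ and $\vecu\vee\vecv = (1,1,1,1,1,-1)$, and let $\vecw_1 = (-1,-1,-1,-1,-1,1)$, $\vecw_2 = (1,-1,-1,-1,-1,1)$. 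One checks that $\vecw_1 \le \vecw_2$, while each $\vecw_i$ is incomparable to all four of $\vecu,\vecv,\vecu\wedge\vecv,\vecu\vee\vecv$. Put small positive mass on $\vecu$ (less than on $\vecv$) and the rest on $\vecv,\vecw_1,\vecw_2$: before polarizing $(\vecu,\vecv)$ there are $5$ incomparable pairs, and after (with $\vecu$ leaving the support and $\vecu\wedge\vecv,\vecu\vee\vecv$ entering) there are $6$. The count can go \emph{up}. So without a genuinely new idea for termination, your argument does not close; replacing the iterative process with the one-shot compactness argument is the fix.
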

\begin{proof}
Let $\cD^*$ be a  distribution with marginal $\vecmu$ that maximized $\Phi(\cD)$. Suppose there exist incomparable $\vecu,\vecv\in\supp(\cD^*)$, then by~\autoref{lem:polarization} we have that $\Phi(\cD^*)<\Phi(\cD^*_{\vecu,\vecv})$ contradicting the maximality of $\cD^*$. It follows that there are no incomparable elements in $\supp(\cD^*)$, or in other words, $\supp(\cD^*)$ is a chain. We now show that this implies $\cD^* = \cD_{\vecmu}$.

More specifically we show that any distribution $\cD$ supported on a chain is uniquely determined by its marginal $\vecmu$. 
To see this, let $\rho:[k]\to[k]$ be a bijection such that $\mu_{\rho(j)} \leq \mu_{\rho(j+1)}$ for all $j$. Let $\tau_0 < \tau_1 < \cdots < \tau_\ell$ be the attainable values of $\vecmu$, i.e., $\{\tau\, |\, \exists j \in [k]\, s.t.\, \mu_j = \tau\} = \{\tau_0,\ldots,\tau_\ell\}$. For $0 \leq i \leq \ell$, let $\veca(i)$ be given by $\veca(i)_j = -1$ if $\mu_j \leq \tau_{\ell-i}$ and $\veca(i)_j = 1$ otherwise. Note that $\veca(0)<\cdots<\veca(\ell)$. It can be verified that $\supp(\cD^*) = \{\veca(0),\ldots,\veca(\ell)\}$, and $\cD^*(\veca(i))$ is uniquely defined for all $i$.

\begin{claim}
$\supp(\cD^*) = \{\veca(0),\ldots,\veca(\ell)\}$, and $\cD^*(\veca(i)) = (\tau_{\ell-i+1}-\tau_{\ell-i})/2$, where $\tau_{-1} = -1$ and $\tau_{\ell+1} = 1$.
\end{claim}
\begin{proof}
For the sake of contradiction, assume $\supp(\cD^*)=\{\veca'(0),\ldots,\veca'(\ell')\}\neq\{\veca(0),\ldots,\veca(\ell)\}$ where $\veca'(0)<\veca'(1)<\cdots<\veca'(\ell')$ is a chain. Let $0\leq i\leq\min\{\ell,\ell'\}$ be the smallest $i$ such that $\veca(i)\neq\veca'(i)$. Consider the following three situations: (i) $\veca(i)<\veca'(i)$, (ii) $\veca(i)>\veca'(i)$, and (iii) $\veca(i)$ and $\veca'(i)$ are incomparable.

For (i) and (iii), due to the construction of $\{\veca(0),\ldots,\veca(\ell)\}$ and the fact that $\{\veca'(0),\ldots,\veca'(\ell')\}$ is a chain, we have that for each $j,j'\in[k]$ with $\tau_{i-2}<\mu_j,\mu_{j'}\leq\tau_i$, $\veca'(i')_j=\veca'(i')_{j'}$ for all $0\leq i'\leq\ell'$. This implies that $\mu_j=\mu_{j'}$ which is a contradiction because there are two attainable values $\tau_i$ and $\tau_{i-1}$ lie in the interval $(\tau_{i-2},\tau_i]$. Similar argument also works for situation (ii).

We conclude that $\supp(\cD^*) = \{\veca(0),\ldots,\veca(\ell)\}$. It is immediate to see that $\cD^*(\veca(i))$ is uniquely defined for all $i$ by solving the following linear system.
\[
\vecmu=
\begin{bmatrix}
|&|&&|\\
\veca(0)&\veca(1)&\cdots&\veca(\ell)\\
|&|&&|\\
\end{bmatrix}
\begin{bmatrix}
\cD^*(\veca(0))\\\cD^*(\veca(1))\\\cdots\\\cD^*(\veca(\ell))
\end{bmatrix} \, .
\]
Note that by the construction of $\{\veca(0),\ldots,\veca(\ell)\}$, the matrix has full rank, and, hence, there is a unique solution. It can be verified that the solution is given by $\cD^*(\veca(i)) = (\tau_{\ell-i+1}-\tau_{\ell-i})/2$, where $\tau_{-1} = -1$ and $\tau_{\ell+1} = 1$.
\end{proof}

In summary, $\cD^*$ is uniquely determined by $\vecmu(\cD)$ and its support is a chain. This implies that $\cD^* = \cD_{\vecmu}$, so $\cD_{\vecmu}$ is the unique distribution that maximizes the potential.
\end{proof}

\subsection{Indistinguishability of a polarization update}
Our next observation is that for every distribution $\cD$ with incomparable elements $\vecu$, $\vecv$ in their support, $\cD$ is indistinguishable, in the RMD problem, from its $(\vecu,\vecv)$-polarization $\cD_{\vecu,\vecv}$.

\newcommand{\pivar}{\pi_{\mathrm{var}}}
\newcommand{\pimat}{\pi_{\mathrm{matching}}}

\begin{lemma}[Polarization update preserves indistinguishability]\label{lem:polarization indis}
Let $\alpha_0(k)$ be as given in \autoref{thm:communication lb 1 wise}.
Let $k\in\N$, $\alpha\in(0,\alpha_0)$, $\delta\in(0,1/2)$. Then for every distribution $\cD\in \Delta(\{-1,1\}^k)$ and incomparable $\vecu,\vecv\in\supp(\cD)$ there exists $\tau > 0$ and $n_0$ such that for every $n \geq n_0$ 
every protocol for $(\cD,\cD_{\vecu,\vecv})$-\textsf{RMD} achieving advantage $\delta$ on instances of length $n$ requires $\tau\sqrt{n}$ bits of communication. 
\end{lemma}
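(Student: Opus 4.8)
The plan is to reduce $(\cD,\cD_{\vecu,\vecv})$-\textsf{RMD} to the $1$-wise-independent case handled by \autoref{thm:communication lb 1 wise}. The key observation is that $\cD$ and $\cD_{\vecu,\vecv}$ differ only on the four points $\{\vecu,\vecv,\vecu\vee\vecv,\vecu\wedge\vecv\}$, and on these four points they are two different ``rotations'' of a common distribution on a $2$-dimensional sub-cube. Concretely, let $S = \{j : u_j = -v_j = 1\}$ and $T = \{j : u_j = -v_j = -1\}$ (nonempty by incomparability), and let $W = [k]\setminus(S\cup T)$ be the coordinates on which $\vecu$ and $\vecv$ agree. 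Pick one representative coordinate $i_S \in S$ and $i_T \in T$. On the coordinate pair $(i_S, i_T)$, the mass $\epsilon = \epsilon(\cD,\vecu,\vecv)$ that gets moved corresponds to replacing the pair-distribution $\textsf{Unif}(\{(1,-1),(-1,1)\})$ (the ``$\vecu$ vs.\ $\vecv$'' pattern) by $\textsf{Unif}(\{(1,1),(-1,-1)\})$ (the ``$\vecu\vee\vecv$ vs.\ $\vecu\wedge\vecv$'' pattern) — these are precisely the two distributions appearing in the Gavinsky–Kempe et al.\ / \cite{KKS} $k=2$ hardness result, and both have uniform marginals.

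First I would set up the reduction: given a protocol $\Pi$ for $(\cD,\cD_{\vecu,\vecv})$-\textsf{RMD} of length $n$ with advantage $\delta$, build a protocol $\Pi'$ for the $2$-uniform $(\cD'_Y,\cD'_N)$-\textsf{RMD} problem with $\cD'_Y = \textsf{Unif}(\{(1,1),(-1,-1)\})$, $\cD'_N = \textsf{Unif}(\{(1,-1),(-1,1)\})$ on instances of length $\alpha' n$ for a suitable constant $\alpha' = \alpha'(\alpha,k)$. Alice, given her $2$-coloring $\vecx'^* \in \{-1,1\}^{\alpha' n}$, embeds it into a coloring $\vecx^* \in \{-1,1\}^n$ by using $\vecx'^*$ for the $i_S$-coordinates of a $k$-hypermatching and using fresh public randomness to fill in the remaining coordinates (including the $i_T$-coordinates, which she sets as $-$(the corresponding $i_S$-value) so that the $(i_S,i_T)$ pattern is always $(1,-1)$-type relative to a reference, and the $W$-coordinates). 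Bob, given his $2$-hypermatching $M'$ and labels $\vecz'$, extends $M'$ to a $k$-hypermatching $M$ on $[n]$ (grouping each $2$-edge together with $k-2$ fresh vertices), and samples the labels on the extra coordinates: for $j \in W$ he draws from the conditional marginal of $\cD$ restricted to the support pattern, for $j\in S\setminus\{i_S\}$ and $j\in T\setminus\{i_T\}$ he copies appropriately, and crucially he uses $\vecz'$ (the masked bits from the $2$-player game) to determine the $(i_S,i_T)$-block, together with an independent coin flip that (with the right probabilities) mixes in the ``untouched'' mass $\cD-\cD_{\vecu,\vecv}$ vs.\ the ``touched'' part. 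The arithmetic needs to be chosen so that when $\vecz'$ comes from $\cD'_Y$ the induced distribution on Bob's $k$-labels is exactly $\cD_{\vecu,\vecv}$-RMD, and when it comes from $\cD'_N$ it is exactly $\cD$-RMD. Then $\Pi'$ simulates $\Pi$ on this transformed instance; the advantage is preserved (up to a constant loss absorbed into $\delta$), and the communication is the same. Finally, \autoref{thm:communication lb 1 wise} (applied to the $1$-wise pair $\cD'_Y,\cD'_N$ with parameter $\alpha'$ and advantage, say, $\delta/2$) forces $\tau\sqrt{\alpha' n} = \Omega(\tau\sqrt n)$ communication, which is the claimed bound after renaming constants.

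The main obstacle I anticipate is the bookkeeping in the reduction: getting Bob to sample a $k$-dimensional masking vector whose distribution interpolates correctly between $\cD$ and $\cD_{\vecu,\vecv}$ as a function of the $2$-dimensional hidden bit, while keeping Alice's input (the planted $2$-coloring) the *only* source of the planted structure and making sure the extra coordinates and extra vertices are generated from public/private randomness in a way that matches the true \textsf{RMD} distributions conditioned on $\vecx^*$. In particular one must verify that the distribution of $M$ (the extended hypermatching) is uniform among $k$-hypermatchings of the right size — this works because a uniform $2$-hypermatching on $\alpha' n$ vertices padded with uniformly chosen fresh $(k-2)$-tuples of the remaining vertices gives a uniform $k$-hypermatching, provided $\alpha'$ is chosen small enough — and that the masked labels on the $W$- and $(S\cup T)\setminus\{i_S,i_T\}$-coordinates can be generated by Bob without knowing $\vecx^*$ (they can, because on those coordinates the label pattern relative to $\vecx^*$ is a fixed function of which of the four points $\{\vecu,\vecv,\vecu\vee\vecv,\vecu\wedge\vecv\}$ is selected, and $\vecz'$ together with an independent coin determines that selection). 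Once the reduction is verified distribution-for-distribution, the communication lower bound is immediate from the $1$-wise case, and a triangle-inequality / data-processing argument of the type in \autoref{prop:tvd properties} handles the constant-factor advantage losses.
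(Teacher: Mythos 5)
Your high-level strategy is the right one — reduce to a $1$-wise-independent \textsf{RMD}, using the fact that $\cD$ and $\cD_{\vecu,\vecv}$ only differ on $\{\vecu,\vecv,\vecu\vee\vecv,\vecu\wedge\vecv\}$, and mix in the unchanged mass $(1-2\epsilon)\cD_0$ via a random interleaving — and the direction of the reduction is correct. But the specific choice to project down to a \emph{two}-coordinate RMD (positions $i_S,i_T$) creates a genuine gap that your ``copies appropriately'' step does not close. Consider a position $j\in S\setminus\{i_S\}$, occupied in Bob's extended hyperedge by a fresh vertex $w_j$ with publicly known color $x^*_{w_j}$. When $\vecb(i)\in\{\vecu,\vecv\}$ we have $b(i)_j=b(i)_{i_S}$, so Bob must output $z(i)_j = x^*_{w_j}\cdot b(i)_{i_S}$. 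However, Bob only sees the \emph{masked} bit $z'(i)_1 = x'^*_{u'}\cdot b'(i)_1 = x^*_{u'}\cdot b(i)_{i_S}$, which (since $x^*_{u'}$ is Alice's unknown uniform bit) is uniformly random and independent of $b(i)_{i_S}$ from Bob's point of view. So Bob cannot recover $b(i)_{i_S}$, and therefore cannot produce $z(i)_j$ consistently. Your closing claim that ``$\vecz'$ together with an independent coin determines that selection'' is exactly where this fails: the selection is masked by Alice's input, which is the entire point of the game.

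The fix — which is what the paper does — is to reduce from a $k'$-dimensional RMD where $k' = |\{j : u_j\neq v_j\}| = |S\cup T|$ (not $k'=2$), with $\cD'_Y = \textsf{Unif}(\{\vecu|_{S\cup T},-\vecu|_{S\cup T}\})$ and $\cD'_N = \textsf{Unif}(\{1^{k'},(-1)^{k'}\})$, both $1$-wise independent. Then \emph{every} coordinate of $S\cup T$ is occupied by an original vertex of the smaller game, and Bob gets those labels directly from $\vecz'$; fresh vertices appear only at positions in $W$, where $u_j=v_j$ and hence the mask coordinate is deterministic on $\{\vecu,\vecv,\vecu\vee\vecv,\vecu\wedge\vecv\}$, so Bob can compute those labels from public randomness. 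The $\cD_0$ mixing is then handled exactly as you describe, by interleaving a Bernoulli-chosen fraction of dummy constraints with publicly-sampled $\cD_0$ masks on fresh vertices. With this change, the rest of your outline (matching-distribution check, hybrid/triangle-inequality bookkeeping, appeal to \autoref{thm:communication lb 1 wise}) goes through.
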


We prove~\autoref{lem:polarization indis} by a reduction. We show that there exists a pair of distributions $\cD_Y$ and $\cD_N$ with marginals being zero such that given a protocol $\Pi$ for $(\cD,\cD_{\vecu,\vecv})$-\textsf{RMD}, we can get a protocol $\Pi'$ for $(\cD_Y,\cD_N)$-\textsf{RMD}. We then use \autoref{thm:communication lb 1 wise} to get a lower bound on the communication of $\Pi'$ and thus of $\Pi$.
Specifically, we divide the proof into three steps. In step one, we define $\cD_Y$ and $\cD_N$ and provide intuition on the reduction. Next, we formally describe the reduction by designing a protocol for $(\cD_Y,\cD_N)$-\textsf{RMD} from a protocol for $(\cD,\cD_{\vecu,\vecv})$-\textsf{RMD}. Finally, we prove the correctness of the reduction and wrap up the proof of~\autoref{lem:polarization indis}.

\paragraph{Step 1: The auxiliary distributions $\cD_Y$ and $\cD_N$.}
We start by defining $\cD_Y$ and $\cD_N$. Let $S=\{i\in[k]\, |\, u_i\neq v_i\}$.
Let $k'=|S|$.
Without loss of generality, we re-index the coordinates and assume $S=\{1,2,\dots,k'\}$. 
Let $\veca=\vecu|_{S}$ so that $\vecv|_{S}=-\veca$. We also let $\tilde{\vecu} = \vecu|_{\bar{S}}$ denote the common parts of $\vecu$ and $\vecv$. 
Let $\cD_Y$ be the  uniform distribution over $\{\veca,-\veca\}$, and $\cD_N$ be the  uniform distribution over $\{1^{k'},(-1)^{k'}\}$. Note that $\vecmu(\cD_Y) = \vecmu(\cD_N) = 0^{k'}$.
Let $\cD_1 = \textsf{Unif}(\{\vecu,\vecv\})$ and $\cD_2 = \textsf{Unif}(\{\vecu\vee \vecv,\vecu \wedge \vecv\})$.
Let $\epsilon = \epsilon(\cD,\vecu,\vecv)$ be the polarization amount. Let $\cD_0 \in \Delta(\{-1,1\}^k)$ be such that 
$\cD = (1-2\epsilon)\cD_0 + 2\epsilon \cD_1$. Note that $\cD_{\vecu,\vecv}=(1-2\epsilon)\cD_0 + 2\epsilon \cD_2$.

We give an informal idea now, before giving the (potentially notationally complex) details. 
The rough idea is that Alice and Bob first pad their inputs with lots of dummy variables (whose values are known to both) and expand the masks from $\cD_Y$ (or $\cD_N$) into masks that are from $\cD_1$ (respectively $\cD_2$). They then augment the sequence of masks from $\alpha n'$ to $\alpha n = \Omega(\alpha n'/\epsilon)$, injecting many random masks from $\cD_0$. This gives them an instance of $(\cD,\cD_{\vecu,\vecv})$-\textsf{RMD} to solve for which they use the protocol $\Pi$. It is not too hard to see all this can be done locally by Alice and Bob; and this is proved formally below.

\paragraph{Step 2: A reduction from $(\cD_Y,\cD_N)$-\textsf{RMD} to $(\cD,\cD_{\vecu,\vecv})$-\textsf{RMD}.}
Consider a protocol $\Pi = (\Pi_A,\Pi_B)$ for $(\cD,\cD_{\vecu,\vecv})$-\textsf{RMD} with parameter $\alpha \leq 1/(200k)$ using $C(n)$ bits of communication to achieve an advantage of $\delta$ on instances of length $n$. We let $n' = (k'\epsilon/k)n$ where $k'$ was chosen in the previous step. We also let $\alpha' = (2k/k')\alpha$ so that $\alpha' \leq 1/(100k')$. We use $\Pi$ to design a protocol $\Pi'$ for
$(\cD_Y,\cD_N)$-\textsf{RMD} with parameter $\alpha'$ achieving advantage of at least $\delta/2$ on instances of length $n'$ with communication $C'(n') = C(n)$. We conclude by \autoref{thm:communication lb 1 wise} that there exists a constant $\tau'$ such that $C(n) \geq \tau'\sqrt{n'} = \tau\sqrt{n}$, where $\tau = \tau' \sqrt{\epsilon k'/k} > 0$ as desired.

Our protocol $\Pi'$ uses shared randomness between Alice and Bob (while we assume $\Pi$ is deterministic). Let $n'' = kn'/k'$ so that $n = n''/(2\epsilon)$. Let $\alpha'' = \alpha'n'/n'' = k\alpha/k'$.
Recall that an instance of $(\cD_Y,\cD_N)$-\textsf{RMD} is determined by a four tuple $(\vecx',M',\vecz',\vecb')$ with
$\vecx' \in \{-1,1\}^{n'}$, $M' \in \{0,1\}^{k'\alpha' n' \times n'}$ and $\vecz',\vecb' \in \{-1,1\}^{k'\alpha' n'}$ with $\vecz' = M'\vecx' \odot \vecb'$. See~\autoref{fig:polarization indis 1} for a pictorial description.

\begin{figure}[ht]
    \centering
    \includegraphics[width=14cm]{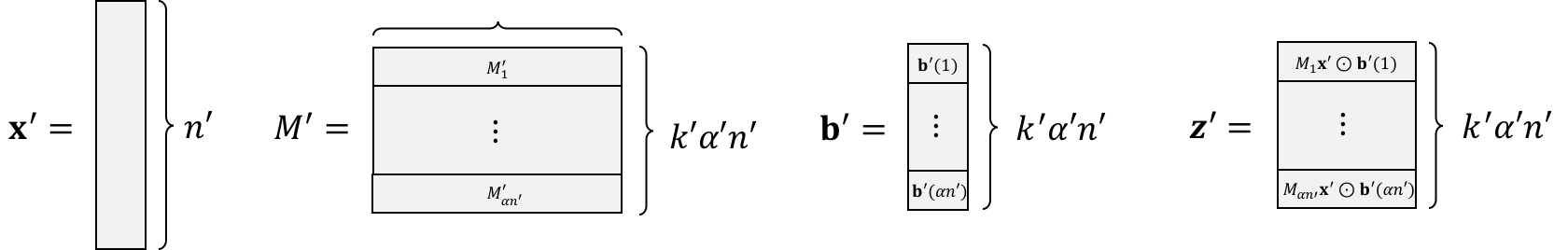}
    \caption{Pictorial description of $(\vecx',M',\vecb',\vecz')$.}
    \label{fig:polarization indis 1}
\end{figure}

We give two maps using shared randomness $R'$ and $R''$:
\begin{enumerate}[label=(\roman*)]
\item \textbf{From $(\cD_Y,\cD_N)$-\textsf{RMD} to $(\cD_1,\cD_2)$-\textsf{RMD}}: $(\vecx',M',\vecb',\vecz',R') \mapsto (\vecx'',M'',\vecb'',\vecz'')$ where $\vecx'' \in \{0,1\}^{n''}$,
$M'' \in \{0,1\}^{k\alpha'' n'' \times n''}$ and $\vecb'',\vecz'' \in \{-1,1\}^{k\alpha'' n''}$.
\item \textbf{From $(\cD_1,\cD_2)$-\textsf{RMD} to $(\cD,\cD_{\vecu,\vecv})$-\textsf{RMD}}: $(\vecx'',M'',\vecb'',\vecz'',R'') \mapsto (\vecx,M,\vecb,\vecz)$, where $\vecx \in \{0,1\}^{n}$,
$M \in \{0,1\}^{k\alpha n \times n}$ and $\vecb,\vecz \in \{-1,1\}^{k\alpha n}$.
\end{enumerate}
Before describing the two maps, let us first state the desired conditions.

\begin{reduction}{Success conditions for the reduction}
\begin{enumerate}[label=(\arabic*)]
\item \textbf{The reduction is locally well-defined.} Namely, there exist random strings $R'$ and $R''$ so that (i) Alice can get $\vecx$ through the maps $(\vecx',R') \mapsto \vecx''$ and $(\vecx'',R'') \mapsto \vecx$ while Bob can get $(M,\vecz)$ through the maps $(M',\vecz',R') \mapsto (M'',\vecz'')$ and $(M'',\vecz'',R'') \mapsto (M,\vecz)$.

\item \textbf{The reduction is sound and complete.} Namely, (i) $\vecz'' = M''\vecx'' \odot \vecb''$ and $\vecz = M\vecx \odot \vecb$. (ii) If $\vecb' \sim \cD_Y^{\alpha'n'}$ then $\vecb'' \sim \cD_1^{\alpha''n''}$ and $\vecb \sim \cD^{\alpha n}$. Similarly if $\vecb' \sim \cD_N^{\alpha'n'}$ then $\vecb'' \sim \cD_2^{\alpha''n''}$ and $\vecb \sim \cD_{\vecu,\vecv}^{\alpha n}$. (iii) $\vecx''\sim \textsf{Unif}(\{-1,1\}^{n''})$, $\vecx\sim \textsf{Unif}(\{-1,1\}^{n})$ and $M$ is a uniformly random matrix conditioned on having exactly one ``$1$'' per row and at most one ``$1$'' per column.
\end{enumerate}
\end{reduction}

In~\autoref{claim:polarization indis map 1} and~\autoref{claim:polarization indis map 2} we show that the above conditions hold except for an error event that occurs with tiny ($\exp(-n)$) probability. For now, let us show that these conditions imply the success of the reduction. Assuming conditions (1) and (2) the rest is simple. Alice computes $\vecx$ from $\vecx',R'$ and $R''$ and sends $m = \Pi_A(\vecx)$ to Bob, who computes $(M,\vecz)$ from $M',\vecz',R'$ and $R''$ and outputs $\Pi_B(m,M,\vecz)$. Conditions (1)-(2) combined with the bound on the error event imply that if $\Pi$ has advantage $\delta$ then $\Pi'$ has advantage at least $\delta - \exp(-n) \geq \delta/2$ as desired.

In the rest of this subsection, we describe the two maps and show that they satisfy the described success conditions. We wrap up the reduction and the proof of~\autoref{lem:polarization indis} in the end.

\paragraph{Step 3: Specify and analyze the first map.}
We now turn to specifying the maps mentioned above and proving that they satisfy conditions (1)-(2). We start with $(\vecx',M',\vecb',\vecz',R') \mapsto (\vecx'',M'',\vecb'',\vecz'')$. For this part, we let $R' \sim \textsf{Unif}(\{-1,1\}^{n''-n'})$. We set $\vecx'' = (\vecx',R')$. To get $M''$, $\vecz''$ and $\vecb''$ we need some more notations. First, note that $\alpha'n'=\alpha''n''$ due to the choice of parameters. Next, note that $M''$ can be viewed as the stacking of matrices $M'_1,\ldots,M'_{\alpha' n'} \in \{0,1\}^{k' \times n'}$. We first extend $M'_i$ by adding all-zero columns at the end to get $N''_i \in \{0,1\}^{k' \times n''}$. We then stack $N''_i$ on top of $P''_i \in \{0,1\}^{(k-k')\times n''}$ to get $M''_i$, where $(P''_i)_{j\ell} = 1$ if and only if $\ell = n' + (i-1)k + j$. See~\autoref{fig:polarization indis 3} for a pictorial description of $N''_i$ and $P''_i$. We let $M''$ be the stacking of $M''_1,\ldots,M''_{\alpha'' n''}$. Next we turn to $\vecb''$. Let $\vecb' = (\vecb'(1),\cdots,\vecb'(\alpha'' n''))$. Let $\tilde{\vecu} = (u_{k'+1},\ldots,u_{k})$ denote the common parts of $\vecu$ and $\vecv$. We let
$\vecb''(i) = (\vecb'(i),\tilde{\vecu})$ and $\vecb'' = (\vecb''(1),\cdots,\vecb''(\alpha'' n''))$. Finally we let $\vecz'' = M''\vecx'' \odot \vecb''$ as required. See~\autoref{fig:polarization indis 2} for a pictorial description.

\begin{figure}[ht]
    \centering
    \includegraphics[width=14cm]{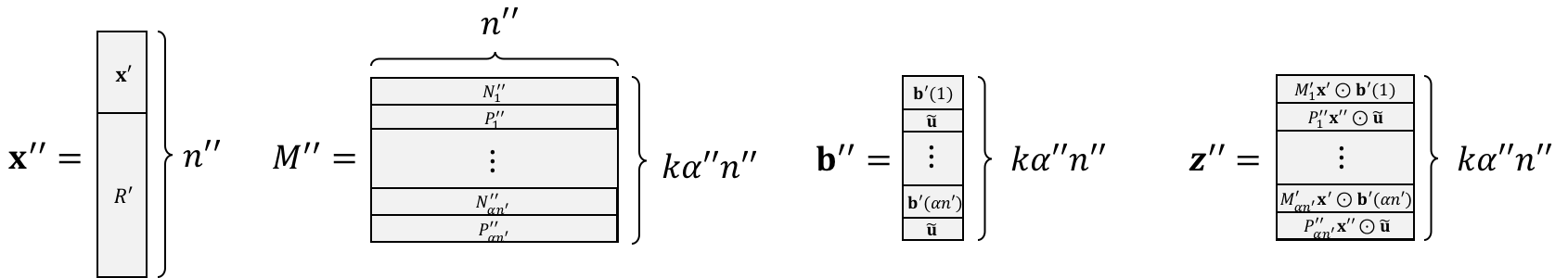}
    \caption{Pictorial description of $(\vecx'',M'',\vecb'',\vecz'')$.}
    \label{fig:polarization indis 2}
\end{figure}

Now, we verify that the first map satisfies the success conditions mentioned above.

\begin{claim}\label{claim:polarization indis map 1}
The first map in the reduction is locally well-defined, sound, and complete.
\end{claim}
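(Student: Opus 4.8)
The plan is to prove Claim~\ref{claim:polarization indis map 1} by a direct verification, checking each of the three required properties against the definitions of $\vecx''$, $M''$, $\vecb''$, and $\vecz''$; no new idea is needed beyond careful bookkeeping. For \textbf{local well-definedness}, Alice's output $\vecx'' = (\vecx',R')$ is manifestly a function of $\vecx'$ and the shared string $R'$. For Bob, I would first note that $M''$ depends only on $M'$ and the fixed dimensions: each $M''_i$ is $N''_i$ (the matrix $M'_i$ padded with all-zero columns to width $n''$) stacked on top of the deterministic block $P''_i$. The one point that genuinely needs an argument is that Bob can reconstruct $\vecz''$ \emph{without} knowing $\vecx'$: I would split each block $\vecz''(i) = (M''_i\vecx'')\odot\vecb''(i)$ into its first $k'$ and its last $k-k'$ coordinates. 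On the first $k'$ coordinates, since $N''_i$ agrees with $M'_i$ on columns $[n']$ and $\vecx''|_{[n']} = \vecx'$, we have $N''_i\vecx'' = M'_i\vecx'$, and $\vecb''(i)|_{[k']} = \vecb'(i)$, so this block equals $(M'_i\vecx')\odot\vecb'(i) = \vecz'(i)$, which Bob already holds. On the last $k-k'$ coordinates, $(P''_i\vecx'')_j$ picks out the coordinate $\vecx''_{n'+(i-1)k+j} = R'_{(i-1)k+j}$ of the shared string, and $\vecb''(i)|_{\overline S} = \tilde{\vecu}$ is a fixed problem parameter, so these coordinates are $R'_{(i-1)k+j}\cdot\tilde u_j$, again computable by Bob. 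Hence $(M'',\vecz'')$ is a function of $(M',\vecz',R')$ alone.

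For \textbf{soundness}, the identity $\vecz'' = M''\vecx''\odot\vecb''$ holds by construction — the block decomposition above is just a re-expression of it — so nothing beyond the recorded identities $M''_i = \bigl[\begin{smallmatrix}N''_i\\P''_i\end{smallmatrix}\bigr]$, $N''_i\vecx'' = M'_i\vecx'$, $\vecb''(i) = (\vecb'(i),\tilde{\vecu})$ is required. For \textbf{completeness}, $\vecx''$ is the concatenation of the independent uniform strings $\vecx'$ and $R'$, hence uniform on $\{-1,1\}^{n''}$. For the mask distribution I would use that, after the re-indexing fixing $S = \{1,\dots,k'\}$, one has $\vecu = (\veca,\tilde{\vecu})$, $\vecv = (-\veca,\tilde{\vecu})$, $\vecu\vee\vecv = (1^{k'},\tilde{\vecu})$, and $\vecu\wedge\vecv = ((-1)^{k'},\tilde{\vecu})$; therefore appending $\tilde{\vecu}$ to a draw from $\cD_Y = \textsf{Unif}(\{\veca,-\veca\})$ yields a draw from $\cD_1 = \textsf{Unif}(\{\vecu,\vecv\})$, and appending $\tilde{\vecu}$ to a draw from $\cD_N = \textsf{Unif}(\{1^{k'},(-1)^{k'}\})$ yields a draw from $\cD_2 = \textsf{Unif}(\{\vecu\vee\vecv,\vecu\wedge\vecv\})$. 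Independence across the $\alpha'' n'' = \alpha' n'$ blocks is inherited from that of the blocks of $\vecb'$.

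I do not expect a conceptual obstacle here; the claim is essentially bookkeeping. The one place requiring attention is the index arithmetic for the $P''_i$ blocks: one must check that the dummy columns $n'+(i-1)k+j$ (for $i\in[\alpha'' n'']$, $j\in[k-k']$) are pairwise distinct — they lie in the disjoint length-$k$ windows $\{(i-1)k+1,\dots,ik\}$, one per block — and all lie within $[n']+1,\dots,n''$, which follows from $n'' = kn'/k'$ together with the smallness of $\alpha$ (hence of $\alpha''$). This is the same routine parameter check that recurs in the companion analysis of the second map, and it is where one must be most careful to avoid off-by-one errors.
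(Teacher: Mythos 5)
Your argument is correct and is essentially the same as the paper's: both decompose each block of $\vecz''$ into the first $k'$ coordinates (which reproduce the block of $\vecz'$ that Bob already holds, since $N''_i\vecx''=M'_i\vecx'$) and the last $k-k'$ coordinates (which depend only on $R'$ and the fixed string $\tilde{\vecu}$), and both verify completeness by observing that appending $\tilde{\vecu}$ carries $\cD_Y\mapsto\cD_1$ and $\cD_N\mapsto\cD_2$. Your added sanity check on the index arithmetic for the $P''_i$ blocks is not in the paper's proof but is a harmless (and reasonable) extra verification, not a divergence in approach.
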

\begin{proof}
To see that the first map is locally well-defined, note that Alice can compute $\vecx''= (\vecx',R')$ locally. Similarly, Bob can compute $M''$ locally by construction. As for $\vecz''$, note that $\vecz''$ interleaves (in a predetermined order) the bits of $\vecz'$ and those of $(P_i\vecx'' \odot \tilde{\vecu})_{i\in [\alpha n']}$. Furthermore $P_i\vecx''$ depends only on $R'$ (since the first $n'$ columns of all $P_i$s are zero). Thus Bob can locally compute $P_i\vecx''$ for every $i$, and since $\tilde{\vecu}$ is also known Bob can compute $\vecz''$ locally.

To see the first map is sound and complete, (i) $\vecz''=M''\vecx''\odot\vecb''$ follows from the construction. As for (ii), for each $i\in[\alpha'n']=[\alpha''n'']$, if $\vecb'_i\sim\cD_Y=\unif(\{\veca,-\veca\})$, then $\vecb''_i\sim\unif(\{(\veca,\tilde{\vecu}),(-\veca,\tilde{\vecu})\})$. Note that $\veca$ is chosen to be the uncommon part of $\vecu$ and $\vecv$ and hence $(\veca,\tilde{\vecu})=\vecu$ and $(-\veca,\tilde{\vecu})=\vecv$. Thus, $\vecb''_i\sim\unif(\{\vecu,\vecv\})=\cD_1$ as desired. Similarly, one can show that if $\vecb'_i\sim\cD_N$, then $\vecb''_i\sim\cD_2$. Finally, we have $\vecx''\sim\unif(\{-1,1\}^{n''})$ by construction and hence (iii) holds.

This completes the proof of conditions (1)-(2) for the first step of the reduction.
\end{proof}

\begin{figure}[ht]
    \centering
    \includegraphics[width=15cm]{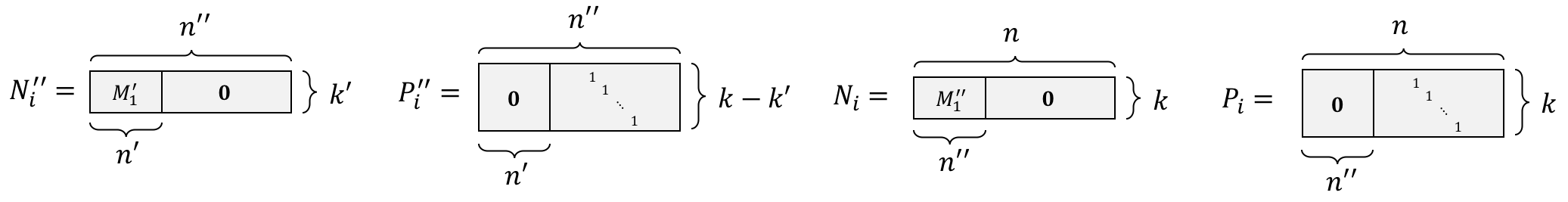}
    \caption{Pictorial description of $N''_i,P''_i,N_i,P_i$.}
    \label{fig:polarization indis 3}
\end{figure}

\paragraph{Step 4: Specify and analyze the second map.}
We now turn to the second map. Here $R''$ will be composed of many smaller parts which we introduce now.
Let $\vecy \sim \textsf{Unif}(\{-1,1\}^{n-n''})$, $\vecw \sim \bern(2\eps)^{\alpha n}$. Let $\Gamma \in \{0,1\}^{n\times n}$ be a uniform permutation matrix. Let $\vecc = (\vecc(1),\ldots,\vecc((n-n'')/k))$ where $\vecc(i) \sim \cD_0$ are chosen independently. We let $R'' = (\vecy,\vecw,\Gamma,\vecc)$. 
Let $\#_w(i) = |\{j \in [i]\, |\, w_j = 1\}|$ denote the number $1$'s among the first $i$ coordinates of $\vecw$. 
If $\#_w(\alpha n) \geq \alpha'' n''$ or if $\alpha n - \#_w(\alpha n) \geq (n-n'')/k$
we declare an error, Note $\Exp[\#_w(n)] = \alpha''n''/2$ so the probability of error is negligible (specifically it is $\exp(-n)$).

We now define the elements of $(\vecx,M,\vecb,\vecz)$.
We set $\vecx = \Gamma (\vecx'',\vecy)$ so $\vecx$ is a random permutation of the concatenation of $\vecx''$ and $\vecy$. Next, let $M'' = (M''_1,\ldots,M''_{\alpha''n''})$ where $M''_i \in \{0,1\}^{k \times n''}$. We extend $M''_i$ to $N_i \in \{0,1\}^{k \times n}$ by adding all-zero columns to the right. For $i \in \{1,\ldots,(n-n'')/k\}$, let $P_i \in \{0,1\}^{k \times n}$ be given by $(P_i)_{j\ell} =1$ if and only if $\ell = n'' +(i-1)k + j$. See~\autoref{fig:polarization indis 3} for a pictorial description of $N_i$ and $P_i$. Next we define a matrix $\tilde{M} \in \{0,1\}^{k\alpha n \times n} = (\tilde{M}_1,\ldots,\tilde{M}_{\alpha n})$ where $\tilde{M}_i \in \{0,1\}^{k \times n}$ is defined as follows: If $w_i = 1$ then we let $\tilde{M}_i = N_{\#_w(i)}$ else we let $\tilde{M}_i = P_{i - \#_w(i)}$. Finally we let $M = \tilde{M} \cdot \Gamma^{-1}$. Next we turn to $\vecb$. Again let $\vecb'' = (\vecb''(1),\ldots,\vecb''(\alpha''n''))$. We let $\vecb = (\vecb(1),\ldots,\vecb(\alpha n))$ where $\vecb(i)$ is defined as follows: If $w_i = 1$ then $\vecb(i)= \vecb''(\#_w(i))$, else $\vecb(i)= \vecc(i-\#_w(i))$. Finally, $\vecz = M\vecx \odot \vecb$. See~\autoref{fig:polarization indis 4} for a pictorial description. This concludes the description of the map and we turn to analyzing its properties. 

\begin{figure}[ht]
    \centering
    \includegraphics[width=15cm]{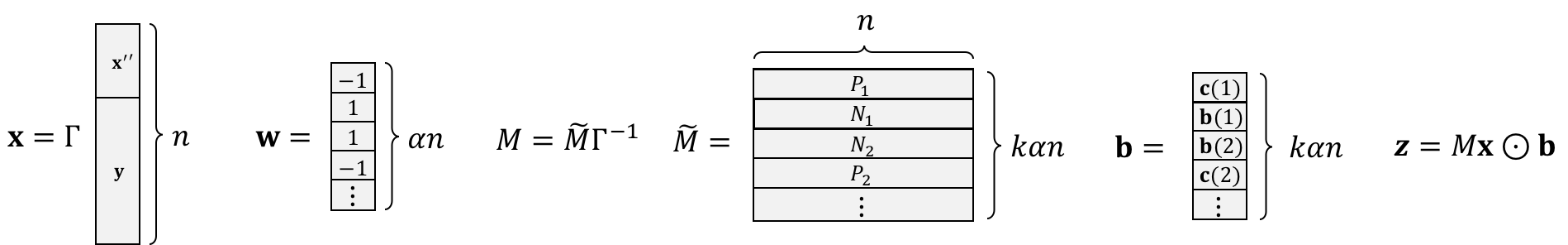}
    \caption{Pictorial description of $\vecx,\vecw,M,\vecb,\vecz$.}
    \label{fig:polarization indis 4}
\end{figure}

Now, we verify that the first map satisfies the success conditions mentioned above.
\begin{claim}\label{claim:polarization indis map 2}
If $\#_w(\alpha n) \leq \alpha'' n''$ and  $\alpha n - \#_w(\alpha n) \leq (n-n'')/k$, then the second map in the reduction is locally well-defined, sound, and complete. In particular, the error event happens with probability at most $\exp(-\Omega(n))$ over the randomness of $R''$.
\end{claim}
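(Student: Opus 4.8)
The goal is to verify the three success conditions for the second map $(\vecx'',M'',\vecb'',\vecz'',R'')\mapsto(\vecx,M,\vecb,\vecz)$, conditioned on the ``no error'' event $\#_w(\alpha n)\leq\alpha''n''$ and $\alpha n-\#_w(\alpha n)\leq(n-n'')/k$, and then to bound the probability of the error event itself. The plan is to go through the three conditions essentially by unwinding the definitions, since most of the work has been front-loaded into carefully choosing the parameters $n$, $n''$, $\alpha$, $\alpha''$ and the random string $R''=(\vecy,\vecw,\Gamma,\vecc)$.

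First I would check that the map is \emph{locally well-defined}. Alice's side is immediate: $\vecx=\Gamma(\vecx'',\vecy)$ is a function of $\vecx''$ and of $R''$ only (through $\vecy$ and $\Gamma$), so Alice, who holds $\vecx''$ and the shared randomness, can compute $\vecx$. For Bob's side, the matrix $\tilde M$ is assembled block-by-block: for each $i\in[\alpha n]$, the $i$-th block is $N_{\#_w(i)}$ (a zero-padded extension of the block $M''_{\#_w(i)}$, which Bob has) if $w_i=1$, and is the fixed block $P_{i-\#_w(i)}$ (which depends only on $i$ and $\vecw$) otherwise; the conditions $\#_w(\alpha n)\leq\alpha''n''$ and $\alpha n-\#_w(\alpha n)\leq(n-n'')/k$ are exactly what guarantees we never run out of $N$-blocks or $P$-blocks, so $\tilde M$ is well-defined, and $M=\tilde M\Gamma^{-1}$ is computable from $\vecw,\Gamma$ and $M''$. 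For $\vecz$: on blocks with $w_i=1$ we have $\vecz(i)=(M\vecx\odot\vecb)(i)$ where the relevant coordinates of $\vecx$ are exactly the ones recorded in $\vecz''$ (the $N_i$ blocks touch only the $\vecx''$-coordinates), so Bob reads these off $\vecz''$; on blocks with $w_i=0$ the block $P_i$ touches only coordinates of $\vecy$, which is shared randomness, and $\vecb(i)=\vecc(i-\#_w(i))$ is also shared, so Bob computes $\vecz(i)$ directly. Hence $(M,\vecz)$ is a function of Bob's input and $R''$.

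Next, \emph{soundness and completeness}. Part (i), $\vecz=M\vecx\odot\vecb$, holds by construction (it is literally how $\vecz$ was defined). Part (ii): conditioned on $\vecb''\sim\cD_1^{\alpha''n''}$, the blocks of $\vecb$ indexed by $\{i:w_i=1\}$ are a relabelling of the blocks of $\vecb''$ (hence i.i.d.\ $\cD_1$) and the blocks indexed by $\{i:w_i=0\}$ are the i.i.d.\ $\cD_0$ blocks $\vecc(\cdot)$; since each block is independently $\cD_1$ with probability $2\eps$ (i.e.\ $w_i=1$) and $\cD_0$ with probability $1-2\eps$, and $\cD=(1-2\eps)\cD_0+2\eps\cD_1$, we get $\vecb\sim\cD^{\alpha n}$ — here I would be slightly careful that conditioning on the ``no error'' event only removes a negligible-probability tail and does not bias the i.i.d.\ structure in a way that matters (one can phrase this as: the unconditioned $\vecb$ is exactly $\cD^{\alpha n}$, and we only lose the $\exp(-n)$ error mass). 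The $\cD_2$/$\cD_{\vecu,\vecv}$ case is identical using $\cD_{\vecu,\vecv}=(1-2\eps)\cD_0+2\eps\cD_2$. Part (iii): $\vecx=\Gamma(\vecx'',\vecy)$ with $(\vecx'',\vecy)$ uniform on $\{-1,1\}^n$ and $\Gamma$ an independent uniform permutation, so $\vecx\sim\unif(\{-1,1\}^n)$; and $M=\tilde M\Gamma^{-1}$ where $\tilde M$ has exactly one $1$ per row and at most one $1$ per column (the $N$-blocks inherit this from $M''$ after zero-padding, the $P$-blocks use fresh disjoint columns $n''+(i-1)k+j$, and the two families use disjoint column ranges) — right-multiplying by a uniform independent permutation matrix turns this into a uniform such matrix. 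Finally, for the error bound: $\#_w(\alpha n)$ is a sum of $\alpha n$ i.i.d.\ $\bern(2\eps)$ variables with mean $\alpha''n''/2=\eps\alpha n$ (using $n=n''/(2\eps)$, $\alpha''n''=\alpha'n'=2\eps\alpha n$, and checking the arithmetic), so both $\{\#_w(\alpha n)\geq\alpha''n''=2\eps\alpha n\}$ and $\{\alpha n-\#_w(\alpha n)\geq(n-n'')/k\}$ are deviations of constant relative size from the mean, hence have probability $\exp(-\Omega(n))$ by a Chernoff bound (\autoref{lem:our-azuma} suffices). I expect the routine arithmetic of matching up all the scaled parameters ($n,n',n'',\alpha,\alpha',\alpha''$, the ranges of column indices, and the count $(n-n'')/k$ of $P$-blocks) to be the only real place to slip, so I would lay out those identities explicitly at the top of the proof and then invoke them as needed; there is no conceptual obstacle beyond that bookkeeping.
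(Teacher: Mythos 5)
Your proposal follows the paper's proof essentially step for step: local well-definedness via the two bounds on $\#_w(\alpha n)$, soundness by unwinding how each block of $\vecb$ is routed from $\vecb''$ or $\vecc$, completeness for $\vecx$ and $M$ via the random permutation $\Gamma$, and a Chernoff bound for the error event. One warning about the ``routine arithmetic'' you promise to lay out: with the parameters as stated in Step~4 ($\vecw\sim\bern(2\eps)^{\alpha n}$ and $\alpha''n''=2\eps\alpha n$), the mean $\Exp[\#_w(\alpha n)]$ is $2\eps\alpha n=\alpha''n''$, \emph{not} $\alpha''n''/2=\eps\alpha n$ as you write; the error threshold then coincides with the mean and a Chernoff bound gives no $\exp(-\Omega(n))$ decay. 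This is actually a pre-existing inconsistency in the paper (Step~4 says $\bern(2\eps)$, which is what the mixture $\cD=(1-2\eps)\cD_0+2\eps\cD_1$ needs, while the proof of the enclosing lemma says $\bern(\eps)$, which is what the displayed error bound needs), and the fix is to enlarge $\alpha'n'$ by a constant factor rather than to change the Bernoulli parameter; but since you explicitly deferred to ``checking the arithmetic,'' you should check it and you'll hit this.
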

\begin{proof}
To see that the second map is locally well-defined, first note that Alice can compute $\vecx=\Gamma(\vecx'',\vecy)$ from $\vecx''$ and the shared randomness $R''$ locally. As for Bob, note that the maximum index needed for $N$ and $\vecb''$ (resp. $P$ and $\vecc$) is at most $\#_w(\alpha n)$ (resp. $\alpha n-\#_w(i)$). Namely, if $\#_w(\alpha n) \leq \alpha'' n''$ and $\alpha n - \#_w(\alpha n) \leq (n-n'')/k$, then $M$ and $\vecb$ are well-defined. Also, using similar argument as in the proof of~\autoref{claim:polarization indis map 1}, one can verify that $M$ and $\vecb$ can be locally computed by $M''$, $\vecb''$, and the shared randomness $R''$.

To see the second map is sound and complete, (i) $\vecz=M\vecx\odot\vecb$ directly follows from the construction. As for (ii), if $\vecb'\sim\cD_Y^{\alpha'n'}$, from~\autoref{claim:polarization indis map 1} we know that $\vecb''\sim\cD_1^{\alpha'n'}=\unif(\{\vecu,\vecv\})^{\alpha'n'}$. Now, for each $i\in[\alpha n]$, $\vecb(i)=\vecb''(\#_w(i))$ with probability $2\epsilon$ and $\vecb(i)=\vecc(i-\#_w(i))$ with probability $1-2\epsilon$. As $\vecb''(i')\sim\cD_1$ for every $i'\in[\alpha'n']$ and $\vecc(i'')\sim\cD_0$ for every $i''\in[(n-n'')/k]$, we have $\vecb(i)\sim(1-2\epsilon)\cD_0+2\epsilon\cD_1=\cD$ as desired. Similarly, one can show that for every $i\in[\alpha'n']=[\alpha''n'']$, if $\vecb'(i)\sim\cD_N^{\alpha'n'}$, then $\vecb(i)\sim\cD_{\vecu,\vecv}$. Finally, we have $\vecx\sim\unif(\{-1,1\}^n)$ and $M$ is a uniformly random matrix with exactly one ``$1$'' per row and at most one ``$1$'' per column (due to the application of a random permutation $\Gamma$) by construction.

This completes the proof of conditions (1)-(2) for the second step of the reduction.
\end{proof}

\paragraph{Step 5: Proof of~\autoref{lem:polarization indis}.}

\begin{proof}[Proof of~\autoref{lem:polarization indis}]
Let us start with setting up the parameters. Given $k, \alpha\in(0,\alpha_0),n,\cD$, and incomparable pair $(\vecu,\vecv)\in\supp(\cD)$ and polarization amount $\epsilon=\epsilon(\cD,\vecu,\vecv)$, let $k'=|\{i\in[k]\, |\, u_i\neq v_i\}|$, $n'=(k'\eps/k)n$, $\alpha'=(2k/k')\alpha$, $n''=kn'/k'$, $\alpha''=\alpha'n'/n''$, and $\delta'=\delta/2$.

Now, for the sake of contradiction, we assume that there exists a protocol $\Pi=(\Pi_A,\Pi_B)$ for $(\cD,\cD_{\vecu,\vecv})$-\textsf{RMD} with advantage $\delta$ and at most $\tau\sqrt{n}$ bits of communication.

First, observe that $n-n''=(1-\epsilon)n$ and $\alpha''n''=2\epsilon\alpha n$. As $\vecw\sim\bern(\epsilon)^{\alpha n}$, we have $\#_w(\alpha n) \leq \alpha'' n''$ and $\alpha n - \#_w(\alpha n) \leq (n-n'')/k$ with probability at least $1-\exp(-\Omega(n))$. Thus, combine with~\autoref{claim:polarization indis map 1} and~\autoref{claim:polarization indis map 2}, if $(\vecx',M',\vecz')$ is a Yes (resp. No) instance of $(\cD_Y,\cD_N)$-\textsf{RMD}, then the output of the reduction, i.e., $(\vecx,M,\vecz)$, is a Yes (resp. No) instance of $(\cD,\cD_{\vecu,\vecv})$-\textsf{RMD} with probability at least $1-\exp(-\Omega(n))$. Moreover,~\autoref{claim:polarization indis map 1} and~\autoref{claim:polarization indis map 2} also show that the reduction can be implemented locally and hence Alice and Bob can run the protocol $\Pi$ on $(\vecx,M,\vecz)$. In particular, Alice and Bob computes $\vecx$ and $(M,\vecz)$ using their inputs and shared randomness respectively. Then, Alice sends $m=\Pi_A(\vecx)$ to Bob and Bob outputs $\Pi_B(m,M,\vecz)$. By the correctness of the reduction as well as that of the protocol, we know that Alice and Bob have advantage at least $\delta-\exp(-\Omega(n))\geq\delta/2=\delta'$ in solving $(\cD_Y,\cD_N)$-\textsf{RMD} with at most $\tau\sqrt{n}=\tau\sqrt{(k/(k'\epsilon))n'}$ bits of communication.

Finally, by~\autoref{thm:communication lb 1 wise}, we know that there exists a constant $\tau_0>0$ such that any protocol for $(\cD_Y,\cD_N)$-\textsf{RMD} with advantage $\delta'$ requires at least $\tau_0\sqrt{n'}$ bits of communication. This implies that $\tau\geq\tau_0\sqrt{k'\epsilon/k}$. We conclude that any protocol for $(\cD,\cD_{\vecu,\vecv})$-\textsf{RMD} with advantage $\delta$ requires at least $\tau\sqrt{n}$ bits of communication.
\end{proof}

\subsection{Finite upper bound on the number of polarization steps}
In this section we prove that there is a finite upper bound on the number of polarization steps needed to move from a distribution $\cD \in \Delta(\{-1,1\}^k)$ to the canonical distribution with marginal $\vecmu(\cD)$, i.e., $\cD_{\vecmu(\cD)}$. Together with the indistinguishability result from \autoref{lem:polarization indis} this allows us to complete the proof of \autoref{thm:communication lb matching moments} by going from $\cD_Y$ to $\cD_{\vecmu(\cD_Y)}=\cD_{\vecmu(\cD_N)}$ and then to $\cD_N$ by using the triangle inequality for indistinguishability.

\newcommand{\vecA}{\mathbf{A}}
\newcommand{\cfk}{\cF(\{-1,1\}^k)}

In this section we extend our considerations to functions $A:\{-1,1\}^k \to \R^{\geq 0}$. Let $\cfk = \{A:\{-1,1\}^k \to \R^{\geq 0}\}$.
For $A \in \cfk$, let $\mu_0(A) = \sum_{\veca\in\{-1,1\}^k} A(\veca)$. 
Note $\Delta(\{-1,1\}^k) \subseteq \cfk$ and $A \in \Delta(\{-1,1\}^k)$ if and only if $A\in\cfk$ and $\mu_0(A) = \sum_{\veca \in \{-1,1\}^k} A(\veca) = 1$. We extend the definition of marginals, support, canonical distribution, potential and polarization operators to $\cfk$. In particular we let $\vecmu(A) = (\mu_0,\mu_1,\ldots,\mu_k)$ where $\mu_0 = \mu_0(A)$ and $\mu_j = \sum_{\veca \in \{-1,1\}^k} a_j A(\veca)$ for $j \in [k]$. We also define canonical function and polarization operators so as to preserve $\vecmu(A)$. So given arbitrary $A$, let $\cD = \frac{1}{\mu_0(A)}\cdot A$. Note $\cD\in \Delta(\{-1,1\}^k)$. For $\vecmu  = (\mu_0,\mu_1,\ldots,\mu_k) \in \R^{k+1}$, we define $A_{\vecmu} = \mu_0\cdot \cD_{\vecmu'}$ where $\vecmu' = (\mu_1/\mu_0,\ldots,\mu_k/\mu_0)$ to be the canonical function associated with $\vecmu$. 
We remark that by~\autoref{lem:polarization} and~\autoref{lem:potential polarized},  $A_{\vecmu(A)}$ is the unique function such that (i) it has the same marginals as $A$ and (ii) it supports a chain.

\begin{definition}[Polarization length]\label{def:polarization-length}
For distribution $A\in\cfk$, let $N(A)$ be the smallest~$t$ such that there exists a sequence $\vecA = A_0,A_1,\ldots,A_t$ such that $A_0 = A$, $A_t = A_{\vecmu(A)}$ is canonical and for every $i \in [t]$ it holds that there exists incomparable  $\vecu_i,\vecv_i \in \supp(A_{i-1})$ such that $A_i = (A_{i-1})_{\vecu_i,\vecv_i}$. If no such finite sequence exists then let $N(A)$ be infinite. Let $N(k) = \sup_{A\in\cfk}  \{N(A)\}$. Again, if $N(A) = \infty$ for some $A$ or if no finite upper bound exists, $N(k)$ is defined to be $\infty$. 
\end{definition}

Note that if $\cD \in \Delta(\{-1,1\}^k)$ so is every element in the sequence, so the polarization length bound below applies also to distributions.
Our main lemma in this subsection is the following:

\begin{lemma}[A finite upper bound on $N(k)$]\label{lem:polarization finite}
$N(k)$ is finite for every finite $k$. Specifically $N(k) \leq (k^2+3)(1+ N(k-1))$.
\end{lemma}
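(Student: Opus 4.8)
## Proof plan for Lemma on finite polarization length

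The plan is to prove the recursive bound $N(k) \leq (k^2+3)(1+N(k-1))$ by an induction on $k$, with base case $k = 1$ (where every function already supports a chain, since $\{-1,1\}^1$ is totally ordered, so $N(1) = 0$). For the inductive step, given an arbitrary $A \in \cfk$, I would first produce a bounded number of polarization steps that ``clear out'' one coordinate — say coordinate $k$ — in the sense of moving all the mass onto a sub-configuration on which coordinate $k$ behaves monotonically with respect to the others. Concretely, the idea is to force $\supp(A)$ into the union of a ``top face'' $\{\veca : a_k = 1\}$-piece and a ``bottom face'' $\{\veca : a_k = -1\}$-piece that are themselves aligned into chains, using at most $O(k^2)$ polarization updates. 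Each polarization update strictly increases $\Phi$, but more importantly I will argue that a carefully chosen sequence of updates strictly decreases a combinatorial complexity measure of the support (e.g., the number of incomparable pairs, or the number of ``bad'' coordinate-$k$ violations), so that after at most $k^2 + 3$ or so such updates we reach a configuration whose support, restricted by dropping coordinate $k$, is a disjoint union of (the top of) a chain and (the bottom of) a chain — at which point the problem essentially reduces to a $(k-1)$-dimensional instance.

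The key technical device is to observe that if $\vecu, \vecv \in \supp(A)$ are incomparable and we polarize, the support changes in a controlled way: we remove at least one of $\vecu, \vecv$ (whichever had smaller mass, or both if tied) and possibly add $\vecu \vee \vecv$ and $\vecu \wedge \vecv$, which are each strictly above/below both $\vecu$ and $\vecv$ in the partial order. So I would first handle coordinate $k$: repeatedly pick incomparable $\vecu, \vecv \in \supp(A)$ with $u_k = v_k$ (same value on coordinate $k$) and polarize them — these updates keep the coordinate-$k$ value fixed on the affected points and push the ``$a_k = 1$ part'' and ``$a_k = -1$ part'' each toward being a chain in the remaining $k-1$ coordinates. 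After finitely many such steps (here I will need to bound the count — this is where a monotone potential on the support structure is essential, and I would use something like ``number of incomparable pairs within each face is nonincreasing and strictly decreases'') the top face and bottom face are each chains. Then a bounded number ($O(k)$) of cross-face polarizations merges these two chains with the structure induced by coordinate $k$ into a single global chain; at this point the function is canonical.

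The honest accounting is the delicate part: I want the total number of polarization updates to be at most $(k^2 + 3)(1 + N(k-1))$, so I would organize it as: (a) at most $k^2 + 3$ ``outer'' phases; (b) in each outer phase, recursively invoke the $(k-1)$-dimensional bound to sort a face — formally, a face on which coordinate $k$ is constant is naturally an element of $\cF(\{-1,1\}^{k-1})$, and polarizing within that face corresponds exactly to a polarization update in dimension $k-1$, so it costs at most $N(k-1)$ updates; (c) the ``$+1$'' in the factor $(1 + N(k-1))$ absorbs the single cross-face or cleanup update that ends each phase. I would verify that each polarization update I invoke is legitimate (the chosen $\vecu, \vecv$ are genuinely incomparable and genuinely in the support), that the sequence terminates in the canonical function (using Lemma on $\cD_\vecmu$ maximizing potential / uniqueness of chain-supported functions with given marginals), and that the constant $k^2 + 3$ is large enough — I would prove $k^2 + 3 \geq$ (number of distinct coordinate values to separate) $+$ (a small additive slack), which is generous.

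The main obstacle I anticipate is getting a clean, provably terminating, and tightly-counted scheme for sorting a single face into a chain within the recursive budget — i.e., showing that the ``inner'' sorting of a $k-1$ dimensional face really does fit inside $N(k-1)$ polarization steps even though the face's total mass $\mu_0$ is not $1$ (which is why the paper works with $\cfk$ rather than just $\Delta(\{-1,1\}^k)$, and exactly why the definition of $N(A)$ is extended to $\cfk$ — so I would lean on that extension heavily). A secondary subtlety is ensuring that polarization steps performed within the ``$a_k=1$'' face do not disturb the ``$a_k=-1$'' face and vice versa (true, since such a polarization only touches points sharing the same coordinate-$k$ value and their meet/join, which again share that value), so the two faces can be sorted independently and then merged; I would state this disjointness explicitly as a small claim before the main counting argument.
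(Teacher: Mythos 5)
Your outline reproduces the paper's recursive skeleton: sort the two faces $\{x_k=-1\}$ and $\{x_k=1\}$ by recursion (which is exactly why the lemma is stated over $\cfk$ rather than only over distributions), then iterate a cross-face polarization interleaved with a recursive re-sort of the bottom face, and finish with a cleanup recursive sort on a sub-face. What your plan does not supply is any proof that the number of cross-face iterations is bounded by $k^2+3$ (or by any explicit function of $k$), and that bound is essentially all of the work in the paper's proof. Mid-proposal you assert that once both faces are chains, ``a bounded number ($O(k)$) of cross-face polarizations merges these two chains''; this ignores that each cross-face polarization re-injects $\vecu\wedge\vecv$ into the bottom face and destroys its chain structure, so one cannot perform several cross-face steps in a row, and there is no basis for the $O(k)$. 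In the accounting paragraph you instead allow ``$k^2+3$ outer phases'' but support this only with ``$k^2+3 \geq$ (number of distinct coordinate values to separate) $+$ (slack)'', which is not an argument.

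The surrogate potential you float --- ``number of incomparable pairs within each face is nonincreasing and strictly decreases'' --- is also not true as stated. A polarization removes one of $\vecu,\vecv$ from the support but inserts $\vecu\vee\vecv$ and $\vecu\wedge\vecv$, which can be incomparable to points that $\vecu$ was comparable to. For instance, with $\vecu=(1,-1,-1)$, $\vecv=(-1,1,-1)$, $\vecw=(1,-1,1)$ (embed in a face of higher dimension if you insist on the within-face version), the incomparable pairs among $\{\vecu,\vecv,\vecw\}$ are $(\vecu,\vecv)$ and $(\vecv,\vecw)$; polarize $(\vecu,\vecv)$ with $\vecv$ surviving and the support becomes $\{\vecv,\vecw,\vecu\vee\vecv,\vecu\wedge\vecv\}$, whose incomparable pairs are $(\vecv,\vecw)$ and $(\vecu\vee\vecv,\vecw)$ --- the count is unchanged. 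The paper's actual termination argument is much more targeted: in each while-loop iteration the cross-face update is applied to the lexicographically smallest pair of chain indices $(i_t,j_t)$ such that $\veca_t(i_t)\vee\vecb_t(j_t)=1^k$ with both endpoints supported, and a sequence of structural claims --- most importantly that there is a coordinate $h$ on which every supported point strictly below $\veca_t(i_t)$, and every non-maximal supported point of the top chain, equals $-1$, a property stable under the polarization and the re-sort --- forces $(i_t,j_t)$ to increase strictly in lexicographic order, capping the loop at $k^2$ iterations. That lexicographic-progress claim is the technical heart of the lemma and is absent from your plan.
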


We prove~\autoref{lem:polarization finite} constructively in the following four steps.

\paragraph{Step 1: Description of the algorithm {\sc Polarize}.}
Let us start with some notations. For $A\in\cF(\{-1,1\}^k)$ we let $A|_{x_\ell = b}$ denote the function $A$ restricted to the subcube $\{-1,1\}^{\ell-1} \times \{b\} \times \{-1,1\}^{k-\ell}$. Note that $A$ restricted to subcubes is effectively a $(k-1)$-dimensional function and we will use this reduction in dimension in our recursive algorithm. 

\begin{algorithm}[H]
	\caption{$\textsc{Polarize}(\cdot)$}
	\label{alg:polarization}
    \begin{algorithmic}[1]
	    \Input $A\in\cF(\{-1,1\}^k)$.
		\If{k=2}
		    \State {\bf Output:} $A_{(-1,1),(1,-1)}$.
		\EndIf
		\State $(A_0)|_{x_k = -1} \gets $ {\sc Polarize}$(A|_{x_k = -1})$ ; $(A_0)|_{x_k = 1} \gets $ {\sc Polarize}$(A|_{x_k = 1})$ ; $t \gets 0$.
		\State Let $(-1)^k = \veca_t(0) < \cdots < \veca_t(k-1) = (1^{k-1},-1)$ be a chain supporting $(A_t)|_{x_k = -1}$.
        \State Let $((-1)^{k-1},1) = \vecb_t(0) < \cdots < \vecb_t(k-1) = 1^k$ be a chain supporting $(A_t)|_{x_k = 1}$.
		\While{$\exists (i,j)$ with $j < k-1$ such that $\veca_t(i) \vee \vecb_t(j) = 1^k$ and $A_t(\veca_t(i)), A_t(\vecb_t(j))>0$}
		    \State Let $(i_t,j_t)$ be the lexicographically smallest such pair $(i,j)$.
		    \State $B_t \gets (A_t)_{\veca_t(i_t),\vecb_t(j_t)}$.
		    \State $(A_{t+1})|_{x_k = -1} \gets $ {\sc Polarize}($B_t|_{x_k = -1}$) ; $(A_{t+1})|_{x_k = 1} \gets (B_t)|_{x_k = 1}$.
		    \State $t\gets t+1$.
		    \State Let $(-1)^k = \veca_t(0) < \cdots < \veca_t(k-1) = (1^{k-1},-1)$ be a chain supporting $(A_t)|_{x_k = -1}$.
            \State Let $((-1)^{k-1},1) = \vecb_t(0) < \cdots < \vecb_t(k-1) = 1^k$ be a chain supporting $(A_t)|_{x_k = 1}$.
		\EndWhile
		\State Let $\ell \in [k]$ be such that for every $\veca \in \{-1,1\}^k \setminus \{1^k\}$ we have $A_t(\veca)>0 \Rightarrow a_\ell = -1$.
		\State $(A_{t+1})|_{x_\ell = -1} \gets ${\sc Polarize}$(A_t)|_{x_\ell = -1}$. $(A_{t+1})|_{x_\ell = 1} \gets (A_{t})|_{x_\ell = 1}$.
		\State {\bf Output:} $A_{t+1}$.
	\end{algorithmic}
\end{algorithm}

The goal of the rest of the proof is to show that~\autoref{alg:polarization} terminates after a finite number of steps and outputs $A_{\vecmu(A)}$.

\paragraph{Step 2: Correctness assuming {\sc Polarize} terminates.}

\begin{claim}[Correctness condition of {\sc Polarize}]\label{claim:polarization correctness}
For every $A\in\cF(\{-1,1\}^k)$, if $\textsc{Polarize}$ terminates, then $\textsc{Polarize}(A)=A_{\vecmu(A)}$. In particular, $\textsc{Polarize}(A)$ has the same marginals as $A$ and is supported on a chain.
\end{claim}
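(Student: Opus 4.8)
The plan is to establish two things: that whenever {\sc Polarize} reaches a line that outputs something, the output has the same marginals as the input and is supported on a chain; and that the output of {\sc Polarize} on a chain-supported function is the function itself. Since the canonical function $A_{\vecmu(A)}$ is (by \autoref{lem:polarization} and \autoref{lem:potential polarized}) the \emph{unique} element of $\cfk$ that has marginals $\vecmu(A)$ and is supported on a chain, these two facts together give $\textsc{Polarize}(A) = A_{\vecmu(A)}$.

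First I would handle the marginal-preservation claim. The only operations the algorithm performs on $A$ are: (i) applications of the polarization operator $(\cdot)_{\vecu,\vecv}$, which preserve $\vecmu$ by the remark following \autoref{def:polarization operator}; and (ii) recursive calls of {\sc Polarize} applied to restrictions $A|_{x_\ell = b}$, whose outputs by the inductive hypothesis have the same $(k-1)$-dimensional marginals as the corresponding restriction. A short lemma — proved by summing over the subcube — shows that replacing $A|_{x_\ell=-1}$ and/or $A|_{x_\ell=1}$ by functions with matching $(k-1)$-dimensional marginals does not change $\vecmu(A)$ (the $\mu_0$ and the $\mu_j$ for $j\neq \ell$ add up blockwise, and $\mu_\ell$ is $\mu_0$ of the $x_\ell=1$ block minus $\mu_0$ of the $x_\ell=-1$ block, each of which is preserved). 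Since the base case $k=2$ applies a single polarization operator, the marginal-preservation claim propagates up the recursion by induction on $k$.

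Next I would establish that the final output $A_{t+1}$ is supported on a chain. This is where I would lean on the structure the algorithm is designed to produce: by induction on $k$, the recursive calls guarantee that both $(A_t)|_{x_k=-1}$ and $(A_t)|_{x_k=1}$ are supported on chains (of length $k-1$ in a $(k-1)$-cube), which is why lines 5--6 and 11--12 can assert the existence of the chains $\veca_t(\cdot)$ and $\vecb_t(\cdot)$. The while-loop then polarizes across the two halves until no pair $(\veca_t(i),\vecb_t(j))$ with $j<k-1$ has $\veca_t(i)\vee\vecb_t(j) = 1^k$ and both masses positive. One argues that at loop exit, every $\veca\neq 1^k$ in $\supp(A_t)$ shares a common ``$-1$'' coordinate $\ell$ (line 14): the points of $\supp(A_t)\cap\{x_k=1\}$ other than $1^k$ all lie below $\vecb_t(k-2)<1^k$, which has some coordinate $\ell$ equal to $-1$; the exit condition forces the points in $\supp(A_t)\cap\{x_k=-1\}$ also to have that coordinate $=-1$ (otherwise some $\veca_t(i)$ would join with $\vecb_t(k-2)$ or an earlier $\vecb_t(j)$ to form $1^k$). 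Once this coordinate $\ell$ is isolated, $A_t|_{x_\ell=1}$ is just the single point $1^k$ (already a chain) and $A_t|_{x_\ell=-1}$ is polarized by a final recursive call into a chain in the subcube $\{x_\ell=-1\}$; prepending nothing and noting every support point other than $1^k$ lies in $\{x_\ell=-1\}$, the union is a chain in $\{-1,1\}^k$. Combined with the base case, induction on $k$ finishes the support-is-a-chain claim, and hence the whole correctness claim.

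The main obstacle I anticipate is the careful combinatorial bookkeeping at loop exit — verifying that the exit condition of the while-loop really does force a common ``$-1$'' coordinate among all non-$1^k$ support points, and that the subsequent restriction to $x_\ell=-1$ together with the singleton $\{1^k\}$ glues into an honest chain rather than something with an incomparable pair straddling the two halves. This requires tracking exactly which elements can remain in $\supp(A_t)$ after the polarizations, using the fact that the $\veca_t(\cdot)$ and $\vecb_t(\cdot)$ are chains and that polarization only moves mass ``upward'' to $\vecu\vee\vecv$ and ``downward'' to $\vecu\wedge\vecv$. (Note that termination of {\sc Polarize} is \emph{not} part of this claim — it is assumed in the hypothesis and proved separately, presumably via the potential $\Phi$ and the bound in \autoref{lem:polarization finite} — so here I only need the conditional statement.)
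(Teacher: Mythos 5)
Your high-level plan matches the paper's: establish that the output preserves marginals and is supported on a chain, then invoke uniqueness of the canonical function. The marginal-preservation argument is fine (and in fact more explicit than the paper, which elides the fact that the recursive calls on subcubes preserve marginals of the full function — your blockwise-summation lemma is exactly the right observation). However, the chain-support argument has a gap in how you identify the coordinate $\ell$ at line~14.

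You choose $\ell$ as the unique $-1$ coordinate of $\vecb_t(k-2)$, arguing that the loop-exit condition forces every positive-mass point on the $x_k=-1$ side to also have $\ell$-th coordinate $-1$, ``otherwise some $\veca_t(i)$ would join with $\vecb_t(k-2)$ or an earlier $\vecb_t(j)$ to form $1^k$.'' This fails when $A_t(\vecb_t(k-2))=0$, because then the pair $(\veca_t(i),\vecb_t(k-2))$ is not a candidate for the loop condition, and an \emph{earlier} $\vecb_t(j)$ has additional $-1$ coordinates that $\veca_t(i)$ need not cover, so $\veca_t(i)\vee\vecb_t(j)$ need not be $1^k$ even when $\veca_t(i)_\ell=1$. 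Concretely, take $k=3$ and $A$ with mass only at $(-1,1,-1)$ and $(-1,-1,1)$. The recursive calls are no-ops, the while-loop exits at $t=0$ (since $(-1,1,-1)\vee(-1,-1,1)=(-1,1,1)\ne 1^3$), and the correct $\ell$ at line~14 is $\ell=1$. But if the chain for $(A_0)|_{x_3=1}$ is chosen as $(-1,-1,1)<(1,-1,1)<(1,1,1)$, then $\vecb_0(k-2)=(1,-1,1)$ yields $\ell^*=2$, and $(-1,1,-1)_2=1$, so your coordinate does not work. Note also that your $\ell^*$ depends on which supporting chain is picked (which is not uniquely determined when some chain elements have zero mass), whereas the paper's choice is canonical.

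The fix is to pick $\ell$ from the top \emph{positive-mass} elements, as the paper does: let $m$ be the largest index with $A_t(\veca_t(m))>0$ and $n<k-1$ the largest index with $A_t(\vecb_t(n))>0$. The exit condition directly gives $\veca_t(m)\vee\vecb_t(n)\ne 1^k$, so there is $\ell$ with $\veca_t(m)_\ell=\vecb_t(n)_\ell=-1$, and by chain-support every positive-mass $\vecc\ne 1^k$ in either half lies below $\veca_t(m)$ or $\vecb_t(n)$ and hence has $c_\ell=-1$. Once $\ell$ is fixed this way, your argument that the final recursive call on $\{x_\ell=-1\}$ plus the singleton $\{1^k\}$ glues to a chain (since $1^k$ dominates everything) goes through, as does the rest of your proof. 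Also, the extra ingredient you list in the plan — ``the output of {\sc Polarize} on a chain-supported input is the input itself'' — is not needed; the marginal-preservation and chain-support facts, together with uniqueness, already pin down the output.
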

\begin{proof}
First, by the definition of the polarization operator (\autoref{def:polarization operator}), the marginals of $A_t$ are the same for every $t$. So in the rest of the proof, we focus on inductively showing that if {\sc Polarize} terminates, then $\textsc{Polarize}(A)$ is supported on a chain.

For the base case where $k=2$, we always have $\textsc{Polarize}(A)=A_{(-1,1),(1,-1)}$ supported on a chain as desired. 

When $k>2$, note that when the algorithm enters the Clean-up stage, if we let $m$ and $n$ denote the largest indices such that $A_t(\veca_t(m)), A_t(\vecb_t(n)) > 0$ and $ A_t(\vecb_t(n))\neq 1^k$, then the condition that $\veca_t(m)\vee \vecb_t(n) \ne 1^k$ implies that there is a coordinate $\ell$ such that $\veca_t(m)_\ell = \vecb_t(n)_\ell = -1$. Since every $\vecc$ such that $A_t(\vecc) > 0$ and $c_k = -1$ satisfies $\vecc \leq \veca_t(m)$, we have $A_t(\vecc) > 0$ implies $c_\ell = -1$. Similarly for every $\vecc\ne 1^k$ such that $c_k = 1$, we have $A_t(\vecc)>0$ implies $c_\ell = -1$.
We conclude that $A_t$ is supported on $\{1^k\} \cup \{\vecc\, |\, c_\ell = -1\}$. 
Thus, by the induction hypothesis, after polarizing the subcube $x_\ell = -1$ and leaving the subcube $x_\ell = 1$ unchanged, we get that the resulting function $A_{t+1}$ is supported on a chain as desired and complete the induction. We conclude that if {\sc Polarize} terminates, we have $\textsc{Polarize}(A)=A_{\vecmu(A)}$.
\end{proof}

\paragraph{Step 3: Invariant in {\sc Polarize}.} 
Now, in the rest of the proof of~\autoref{lem:polarization finite}, the goal is to show that for every input $A$, the number of iterations of the while loop in~\autoref{alg:polarization} is finite.
The key claim (\autoref{clm:key-terminate}) here asserts that the sequence of pairs $(i_t,j_t)$ is monotonically increasing in lexicographic order. Once we establish this claim, it follows that there are at most $k^2$ iterations of the while loop and so $N(k)  \leq (k^2 + 3)\cdot (1 + N(k-1))$, proving \autoref{lem:polarization finite}.
Before proving \autoref{clm:key-terminate}, we establish the following properties that remain invariant after every iteration of the while loop.

\begin{claim}\label{clm:chain-supp}
For every $t\ge0$, we have  $\forall b\in\{-1,1\}$, $(A_t)|_{x_k=b}$ is supported on a chain.
\end{claim}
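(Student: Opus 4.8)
\textbf{Proof plan for Claim~\ref{clm:chain-supp}.}

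The plan is to prove this by induction on $t$, tracking what the while loop and the recursive calls do to the two subcube-restrictions $(A_t)|_{x_k = -1}$ and $(A_t)|_{x_k = 1}$. For the base case $t = 0$, the algorithm sets $(A_0)|_{x_k = -1} \gets \textsc{Polarize}(A|_{x_k=-1})$ and $(A_0)|_{x_k = 1} \gets \textsc{Polarize}(A|_{x_k=1})$, so by the correctness condition for \textsc{Polarize} applied in dimension $k-1$ (\autoref{claim:polarization correctness}, valid by the outer induction on $k$, and using that \textsc{Polarize} terminates on inputs of dimension $k-1$), both restrictions are supported on chains. Here one must also observe that the chains live in the correct sub-intervals: since $(A_0)|_{x_k=-1}$ is a $(k-1)$-dimensional canonical function, its support chain runs from $(-1)^k$ up to $(1^{k-1},-1)$ — formally, its minimal and maximal support points are determined by the marginals, and the $x_k = -1$ coordinate is frozen — and similarly $(A_0)|_{x_k=1}$ is supported on a chain from $((-1)^{k-1},1)$ to $1^k$. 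This is exactly what lines 5--6 of the algorithm record.

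For the inductive step, I would handle the two ways $A_{t+1}$ can be produced. First, inside the while loop (line~9): we form $B_t = (A_t)_{\veca_t(i_t),\vecb_t(j_t)}$ by a single polarization update on incomparable elements $\veca_t(i_t)$ (with last coordinate $-1$) and $\vecb_t(j_t)$ (with last coordinate $1$), and the pair $(\veca_t(i_t)\wedge\vecb_t(j_t), \veca_t(i_t)\vee\vecb_t(j_t))$ receives the moved mass. The key point is that $\veca_t(i_t)\vee\vecb_t(j_t) = 1^k$ (this is the while-loop condition), which has last coordinate $1$, and $\veca_t(i_t)\wedge\vecb_t(j_t)$ has last coordinate $-1$. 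Hence $B_t|_{x_k=1}$ differs from $A_t|_{x_k=1}$ only by adding mass at $1^k$ and removing mass at $\vecb_t(j_t)$ — both already in the support chain of $A_t|_{x_k=1}$ by the induction hypothesis — so $B_t|_{x_k=1}$ is still supported on (a sub-chain of) that chain. Then line~9 sets $(A_{t+1})|_{x_k=1} \gets B_t|_{x_k=1}$, which is therefore chain-supported, and $(A_{t+1})|_{x_k=-1} \gets \textsc{Polarize}(B_t|_{x_k=-1})$, which is chain-supported by \autoref{claim:polarization correctness} in dimension $k-1$. Second, for the clean-up step (line~16), after the while loop $A_t$ is supported on $\{1^k\} \cup \{\vecc : c_\ell = -1\}$ for the coordinate $\ell$ chosen in line~14 (this is established in the proof of \autoref{claim:polarization correctness}); the algorithm polarizes the subcube $x_\ell = -1$ and leaves $x_\ell = 1$ untouched. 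To conclude that the output $A_{t+1}$ still has both $x_k$-restrictions chain-supported, I would note that polarizing within $x_\ell = -1$ produces a canonical function there, which in particular, when further restricted to $x_k = b$, is chain-supported (a restriction of a chain-supported function to a subcube is chain-supported); and the $x_\ell = 1$ part is a single point $1^k$, which is trivially a chain, and its restriction to $x_k = 1$ is $\{1^k\}$ while its restriction to $x_k = -1$ is empty.

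The main obstacle I anticipate is bookkeeping the fact that the polarization update at line~8 uses elements \emph{from the recorded chains} $\{\veca_t(i)\}$ and $\{\vecb_t(j)\}$, and that these remain valid descriptions of the supports — in particular, that $\veca_t(i_t) \wedge \vecb_t(j_t)$ genuinely lands back inside the chain on the $x_k = -1$ side (so that $B_t|_{x_k=-1}$ is a nonnegative perturbation of a chain-supported function, before we re-polarize it). This should follow because $\veca_t(i_t)$ and $\vecb_t(j_t)$ agree on all coordinates $j$ with $\veca_t(i_t)_j = \vecb_t(j_t)_j$, and on the $x_k=-1$ face the meet $\veca_t(i_t)\wedge\vecb_t(j_t)$ is comparable to every support point of $A_t|_{x_k=-1}$ that lies below $\veca_t(i_t)$ in the chain — but I would want to double-check this does not require $j_t < k-1$ versus edge cases, which is exactly the condition imposed in line~7. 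Everything else is a routine induction once the chain-preservation of a single polarization update restricted to a face is pinned down; no heavy computation is needed.
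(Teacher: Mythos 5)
Your proof is correct and follows essentially the same route as the paper: for $b=-1$ the claim is immediate from the correctness of the recursive call to \textsc{Polarize} (line~9), and for $b=1$ a single polarization step can only add mass at $\veca_t(i_t)\vee\vecb_t(j_t)=1^k=\vecb_t(k-1)$ and remove mass at $\vecb_t(j_t)$, both of which already lie on the supporting chain of $(A_t)|_{x_k=1}$, so that chain continues to support $(A_{t+1})|_{x_k=1}$. Your closing worry about whether $B_t|_{x_k=-1}$ is ``a nonnegative perturbation of a chain-supported function'' is a non-issue: the claim concerns $A_{t+1}|_{x_k=-1}=\textsc{Polarize}(B_t|_{x_k=-1})$, and the recursive call returns a chain-supported function regardless of how irregular $B_t|_{x_k=-1}$ is.
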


\begin{proof}
For $b = -1$, the claim follows from the correctness of the recursive call to {\sc Polarize}. For $b=1$, we claim by induction on $t$ that the supporting chain $\vecb_t(0) < \cdots < \vecb_t(k-1)$ never changes (with $t$). To see this, note that $\vecb_t(k-1)=1^k$ is the only point in the subcube $\{x_k = 1\}$ that increases in value compared to $A_t$, and this is already in the supporting chain. Thus $\vecb_t(0) < \cdots < \vecb_t(k-1)$ continues to be a supporting chain for $(A_{t+1})|_{x_k=1}$. 
\end{proof}

For $\vecc \in \{-1,1\}^k$, we say that a function $A:\{-1,1\}^k \to \R^{\geq0}$ is {\em $\vecc$-subcube-respecting} ($\vecc$-respecting, for short) if for every $\vecc'$ such that $A(\vecc') > 0$, we have $\vecc' \geq \vecc$ or $\vecc' \leq \vecc$. We say that $A$ is {\em $\vecc$-downward-respecting} if $A$ is $\vecc$-respecting and the points in the support of $A$ above $\vecc$ form a partial chain, specifically, if $\vecu,\vecv > \vecc$ have $A(\vecu),A(\vecv) > 0$ then either $\vecu \geq \vecv$ or $\vecv \geq \vecu$. 

Note that if $A$ is supported on a chain then $A$ is $\vecc$-respecting for every point $\vecc$ in the chain. Conversely, if $A$ is supported on a chain and $A$ is $\vecc$-respecting, then $A$ is supported on a chain that includes $\vecc$. 

\begin{claim}[Polarization on subcubes]\label{clm:subcubes}
Let $A$ be a $\vecc$-respecting function and let $\tilde{A}$ be obtained from $A$ by a finite sequence of polarization updates, as in \autoref{def:polarization operator}.
Then $\tilde{A}$ is also $\vecc$-respecting. Furthermore if $A$ is $\vecc$-downward-respecting and $\vecw > \vecc$ then 
$\tilde{A}$ is also $\vecc$-downward-respecting and $A(\vecw) = \tilde{A}(\vecw)$. 
\end{claim}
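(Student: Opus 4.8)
\textbf{Proof proposal for Claim~\ref{clm:subcubes}.}

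The plan is to reduce to the case of a single polarization update, since both assertions are preserved under composition of updates. So suppose $\tilde A = A_{\vecu,\vecv}$ for some incomparable $\vecu,\vecv \in \supp(A)$; I claim $\tilde A$ inherits both properties from $A$. Recall that $\supp(\tilde A) \subseteq \supp(A) \cup \{\vecu \vee \vecv, \vecu \wedge \vecv\}$, and that the only points whose value changes are $\vecu,\vecv$ (decrease) and $\vecu \vee \vecv, \vecu \wedge \vecv$ (increase). So to check that $\tilde A$ is $\vecc$-respecting it suffices to verify that the two new points $\vecu \vee \vecv$ and $\vecu \wedge \vecv$ are each comparable to $\vecc$. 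Since $A$ is $\vecc$-respecting and $\vecu,\vecv \in \supp(A)$, each of $\vecu,\vecv$ is $\geq \vecc$ or $\leq \vecc$. I would split into cases on which side each of $\vecu, \vecv$ lies. If both $\vecu \geq \vecc$ and $\vecv \geq \vecc$, then $\vecu \wedge \vecv \geq \vecc$ and a fortiori $\vecu \vee \vecv \geq \vecc$. Symmetrically if both are $\leq \vecc$. The remaining case is $\vecu \geq \vecc$ and $\vecv \leq \vecc$ (or vice versa): then $\vecu \vee \vecv \geq \vecu \geq \vecc$ and $\vecu \wedge \vecv \leq \vecv \leq \vecc$, so both new points are comparable to $\vecc$. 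Hence $\tilde A$ is $\vecc$-respecting, and by induction on the length of the update sequence, so is any $\tilde A$ obtained by finitely many updates.

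For the ``furthermore'' part, assume $A$ is $\vecc$-downward-respecting, meaning additionally that $\{\vecw' \in \supp(A) : \vecw' > \vecc\}$ is a chain. Again reduce to a single update $\tilde A = A_{\vecu,\vecv}$. Since $\vecu,\vecv$ are incomparable they cannot both lie strictly above $\vecc$ (the support above $\vecc$ is a chain), nor both strictly below (same reason applied to the down-set, which is also totally ordered under the $\vecc$-respecting hypothesis — here I should note the down-set $\{\vecw' \in \supp(A): \vecw' \le \vecc\}$ need not be a chain in general, so I must be slightly careful: in fact $\vecc$-respecting only forces comparability \emph{with} $\vecc$, not pairwise). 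Let me instead argue directly: since $A$ is $\vecc$-respecting and $\vecu \not\le \vecv$, $\vecv \not\le \vecu$, it is impossible for both to be $\ge \vecc$ (that would contradict the support-above-$\vecc$ being a chain). So at least one of $\vecu,\vecv$, say $\vecv$, satisfies $\vecv \le \vecc$. Then $\vecu \wedge \vecv \le \vecv \le \vecc$, so the new point $\vecu \wedge \vecv$ is not strictly above $\vecc$ and does not enter the up-set. The other new point is $\vecu \vee \vecv$. If also $\vecu \le \vecc$ then $\vecu \vee \vecv \le \vecc$ and the up-set is untouched, so it remains a chain and every $\vecw > \vecc$ keeps its value. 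If instead $\vecu \ge \vecc$, then the up-set of $\tilde A$ is obtained from that of $A$ by removing $\vecu$ (or decreasing its value) and possibly adding $\vecu \vee \vecv$; but $\vecu \vee \vecv \ge \vecu$ and every element of the up-set of $A$ is comparable to $\vecu$ (being part of a chain containing $\vecu$), so $\vecu \vee \vecv$ is comparable to all of them and the up-set of $\tilde A$ is again totally ordered. In all cases $\tilde A$ is $\vecc$-downward-respecting.

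It remains to check $A(\vecw) = \tilde A(\vecw)$ for $\vecw > \vecc$. The value at $\vecw$ changes only if $\vecw \in \{\vecu, \vecv, \vecu \vee \vecv, \vecu \wedge \vecv\}$. We showed $\vecu \wedge \vecv \le \vecc$, so $\vecw \ne \vecu \wedge \vecv$; and at least one of $\vecu,\vecv$ is $\le \vecc$, so $\vecw$ equals that one only if... it cannot, since $\vecw > \vecc$. If the other of $\vecu,\vecv$ is also $\le \vecc$ then $\vecu \vee \vecv \le \vecc$ too, so none of the four changed points equals $\vecw$. The only remaining possibility is that one of $\vecu,\vecv$ — say $\vecu$ — is $\ge \vecc$, and then $\vecw$ could a priori be $\vecu$ or $\vecu \vee \vecv$. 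But wait: the claim as stated wants $A(\vecw) = \tilde A(\vecw)$ for \emph{all} $\vecw > \vecc$, which would be false if $\vecw = \vecu$. So in applying this claim the intended reading must be that the sequence of updates used in {\sc Polarize} on a subcube never touches the points strictly above $\vecc$; I would state and prove the ``furthermore'' only under the hypothesis that each update $(\vecu_i,\vecv_i)$ in the sequence has $\vecu_i \wedge \vecv_i \not> \vecc$ forced, i.e.\ at most one endpoint above $\vecc$, and verify this matches how the claim is invoked — indeed in {\sc Polarize} the recursive calls operate on a subcube $x_\ell = b$ which, for the relevant $\vecc$, keeps the up-set of $\vecc$ entirely inside the untouched subcube. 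The main obstacle is exactly this bookkeeping: pinning down the precise invariant so that ``the up-set of $\vecc$ is untouched'' is literally true for the update sequences {\sc Polarize} generates, rather than just ``stays a chain''. Everything else is the elementary lattice case analysis above.
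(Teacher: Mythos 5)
Your proof of the first assertion (that $\tilde{A}$ is $\vecc$-respecting) is correct and matches the paper's, though your ``remaining case'' $\vecu \geq \vecc$, $\vecv \leq \vecc$ is actually vacuous: if both $\vecu,\vecv$ are comparable to $\vecc$ and lie on opposite sides of it (with at least one strict, which is forced unless both equal $\vecc$), then $\vecv \leq \vecc \leq \vecu$ makes $\vecu,\vecv$ comparable, contradicting the hypothesis. So incomparability plus $\vecc$-respecting already forces $\vecu,\vecv$ to be strictly on the \emph{same} side of $\vecc$: either both $> \vecc$ or both $< \vecc$. The paper's proof opens with exactly this observation.

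This is precisely the fact you are missing in the ``furthermore'' part, and it is what causes your analysis to unravel. You correctly rule out ``both $> \vecc$'' by the chain condition, and conclude ``at least one of $\vecu,\vecv$, say $\vecv$, is $\leq \vecc$''. But you then entertain the subcase $\vecu \geq \vecc$, which cannot occur: $\vecv \leq \vecc$ together with $\vecu \geq \vecc$ again yields $\vecv \leq \vecu$, a contradiction. So in fact \emph{both} $\vecu$ and $\vecv$ are $< \vecc$, and therefore all four points $\vecu, \vecv, \vecu\vee\vecv, \vecu\wedge\vecv$ lie at or below $\vecc$. The single update thus changes nothing strictly above $\vecc$: $A(\vecw) = \tilde{A}(\vecw)$ for all $\vecw > \vecc$, the up-set of $\vecc$ in the support is untouched and remains a chain, and $\tilde{A}$ is $\vecc$-downward-respecting. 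Induction on the update sequence finishes the argument.

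Your worry --- ``the claim would be false if $\vecw = \vecu$'' --- dissolves once you see that $\vecu < \vecc$, so $\vecu$ can never equal a point $\vecw > \vecc$. There is no need to weaken the statement or track an extra invariant about how \textsc{Polarize} chooses its updates; the claim as stated is true in full generality, and the paper's one-line proof of the furthermore part is exactly the observation that $\vecu,\vecv \leq \vecc$ under the $\vecc$-downward-respecting hypothesis.
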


\begin{proof}
Note that it suffices to prove the claim for a single update by a polarization operator since the rest follows by induction. So let $\tilde{A} = A_{\vecu,\vecv}$ for incomparable $\vecu,\vecv\in \supp(A)$.

Since $A$ is $\vecc$-respecting, and $\vecu,\vecv$ are incomparable, either $\vecu \leq \vecc,\vecv \leq \vecc$ or $\vecu \geq \vecc, \vecv \geq \vecc$. Suppose the former is true, then $\vecu \vee \vecv \leq \vecc$ and 
$\vecu \wedge \vecv \leq \vecc$, and hence, $\tilde{A}$ is $\vecc$-respecting. Similarly, in the case when $\vecu \geq \vecc, \vecv \geq \vecc$, we can show that $\tilde{A}$ is $\vecc$-respecting. 
The furthermore part follows by noticing that for $\vecu$ and $\vecv$ to be incomparable if $A$ is $\vecc$-downward-respecting and $A(\vecu),A(\vecv)>0$, then $\vecu,\vecv \leq \vecc$, and so the update changes $A$ only at points below $\vecc$. 
\end{proof}

The following claim asserts that in every iteration of the while loop, by the lexicographically minimal choice of $(i_t,j_t)$, there exists a coordinate $h\in[k-1]$ such that every vector $c<a_t(i_t)$ in the support of $A_t$, $B_t$, or $A_{t+1}$ has $c_h=-1$, and every vector $c\neq 1^k$ in the support of $(A_t)|_{x_k = 1}$ has $c_h=-1$.

\begin{claim}\label{claim:polarization At Bt support}
For every $t\ge 0$, $\exists h \in [k-1]$ such that $\forall \vecc \in \{-1,1\}^k$, if $\vecc\in \supp(A_t)\cup\supp(B_t)\cup\supp(A_{t+1})$, then the following hold:
\begin{itemize}
    \item If $\vecc<\veca_t(i_t)$, then $c_h=-1$.
    \item If $c_k=1$ and $\vecc\neq1^k$, then $c_h=-1$.
\end{itemize}
\end{claim}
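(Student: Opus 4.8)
\textbf{Proof plan for \autoref{claim:polarization At Bt support}.}

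The plan is to prove the claim by induction on $t$, and at each step $t$ to locate the coordinate $h = h(t)$ from the lexicographic minimality of the chosen pair $(i_t,j_t)$. First I would analyze the base-case structure: before any polarization update, $(A_0)|_{x_k=-1}$ and $(A_0)|_{x_k=1}$ are each supported on a chain (this is the output guarantee of the recursive calls, cf.\ \autoref{claim:polarization correctness} and \autoref{clm:chain-supp}), with supporting chains $\veca_0(0) < \cdots < \veca_0(k-1)$ and $\vecb_0(0) < \cdots < \vecb_0(k-1)$ respectively. If $(i_0,j_0)$ is the lexicographically smallest pair with $j_0 < k-1$, $\veca_0(i_0)\vee\vecb_0(j_0) = 1^k$, and $A_0(\veca_0(i_0)), A_0(\vecb_0(j_0)) > 0$, then minimality of $i_0$ forces that every $\vecc = \veca_0(i)$ with $i < i_0$ in the support \emph{fails} the condition $\veca_0(i)\vee\vecb_0(j_0) = 1^k$ (since if it satisfied it, $(i,j_0)$ would be a smaller pair) — so there is a coordinate on which $\veca_0(i)$ and $\vecb_0(j_0)$ are both $-1$. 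The point is to pin down a \emph{single} such coordinate $h$ that works simultaneously for all $\vecc < \veca_0(i_0)$ and for all support points of $(A_0)|_{x_k=1}$ other than $1^k$: take $h$ to be a coordinate with $\veca_0(i_0)_h = -1$ such that $\vecb_0(j_0)_h = -1$ as well (the largest support point of the $x_k=1$ subcube below $1^k$ is $\vecb_0(j_0)$ by minimality of $j_0$, so $\vecb_0(n)_h = -1$ for all $n < k-1$, and all smaller chain elements agree). Since the $\veca_0$'s below $\veca_0(i_0)$ lie below $\veca_0(i_0)$, they too have coordinate $h$ equal to $-1$.

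Next I would handle the inductive step, propagating the invariant through the three modifications made in one iteration: the polarization update $B_t = (A_t)_{\veca_t(i_t),\vecb_t(j_t)}$, the recursive re-polarization of $B_t|_{x_k=-1}$, and the assembly of $A_{t+1}$. Here the key tool is \autoref{clm:subcubes}: the polarization update and the recursive {\sc Polarize} calls are both finite sequences of polarization operators, and they are $\vecc$-subcube-respecting for the relevant pivot $\vecc = \veca_t(i_t)$ (and in fact downward-respecting in the $x_k=1$ subcube, whose large support points never move by \autoref{clm:chain-supp}). So I would argue: the support of $B_t$ and of $A_{t+1}$ can only contain points that were already comparable to $\veca_t(i_t)$, hence any point strictly below $\veca_t(i_t)$ that survives still has coordinate $h(t)$ equal to $-1$; and the $x_k=1$ subcube's support points other than $1^k$ are unchanged, so they also still have coordinate $h(t)$ equal to $-1$. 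Then I need to re-derive the coordinate $h(t+1)$ for the \emph{next} pair $(i_{t+1},j_{t+1})$ using \autoref{clm:key-terminate} (monotonicity of the pairs) — since $(i_{t+1},j_{t+1}) \geq (i_t,j_t)$ lexicographically, the new pivot $\veca_{t+1}(i_{t+1})$ is "higher" and the same kind of minimality argument as in the base case produces $h(t+1)$.

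The main obstacle I anticipate is the circular-looking dependency between this claim and \autoref{clm:key-terminate} (the monotonicity of $(i_t,j_t)$): the termination argument wants to use \autoref{claim:polarization At Bt support} to show the pairs increase, while the cleanest proof of the present claim at step $t+1$ wants to know the pairs increased. I would resolve this by being careful about the order of the joint induction: prove both claims simultaneously by induction on $t$, where at stage $t$ one first establishes the structural invariant (this claim, for the pair already chosen at iteration $t$) using only \autoref{clm:subcubes}, \autoref{clm:chain-supp}, and the lexicographic minimality of $(i_t,j_t)$ — which does \emph{not} require knowing $(i_{t+1},j_{t+1})$ — and then uses that invariant to prove $(i_{t+1},j_{t+1}) > (i_t,j_t)$. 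A second, more mechanical difficulty is book-keeping: verifying that the recursive call {\sc Polarize}$(B_t|_{x_k=-1})$ preserves the chain structure and does not create support points incomparable to $\veca_t(i_t)$ — this is exactly what \autoref{clm:subcubes} gives, applied in the $(k-1)$-dimensional subcube $\{x_k = -1\}$ with pivot $\veca_t(i_t)|_{x_k=-1}$, so I would just need to check the hypotheses ($B_t|_{x_k=-1}$ is $\vecc$-respecting for that pivot, which follows from the support analysis of the polarization update $B_t = (A_t)_{\veca_t(i_t),\vecb_t(j_t)}$ together with the induction hypothesis that $(A_t)|_{x_k=-1}$ was supported on a chain through $\veca_t(i_t)$).
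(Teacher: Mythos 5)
Your proposed choice of $h$ cannot work: you ask for a coordinate $h$ with $\veca_0(i_0)_h=-1$ \emph{and} $\vecb_0(j_0)_h=-1$, but the pair $(i_0,j_0)$ is chosen precisely so that $\veca_0(i_0)\vee\vecb_0(j_0)=1^k$, which means that on \emph{every} coordinate at least one of $\veca_0(i_0),\vecb_0(j_0)$ equals $1$. The $h$ you want simply does not exist. The correct single coordinate comes not from the chosen pair itself but from the failed pair just below it: let $m<i_t$ be the largest index with $A_t(\veca_t(m))>0$ and let $n<k-1$ be the largest index with $A_t(\vecb_t(n))>0$; lexicographic minimality of $(i_t,j_t)$ forces $\veca_t(m)\vee\vecb_t(n)\neq 1^k$, and $h$ is any coordinate (necessarily in $[k-1]$) where both of those are $-1$. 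A related error in your base-case sketch: ``the largest support point of the $x_k=1$ subcube below $1^k$ is $\vecb_0(j_0)$ by minimality of $j_0$'' is false — minimality of $j_0$ constrains indices \emph{below} $j_0$, not above, so $\vecb_0(j_0)$ can be strictly dominated by other supported chain elements $\vecb_0(j'')$ with $j_0<j''<k-1$. You need the maximal supported index $n$, not $j_0$.

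Beyond these local bugs, the global structure you describe is also off. The paper proves this claim \emph{without} any induction on $t$ and \emph{without} \autoref{clm:key-terminate}: for each fixed $t$ the argument is self-contained, using only the definition of $(i_t,j_t)$ as the lex-smallest pair at step $t$, plus \autoref{clm:chain-supp} and \autoref{clm:subcubes}, both of which are established independently. The circularity you perceive between this claim and \autoref{clm:key-terminate} does not exist — the dependency runs strictly one way, with \autoref{clm:key-terminate} consuming this claim — so the joint-induction machinery you set up is solving a problem that isn't there and obscures the much simpler direct argument. In particular, $h$ can and should be derived afresh at each $t$ from the minimality of $(i_t,j_t)$; nothing about the pair at step $t+1$, nor about monotonicity of the pairs, is needed to establish the claim at step $t$.
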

\begin{proof}
Since $(i_t,j_t)$ is lexicographically the smallest incomparable pair in the support of $A_t$, for $i < i_t$, $j < k-1$, and 
$A_t(\veca(i)), A_t(\vecb(j)) > 0$, we have $\veca(i) \vee \vecb(j) \ne 1^k$. Let $m$ be the largest index smaller than $i_t$ such that
$A_t(\veca_t(m)) > 0$. Similarly, let $n < k-1$ be the largest index such that $A_t(\vecb_t(n)) > 0$. Then the fact that $\veca_t(m) \vee \vecb_t(n) \ne 1^k$ implies that there exists $h\in[k-1]$ such that $\veca_t(m)_h = \vecb_t(n)_h = -1$. Now, using the fact (from \autoref{clm:chain-supp}) that $(A_t)|_{x_k=-1}$ is supported on a chain, we conclude that for every $\vecc < \veca_t(i_t)$, $A_t(\vecc)> 0$ implies that $\vecc \leq \veca_t(m)$ and hence, $c_h=-1$. Similarly, for every vector $\vecc \ne 1^k$ in the support of $(A_t)|_{x_k = 1}$, by the maximality of $n$, we have $c_h = -1$.

We now assert that the same holds for $B_t$. First, recall that $\supp(B_t)\subset\supp(A_t)\cup\{1^k,\veca_t(i_t)\wedge\vecb_t(j_t)\}$ since $B_t=(A_t)_{\veca_t(i_t),\vecb_t(j_t)}$.
Next, note that the only point (other than $1^k$) where $B_t$ is larger than $A_t$ is
$\veca_t(i_t) \wedge \vecb_t(j_t)$. It suffices to show that $(\veca_t(i_t) \wedge \vecb_t(j_t))_h = -1$. We have $\veca_t(i_t) \wedge \vecb_t(j_t) \leq \vecb_t(j_t) \leq \vecb_t(n)$ and hence $(\veca_t(i_t) \wedge \vecb_t(j_t))_h = -1$.

Finally, we assert that same holds also for $A_{t+1}$. Since $A_{t+1}|_{x_k=1} = B_t|_{x_k=1}$, the second item in the claim follows trivially. To prove the first item, let us consider $\veca' \in \{-1,1\}^k$ defined as follows: $\veca'_h = -1$ and $\veca'_{r} = \veca_t(i_t)_{r}$ for $r \ne h$. Note that $B_t|_{x_k = -1}$ is $\veca_t(i_t)$-respecting since potentially the only new point in its support (compared to $A_t|_{x_k = -1}$) is $\veca_t(i_t) \wedge \vecb_t(j_t) \leq \veca_t(i_t)$. From the previous paragraph we also have that if $B_t(\vecc) > 0$ and $\vecc < \veca_t(i_t)$, then $c_h = -1$ and hence, $\vecc \leq \veca'$. On the other hand, if $B_t(\vecc) > 0$ and $\vecc \ge \veca_t(i_t)$, then $\vecc\ge a'$. Therefore, $B_t|_{x_k=-1}$ is $\veca'$-respecting. By applying \autoref{clm:subcubes}, we conclude that $(A_{t+1})|_{x_k=-1}$ is also $\veca'$-respecting. It follows that if $\vecc < \veca(i_t)$ and $A_{t+1}(\vecc) > 0$, then $\vecc \leq \veca'$ and so $c_h = -1$.
\end{proof}

\paragraph{Step 4: Proof of~\autoref{lem:polarization finite}.}

The following claim establishes that the while loop in the $\textsc{Polarize}$ algorithm terminates after a finite number of iterations.

\begin{claim}\label{clm:key-terminate}
For every $t\ge 0$, $(i_t,j_t)<(i_{t+1},j_{t+1})$ in lexicographic ordering.
\end{claim}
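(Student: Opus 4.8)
\textbf{Proof plan for \autoref{clm:key-terminate}.}

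The plan is to show that once we have chosen the lexicographically smallest incomparable pair $(i_t,j_t)$ in the support of $A_t$ (with $j_t < k-1$), the polarization update and the subsequent recursive call to $\textsc{Polarize}$ on the subcube $\{x_k=-1\}$ destroy all ``small'' incomparable pairs, so that the next pair $(i_{t+1},j_{t+1})$ must be lexicographically strictly larger. The key structural fact we will lean on is \autoref{claim:polarization At Bt support}: there is a coordinate $h\in[k-1]$ such that every $\vecc$ in the support of $A_t$, $B_t$ or $A_{t+1}$ with $\vecc < \veca_t(i_t)$ has $c_h=-1$, and every $\vecc\neq 1^k$ in the support of $(A_t)|_{x_k=1}$ (which equals $(A_{t+1})|_{x_k=1}=(B_t)|_{x_k=1}$, by the algorithm and \autoref{clm:chain-supp}) also has $c_h=-1$.

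First I would record that in $A_{t+1}$ the chain $\vecb_{t+1}(0) < \cdots < \vecb_{t+1}(k-1)=1^k$ supporting $(A_{t+1})|_{x_k=1}$ is the \emph{same} as for $A_t$ (this is \autoref{clm:chain-supp}'s proof), so we may write $\vecb_{t+1}(j)=\vecb_t(j)$. Now suppose, for contradiction, that $(i_{t+1},j_{t+1}) \le (i_t,j_t)$ lexicographically. We have $j_{t+1} < k-1$ and $\veca_{t+1}(i_{t+1}) \vee \vecb_{t+1}(j_{t+1}) = 1^k$ with both $A_{t+1}(\veca_{t+1}(i_{t+1})) > 0$ and $A_{t+1}(\vecb_{t+1}(j_{t+1})) > 0$. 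Consider two cases. If $i_{t+1} < i_t$, or if $i_{t+1}=i_t$ and $j_{t+1} \le j_t$: in either case $\veca_{t+1}(i_{t+1}) \le \veca_{t+1}(i_t)$. I would argue that $\veca_{t+1}(i_t)$ itself is at most $\veca_t(i_t)$ in the partial order — this needs a short argument that the recursive call on $B_t|_{x_k=-1}$ does not create support points strictly between $\veca_t(i_t)\wedge\vecb_t(j_t)$ and $\veca_t(i_t)$ that sit at chain-index $i_t$, using the $\veca'$-respecting property from the last paragraph of \autoref{claim:polarization At Bt support}'s proof together with \autoref{clm:subcubes}. Hence $\veca_{t+1}(i_{t+1}) < \veca_t(i_t)$ (strict, since $\veca_t(i_t)$ is no longer in the support of $A_{t+1}|_{x_k=-1}$ when $i_{t+1} < i_t$, because $B_t$ removed mass $\epsilon=\min$ and... actually I must be careful here — more simply, $\veca_{t+1}(i_{t+1}) \le \veca_t(i_t)$ and if equality held then $(i_{t+1},j_{t+1})$ would witness an incomparable pair \emph{still present} in $A_t$ smaller than or equal to $(i_t,j_t)$, which by minimality forces $(i_{t+1},j_{t+1})=(i_t,j_t)$; but $A_{t+1}(\veca_t(i_t))$ may be zero — I would handle this by noting $\veca_{t+1}(i_{t+1})_h = -1$ by \autoref{claim:polarization At Bt support} since $\veca_{t+1}(i_{t+1}) < \veca_t(i_t)$ or $= \veca_t(i_t)$, and $(\veca_t(i_t))_h$ we must determine). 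Then since $\veca_{t+1}(i_{t+1})_h = -1$ and also $\vecb_{t+1}(j_{t+1})=\vecb_t(j_{t+1})$ with $j_{t+1} < k-1$ forces $(\vecb_t(j_{t+1}))_h = -1$ by the second bullet of \autoref{claim:polarization At Bt support}, we get $(\veca_{t+1}(i_{t+1})\vee\vecb_{t+1}(j_{t+1}))_h = -1 \neq 1$, contradicting $\veca_{t+1}(i_{t+1})\vee\vecb_{t+1}(j_{t+1}) = 1^k$. The remaining case $i_{t+1} = i_t$, $j_{t+1} = j_t$ is impossible because after the update $B_t = (A_t)_{\veca_t(i_t),\vecb_t(j_t)}$ we have subtracted the polarization amount $\epsilon = \min\{A_t(\veca_t(i_t)), A_t(\vecb_t(j_t))\}$ from at least one of these two points, making it zero, so the pair $(i_t,j_t)$ no longer satisfies the positivity condition in $A_{t+1}$ (the recursive $\textsc{Polarize}$ call on $B_t|_{x_k=-1}$ cannot re-increase a point that is not already maximal in its subchain — more precisely it only increases $\veca_t(i_t)\wedge\vecb_t(j_t)$ and points below, and leaves $\veca_t(i_t)$ at value $\le$ its $B_t$-value). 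Hence in all cases we reach a contradiction, so $(i_t,j_t) < (i_{t+1},j_{t+1})$.

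Finally I would close the loop back to \autoref{lem:polarization finite}: since the pairs $(i_t,j_t)$ are strictly lexicographically increasing and live in $\{0,\ldots,k-1\}^2$, the while loop runs at most $k^2$ times. Each iteration performs one polarization update plus one recursive $\textsc{Polarize}$ call on a $(k-1)$-dimensional restriction (cost $\le N(k-1)$), the initialization performs two such recursive calls, and the clean-up step performs one more recursive call plus is a single ``chain-completion'' that the induction hypothesis handles; bookkeeping gives $N(k) \le (k^2 + 3)(1 + N(k-1))$ as claimed, with base case $N(2) = 1$. The main obstacle I anticipate is the first case above: carefully tracking how the recursive call $\textsc{Polarize}(B_t|_{x_k=-1})$ reshuffles the supporting chain of the bottom subcube, and confirming that it cannot produce a support point at chain-index $\le i_t$ that lies strictly above $\veca_t(i_t)\wedge\vecb_t(j_t)$ yet is incomparable with some $\vecb_t(j)$, $j < k-1$, in a way that "resets" the lexicographic counter — this is exactly where the $\veca'$-respecting invariant and \autoref{clm:subcubes} do the real work, and getting the quantifiers right there is the delicate part.
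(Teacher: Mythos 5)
Your overall strategy — use \autoref{claim:polarization At Bt support}'s coordinate $h$ to rule out $i_{t+1}<i_t$, then handle the ties separately — is the same as the paper's, and your stated ingredients ($\veca'$-respecting, \autoref{clm:subcubes}, the fact that $\vecb_{t+1}=\vecb_t$) are the right ones. But two points in the middle of your argument are not quite right and need repair.

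First, for the sub-case $i_{t+1}=i_t$ with $j_{t+1}<j_t$, you waver between the coordinate-$h$ argument and the ``still present in $A_t$'' (minimality) argument, admitting that ``$(\veca_t(i_t))_h$ we must determine.'' You can determine it, and it kills the coordinate-$h$ route: since $\veca_t(i_t)\vee\vecb_t(j_t)=1^k$ and $\vecb_t(j_t)_h=-1$ (by the second bullet of \autoref{claim:polarization At Bt support}, as $j_t<k-1$), necessarily $\veca_t(i_t)_h=+1$. So when $\veca_{t+1}(i_{t+1})=\veca_t(i_t)$ you cannot get $(\cdot)_h=-1$. You must commit to the minimality argument instead, and it does go through once you add the missing support-comparison step: $(A_{t+1})|_{x_k=1}=(B_t)|_{x_k=1}$ agrees with $(A_t)|_{x_k=1}$ at every chain point except $\vecb_t(j_t)$ (which only decreases) and $1^k$ (which only increases), so $A_{t+1}(\vecb_{t+1}(j_{t+1}))>0$ with $j_{t+1}<j_t$ implies $A_t(\vecb_t(j_{t+1}))>0$; then $(i_t,j_{t+1})$ was already a candidate in round $t$, contradicting minimality of $(i_t,j_t)$.

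Second, for the sub-case $i_{t+1}=i_t$, $j_{t+1}=j_t$ your parenthetical ``the recursive $\textsc{Polarize}$ call ... only increases $\veca_t(i_t)\wedge\vecb_t(j_t)$ and points below'' is not a correct statement about polarization in general; a sequence of updates in the subcube can move mass upward in the partial order. What is true, and what you gesture at but need to make precise, is that $B_t|_{x_k=-1}$ is $\veca'$-downward-respecting (with $\veca'$ defined by $\veca'_h=-1$, $\veca'_r=\veca_t(i_t)_r$ otherwise), and then the \emph{Furthermore} clause of \autoref{clm:subcubes} says every polarization update of a $\veca'$-downward-respecting function leaves the value at each $\vecw>\veca'$ unchanged. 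Applying this with $\vecw=\veca_t(i_t)>\veca'$ gives $A_{t+1}(\veca_t(i_t))=B_t(\veca_t(i_t))=0$ in the case you care about. Without invoking this invariant explicitly, the ``cannot re-increase'' step is a gap, not a proof.
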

\begin{proof}

Consider the chain $\veca_{t+1}(0) < \cdots < \veca_{t+1}(k-1)$ supporting $A_{t+1}|_{x_k = -1}$. Note that for  $i \geq i_t$, $A_{t+1}|_{x_k = -1}$ is $\veca_t(i)$-respecting (since $A_t|_{x_k=-1}$ and $B_{t}|_{x_k=-1}$ were also so). In particular, $A_t|_{x_k=-1}$ is $\veca_t(i)$-respecting because it is supported on a chain containing $a_t(i)$. Next $B_t|_{x_k = -1}$ is $\veca_t(i)$-respecting since potentially the only new point in its support is $\veca_t(i_t) \wedge \vecb_t(j_t) \leq \veca_t(i)$. Finally, $A_{t+1}|_{x_k = -1}$ is also $\veca_t(i)$-respecting using \autoref{clm:subcubes}. Thus we can build a chain containing  $\veca_t(i)$ that supports $A_{t+1}|_{x_k=-1}$. It follows that we can use $\veca_{t+1}(i) = \veca_t(i)$ for $i \geq i_t$. Now consider $i < i_t$. We must have $\veca_{t+1}(i) < \veca_{t+1}(i_t) = \veca_t(i_t)$.
By~\autoref{claim:polarization At Bt support}, there exists $h\in [k-1]$ such that for $i < i_t$, $\veca_{t+1}(i)_h = -1$.

We now turn to analyzing $(i_{t+1},j_{t+1})$. Note that by definition, $A_{t+1}(\veca_{t+1}(i_{t+1}))>0$ and 
$A_{t+1}(\vecb_{t+1}(b_{t+1}))>0$. First, let us show that $i_t \le i_{t+1}$. On the contrary, let us assume that $i_{t+1} < i_t$. It follows from the above paragraph that $\veca_{t+1}(i_{t+1})_h = -1$. Also, for every $\vecb_{t+1}(j)$ with $j <k-1$ and $A_{t+1}(\vecb_{t+1}(j))>0$, we have $\vecb_{t+1}(j)_h = -1$. Therefore, $\veca(i_{t+1}) \vee \vecb(j_{t+1}) \ne 1^k$ (in particular $(\veca(i_{t+1}) \vee \vecb(j_{t+1}))_h=-1$), which is a contradiction.

Next, we show that if $i_{t+1} = i_t$, then $j_{t+1}\ge j_t$. By the minimality of $(i_t,j_t)$ in the $t$-th round, for $j<j_t$ such that $A_t(b_t(j))>0$, we have $a_{t}(i_{t})\vee b_t(j) \ne 1^k$. Since  $i_{t+1} = i_t$, $a_{t+1}(i_{t+1})=a_{t+1}(i_{t})=a_{t}(i_{t})$. We already noted in the proof of \autoref{clm:chain-supp} that $\vecb_t(0) < \cdots < \vecb_t(k-1)$ is also a supporting chain for $(A_{t+1})|_{x_k=1}$. The only point where the function $A_{t+1}|_{x_k=1}$ has greater value than $A_t|_{x_k=1}$ is $1^k$. Therefore, for $j<j_t$ such that $A_{t+1}(b_{t+1}(j))>0$, we have $a_{t+1}(i_{t+1})\vee b_{t+1}(j) \ne 1^k$ and hence, $j_{t+1}\ge j_t$.

So far, we have established that $(i_{t+1},j_{t+1})\ge (i_t,j_t)$ in lexicographic ordering. Finally, we will show that $(i_{t+1},j_{t+1}) \neq (i_t,j_t)$ by proving that at least one of $A_{t+1}(\veca_{t+1}(i_{t}))$ and $A_{t+1}(\vecb_{t+1}(j_t))$ is zero. The polarization update ensures that at least one of $B_{t}(\veca_{t}(i_{t}))$ and $B_{t}(\vecb_t(j_t))$ is zero. If $B_t(\vecb_t(j_t)) = 0$, then by definition, we have $A_{t+1}(\vecb_{t+1}(j_t)) = A_{t+1}(\vecb_t(j_t)) = 0$. Finally to handle the case $B_{t}(\veca_{t}(i_{t}))=0$, let us again define $\veca'$ as: $\veca'_h = -1$ and $\veca'_{r} = \veca_t(i_t)_{r}$ for $r \ne h$, where $h$ is as given by \autoref{claim:polarization At Bt support}. We assert that $B_t|_{x_k=-1}$ is $\veca'$-downward-respecting. As shown in the proof of \autoref{claim:polarization At Bt support}, we have $B_t|_{x_k=-1}$ is $\veca'$-respecting. The support of $B_t|_{x_k=-1}$ is contained in $\{\veca_t(0),\cdots,\veca_t(k-1)\} \cup \{\veca_t(i_t) \wedge \vecb_t(j_t)\}$ and $\veca_t(i_t) \wedge \vecb_t(j_t) < \veca_t(i_t)$, and by \autoref{claim:polarization At Bt support}, $\veca_t(i_t) \wedge \vecb_t(j_t)\leq \veca'$. It follows that $B_t|_{x_k=-1}$ is $\veca'$-downward-respecting. Finally, by the furthermore part of \autoref{clm:subcubes} applied to $B_{t}|_{x_k=-1}$ and $\vecw = \veca_t(i_t)$, we get that $A_{t+1}(\veca_{t+1}(i_{t}))=A_{t+1}(\veca_t(i_t)) = B_{t}(\veca_t(i_t)) = 0$. It follows that $(i_{t+1},j_{t+1}) \ne (i_t,j_t)$. 
\end{proof}

\begin{proof}[Proof of~\autoref{lem:polarization finite}]
By~\autoref{claim:polarization correctness}, we know that if~\autoref{alg:polarization} terminates, we have $\textsc{Polarize}(A)=A_{\vecmu(A)}$. Hence, the maximum number of polarization updates used in {\sc Polarize} (on input from $\cF(\{-1,1\}^k)$) serves as an upper bound for $N(k)$.
By~\autoref{clm:key-terminate}, we know that there are at most $k^2$ iterations of the while loop and so $N(k) \leq (k^2 + 3)\cdot (1 + N(k-1))$ as desired.
\end{proof}

\subsection{Putting it together}

We now have the ingredients in place to prove \autoref{thm:communication lb matching moments}.

\begin{proof}[Proof of \autoref{thm:communication lb matching moments}]
Given distribution $\cD_Y,\cD_N$ with $\vecmu = \vecmu(\cD_Y) = \vecmu(\cD_N)$, first we apply \autoref{lem:polarization finite} to $\cD_Y$ to get $\cD_0 = \cD_Y,\cD_1,\ldots,\cD_t = \cD_{\vecmu}$ such that $\cD_{i+1} = (\cD_i)_{\vecu(i),\vecv(i)}$, i.e., $\cD_i$ is an update of $\cD_i$, with $t \leq N(k) < \infty$. 
Similarly, we apply \autoref{lem:polarization finite} to $\cD_N$ to get $\cD'_0 = \cD_N,\cD'_1,\ldots,\cD'_{t'} = \cD_{\vecmu}$ such that $\cD'_{i+1} = (\cD'_i)_{\vecu'(i),\vecv'(i)}$ with $t' \leq N(k) < \infty$. 

Now, \autoref{lem:polarization indis}, applied to the pairs $\cD_i$ and $\cD_{i+1}$ with $\delta' = \delta/(2N(k))$, gives is $\tau_i$ such that every protocol for 
$(\cD_i,\cD_{i+1})$-\textsf{RMD} requires $\tau_i \sqrt{n}$ bits of communication to achieve advantage $\delta'$. 
Similarly applying \autoref{lem:polarization indis} again with $\delta' = \delta/(2N(k))$ to the pairs $\cD'_i$ and $\cD'_{i+1}$, we get $\tau'_i$ such that every protocol for 
$(\cD'_i,\cD'_{i+1})$-\textsf{RMD} requires $\tau'_i \sqrt{n}$ bits of communication to achieve advantage $\delta'$. 

Letting $\tau = \min\left\{\min_{i \in [t]}\{\tau_i\}, \min_{i \in [t']}\{\tau'_i\}\right\}$, we get, using the triangle inequality for indistinguishability, that every protocol $\Pi$ for $(\cD_Y,\cD_N)$-\textsf{RMD} achieving advantage $\delta \geq (t+t')\delta'$ requires $\tau\sqrt{n}$ communication.
\end{proof}

\section*{Acknowledgments}

 We are grateful to Lijie Chen, Gillat Kol, Dmitry Paramonov, Raghuvansh Saxena, Zhao Song, and Huacheng Yu, for detecting a fatal error in an earlier version of this paper~\cite{CGSV20} and then for pinpointing the location of the error. As a result the main theorem of the current paper is significantly different than the theorem claimed in the previous version.

Thanks to Johan H\aa stad for many pointers to the work on approximation resistance and answers to many queries.
Thanks to Dmitry Gavinsky, Julia Kempe and Ronald de Wolf for prompt and detailed answers to our queries on outdated versions of their work~\cite{GKKRW}. Thanks to Prasad Raghavendra for answering our questions about the approximation resistance dichotomy from his work~\cite{raghavendra2008optimal}. Thanks to Saugata Basu for the pointers to the algorithms for quantified theory of the reals. Thanks to Jelani Nelson for pointers to $\ell_1$ norm estimation algorithms. 

Thanks to Michael Hwang and Tarun Prasad for pointing out some errors in Example~1 in a previous version of this paper. Thanks to Noah Singer for pointing out some typos in the paper.
\bibliographystyle{alpha}
\bibliography{mybib}
\end{document}